\documentclass[aps,%
twocolumn,%
floatfix,%
prl,%
amsfonts,%
groupedaddress,%
superscriptaddress,%
nobibnotes,%
longbibliography]{revtex4-2}%

\usepackage{amsmath}
\usepackage{amssymb}
\usepackage{mathtools}
\usepackage{amsthm}
\usepackage{bm,bbm} 
\usepackage[table,dvipsnames,usenames]{xcolor}
\usepackage{graphicx}
\usepackage{indentfirst}
\usepackage{enumitem}
\usepackage{algorithm2e}
\usepackage{algpseudocode}
\definecolor{beamer@blendedblue}{rgb}{0.2,0.2,0.7}      
\usepackage[colorlinks=true,%
linkcolor=beamer@blendedblue,%
bookmarks=true,%
breaklinks=true,%
filecolor=beamer@blendedblue,%
anchorcolor=yellow,%
citecolor=beamer@blendedblue,%
urlcolor=beamer@blendedblue,%
pdfauthor={Si-Yuan Chen, Congcong Zheng, Kun Wang, Ming-Cheng Chen},%
pdfsubject={Pauli-resolved virtual distillation},%
CJKbookmarks=true]{hyperref}
\usepackage[titletoc,title]{appendix}
\usepackage{tikz}
\usetikzlibrary{graphs}
\usepackage{quantikz}
\usepackage{booktabs} 
\usepackage{float}
\usepackage{pifont}
\usepackage[skins,breakable,most]{tcolorbox}

\newtheorem{theorem}{Theorem}

\newtheorem{lemma}{Lemma}
\newtheorem{corollary}{Corollary}

\newtheorem{definition}{Definition}

\usepackage{pifont}
\newlist{todolist}{itemize}{2}
\setlist[todolist]{label=$\square$}
\DeclareMathOperator{\tr}{tr}

\renewcommand{\bra}[1]{\langle #1\rvert}
\renewcommand{\ket}[1]{\lvert #1\rangle}
\renewcommand{\proj}[1]{\lvert #1\rangle\!\langle #1\rvert}
\newcommand{\ox}{\otimes}
\newcommand{\1}{\mathbbm{1}}

\renewcommand{\braket}[2]{\langle #1\vert #2\rangle}
\newcommand*{\cF}{\mathcal{F}}
\newcommand*{\cG}{\mathcal{G}}
\newcommand*{\cH}{\mathcal{H}}

\newcommand*{\cP}{\mathcal{P}}
\newcommand*{\cS}{\mathcal{S}}

\newcommand*{\cA}{\mathcal{A}}
\newcommand*{\cD}{\mathcal{D}}
\newcommand*{\cM}{\mathcal{M}}
\newcommand*{\cU}{\mathcal{U}}
\newcommand*{\cJ}{\mathcal{J}}
\newcommand*{\cL}{\mathcal{L}}

\newcommand*{\cB}{\mathcal{B}}

\definecolor{alizarin}{rgb}{0.82, 0.1, 0.26}
\definecolor{googleblue}{HTML}{4285F4}
\definecolor{googlered}{HTML}{DB4437}
\definecolor{googleyellow}{HTML}{F4B400}
\definecolor{googlegreen}{HTML}{0F9D58}
\definecolor{klevinblue}{HTML}{002FA7}
\definecolor{tiffanyblue}{HTML}{0ABAB5}

\newcommand\prlsection[1]{\textit{\textbf{#1}}---}

\hypersetup{hypertexnames=false}
\hypersetup{hypertexnames=false,pdftitle={Pauli-resolved virtual distillation},pdfsubject={Pauli-resolved virtual distillation}}
\begin{document}

\newcommand{\thetitle}{Pauli-resolved virtual distillation}
\title{\thetitle}

\author{Si-Yuan Chen}\thanks{Contact author: chance.siyuan@gmail.com}
\affiliation{Hefei National Research Center for Physical Sciences at the Microscale and School of Physical Sciences, University of Science and Technology of China, Hefei 230026, China}%

\affiliation{Shanghai Research Center for Quantum Science and CAS Center for Excellence in Quantum Information and Quantum Physics, University of Science and Technology of China, Shanghai 201315, China}%

\affiliation{Hefei National Laboratory, University of Science and Technology of China, Hefei 230088, China}%

\author{Congcong Zheng}
\affiliation{State Key Lab of Millimeter Waves, Southeast University, Nanjing 211189, China}

\author{Kun Wang}\thanks{Contact author: nju.wangkun@gmail.com}
\affiliation{College of Computer Science and Technology,
National University of Defense Technology,
Changsha 410073, China}%

\author{Ming-Cheng Chen}\thanks{Contact author: cmc@ustc.edu.cn}
\affiliation{Hefei National Research Center for Physical Sciences at the Microscale and School of Physical Sciences, University of Science and Technology of China, Hefei 230026, China}%

\affiliation{Shanghai Research Center for Quantum Science and CAS Center for Excellence in Quantum Information and Quantum Physics, University of Science and Technology of China, Shanghai 201315, China}%

\affiliation{Hefei National Laboratory, University of Science and Technology of China, Hefei 230088, China}%

\begin{abstract}
Learning the full Pauli profile of the virtually distilled quantum state $\rho^m/\tr(\rho^m)$ has so far required exponentially many copies of $\rho$. We show that all $4^n$ squared Pauli moments $[\tr(P\rho^m)]^2$ can be learned to additive error $\varepsilon$ with confidence $1-\delta$ from one $2m$-replica measurement setting using $O(m[n+\log(1/\delta)]/\varepsilon^2)$ copies. This is an exponential speedup in the system size $n$ over previous protocols. For $m = 2$, we propose the coherent Bell difference sampling circuit that realizes this measurement on current devices. The speedup originates from paired replicas that cancel the anticommutation signs of Pauli operators, collapsing the incompatibility of the Pauli family. We certify this collapse by introducing the quantum Bernstein norm, a computable incompatibility measure for nonlinear functionals. A further information-theoretic $m$-replica protocol recovers the signed moments $\tr(P\rho^m)$ using $O(m[n+\log(1/\delta)]/\varepsilon^4)$ copies. For fixed $m$ in the stated lower-bound regime, it attains the optimal replica number, since any protocol with fewer replicas requires exponentially many copies.
\end{abstract}
\date{\today}
\maketitle

\prlsection{Introduction.} Predicting many properties of an unknown quantum state is a central task in quantum information science, with applications ranging from device benchmarking~\cite{silva2025hands,cai2023quantum} to identifying quantum algorithms with exponential speedups~\cite{aharonov2022quantum,huang2022quantum,huang2022provably}. For linear properties, shadow tomography predicts exponentially many observables by reusing a single measurement record~\cite{aaronson2018shadow}. The sample cost of such protocols, however, varies sharply with the choice of the target family~\cite{huang2020predicting,chen2024optimal}, and this variation has a physical origin: measurement incompatibility~\cite{heinosaari2015noise,guhne2023colloquium}.

A single qubit already illustrates how incompatibility inflates the sample cost. The observables $X$, $Y$, and $Z$ cannot be measured jointly~\cite{Busch1986}, so any unbiased estimator that extracts all three from one repeated measurement must take values outside $[-1, 1]$, and by Hoeffding's inequality~\cite{hoeffding1963probability} the sample cost grows as the square of this estimator range. For the $n$-qubit Pauli family, this cost grows exponentially under single-copy measurements~\cite{huang2020predicting,chen2024optimal}. Interestingly, collective measurements on two replicas remove the obstruction. When $PQ = -QP$, the anticommutation sign appears once in each replica of the pair and cancels, so the doubled operators $P \otimes P$ commute with each other, and Bell sampling returns all $4^n$ amplitudes $|\tr(P\rho)|$ from a single measurement setting~\cite{huang2022quantum,chen2024optimal,chen2022exponential}. We refer to this sign cancellation as the \textit{paired cancellation} mechanism. The signs of $\tr(P\rho)$, however, remain out of reach for this measurement.

More recently, the shadow tomography paradigm has been extended to nonlinear functionals. A prominent example is virtual distillation~\cite{huggins2021virtual,koczor2021exponential}, which suppresses incoherent errors by reporting expectation values on the filtered state $\rho^m/\tr(\rho^m)$~\cite{o2023purification,cotler2019quantum,seif2023shadow}. Among these targets, the Pauli moments $\{\tr(P\rho^m)\}_{P \in \mathcal{P}_n}$ are of particular interest, since they determine the error-mitigated energies of many-term Hamiltonians and certify the magic of the distilled state~\cite{leone2022stabilizer,haug2023scalable}. Nonlinear shadow tomography for Pauli moments has been developed in two directions. Fixing one observable $O$, Ref.~\cite{chen2025simultaneous} estimates the moment ladder $\{\tr(O\rho^m)\}_{m=1}^{k}$. Fixing the moment order, replica-shadow protocols~\cite{liu2026auxiliary,du2026optimal} build randomized shadows of $\rho^m$ that are efficient for low-rank or low-weight observables. For the complete family $\{\tr(P\rho^m)\}_{P \in \mathcal{P}_n}$ over all Pauli operators at a fixed $m$, both routes require exponentially many copies of $\rho$.

\begin{figure*}[htbp!]
    \centering
    \includegraphics[width=\textwidth]{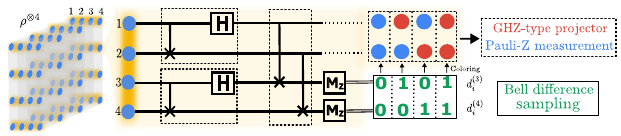}
    \caption{The coherent Bell difference sampling (CBDS) circuit. Four replicas of $\rho$ enter two Bell measurement blocks, acting on replicas $(1, 2)$ and $(3, 4)$. Transversal CNOT layers coherently subtract the Bell sampling outcomes of replicas $(1, 2)$ from replicas $(3, 4)$ before any readout. Measuring replicas $(3, 4)$ yields Bell difference outcomes $d_i$ that select a Pauli sector; conditioned on $d_i$, each site of replicas $(1, 2)$ is read either in the computational basis (blue) or within a collective GHZ-type projector (red), returning the eigenvalue of $O_P$ in that sector (SM~\cite{Supplemental}, Appendix A). A single run updates the estimators of all $4^n$ squared moments $[\tr(P\rho^2)]^2$ simultaneously.}
    \label{fig:4replica_fig2}
\end{figure*}

In this Letter, we show that the exponential barrier is again an incompatibility problem, and extend the paired cancellation to nonlinear moments. When $m$ replicas of $\rho$ are paired against another $m$ replicas, the anticommutation signs again appear twice and cancel, and the multi-replica observables representing the squared moments $\{[\tr(P\rho^m)]^2\}_{P \in \mathcal{P}_n}$ commute with each other. The entire squared-moment family thus becomes jointly measurable in one $2m$-replica setting, with every estimator bounded in $[-1, 1]$, at a cost of $O(m[n+\log(1/\delta)]/\varepsilon^2)$ copies of $\rho$. For $m = 2$, we propose a circuit that realizes this measurement and is of independent interests. This circuit generalizes the existing Bell difference sampling strategy~\cite{gross2021schur,grewal2023efficient,grewal2024improved}, which we call the coherent Bell difference sampling (CBDS) circuit. 

Two questions remain about this construction: how such unit-range estimators can be found systematically for other families, and whether a smaller estimator range is possible. To answer both, we introduce the quantum Bernstein norm, an incompatibility measure for families of nonlinear functionals. The norm is computable by a semidefinite program, contains the robustness of measurement incompatibility~\cite{heinosaari2015noise,guhne2023colloquium,carmeli2016quantum} as its linear special case, and certifies that our protocol reaches the smallest possible unbiased estimator range.

Finally, we recover the signed moments $\tr(P\rho^m)$. The joint measurement above reads only squared moments, so the sign information is supplied by interfering $\rho^{\otimes m}$ with known reference states, in the spirit of the mimicking-state technique~\cite{chen2024optimal,king2025triply}. The resulting $m$-replica protocol uses $O(m[n+\log(1/\delta)]/\varepsilon^4)$ copies (Theorem~\ref{main-pauli-vd}). For fixed $m$, its replica number is optimal in the parameter regime stated in SM~\cite{Supplemental}, Appendix A.

\prlsection{Squared Pauli moments.} We first make the task precise, before constructing the measurement that accomplishes it. We seek a single generalized measurement, repeated on fresh replicas of $\rho$, such that for every $P \in \mathcal{P}_n$ some classical post-processing of the outcomes gives an unbiased estimate of $[\tr(P\rho^2)]^2$. Our construction strategy is to find, for each $P$, a multi-replica observable that reproduces the target as its expectation value, such that all these observables mutually commute. Measuring in their common eigenbasis then yields unbiased estimates of all $4^n$ targets from a single measurement setting. We construct such a commuting family explicitly on four replicas. For each $P \in \mathcal{P}_n$, define
\begin{align}\label{eq:main_OP}
O_P := \frac{P_1 P_2 I_3 I_4 + I_1 I_2 P_3 P_4}{2}\, W_{13} W_{24},
\end{align}
where $P_j$ denotes the Pauli operator $P$ acting on replica $j$, and $W_{13}$, $W_{24}$ swap replicas $(1, 3)$ and $(2, 4)$. Two facts follow directly (SM~\cite{Supplemental}, Appendix A). First, $\tr(O_P \rho^{\otimes 4}) = [\tr(P\rho^2)]^2$, because the swaps stitch the four replicas into two factors of $\tr(P\rho^2)$. Second, all $4^n$ operators $O_P$ mutually commute. When $PQ = -QP$, the anticommutation sign is generated once in the replica pair $(1, 2)$ and once in the pair $(3, 4)$, and the two signs cancel. This is precisely the paired cancellation mechanism identified in the introduction, now operating at the level of nonlinear moments.

Each $O_P$ has spectrum in $[-1, 1]$, so measuring $\rho^{\otimes 4}$ in the common eigenbasis yields an unbiased estimate of every squared moment with estimator range one. Hoeffding's inequality~\cite{hoeffding1963probability} and a union bound over the $4^n$ targets give
\begin{align}\label{eq:main_N_squared}
N = O\left(\frac{n + \log(1/\delta)}{\varepsilon^2}\right)
\end{align}
measurement rounds for additive error $\varepsilon$ on all targets $[\tr(P\rho^2)]^2$ with confidence $1-\delta$. As a concrete example, for $n = 50$, $\varepsilon = 0.05$, and $\delta = 0.01$, about $6 \times 10^4$ rounds suffice to learn all $4^{50}$ squared moments, corresponding to $2.4 \times 10^5$ copies of $\rho$ processed on 200 qubits.

The measurement in the common eigenbasis is realized by the CBDS circuit of Fig.~\ref{fig:4replica_fig2}. Bell sampling uses two copies and reads the amplitudes $|\tr(P\rho)|$~\cite{montanaro2017learning}. Bell difference sampling uses four copies and subtracts two Bell outcomes classically~\cite{gross2021schur,grewal2023efficient,grewal2024improved}. CBDS also uses four copies, but the Bell outcomes of replicas $(1, 2)$ and $(3, 4)$ are subtracted coherently before any readout. This coherent step implements the pairing $W_{13}W_{24}$ of Eq.\eqref{eq:main_OP} at the level of measurement outcomes, and thereby raises the accessible quantity from $\tr(P\rho)$ to $\tr(P\rho^2)$. No classical post-processing of independent Bell sampling runs is known to achieve the same, indicating that the coherent subtraction is essential.

The pairing mechanism generalizes to arbitrary moment orders. Pairing $m$ replicas of $\rho$ against another $m$ replicas yields operators $O_P^{(m)}$ that commute by the same paired cancellation mechanism, have spectra in $[-1, 1]$, and satisfy $\tr(O_P^{(m)} \rho^{\otimes 2m}) = [\tr(P\rho^m)]^2$. Hence for every $m$ the squared-moment family
\begin{align}\label{eq:main_Fm2}
\mathcal{F}_m^{(2)} := \{[\tr(P\rho^m)]^2\}_{P \in \mathcal{P}_n}
\end{align}
is jointly measurable in a single $2m$-replica setting, at a cost of $O(m[n+\log(1/\delta)]/\varepsilon^2)$ copies.
This result, however, still leaves its systematic construction open: 

\textit{Given an arbitrary family of unit-range nonlinear functionals, how can we systematically construct jointly unbiased estimators with unit range?}

The commuting family construction above provides a sufficient but not a necessary condition for unit-range joint estimation (SM~\cite{Supplemental}, Appendix A). While the standard measures of measurement incompatibility is formulated for collections of POVMs~\cite{heinosaari2015noise,guhne2023colloquium,carmeli2016quantum}, it does not apply to nonlinear polynomials of a state.
This motivates us to introduce a new quantitative measure to quantify
the cost of joint estimation for nonlinear functionals.

\prlsection{The quantum Bernstein norm.} Consider a finite family of state functionals $\mathcal{F} = \{f_\alpha(\rho)\}_{\alpha \in \mathcal{A}}$. As shown in Fig.~\ref{fig:main}, a protocol repeats one collective POVM $\{\Pi_j\}_{j \in \mathcal{J}}$ on $\rho^{\otimes k}$ and builds an unbiased estimator for each target from the same outcomes, so that $f_\alpha(\rho) = \sum_{j} h_\alpha(j) \tr(\Pi_j \rho^{\otimes k})$ for all states $\rho$ and all $\alpha \in \mathcal{A}$. The $k$-replica quantum Bernstein norm $\|\mathcal{F}\|_k$ is the smallest achievable estimator range,
\begin{align}\label{eq:mainthe2}
\|\mathcal{F}\|_k := \min_{\{\Pi_j\}, \{h_\alpha\}} \max_{\alpha, j} |h_\alpha(j)|.
\end{align}
By Hoeffding's inequality and a union bound~\cite{hoeffding1963probability}, $N = 2\|\mathcal{F}\|_k^2 \varepsilon^{-2} \log(2|\mathcal{A}|/\delta)$ rounds suffice to estimate all targets to error $\varepsilon$ with confidence $1-\delta$, so the norm controls the sample cost of joint unbiased estimation. In the classical limit, where all operators are diagonal in a fixed basis, the POVM elements become Bernstein polynomials and the norm reduces to its classical counterpart~\cite{wu2020polynomial,lorentz2012bernstein}, which motivates the terminology (SM~\cite{Supplemental}, Appendix B).

\begin{figure}[htbp!]
    \centering
    \includegraphics[width=\columnwidth]{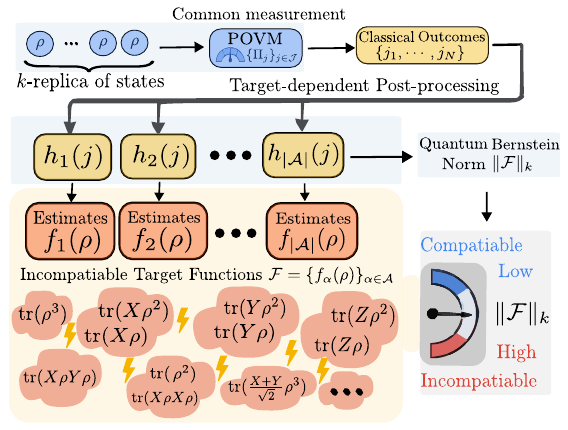}
    \caption{Operational meaning of the $k$-replica quantum Bernstein norm $\|\mathcal{F}\|_k$. A common $k$-copy POVM on $\rho^{\otimes k}$ produces a classical outcome $j$. The post-processing function $h_\alpha(j)$ forms an unbiased estimator for each target functional $f_\alpha$. The $\|\mathcal{F}\|_k$ is the minimum possible range of these post-processing functions. Thus, a small $\|\mathcal{F}\|_k$ means that the nonlinear properties in $\mathcal{F}$ can be jointly estimated without bias and with small overhead.}
    \label{fig:main}
\end{figure}

Once each degree-$k$ polynomial functional is written as a symmetric $k$-replica observable $F_\alpha$ with $f_\alpha(\rho) = \tr(F_\alpha \rho^{\otimes k})$, the norm becomes computable by a semidefinite program.

\begin{theorem}[Computability]\label{thm:main_dual}
    $\|\mathcal F\|_k$ equals the optimal value of an explicit semidefinite
    program with strong duality (SM Appendix~B~\cite{Supplemental}). In compact form:
    \begin{align}
      &\text{Primal:}\quad  \|\mathcal F\|_k
        = \min_{\{h_\alpha(j)\},\{\Pi_j\succeq 0\}}\;\lambda,
        \nonumber\\[-0.25em]
        &\text{s.t.}\,
        \sum_j h_\alpha(j)\Pi_j=F_\alpha, \;\,
        |h_\alpha(j)|\le\lambda, \;\,
        \sum_j \Pi_j=\mathbb{I}^{\otimes k},\label{eq:mainprimal}\\[0.45em]
        &\text{Dual:}\quad  \|\mathcal F\|_k
        = \max_{Z,\{Y_\alpha\}}\;
        \sum_{\alpha}\tr(Y_\alpha F_\alpha),
        \nonumber\\[-0.35em]
        &\text{s.t.}\quad\tr Z=1, \quad Z\succeq\sum_{\alpha}c_\alpha Y_\alpha,
        \quad\forall c_\alpha\in\{+1,-1\}.\label{eq:maindual2}
    \end{align}
\end{theorem}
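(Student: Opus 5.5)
The plan is to turn the bilinear primal in Eq.~\eqref{eq:mainprimal}, where the products $h_\alpha(j)\Pi_j$ are unknown, into a linear SDP. The tool is a canonical relabelling of outcomes by sign patterns; after that, the dual in Eq.~\eqref{eq:maindual2} follows from Lagrangian duality and Slater's condition. Throughout, I take the unbiasedness condition to be the operator identity $\sum_j h_\alpha(j)\Pi_j=F_\alpha$, as in the statement. Unbiasedness for all $\rho$ only fixes the compression onto the symmetric subspace. Since $F_\alpha$ is symmetric, I would first note that one may work on $\mathrm{Sym}^k$, or equivalently symmetrize the POVM by twirling over replica permutations, without changing the range. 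This is a routine remark I would make first.

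\emph{Step 1 (canonical form).} Fix a feasible $(\{\Pi_j\},\{h_\alpha\},\lambda)$. For each $j$, the vector $(h_\alpha(j))_\alpha$ lies in the cube $[-\lambda,\lambda]^{\mathcal A}$. It is therefore a convex combination $\sum_{c\in\{\pm1\}^{\mathcal A}} p_j(c)\,\lambda c$ of the cube's vertices. Define the coarse-grained POVM $\Pi_c:=\sum_j p_j(c)\Pi_j$ with estimator $h_\alpha(c)=\lambda c_\alpha$. Then $\sum_c\Pi_c=\mathbb I^{\otimes k}$ and $\sum_c \lambda c_\alpha\Pi_c=\sum_j h_\alpha(j)\Pi_j=F_\alpha$, and the range is still $\lambda$. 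Hence one may restrict without loss to outcome set $\{\pm1\}^{\mathcal A}$ and $h_\alpha(c)=\lambda c_\alpha$. Setting $M_c:=\lambda\Pi_c$ gives the linear SDP
\[
\min\ \lambda\quad\text{s.t.}\quad M_c\succeq0,\ \ \sum_c M_c=\lambda\,\mathbb I^{\otimes k},\ \ \sum_c c_\alpha M_c=F_\alpha\ \ \forall\alpha .
\]
Conversely, any feasible point with $\lambda>0$ yields a POVM $\Pi_c=M_c/\lambda$ achieving range $\lambda$. So this SDP has optimal value exactly $\|\mathcal F\|_k$, and the primal minimum is attained.

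\emph{Step 2 (dual).} Introduce Hermitian multipliers $Z$ for the constraint $\sum_c M_c=\lambda\mathbb I$ and $Y_\alpha$ for the $F_\alpha$ constraints. The Lagrangian is
\[
\lambda(1-\tr Z)+\sum_c\tr\!\Big[M_c\Big(Z-\sum_\alpha c_\alpha Y_\alpha\Big)\Big]+\sum_\alpha\tr(Y_\alpha F_\alpha).
\]
Minimizing over free $\lambda$ forces $\tr Z=1$. Minimizing over $M_c\succeq0$ forces $Z\succeq\sum_\alpha c_\alpha Y_\alpha$ for every sign vector $c$. What remains is the objective $\sum_\alpha\tr(Y_\alpha F_\alpha)$, which is exactly Eq.~\eqref{eq:maindual2}. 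Weak duality is immediate from this computation.

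\emph{Step 3 (strong duality).} I would verify Slater's condition on the primal. Take
\[
M_c=2^{-|\mathcal A|}\Big(\lambda\,\mathbb I+\sum_\beta c_\beta F_\beta\Big).
\]
The character orthogonality $\sum_c c_\alpha c_\beta=2^{|\mathcal A|}\delta_{\alpha\beta}$ and $\sum_c c_\beta=0$ give $\sum_c c_\alpha M_c=F_\alpha$ and $\sum_c M_c=\lambda\mathbb I$. For $\lambda>\sum_\beta\|F_\beta\|_\infty$, every $M_c\succ0$, so the primal is strictly feasible. The dual is also feasible, for example $Y_\alpha=0$ and $Z=\mathbb I/d^k$. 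Hence the duality gap is zero and both optima are attained, the latter since the primal value is finite. I expect the main obstacle to be Step 1: justifying that the nonconvex joint optimization over POVMs and post-processings loses nothing under the vertex decomposition. The symmetric-subspace bookkeeping, i.e.\ whether equality is imposed on the full space or only on $\mathrm{Sym}^k$, also needs care. In the latter case I would replace $\mathbb I^{\otimes k}$ by the symmetric projector and restrict $Z$ and $Y_\alpha$ accordingly.
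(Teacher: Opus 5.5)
Your proposal is correct and follows the same route as the paper's proof, which is the lemma ``Dual problem of the robustness problem'' in Appendix~B taken at $b=0$ with unrestricted POVMs: it coarse-grains onto sign vectors with the product weights $p_j(c)=2^{-|\mathcal A|}\prod_\alpha\bigl(1+\lambda^{-1}c_\alpha h_\alpha(j)\bigr)$, a concrete instance of your cube-vertex decomposition, rescales $N_c=\lambda M_c$, forms the same Lagrangian, checks Slater's condition at the same interior point $2^{-|\mathcal A|}\bigl(\lambda\,\mathbb I^{\otimes k}+\sum_\alpha c_\alpha F_\alpha\bigr)$, and obtains the full-space operator identity from the $S_k$-twirl in its Linearization theorem. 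One correction to your side remark: unbiasedness over all mixed $\rho$ fixes the permutation twirl $\Phi_k\bigl(\sum_j h_\alpha(j)\Pi_j\bigr)$, not merely the compression onto $\mathrm{Sym}^k$, because mixed $\rho^{\otimes k}$ is not supported on $\mathrm{Sym}^k$. Your fallback of replacing $\mathbb I^{\otimes k}$ by the symmetric projector therefore computes a pure-state-only quantity and can fail; for example, for $\tr\rho^2-1$ the compression of $F=W_{(12)}-\mathbb I^{\otimes 2}$ onto $\mathrm{Sym}^2$ vanishes, although the norm is $2$. Drop that fallback and keep the twirling argument you already give.
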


Every primal feasible point of Eq.\eqref{eq:mainprimal} corresponds to an explicit estimation protocol and yields an upper bound on the norm. The commuting family $\{O_P^{(m)}\}_{P \in \mathcal{P}_n}$ together with its eigenvalues forms a feasible point of Eq.\eqref{eq:mainprimal}. Conversely, the estimator range can never fall below the largest attainable value of the target functional, so $\|\mathcal{F}\|_k \geq \sup_{\alpha, \rho} |f_\alpha(\rho)|$, and the right-hand side equals one for the family in Eq.\eqref{eq:main_Fm2}. The upper and lower bounds coincide, which establishes

\begin{theorem}[Collapse]\label{thm:main_collapse} For every $m$, $\|\mathcal{F}_m^{(2)}\|_{2m} = 1$.
\end{theorem}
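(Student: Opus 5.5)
The plan is to sandwich the norm between a matching lower bound and upper bound, both equal to one.

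\emph{Lower bound.} Take any feasible protocol $\{\Pi_j\},\{h_\alpha\}$ for the primal in Theorem~\ref{thm:main_dual}. For every state $\rho$,
\[
|f_\alpha(\rho)| = \Bigl|\sum_j h_\alpha(j)\tr(\Pi_j\rho^{\otimes 2m})\Bigr| \le \max_j|h_\alpha(j)|,
\]
because the outcome probabilities sum to one. Hence $\|\mathcal F\|_{2m}\ge \sup_{\alpha,\rho}|f_\alpha(\rho)|$. For $P=I$ and any pure state, $[\tr(\rho^m)]^2=1$, so this lower bound equals one.

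\emph{Upper bound.} I will construct the commuting family $O_P^{(m)}$ and read off a primal feasible point. Label replicas $1,\dots,m$ (block A) and $m+1,\dots,2m$ (block B). Let $C_A$ be the cyclic shift on block A, so that $\tr(P_1 C_A\rho^{\otimes m})=\tr(P\rho^m)$, and define $C_B$ the same way on block B. Let $W$ be the swap exchanging the two blocks replica-by-replica. The natural generalization of Eq.~\eqref{eq:main_OP} is
\[
O_P^{(m)} := \tfrac12\bigl(P_1P_{m+1}C_AC_B\,W + W\,(P_1P_{m+1}C_AC_B)^\dagger\bigr),
\]
or some equivalent Hermitian symmetrization. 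Its expectation is $\tr(P\rho^m)\,\tr(P\rho^m)^* = [\tr(P\rho^m)]^2$, since $\tr(P\rho^m)$ is real. I will first check that this reproduces Eq.~\eqref{eq:main_OP} at $m=2$. There the cycles are swaps, and $W_{13}W_{24}$ combined with $P_1P_2+P_3P_4$ plays the same role.

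\emph{Commutation.} Write $U_P=P_1P_{m+1}C_AC_B$. Then $U_P$ is unitary, and $U_PW$ is a unitary that squares to a permutation-type operator. For anticommuting $P,Q$, the sign picked up when commuting $P_1$ past $Q_1$ in block A is matched by an equal sign from $P_{m+1}$ and $Q_{m+1}$ in block B. The permutation parts $C_AC_B$ and $W$ are common to all $P$ and commute with each other, because $W$ intertwines $C_A$ and $C_B$. So all the $U_PW$ generate a commutative algebra together with their adjoints. This is the paired cancellation mechanism.

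\emph{Spectrum and feasibility.} Each $O_P^{(m)}$ is the Hermitian part of a unitary, so its spectrum lies in $[-1,1]$. Measuring in a common eigenbasis $\{\ket{e_j}\}$ with $\Pi_j=\proj{e_j}$ and $h_P(j)$ equal to the eigenvalue gives $\sum_j h_P(j)\Pi_j=O_P^{(m)}$. This matches $F_P$ only up to symmetrization, but since only $\tr(F\rho^{\otimes 2m})$ matters, I will either take $F_P:=O_P^{(m)}$ or allow non-symmetric representatives. With $\lambda=1$, the upper bound follows.

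\emph{Main obstacle.} The hard part is verifying commutation rigorously for general $m$. The single Pauli placement is fine, but the Hermitian symmetrization pairs $U_PW$ with $(U_PW)^\dagger$, so I must show these all lie in one abelian algebra. The step I would try first is to show that $(U_PW)^\dagger$ equals $U_P'W'$ for the same fixed permutation structure. Failing that, I would put $P$ on all $m$ replicas via a conjugated cyclic form and redo the sign count.
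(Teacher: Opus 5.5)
Your lower bound is correct and matches the paper's: any feasible protocol has range at least $\sup_{\alpha,\rho}|f_\alpha(\rho)|$, and $P=I$ on a pure state gives $1$. Your remark that non-symmetric linearizations suffice is also fine. The upper bound, however, fails for the operator you wrote down, in two ways.

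The first problem is that the expectation is wrong for odd $m$. Since $W$ commutes with $C_AC_B$, the permutation underlying $V:=C_AC_BW$ acts, up to inversion, as $i\mapsto m+i+1$ and $m+i\mapsto i+1$, with indices cyclic within each block.
\begin{itemize}
\item For even $m$ this permutation is two $m$-cycles, one through replica $1$ and one through replica $m+1$, so you do get $[\tr(P\rho^m)]^2$.
\item For odd $m$ it is a single $2m$-cycle on which replicas $1$ and $m+1$ sit $m$ steps apart. Then $\tr(P_1P_{m+1}V\rho^{\otimes 2m})=\tr(P\rho^mP\rho^m)$.
\item Already at $m=1$ your operator is $P_1P_2W_{12}$. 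That is the Bell-sampling observable for $\tr(P\rho P\rho)$, not for $[\tr(P\rho)]^2$.
\end{itemize}
The block swap should not be there at all. In Eq.~\eqref{eq:main_OP} the cycled blocks are $\{1,3\}$ and $\{2,4\}$, so $W_{13}W_{24}$ is itself the product of the two cyclic shifts, and $P_1P_2$, $P_3P_4$ are the two cross-block pairs.

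The second problem is the real missing idea: your commutation argument is false. A permutation that is common to all $P$ does not help, because it does not commute with a single-pair insertion. Concretely,
\[
U_PW\,U_QW=P_1P_{m+1}\,(VQ_1Q_{m+1}V^\dagger)\,V^2 ,
\]
and $VQ_1Q_{m+1}V^\dagger$ is $Q$ acting on a different pair of replicas, where paired sign cancellation says nothing. So the $U_PW$ do not commute even at $m=2$.

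Taking the Hermitian part only adds one more pair, $O_P=\tfrac12(P_1P_{m+1}V+P_aP_bV^\dagger)$. At $m=4$, where $V^2=V^{-2}$, the commutator $[O_P,O_Q]$ equals $\tfrac14$ times a signed sum of four distinct Pauli strings, times the invertible $V^2$. It is therefore nonzero for $P\neq Q$. At $m=2$ the Hermitian part happens to cover both pairs, which is why your check against Eq.~\eqref{eq:main_OP} looks fine. Neither of your fallbacks repairs this. The obstruction is not signs; it is that an operator can commute with the permutation only if it is invariant under conjugation by it.

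The fix used in the paper (Theorem~\ref{thm:UB_nonlin_P_shad}) is to drop $W$, set $C:=C_AC_B$, and average the pair over the whole cyclic orbit, $T_P:=\frac1m\sum_{i=1}^m P_iP_{m+i}$. Then:
\begin{itemize}
\item Conjugation by $C$ maps the pair $(i,m+i)$ to $(i+1,m+i+1)$, so $[T_P,C]=[T_P,C^\dagger]=0$.
\item The $T_P$ commute among themselves: terms on different pairs act on disjoint replicas, and terms on the same pair commute by paired cancellation.
\item Each term satisfies $\tr(P_iP_{m+i}C\rho^{\otimes 2m})=[\tr(P\rho^m)]^2$, because the two blocks contract separately.
\item $O_P:=T_P\,\tfrac{C+C^\dagger}{2}$ is Hermitian, pairwise commuting, and of norm at most $1$. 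It has the right expectation, since the $C^\dagger$ term contributes the complex conjugate of a real number.
\end{itemize}
The common eigenbasis of these $O_P$ is the primal feasible point with $\lambda=1$.
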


This norm extends the established theory of measurement incompatibility. Restricted to the linear expectation values of binary POVMs $\mathcal{M} = \{M_{\pm|i}\}_{i=1}^K$, it recovers the $k$-replica homogeneous robustness of measurement incompatibility~\cite{heinosaari2015noise,carmeli2016quantum},
\begin{align}\label{eq:main_robustness}
r_k^*(\mathcal{M}) = \max\{\|\mathcal{F}_{\mathcal{M}}\|_k - 1, 0\},
\end{align}
where $\mathcal{F}_{\mathcal{M}}$ collects the expectation values $\tr[(M_{+|i} - M_{-|i})\rho]$, as shown in SM~\cite{Supplemental}, Appendix B. For a single qubit, $\|\{\tr(X\rho), \tr(Y\rho), \tr(Z\rho)\}\|_1 = \sqrt{3}$ reproduces the incompatibility of the three Pauli measurements~\cite{Busch1986}. For the $n$-qubit Pauli family under site-local measurements, the norm evaluates to $3^{n/2}$, reproducing the exponential lower bound~\cite{huang2020predicting} (SM~\cite{Supplemental}, Appendix D).

Every dual feasible point of Eq.\eqref{eq:maindual2} yields a lower bound on the norm. Symmetry reduction by the Clifford and unitary commutants~\cite{gross2021schur,collins2006integration,bittel2025} makes these bounds analytically computable on standard benchmarks, reproducing the known lower bounds for Pauli shadow tomography~\cite{chen2024optimal} and purity estimation~\cite{gong2024sample} (SM~\cite{Supplemental}, Appendix D).
Interestingly, the dual optimum in Theorem~\ref{thm:main_dual} also takes an operational form, a game in which Alice hides a sign and Bob tries to find it (SM~\cite{Supplemental}, Appendix C). This lifts the known link
between measurement incompatibility and state discrimination~\cite{skrzypczyk2019all} from POVMs to functional families.


\prlsection{Recovering the signs.} The squared moments delivered by Theorem~\ref{thm:main_collapse} already suffice for several applications. The stabilizer Rényi entropy~\cite{leone2022stabilizer} of the distilled state depends only on even powers, so these squared moments directly access its magic. 
The CBDS circuit also potentially accelerates variational quantum eigensolvers~\cite{cao2024accelerated}.
Many other applications, such as energies of many-term Hamiltonians on the distilled state, require the signed moments $\tr(P\rho^m)$, which cannot be determined by the amplitude alone. 
A protocol that recovers those signed moments is then designed to overcome this limitation (SM~\cite{Supplemental}, Appendix A).

We solve this problem by coherently constructing $\rho^{\otimes m}$ and further interfered it with a mimicking state $\chi$ through Bell measurement.
This interference technique jointly estimates $\tr(P\rho^m)\tr(P \chi)$. If the known factor $\tr(P \chi)$ has a large overlap with the significant $\tr(P\rho^m)$ moments, the signed moments can be estimated by dividing out $\tr(P \chi)$.
The protocol therefore has two steps.
First, Bell measurement is used to jointly estimate $q_P=\tr(P\rho P\rho)$ and identify significant moments. 
Built from $q_P$, a random mimicking state $\chi$ with large overlap with those moments is constructed, as a direct generalization of the mimicking state technique in Pauli shadow tomography~\cite{chen2024optimal,king2025triply}.
Then a controlled cyclic permutation circuit (SM~\cite{Supplemental}, Figure 2) are used to jointly estimate $\tr(P\rho^m)\tr(P \chi)$.
Concentration inequality therefore gives the $\varepsilon^{-4}$ dependence of Theorem~\ref{main-pauli-vd}. 

\begin{theorem}\label{main-pauli-vd} For $m \geq 2$, there exists an $m$-replica protocol that estimates the Pauli moments $\{\tr(P\rho^m)\}_{P \in \mathcal{P}_n}$ within additive error $\varepsilon$ with confidence at least $1-\delta$, using at most
\begin{align}\label{eq:main_N}
N = O\left(\frac{m[n + \log(1/\delta)]}{\varepsilon^4}\right)
\end{align}
copies of $\rho$.
\end{theorem}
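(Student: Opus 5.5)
The plan follows the two-step mimicking-state scheme sketched above, adapted from Pauli shadow tomography \cite{chen2024optimal,king2025triply}.

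\emph{Stage 1: locate the significant moments.} Spend $N_1$ rounds of Bell sampling on two replicas. Each outcome is a Bell index $b$, and $(-1)^{\langle P,b\rangle}$ is an unbiased $\pm1$ estimator of $q_P=\tr(P\rho P\rho)$ for every $P$ at once, since $P\otimes P$ commute. By Hoeffding and a union bound over $4^n$ Paulis, $N_1=O([n+\log(1/\delta)]/\varepsilon^2)$ rounds give all $q_P$ to accuracy $\varepsilon^2/4$.

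The link to the targets is the bound $|\tr(P\rho^m)|\le \|\rho^{m-1}\|_2\,\|P\rho P\|_2$. More usefully, $|\tr(P\rho^m)|^2 \le \tr(P\rho P\rho)$ for $m\ge 2$: write $\tr(P\rho^m)=\tr\bigl((\rho^{1/2}P\rho^{1/2})\rho^{m-1}\bigr)$, apply Cauchy--Schwarz in Hilbert--Schmidt norm, and use $\|\rho^{m-1}\|_2\le 1$. I will verify this inequality carefully. Given it, any $P$ with estimated $q_P<\varepsilon^2/2$ has $|\tr(P\rho^m)|\le\varepsilon$, so we output $0$ for it. The set $S$ of remaining Paulis satisfies $|S|\le \sum_P q_P/(\varepsilon^2/4)=O(2^n/\varepsilon^2)$, because $\sum_P q_P = 2^n\tr(\rho)^2$ by the Pauli twirl identity $\sum_P P\rho P=2^n\tr(\rho)\mathbb{I}$.

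\emph{Stage 2: mimicking state and signed estimation.} Following \cite{chen2024optimal}, take $\chi=(\mathbb{I}+\eta\sum_{P\in S'} s_P P)/2^n$ with random signs $s_P$, restricted to a commuting or otherwise controlled subset so that $\chi\succeq0$. The normalization $\eta$ is set so that $|\tr(P\chi)|\gtrsim\varepsilon$ for $P\in S$. This is where the $|S|=O(2^n/\varepsilon^2)$ bound enters: it is what makes $\eta\sim\varepsilon$ admissible. Then use a controlled cyclic permutation on $\rho^{\otimes m}$ (Hadamard test with an ancilla), followed by a Bell measurement between the output register and $\chi$. The resulting outcome statistic is a bounded unbiased estimator of $\tr(P\rho^m)\tr(P\chi)$ for all $P$ simultaneously, with range $O(1)$. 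The reason is again paired cancellation: $P\otimes P$ acting across the $\rho^m$ register and the $\chi$ register commute. After $N_2$ rounds each product is known to $\varepsilon^2$. Dividing by the known $\tr(P\chi)$, which is of order $\varepsilon$, gives error $\varepsilon$ on each $P\in S$. Hoeffding and a union bound then give $N_2=O([n+\log(1/\delta)]/\varepsilon^4)$ rounds, each consuming $m$ copies.

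Adding the two stages gives the claimed $O(m[n+\log(1/\delta)]/\varepsilon^4)$ copies.

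\emph{Main obstacle.} The hard step is building $\chi$: it must be a genuine state while still having overlap of order $\varepsilon$ with every Pauli in $S$, for all of $S$ simultaneously. Random signs plus the Frobenius bound on $S$ give positivity only on average, not for every realization. My first attempt would be the construction of \cite{chen2024optimal}: choose $\chi$ as a random mixture of stabilizer-like states weighted by the estimated $q_P$. This gives $\mathbb{E}_\chi[\tr(P\chi)^2]\gtrsim q_P$, which is an averaged rather than a pointwise guarantee. I would then switch to estimating $\mathbb{E}_\chi[\tr(P\rho^m)\tr(P\chi)s_P]$ over fresh random $\chi$ per round, which preserves unbiasedness while keeping each round bounded.
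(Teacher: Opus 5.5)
Stage 1, the inequality $\tr(P\rho^m)^2\le q_P$, and the Hadamard-test-plus-Bell primitive all match the paper. Stage 2, however, has a genuine gap.

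\textbf{The explicit state is not positive.} The operator $\chi=(\mathbb{I}+\eta\sum_{P\in S'}s_PP)/2^n$ is not a state for $\eta\sim\varepsilon$. Write $A=\sum_{P\in S'}s_PP$. Then $\tr A^2=2^n|S'|$, so $\|A\|_\infty\ge\sqrt{|S'|}$. Under uniformly random signs, the symmetry $s\mapsto-s$ gives $\lambda_{\min}(A)\le-\sqrt{|S'|}$ with probability at least $1/2$. Positivity then forces $\eta\le|S'|^{-1/2}$, i.e.\ $|S'|\lesssim\varepsilon^{-2}$. Your bound $|S|=O(2^n/\varepsilon^2)$ cannot rescue this. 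Aligning signs on a commuting subset (a stabilizer group) restores positivity but only covers the Paulis of that one group.

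\textbf{The ensemble's existence is not established.} Since $S$ is cut out by $q_P$, which only upper-bounds $\tr(P\rho^m)^2$, nothing guarantees a single state with pointwise overlap $\gtrsim\varepsilon$ on all of $S$. This is why the paper works from the outset with an ensemble and an averaged guarantee. Your fallback names this idea but asserts the existence of a ``stabilizer-like mixture weighted by $\hat q_P$'' with $\mathbb{E}\tr(P\chi)^2\gtrsim q_P$ without proof. The direct reading of it only guarantees $\mathbb{E}\tr(P\chi)^2\gtrsim q_P/\sum_Q q_Q=q_P/2^n$.

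The missing lemma is the paper's reference ensemble. Write $\rho=\sum_a\lambda_a\ket{a}\bra{a}$ and draw $(a,b)$ with probability $\lambda_a\lambda_b$. Output $\ket{a}$ if $a=b$, and otherwise $(\ket{a}+\omega\ket{b})/\sqrt2$ with $\omega$ uniform on $\{\pm1,\pm i\}$. Averaging over the phase removes the cross terms and gives $\mathbb{E}\tr(P\psi)^2\ge\tfrac12\sum_{a,b}\lambda_a\lambda_b|\bra{a}P\ket{b}|^2=q_P/2$. This ensemble depends on the unknown eigenbasis. The paper therefore passes to a trace-norm net of pure states (precision $\varepsilon^2/64$) and solves a finite linear feasibility problem for the weights using only $\hat q_P$. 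This guarantees $h_P:=\sum_j w_j\tr(P\chi_j)^2\ge 3\varepsilon^2/16$ on the active set. That search is exactly why the theorem is only information-theoretic.

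\textbf{The estimator and concentration step do not carry over.} With only an averaged guarantee you cannot divide by $\tr(P\chi)$ round by round, and your $O(1)$-range Hoeffding count no longer yields $\varepsilon^{-4}$. The paper draws a fresh $\chi_j$ each round and outputs $Z_P=\tr(P\chi_j)\,Y_P/h_P$. Here $Y_P=\pm1$ is the Hadamard-test sign times the Bell eigenvalue of $P\otimes P$, so $\mathbb{E}[Y_P\mid\chi_j]=\tr(P\rho^m)\tr(P\chi_j)$ and $\mathbb{E}Z_P=\tr(P\rho^m)$. The range of $Z_P$ is of order $1/h_P\sim\varepsilon^{-2}$, so Hoeffding alone gives $\varepsilon^{-6}$. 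The $\varepsilon^{-4}$ rate comes from the small second moment $\mathbb{E}Z_P^2=1/h_P$, fed into Bernstein's inequality. This gives failure probability at most $2\exp(-\tfrac{9}{160}R_2\varepsilon^4)$ per Pauli before the union bound.

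\textbf{Minor slip in Stage 1.} Estimating $q_P$ to accuracy $\Theta(\varepsilon^2)$ with a $\pm1$ estimator costs $O([n+\log(1/\delta)]/\varepsilon^4)$ Bell rounds, not $\varepsilon^{-2}$. This does not change the final bound.
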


Three remarks delimit the scope of Theorem~\ref{main-pauli-vd}. 
First, sample mimicking state $\chi$ requires solving a linear satisfiability program that is not always computationally efficient. This makes Theorem~\ref{main-pauli-vd} just an information-theoretic upper bound.
Second, the protocol matches the existing lower bounds~\cite{ye2025replica}, which states that any $(m-1)$-replica protocol requires at least $\Omega(2^{n/2})$ copies to estimate $\tr(P\rho^m)$ for a single $P$.
Third, Theorem~\ref{main-pauli-vd} trades the $\varepsilon^{-4}$ for an exponential saving in $n$. To our knowledge, Eq.\eqref{eq:main_N} is the first sample complexity upper bound for this family that reaches $\mathcal{O}(n)$ and minimum optimal replica number requirement. A detailed comparison table is given in SM~\cite{Supplemental} Appendix A.

\prlsection{Discussion and outlook.}
In this Letter, we show that extending the Bell sampling to its coherent version
can learn all $4^n$ squared Pauli moments $[\tr(P\rho^2)]^2$ with $O\!\left({[n + \log(1/\delta)]}{\varepsilon^{-2}}\right)$ copies of $\rho$. 
The structure of this construction is further revealed by the collapse of the quantum Bernstein norm, which serves as natural measure for the incompatibility between nonlinear functionals of $\rho$.
We also show how to information-theoretically recover the signs of the signed Pauli moments $\tr(P\rho^m)$ using only $O\!\left({m[n + \log(1/\delta)]}{\varepsilon^{-4}}\right)$ copies of $\rho$ and $m$-replica strategy, which is the first efficient protocol that achieves the optimal replica requirement for this family.

Several directions remain open.
First, the $4$-replica protocol for $[\tr(P\rho^2)]^2$ is computationally efficient while the $2$-replica protocol in Theorem~\ref{main-pauli-vd} for signed Pauli moments $\{\tr(P\rho^2)\}_{P\in\mathcal P_n}$ is not. How to design a computationally efficient estimation protocol for signed moments remains open. Second, in analogy with the polynomial methods in the nonparametric estimation~\cite{jiao2015minimax,polyanskiy2019dualizing,wu2020polynomial},
 it would be interesting to use the $\mathrm{QB}_k$-norm to design biased estimators for non-polynomial nonlinear properties including the R\'enyi entropies~\cite{acharya2020estimating} and Fisher information~\cite{vitale2024robust}.

\begin{acknowledgments}
\textit{Acknowledgments.} We thank Chao-Yang Lu, Yu-Hao Deng, Zi-Han Chen, and Boyan Zhang for helpful discussions. This work was supported by the Quantum Science and Technology--National Science and Technology Major Project (Grant No. 2025ZD0300300) and the National Natural Science Foundation of China (Grant No. 12504584).
\end{acknowledgments}


\nocite{cirac1999optimal,childs2025streaming,zhao2026power,zhou2024hybrid,mcclean2016theory,hangleiter2023computational,hangleiter2024bell,fan2012hoeffding,rao2014nonparametric,hoeffding1992class,halmos1946theory,leblanc2012estimating,gharbi2025bernstein,marco2010polynomial,lehmann2011completeness,landsberg2011tensors,dartois2024injective,harrow2013testing,friedlander2014gauge,boyd2004convex,lecam1973convergence,coles2017entropic,busch2007heisenberg,heinosaari2016invitation,designolle2019incompatibility,kadison1952generalized,innocenti2023shadow,mcnulty2023estimating,nguyen2022optimizing,huang2021efficient,hadfield2022measurements,korhonen2025improving,cha2026operator,oszmaniec2017simulating,guerini2017operational,cuffaro2024quantum,dankert2009exact,seddon2021quantifying,skowronek2009cones,horodecki2001separability,guhne2009entanglement}
\bibliography{references}

\clearpage
\onecolumngrid
\setcounter{figure}{0}
\setcounter{table}{0}
\setcounter{algocf}{0}
\makeatletter
\setcounter{theorem}{0}
\@addtoreset{theorem}{section}
\renewcommand{\thetheorem}{\thesection\arabic{theorem}}
\setcounter{proposition}{0}
\@addtoreset{proposition}{section}
\renewcommand{\theproposition}{\thesection\arabic{proposition}}
\setcounter{lemma}{0}
\@addtoreset{lemma}{section}
\renewcommand{\thelemma}{\thesection\arabic{lemma}}
\setcounter{corollary}{0}
\@addtoreset{corollary}{section}
\renewcommand{\thecorollary}{\thesection\arabic{corollary}}
\setcounter{problem}{0}
\@addtoreset{problem}{section}
\renewcommand{\theproblem}{\thesection\arabic{problem}}
\setcounter{definition}{0}
\@addtoreset{definition}{section}
\renewcommand{\thedefinition}{\thesection\arabic{definition}}
\setcounter{remark}{0}
\@addtoreset{remark}{section}
\renewcommand{\theremark}{\thesection\arabic{remark}}
\makeatother

\renewcommand{\thetitle}{Pauli-resolved virtual distillation}
\makeatletter
\newcommand{\appendixtitle}[1]{\gdef\@title{#1}}
\newcommand{\appendixauthor}[1]{\gdef\@author{#1}}
\newcommand{\appendixaffiliation}[1]{\gdef\@affiliation{#1}}
\newcommand{\appendixdate}[1]{\gdef\@date{#1}}
\makeatother

\makeatletter%
\newcommand{\appendixmaketitle}{%
\begin{center}%
\vspace{0.4in}%
{\large \@title \par}%
\end{center}%
\par%
}%
\makeatother%

\makeatletter
\newcommand{\appendixtableofcontents}{%
\begingroup
\setcounter{tocdepth}{3}%
\@starttoc{atoc}%
\endgroup}
\makeatother

\newcommand{\appendixtocdivider}{%
\par\medskip
\noindent\hbox to \linewidth{%
\leaders\hbox{\rule[0.6ex]{1pt}{0.4pt}}\hfill
\hspace{0.8em}\textsc{Table of Contents}\hspace{0.8em}%
\leaders\hbox{\rule[0.6ex]{1pt}{0.4pt}}\hfill}%
\par}

\newcommand{\appsection}[1]{%
\section{#1}%
\addcontentsline{atoc}{section}{\protect\numberline{\thesection}#1}}

\newcommand{\appsubsection}[1]{%
\subsection{#1}%
\addcontentsline{atoc}{subsection}{\protect\numberline{\thesubsection}#1}}

\newcommand{\appsubsubsection}[1]{%
\subsubsection{#1}%
\addcontentsline{atoc}{subsubsection}{\protect\numberline{\thesubsubsection}#1}}

\setcounter{secnumdepth}{3}
\appendix

\appendixtitle{\bf
Supplemental Material for ~"\thetitle"}
\appendixmaketitle
\vspace{0.2in}

The Supplemental Material is organized as follows.

In Appendix A, we prove all the technical results underlying the Pauli-resolved virtual distillation protocol of the Letter. 
 For $m=2$, we give the detailed construction of the coherent Bell difference sampling (CBDS) circuit (Figure 1 in the Letter) and the post-processing algorithm (Algorithm~\ref{alg:four-replica-squared-pauli}). 
For $m>2$, we show how an $m$-replica interference measurement with a mimicking ensemble, combined with a two-replica Bell sampling stage that identifies the active Pauli subset, estimates all signed moments $\{\operatorname{tr}(P\rho^m)\}_{P\in\mathcal{P}_n}$ using $O(m[n+\log(1/\delta)]/\varepsilon^4)$ copies of $\rho$. This proves Theorem 3 of the Letter. 
An example of non-commuting unit-range operators that can be jointly estimated with unit range is constructed as an endnote in Appendix A, leading into the introduction of the quantum Bernstein norm.

In Appendix B, we develop the quantum Bernstein norm $\|\mathcal{F}\|_k$ as an estimator-envelope measure for the joint unbiased estimation of polynomial functionals, extending classical polynomial estimation theory~\cite{lorentz2012bernstein,wu2020polynomial} to quantum states. 
We prove that the norm is the optimal value of a semidefinite program with strong duality (Theorem 1 of the Letter).
In particular, this norm reduces to the measurement incompatibility~\cite{guhne2023colloquium} when restricted to linear expectation values of binary POVMs.
This proves Eq. (7) of the Letter and indicates that the quantum Bernstein norm can be regarded as a measure of incompatibility between nonlinear functionals.

In Appendix C, we collect further properties of the quantum Bernstein norm that have independent interest.
We first discuss the operational meaning of the dual program as a state-discrimination game.
Then we give the proof of the subadditivity of this norm.
We also discuss the relation between the worst-case variance and the quantum Bernstein norm.

In Appendix D,
we benchmark the dual programs on the well-studied families $\{\operatorname{tr}(P\rho)\}_{P\in\mathcal{P}_n}$ and $\{\operatorname{tr}(\rho^2)\}$ under physically motivated measurement restrictions, recovering known results on their estimation complexity~\cite{chen2024optimal,gong2024sample}.

\appendixtocdivider
\appendixtableofcontents

\appsection{Pauli-resolved virtual distillation}\label{app:4-replica-strategy}

\appsubsection{Problem and our contributions}
Given multiple copies of a noisy state $\rho$,
the task of approximately preparing the principal eigenstate has a long history~\cite{cirac1999optimal,childs2025streaming,zhao2026power}.
More recently, focus has shifted from physical to virtual purification schemes~\cite{cai2023quantum,cotler2019quantum,huggins2021virtual}.
Given multiple copies of a noisy state $\rho$, \textit{virtual distillation} estimates nonlinear expectation values of the form $\langle O \rangle_{\mathrm{dist}} = {\tr(O\rho^m)}/{\tr(\rho^m)}$ for a given observable $O$.
Given that $\rho$ has the spectral decomposition $\rho = \sum_{i=1}^d \lambda_i \ket{\psi_i}\bra{\psi_i}$, the term:
\begin{align}
    \tilde{\rho} := \frac{\rho^m}{\tr(\rho^m)} = \sum_{i=1}^d \frac{\lambda^m_i}{\sum_{j=1}^d \lambda_j^m} \ket{\psi_i}\bra{\psi_i} = \sum_{i=1}^d \lambda^*_i \ket{\psi_i}\bra{\psi_i}\label{eq:A1}
\end{align}
concentrates the eigenvalues of $\rho$ to the eigenstates that have the largest eigenvalues, thus virtually distilling a quantum state with fewer orthogonal errors~\cite{huggins2021virtual}.
Typically, when the mixed state is a thermal state of a quantum many-body system with Hamiltonian $H$, the virtually distilled state $\tilde{\rho}$ gives the thermal state
\begin{align}
    \tilde{\rho} = \frac{[\rho(T)]^m}{\tr([\rho(T)]^m)} = \frac{e^{- m \beta H}}{Z(m\beta)} = \rho(T/m).\label{eq:A2}
\end{align}
This virtually cools the Gibbs state to a temperature $T/m$~\cite{cotler2019quantum}.

Many protocols have been proposed to estimate $\langle O\rangle_{\mathrm{dist}}$ for different observables $O$.
In this work, we focus on the full $n$-qubit Pauli observables $\mathcal{P}_n$, the most important observable set in experiments~\cite{o2023purification,seif2023shadow}.
A protocol that estimates $\langle P \rangle_{\mathrm{dist}}$ of all Pauli observables $P$ with a constant overhead is a \textit{Pauli-resolved virtual distillation protocol} (PVD protocol for short).
When $\tr(\rho^m)$ takes a constant value, the PVD protocol is equivalent to the joint estimation of all Pauli observables $\tr(P \rho^m)$ with a constant overhead.

Table~\ref{tab:vd-comparison} compares existing protocols for estimating Pauli power moments with our proposed protocol in the following sections.
For local $Z$ Pauli observables, Ref~\cite{huggins2021virtual} already provides a strategy efficient in qubit number $n$. 
Other protocols based on randomized measurements, SWAP-test circuits, or Bell measurements can be applied to any Pauli observable.
However, these strategies require higher-order unitary designs that are hard to implement in practice, and are not efficient in qubit number $n$. 
In comparison, we propose a protocol that is both sample-efficient and computationally efficient for Pauli moments $[\tr(P \rho^2)]^2$ (Algorithm~\ref{alg:four-replica-squared-pauli}).
We also propose a sample-efficient protocol for signed Pauli moments $\tr(P \rho^m)$ for $m>2$ that meets the optimal replica number requirements (Theorem~\ref{cor:sample-efficient-pauli-amplitude-spectroscopy}).

\begin{table*}[htbp!]
    \centering
    \small
    \caption{
        Comparison of protocols for estimating the Pauli power moments $\tr(P \rho^m)$.
        The column ``Replicas'' indicates the number of unknown copies of $\rho$ that must be collected in one round of this measurement strategy.
        For clarity, the sample complexity is specified by the estimation cost of the \textit{worst-case single target} in the target range.
        Covering all $M$ targets requires $\log(M/\delta)$ overhead, which gives the $\mathcal{O}(n)$ dependence in the main text if $M=4^n$.
    }
    \label{tab:vd-comparison}
    \setlength{\tabcolsep}{4pt}
    \begin{tabular}{@{}lllll@{}}
    \toprule
    \parbox[t]{0.19\textwidth}{\raggedright Protocol}
    & \parbox[t]{0.07\textwidth}{\centering Replicas}
    & \parbox[t]{0.19\textwidth}{\raggedright Target range}
    & \parbox[t]{0.17\textwidth}{\raggedright Sample complexity}
    & \parbox[t]{0.31\textwidth}{\raggedright Circuit requirements} \\
    \midrule
    \parbox[t]{0.19\textwidth}{\raggedright Bell sampling~\cite{huggins2021virtual}}
    & \parbox[t]{0.07\textwidth}{\centering $2$}
    & \parbox[t]{0.19\textwidth}{\raggedright $\tr(P\rho^2)$ for $P\in \{Z_i\}_{i=1}^n$}
    & \parbox[t]{0.17\textwidth}{\raggedright $O(\varepsilon^{-2}\log\delta^{-1})$}
    & \parbox[t]{0.31\textwidth}{\raggedright A local-rotated Bell measurement} \\
    \addlinespace
    \parbox[t]{0.19\textwidth}{\raggedright Clifford shadow~\cite{huang2020predicting}}
    & \parbox[t]{0.07\textwidth}{\centering $1$}
    & \parbox[t]{0.19\textwidth}{\raggedright $\tr(P\rho^m)$ for $P\in \mathcal P_n$}
    & \parbox[t]{0.17\textwidth}{\raggedright $\widetilde O(4^{mn}\varepsilon^{-2}\log\delta^{-1})$}
    & \parbox[t]{0.31\textwidth}{\raggedright Unitary 3-design} \\
    \addlinespace
    \parbox[t]{0.19\textwidth}{\raggedright BRM~\cite{du2026optimal}}
    & \parbox[t]{0.07\textwidth}{\centering $1$}
    & \parbox[t]{0.19\textwidth}{\raggedright $\tr(P\rho^2)$ for $P\in \mathcal P_n$}
    & \parbox[t]{0.17\textwidth}{\raggedright $O\!\left(2^{3n/2} \varepsilon^{-2}\log\delta^{-1}\right)$}
    & \parbox[t]{0.31\textwidth}{\raggedright Unitary 6-design} \\
    \addlinespace
    \parbox[t]{0.19\textwidth}{\raggedright Hybrid shadow~\cite{zhou2024hybrid}}
    & \parbox[t]{0.07\textwidth}{\centering $m$}
    & \parbox[t]{0.19\textwidth}{\raggedright $\tr(P\rho^m)$ for $P\in \mathcal P_n$}
    & \parbox[t]{0.17\textwidth}{\raggedright $O\!\left(2^{n} \varepsilon^{-2}\log\delta^{-1}\right)$}
    & \parbox[t]{0.31\textwidth}{\raggedright CSWAP circuit, ancilla qubits, and unitary 3-design} \\
    \addlinespace
    \parbox[t]{0.19\textwidth}{\raggedright AFRS~\cite{liu2026auxiliary}}
    & \parbox[t]{0.07\textwidth}{\centering $2$}
    & \parbox[t]{0.19\textwidth}{\raggedright $\tr(P\rho^2)$ for $P\in \mathcal P_n$}
    & \parbox[t]{0.17\textwidth}{\raggedright $O\!\left(2^{n} \varepsilon^{-2}\log\delta^{-1}\right)$}
    & \parbox[t]{0.31\textwidth}{\raggedright GHZ-type measurement and unitary 3-design} \\
    \addlinespace
    \parbox[t]{0.19\textwidth}{\raggedright Algorithm~\ref{alg:four-replica-squared-pauli} in this work}
    & \parbox[t]{0.07\textwidth}{\centering $4$}
    & \parbox[t]{0.19\textwidth}{\raggedright $[\tr(P\rho^2)]^2$ for $P\in \mathcal P_n$}
    & \parbox[t]{0.17\textwidth}{\raggedright $O(\varepsilon^{-2}\log\delta^{-1})$}
    & \parbox[t]{0.31\textwidth}{\raggedright Coherent Bell difference sampling circuit} \\
    \addlinespace
    \parbox[t]{0.19\textwidth}{\raggedright Theorem~\ref{cor:sample-efficient-pauli-amplitude-spectroscopy} in this work}
    & \parbox[t]{0.07\textwidth}{\centering $m$}
    & \parbox[t]{0.19\textwidth}{\raggedright $\tr(P\rho^m)$ for $P\in \mathcal P_n$}
    & \parbox[t]{0.17\textwidth}{\raggedright $O(\varepsilon^{-4}\log\delta^{-1})$}
    & \parbox[t]{0.31\textwidth}{\raggedright CSWAP circuit, ancilla qubits, and Bell measurement} \\
    \bottomrule
    \end{tabular}
\end{table*}

\appsubsection{Coherent Bell difference sampling}
\paragraph{Collapse of incompatibility.}
To estimate $\{[\tr(P\rho^m)]^2\}_{P \in \cP_n}$, we note that $f_P(\rho) := [\tr(P\rho^m)]^2$ is a degree-$2m$ polynomial function of $\rho$ with linearization:
\begin{align}
    [\tr(P\rho^m)]^2 = \tr(F_{P}\rho^{\ox 2m}), \quad F_{P} = \left[P_1\cdot W_{ (1,\cdots,m)}\right] \otimes \left[P_{m+1} \cdot W_{(m+1,\cdots,2m)}\right].\label{eq:A3}
\end{align}
Here the two permutation operators $W_{(1,\cdots,m)}$ and $W_{(m+1,\cdots,2m)}$ are the cyclic permutation operators for the first $m$ and last $m$ replicas respectively.
The Pauli operator $P_i$ denotes $P$ acting on the $i$-th replica, with the identity on all other replicas.
Measuring in the eigenbasis of $F_P$ builds an estimator $\hat{f}_P$ with range $\|F_P\|_\infty = 1$.
However, the operators $F_P$ do not commute with each other, which makes this strategy inefficient for learning all $4^n$ Pauli moments jointly.

Fortunately, such a linearization is not unique. Any replica permutation of $F_{P}$ still gives a valid linearization.
Their linear combinations then form an operator space that fully encodes the information of the Pauli moment $[\tr(P\rho^m)]^2$. 
We show that there exists a subset of mutually commuting operators in this space that can be measured jointly with a single projective measurement.

\begin{theorem}\label{thm:UB_nonlin_P_shad}
    For every integer $m\geq 1$, there exists a set of operators $\{O_P\}_{P\in\cP_n}$ supported on $2m$ replicas of the state $\rho$ such that:
    \begin{align}
        [O_P,O_Q]=0, \quad \|O_P\|_\infty\le 1, \quad \tr(O_P\rho^{\ox 2m})=[\tr(P\rho^m)]^2, \quad \forall P,Q\in\cP_n,
    \end{align}
\end{theorem}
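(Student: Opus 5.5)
The plan is to build $O_P$ by generalizing Eq.~\eqref{eq:main_OP}. We separate a commuting ``Pauli part'' built from doubled Pauli operators from a single $P$-independent ``permutation part''. The permutation part must commute with every Pauli part.

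\emph{Construction.} Let $A=(1,\dots,m)$ and $B=(m+1,\dots,2m)$ be the two blocks, and set $V:=W_{(1,\cdots,m)}\otimes W_{(m+1,\cdots,2m)}$, the product of the two cyclic permutations from Eq.~\eqref{eq:A3}. For $k=1,\dots,m$ define the doubled Pauli operator $Q^{(k)}_P:=P_kP_{m+k}$, which places $P$ on the $k$-th replica of each block. Then set
\begin{align}
L_P:=\frac1m\sum_{k=1}^m P_kP_{m+k},\qquad O_P:=L_P\,\frac{V+V^\dagger}{2}.
\end{align}
For $m=2$ we have $V=V^\dagger=W_{13}W_{24}$, so this reproduces Eq.~\eqref{eq:main_OP}. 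For $m=1$ we have $V=\mathbb{I}$ and $O_P=P\otimes P$, which is the Bell sampling observable.

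\emph{Key steps, in order.}

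\textbf{(i) Commutation among the Pauli parts.} $P_kP_{m+k}$ and $Q_jQ_{m+j}$ commute for all $P,Q$ and all $j,k$. If $j\neq k$, every replica on which both operators act nontrivially carries either the identity in one of them or a single-replica Pauli in both, with no net sign; more precisely the supports are disjoint unless $j=k$. If $j=k$ and $PQ=-QP$, the sign appears once on replica $k$ and once on replica $m+k$, so it cancels. This is the paired cancellation mechanism, and it gives $[L_P,L_Q]=0$.

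\textbf{(ii) The permutation part commutes with the Pauli part.} Conjugation by $V$ shifts both blocks cyclically in lockstep, so $VP_kP_{m+k}V^\dagger=P_{k+1}P_{m+k+1}$ with indices taken mod $m$ within each block. Hence $L_P$, being the uniform average over the shifts, satisfies $VL_PV^\dagger=L_P$, so $[L_P,V]=0$ and also $[L_P,V^\dagger]=0$. Since $V$ does not depend on $P$, all the $O_P$ lie in one commutative algebra generated by $\{L_P\}$ together with $V+V^\dagger$. This gives $[O_P,O_Q]=0$. The same commutation shows $O_P$ is Hermitian, being a product of two commuting Hermitian operators.

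\textbf{(iii) Norm bound.} We have $\|L_P\|_\infty\le1$ by the triangle inequality, since each term is unitary, and $\|(V+V^\dagger)/2\|_\infty\le1$. Two commuting normal operators are simultaneously diagonalizable, so the spectrum of their product consists of products of eigenvalues. Therefore $\|O_P\|_\infty\le1$.

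\textbf{(iv) Expectation value.} Each of the $2m$ terms $\tr(P_kP_{m+k}V^{\pm1}\rho^{\otimes 2m})$ factorizes over the two blocks into
\begin{align}
\tr(P_kW^{\pm1}_{A}\rho^{\otimes m})\,\tr(P_{m+k}W^{\pm1}_{B}\rho^{\otimes m}).
\end{align}
By the standard cyclic-permutation trick, each factor equals $\tr(P\rho^m)$: the position of $P$ inside the cycle does not matter by cyclicity, and the reversed cycle gives $\tr(\rho^mP)=\tr(P\rho^m)$. This quantity is real because $P$ and $\rho$ are Hermitian. Averaging the $2m$ terms yields exactly $[\tr(P\rho^m)]^2$.

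\emph{Main obstacle.} The delicate point is step (ii), which is the reason for this particular design. The naive linearization $F_P$ of Eq.~\eqref{eq:A3} fails because the Pauli positions are not permutation-invariant. The remedy is to place Paulis on \emph{matched} positions in the two blocks, rotate both cycles simultaneously, and average over the cyclic shifts so that $L_P$ becomes $V$-invariant. Hermiticity then forces the symmetrization $V+V^\dagger$, and one must check in step (iv) that $V^\dagger$ still reproduces the correct trace. The remaining steps are routine bookkeeping.
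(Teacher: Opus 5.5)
Your construction $O_P=L_P\,(V+V^\dagger)/2$ with $L_P=\frac1m\sum_k P_kP_{m+k}$ is exactly the paper's $T_P\,(C+C^\dagger)/2$, and your four steps (paired cancellation, invariance of $L_P$ under the lockstep cyclic shift, the norm bound, the factorized trace) follow its proof; the only difference is that you check the $V^\dagger$ term directly, whereas the paper uses complex conjugation together with $[T_P,C]=0$. One cosmetic slip: in your block labeling the $m=2$ case has $V=W_{12}W_{34}$ and $L_P=\frac12(P_1P_3+P_2P_4)$, which matches Eq.~\eqref{eq:main_OP} only after relabeling replicas $2\leftrightarrow3$.
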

    \begin{proof}
    Let $C:=W_{(1,\ldots,m)}\otimes W_{(m+1,\ldots,2m)}$ be the product of the cyclic shifts on the two blocks of $m$ replicas.
    For each $P\in\cP_n$, define
    \begin{align}
        R_{P,i}:=P_iP_{m+i},
        \qquad
        T_P:=\frac{1}{m}\sum_{i=1}^m R_{P,i}.\label{eq:A4}
\end{align}
    Here $P_i$ denotes $P$ acting on the $i$-th replica, with the identity on all other replicas.
    The cyclic shift permutes the paired operators, so
    \begin{align}
        C R_{P,i}C^\dagger=R_{P,i+1},\qquad C T_PC^\dagger=T_P,\label{eq:A5}
\end{align}
    where the pair index is taken modulo $m$.
    Thus $T_P$ commutes with $C$ and $C^\dagger$.
    Contracting each of the two disjoint cycles against $\rho^{\ox 2m}$ gives
    \begin{align}
        \tr(T_PC\rho^{\ox 2m})
        =\frac{1}{m}\sum_{i=1}^m\tr(R_{P,i}C\rho^{\ox 2m})
        =\left[\tr(P\rho^m)\right]^2.\label{eq:A6}
\end{align}
    Since $P$ and $\rho^m$ are Hermitian, $\tr(P\rho^m)$ is real.
    Taking the complex conjugate of Eq.~\eqref{eq:A6} and using $[T_P,C]=0$ therefore gives the same expectation for $T_PC^\dagger$.
    Consequently, the Hermitian operators
    \begin{align}
     \left\{O_P := T_P \cdot \frac{C + C^\dagger}{2}\right\}_{P\in\cP_n}\label{eq:A7}
\end{align}
    satisfy $\tr(O_P\rho^{\ox 2m})=f_P(\rho)$.
    We next show that they mutually commute.
    For any $P,Q\in\cP_n$, the operators $R_{P,i}$ and $R_{Q,j}$ act on disjoint replica pairs if $i\neq j$.
    If $i=j$, then
    \begin{align}
        (P_iP_{m+i})(Q_iQ_{m+i})=(PQ)_i(PQ)_{m+i}=(QP)_i(QP)_{m+i}=(Q_iQ_{m+i})(P_iP_{m+i}),\label{eq:A8}
\end{align}
    because the possible Pauli sign appears twice and cancels. Hence, $[T_P,T_Q]=0$ for all $P,Q\in\cP_n$.
    Together with $[T_P,C]=[T_P,C^\dagger]=0$, this implies
    \begin{align}
        [O_P,O_Q]=0,
        \qquad
        \forall P,Q\in\cP_n .\label{eq:A9}
\end{align}
    Each $R_{P,i}$ is a Hermitian Pauli operator with operator norm one, and $C$ is unitary. Hence
    \begin{align}
        \|T_P\|_\infty\le 1 , \quad \left\|\frac{C+C^\dagger}{2}\right\|_\infty\le 1 \Rightarrow \|O_P\|_\infty\le 1.\label{eq:A10}
\end{align}
\end{proof}

Theorem~\ref{thm:UB_nonlin_P_shad} yields an efficient strategy to estimate the mitigated expectation amplitudes $|\tr(P \rho^m)|^2$ of \textit{any} $M$ Pauli observables.
There exists a common rank-$1$ projective measurement $\{\Pi_j\}_{j\in\cJ}$, with $\sum_j\Pi_j=\1$, and spectral decompositions $O_P=\sum_{j\in\cJ}h_P(j)\Pi_j$, where $h_P(j)\in[-1,1]$.
Measuring $\rho^{\ox 2m}$ with this measurement gives an outcome $j$ with probability $\tr(\Pi_j\rho^{\ox 2m})$. For every $P\in\cP_n$,
\begin{align}
    \left[\tr(P\rho^m)\right]^2=\tr(O_P\rho^{\ox 2m})=\sum_{j\in\cJ}h_P(j)\tr(\Pi_j\rho^{\ox 2m}).\label{eq:A11}
\end{align}
Thus the same outcome $j$ supplies the estimators $\hat f_P=h_P(j)$ for all $P\in\cP_n$, with $\mathbb E_\rho[\hat f_P]=f_P(\rho)$ and $|\hat f_P|\le1$.
By Hoeffding's inequality and a union bound, for an additive error $\varepsilon$ and a confidence level of at least $1-\delta$, a sufficient copy count is $N = 2m\left\lceil 2\varepsilon^{-2}\log(2|\mathcal{P}_n|/\delta)\right\rceil$ to cover all squared Pauli moment targets.
We note that the joint estimator in the above strategy has the same scaling as the individual estimators, which indicates that $f_P$ are compatible with each other.

\paragraph{Compilation of the projective measurement.}
For the case $m=2$, we realize the projective measurement $\{\Pi_j\}_{j\in\cJ}$ by the coherent Bell difference sampling (CBDS) circuit. 
This measurement primitive is of independent interest since it is a coherent generalization of the widely used Bell difference sampling circuit~\cite{gross2021schur,grewal2023efficient,grewal2024improved}.
For simplicity, we relabel the four replicas into the
Bell-pair order $(1,2)$ and $(3,4)$.
Under this relabeling, the commuting
representative is $\{B_{P} := {2}^{-1}{(PP\1_n\1_n + \1_n\1_nPP)} W_{13}W_{24}\}$.
Specifically, we replace the Pauli labels $P$ with binary strings
$\mathbf r=(\mathbf r_z,\mathbf r_x)\in\mathbb F_2^{2n}$.
The corresponding operator $O_P$ changes to $B_{\mathbf{r}}$, where the index $\mathbf{r}$ indicates
\begin{align}
P = \sigma_{\mathbf r}:= i^{\mathbf r_z\cdot\mathbf r_x}\bigotimes_{i=1}^n  X_i^{r_{x,i}}Z_i^{r_{z,i}}.\label{eq:symplectic-representation}
\end{align}
 Similarly, Bell labels are indexed by $\mathbf{a} \in \mathbb F_2^{2n}$ through $\ket{\Phi_{\mathbf a}}:= \bigotimes_{i=1}^n \ket{\Phi_{a_{x,i}a_{z,i}}}$,
 where the Bell basis is taken as
 \begin{align}
     \ket{\Phi_{00}}&:= \frac{\ket{00} + \ket{11}}{\sqrt{2}},\quad \ket{\Phi_{10}}:= \frac{\ket{01} + \ket{10}}{\sqrt{2}},\quad
     \ket{\Phi_{01}}:= \frac{\ket{00} - \ket{11}}{\sqrt{2}},\quad\ket{\Phi_{11}}:= \frac{\ket{01} - \ket{10}}{\sqrt{2}}.\label{eq:Bell-basis}
\end{align}

 We also define $\mathbf{CNOT}_{12}:= \bigotimes_{i=1}^n \mathrm{CNOT}_{i,i+n}$ and $\mathbf{H}_1:= \bigotimes_{i=1}^n \mathrm{H}_{i}$ as the transversal controlled-NOT gate and Hadamard gate acting on the first replica with qubit labels $\{1,2,\cdots,n\}$ and $\mathbf{H}_2,\mathbf{H}_3,\mathbf{H}_4, \mathbf{CNOT}_{34}$ as the parallel gates acting on the 2nd, 3rd, and 4th replicas, respectively.
 One can trivially verify the following identities pointwise:
 \begin{align}
    \ket{\Phi_{\mathbf{a}}} &= \mathbf{CNOT}_{12} \mathbf{H}_1 \ket{\mathbf{a}_z} \otimes \ket{\mathbf{a}_x},\quad (\sigma_{\mathbf{r}} \ox \sigma_{\mathbf{r}}) \ket{\Phi_{\mathbf{a}}} = (-1)^{[\mathbf{r},\mathbf{a}] + \mathbf{r}_x\cdot\mathbf{r}_z}\ket{\Phi_{\mathbf{a}}},\label{eq:4rep-identity}
\end{align}
 where $[\mathbf{r},\mathbf{a}] :=\mathbf{r}_x \cdot \mathbf{a}_z + \mathbf{r}_z \cdot \mathbf{a}_x \text{~mod~} 2$ is the symplectic product on binary fields $\mathbf{a}:= (\mathbf{a}_z, \mathbf{a}_x) \in \mathbb{F}_2^{2n}$.
 We also define an order $\mathbf{a}< \mathbf{b}$ if and only if $\sum_{i = 1}^{2n} 2^{2n-i}a_i< \sum_{i = 1}^{2n} 2^{2n-i}b_i$.
Then the common eigenbasis of $\{B_\mathbf{r}\}_\mathbf{r}$ supported on four replicas has the form:
\begin{align}
   \ket{\Psi^{s}_{(\mathbf{a}, \mathbf{b})}}&=\frac{{\ket{\Phi_{\mathbf{a}}} \ox \ket{\Phi_{\mathbf{b}} }+s \ket{\Phi_{\mathbf{b}}} \ox \ket{\Phi_{\mathbf{a}}}}}{\sqrt{2}}, s \in \{\pm 1\}, \forall \mathbf{a}, \mathbf{b} \in \mathbb{F}_2^{2n}, \mathbf{a} < \mathbf{b},\label{eq:A15}
\\
   \ket{\Psi_{(\mathbf{a}, \mathbf{a})}} &= \ket{\Phi_{\mathbf{a}}}\ket{\Phi_{\mathbf{a}}}, \forall \mathbf{a} \in \mathbb{F}_2^{2n},\label{eq:A16}
\end{align}
with eigenvalues of the operator $B_\mathbf{r}$ respectively given by
\begin{align}
    {\lambda_{\mathbf{r}}(s,\mathbf{a},\mathbf{b})} &:=(-1)^{\mathbf{r}_x\cdot\mathbf{r}_z}\cdot s \cdot \frac{(-1)^{[\mathbf{r}, \mathbf{a}]} + (-1)^{[\mathbf{r}, \mathbf{b}]}}{2}, \quad \lambda_{\mathbf{r}}(\mathbf{a}) = (-1)^{[\mathbf{r},\mathbf{a}] + \mathbf{r}_x\cdot\mathbf{r}_z}.\label{eq:A17}
\end{align}
The projective measurement is then indexed by $\mathcal J:=\{(\mathbf a,\mathbf a):\mathbf a\in\mathbb F_2^{2n}\}\cup\{(s,\mathbf a,\mathbf b):\mathbf a<\mathbf b,\ s\in\{\pm1\}\}$.
The corresponding POVM is $\{\Pi_j\}_{j \in \mathcal{J}} := \{\Pi_{(\mathbf a,\mathbf a)}\}_{\mathbf a\in\mathbb F_2^{2n}} \cup \{\Pi_{(s,\mathbf a,\mathbf b)}\}_{s\in\{\pm1\}, \mathbf a<\mathbf b}$ with projectors defined as
\begin{align}
    \Pi_{\mathbf a,\mathbf a}:= \ket{\Psi_{(\mathbf{a}, \mathbf{a})}}\bra{\Psi_{(\mathbf{a}, \mathbf{a})}}, \quad
    \Pi^s_{\mathbf a,\mathbf b}:=\ket{\Psi^{s}_{(\mathbf{a}, \mathbf{b})}}\bra{\Psi^{s}_{(\mathbf{a}, \mathbf{b})}}.\label{eq:A18}
\end{align}
With respect to this projective measurement, the spectral decomposition of
$B_{\mathbf r}$ is
\begin{align}
    B_{\mathbf r}
    =
    \sum_{\mathbf a}
    \lambda_{\mathbf r}(\mathbf a,\mathbf a)\Pi_{\mathbf a,\mathbf a}
    +
    \sum_{\mathbf a<\mathbf b}\sum_{s=\pm1}
    \lambda_{\mathbf r}(s,\mathbf a,\mathbf b)\Pi^s_{\mathbf a,\mathbf b},\label{eq:A19}
\end{align}
thus giving the post-processing function
$
    h_{\mathbf r}(j)
    :=
    \begin{cases}
    \lambda_{\mathbf r}(\mathbf a,\mathbf a),
    & j=(\mathbf a,\mathbf a),\\[1mm]
    \lambda_{\mathbf r}(s,\mathbf a,\mathbf b),
    & j=(s,\mathbf a,\mathbf b),\ \mathbf a<\mathbf b .
    \end{cases}
$

\refstepcounter{algocf}\label{alg:four-replica-squared-pauli}
\begin{tcolorbox}[
    enhanced,
    breakable,
    colback=white,
    colframe=black!55,
    boxrule=0.4pt,
    arc=1pt,
    left=6pt,
    right=6pt,
    top=6pt,
    bottom=6pt,
    title={Algorithm~\thealgocf: Four-replica estimator for squared nonlinear Pauli signals},
    fonttitle=\bfseries
]
\small
\noindent\textbf{Input.}
An unknown $n$-qubit state $\rho$, a repetition number $T$, and a target Pauli label set
$\mathcal R\subseteq\mathbb F_2^{2n}$.
For $\mathbf r=(\mathbf r_z,\mathbf r_x)\in\mathbb F_2^{2n}$, write
$\sigma_{\mathbf r}:= i^{\mathbf r_x\cdot\mathbf r_z}\bigotimes_{i=1}^n X_i^{r_{x,i}}Z_i^{r_{z,i}},
    \, [\mathbf r,\mathbf a]:= \mathbf r_x\cdot\mathbf a_z+\mathbf r_z\cdot\mathbf a_x\pmod 2 $.

\textbf{Output.}
Estimates $\{\widehat{\theta}_{\mathbf r}\}_{\mathbf r\in\mathcal R}$ of
$\theta_{\mathbf r}=\bigl(\tr(\sigma_{\mathbf r}\rho^2)\bigr)^2$.

\begin{enumerate}[leftmargin=*, itemsep=0.35em]
    \item Initialize $S_{\mathbf r}\gets 0$ for every $\mathbf r\in\mathcal R$.

    \item For each repetition $t=1,\ldots,T$, prepare four fresh copies $\rho^{\otimes 4}$, with $[(j-1) \cdot n + 1, j\cdot n]$ labeling the qubits in the $j$-th replica for $j=1,2,3,4$ respectively.

    \item Apply the circuit $ \mathbf{CNOT}_{13}\mathbf{CNOT}_{24}\mathbf H_3\mathbf H_1\mathbf{CNOT}_{12}\mathbf{CNOT}_{34}$ that transversally acts on the $4$ replicas.

    \item Measure the third copy in the computational basis and get $y_{2n+1},\cdots,y_{3n}$, measure the fourth copy in the computational basis and get $y_{3n+1},\cdots,y_{4n}$, and set:
    \begin{align}
        \mathbf y_3 \gets (y_{2n+1},\cdots,y_{3n}),\quad
        \mathbf y_4 \gets (y_{3n+1},\cdots,y_{4n}), \quad  \boldsymbol{\Delta}:=(\mathbf y_3,\mathbf y_4)\in\mathbb F_2^{2n}. \notag
\end{align}
    \item If $\boldsymbol{\Delta}=\mathbf 0$, i.e., all elements of $\boldsymbol{\Delta}$ are zero, then measure the first and second copies in the computational basis with outcomes $\mathbf{y}_1 = (y_1, \cdots, y_{n})$ and $\mathbf{y}_2 = (y_{n+1}, \cdots, y_{2n})$ respectively.
    Set
    \begin{align*}
        \mathbf a\gets(\mathbf y_1,\mathbf y_2),\qquad
        \mathbf b\gets\mathbf a,\qquad
        j_t\gets(\mathbf a,\mathbf a).
    \end{align*}

    \item If $\boldsymbol{\Delta}\neq\mathbf 0$, define $E:=\{i\in[2n]:\Delta_i=0\},\quad
        I:=\{i\in[2n]:\Delta_i=1\}=\{i_1<i_2<\cdots<i_{|I|}\}.$

    \item On the qubits indexed by $E$, perform the computational-basis measurement and get outcome $\mathbf y^E$. Set $\mathbf a^E\gets\mathbf y^E$.
    \item On the qubits indexed by $I$, perform the GHZ-basis measurement implemented by
        $\left(\prod_{\ell=2}^{|I|}
        \mathrm{CNOT}_{i_1,i_\ell}\right)^{\dagger}
        \mathrm H_{i_1},$
    followed by computational-basis measurement with outcome $\mathbf y^I = (y_1, \cdots, y_{|I|})$.
    Decode
    \begin{align}
        a_{i_1}=0,
        \quad
        \forall 1<r\le |I|, \; a_{i_r}=\begin{cases} y_1 + y_{r}  & \text{if~} s = +1,\\ \bar{y}_1 + \bar{y}_{r} & \text{if~} s = -1.\end{cases},
        \quad
        s=\begin{cases}+1, & y_1=0,\\-1, & y_1=1,\end{cases}.\label{eq:A23}
\end{align}
    Together with $\mathbf a^E=\mathbf y^E$, reconstruct $\mathbf a$ and set $ \mathbf b\gets \mathbf a+\boldsymbol{\Delta}\pmod 2,\qquad
        j_t\gets(s,\mathbf a,\mathbf b)$.

    \item For each target $\mathbf r\in\mathcal R$, compute
    \begin{align*}
        h_{\mathbf r}(j_t)=
        \begin{cases}
        (-1)^{[\mathbf r,\mathbf a]+\mathbf r_x\cdot\mathbf r_z},
        & j_t=(\mathbf a,\mathbf a),\\[0.4em]
        (-1)^{\mathbf r_x\cdot\mathbf r_z}\,
        s\,\dfrac{(-1)^{[\mathbf r,\mathbf a]}+(-1)^{[\mathbf r,\mathbf b]}}{2},
        & j_t=(s,\mathbf a,\mathbf b),
        \end{cases}
    \end{align*}
    and update $S_{\mathbf r}\gets S_{\mathbf r}+h_{\mathbf r}(j_t)$.

    \item Return $  \widehat{\theta}_{\mathbf r}:={T}^{-1}S_{\mathbf r},
        \, \mathbf r\in\mathcal R.$
\end{enumerate}
\end{tcolorbox}

After performing this measurement $T$ times and recording outcomes
$j_1,\ldots,j_T\in\mathcal J$, we estimate the expectation value of
$B_{\mathbf r}$ by $\widehat B_{\mathbf r}:=
    \frac{1}{T}\sum_{t=1}^T h_{\mathbf r}(j_t)$.
This estimator is unbiased, with $\mathbb E[\widehat B_{\mathbf r}]=\tr(B_{\mathbf r}\rho^{\otimes 4}) = (\tr(P\rho^2))^2$.
Moreover, since $|\lambda_{\mathbf r}(j)|\le 1$ for all outcomes
$j\in\mathcal J$, standard concentration bounds apply uniformly over
$\mathbf r$.
To realize these POVMs, we refer to the adaptive construction in~\cite{liu2026auxiliary} and use Eq.~\eqref{eq:4rep-identity} to show that:
\begin{align}
    \left. \begin{aligned}
        \ket{\Psi^{s}_{(\mathbf{a}, \mathbf{b})}}\\ \ket{\Psi_{(\mathbf{a}, \mathbf{a})}} \end{aligned} \right\}
    &= \mathbf{CNOT}_{34}\mathbf{CNOT}_{12}\mathbf{H}_3\mathbf{H}_1 \cdot
\begin{cases}
    \frac{\ket{\mathbf{a}_z \mathbf{a}_x\mathbf{b}_z\mathbf{b}_x} +s \ket{\mathbf{b}_z \mathbf{b}_x\mathbf{a}_z\mathbf{a}_x}}{\sqrt{2}} &\text{if~} \mathbf{a} < \mathbf{b}\\
    \ket{\mathbf{a}_z \mathbf{a}_x\mathbf{b}_z\mathbf{b}_x}&\text{if~} \mathbf{a} = \mathbf{b}
\end{cases} \notag
\\
&= \mathbf{CNOT}_{34}\mathbf{CNOT}_{12}\mathbf{H}_3\mathbf{H}_1 \mathbf{CNOT}_{13}\mathbf{CNOT}_{24}\cdot
\begin{cases}
    \frac{\ket{\mathbf{a}_z \mathbf{a}_x} +s \ket{\mathbf{b}_z \mathbf{b}_x}}{\sqrt{2}} \ket{\mathbf{a}_z + \mathbf{b}_z} \ket{\mathbf{a}_x + \mathbf{b}_x} &\text{if~} \mathbf{a} < \mathbf{b}\\
    \ket{\mathbf{a}_z \mathbf{a}_x} \ket{\mathbf{0} \mathbf{0}}&\text{if~} \mathbf{a} = \mathbf{b}
\end{cases} \notag
\\
&= \mathbf{CNOT}_{34}\mathbf{CNOT}_{12}\mathbf{H}_3\mathbf{H}_1 \mathbf{CNOT}_{13}\mathbf{CNOT}_{24}\cdot
\begin{cases}
    \frac{\ket{\mathbf{a}^{I}} +s \ket{\bar{\mathbf{a}}^{I}}}{\sqrt{2}} \ket{\mathbf{a}^{E}}\ket{\mathbf{a}_z + \mathbf{b}_z} \ket{\mathbf{a}_x + \mathbf{b}_x} &\text{if~} \mathbf{a} < \mathbf{b}\\
    \ket{\mathbf{a}_z \mathbf{a}_x} \ket{\mathbf{0} \mathbf{0}}&\text{if~} \mathbf{a} = \mathbf{b}
\end{cases},\label{eq:A20}
\end{align}
where the second equality disentangles the qubits on the third and fourth copies.
The last equality further separates the qubits in the first and second copies
into two sets $E := \{i \in [2n]\mid a_i = b_i\}$ with identical computational-basis values and $I:= \{i \in [2n]\mid \bar{a}_i = b_i\} := \{i_1<i_2<\cdots<i_{|I|}\}$ with different computational-basis values.
The set $I$ is also ordered by qubit index.
 Thus, we can locally implement the gate sequence $ \mathbf{CNOT}_{13}\mathbf{CNOT}_{24}\mathbf{H}_3\mathbf{H}_1\mathbf{CNOT}_{12}\mathbf{CNOT}_{34}$ and then measure the 3rd and 4th copies in the computational basis $\ket{\mathbf{y}_{3}\mathbf{y}_{4}}$. This gives the difference of vectors $\mathbf{a}$ and $\mathbf{b}$ through $\mathbf{y}_{3} = \mathbf{a}_z + \mathbf{b}_z,\mathbf{y}_{4} = \mathbf{a}_x + \mathbf{b}_x$.
This information then indicates a {partition} of the qubit indices $\{1,\dotsc,2n\}$ of the first and second copies into two sets $E \cup I$.
We then implement the computational basis measurement $\ket{\mathbf{y}^{E}}$ on the qubits in set $E$ and take $\mathbf{a}^{E} = {\mathbf{y}^{E}}$.
For the qubits in set $I$, we note the identity:
\begin{align}
    \frac{1}{\sqrt{2}} \left[{\ket{\mathbf{a}^{I}} + s\ket{\bar{\mathbf{a}}^{I}}} \right] &=
    \begin{cases}
        \frac{1}{\sqrt{2}} \left(\bigotimes_{i \in I} X_i + Z_{i_1}\right) \ket{\bar{\mathbf{a}}^{I}}  &\text{if~} s = -1,\\
        \frac{1}{\sqrt{2}} \left(\bigotimes_{i \in I} X_i + Z_{i_1}\right) \ket{{\mathbf{a}}^{I}}  &\text{if~} s = +1,
    \end{cases} \notag
\\
    &=\left(\prod_{j = 2}^{|I|}\mathrm{CNOT}_{i_1,i_j}\right)^{\dagger} \mathrm{H}_{i_1} \left(\prod_{j = 2}^{|I|}\mathrm{CNOT}_{i_1,i_j}\right) \begin{cases}\ket{\bar{\mathbf{a}}^{I}} &\text{if~}s = -1,\\\ket{{\mathbf{a}}^{I}} &\text{if~}s = 1,\end{cases} \notag
\\
    &=\left(\prod_{j = 2}^{|I|}\mathrm{CNOT}_{i_1,i_j}\right)^{\dagger} \mathrm{H}_{i_1} \ket{y_1\cdots y_{|I|}}, \text{where~} \begin{cases} \mathbf{a}^{I} = (\bar{y}_1,\bar{y}_1 + \bar{y}_2,\cdots,\bar{y}_1 + \bar{y}_{|I|}) &\text{if~}s = -1,\\\mathbf{a}^{I} =  ({y}_1,{y}_1 + {y}_2,\cdots,{y}_1 + {y}_{|I|})  &\text{if~}s = 1.\end{cases}\label{eq:4rep-GHZ-measurement}
\end{align}
Here in the first equality we use the fact that $\mathbf{a}< \mathbf{b}$, which ensures $a_{i_1} = 0$ and gives $Z_{i_1} \ket{\bar{\mathbf{a}}^{I}} = - \ket{\bar{\mathbf{a}}^{I}}$ and $Z_{i_1} \ket{{\mathbf{a}}^{I}} = \ket{{\mathbf{a}}^{I}}$.
In the second equality, we use the decomposition of the Hadamard gate $H = \sqrt{2^{-1}}(X+Z)$ and the conjugation rule of CNOT gates on Pauli matrices.
In the third equality, we use the fact that the final $\text{CNOT}$ gates act immediately before the computational-basis measurement, so they can be absorbed into classical post-processing. 
We note that the arrangement of the CNOT gates in Eq.~\eqref{eq:4rep-GHZ-measurement} can be further optimized to $\mathcal{O}(\ln n)$ depth by using the parallelized version of the GHZ state preparation circuit.

If we perform this GHZ-basis measurement in Eq.~\eqref{eq:4rep-GHZ-measurement} and get the outcome on $I$ as $\mathbf{y}^I = (y_1, \cdots, y_{|I|})$, then the
whole vector $\mathbf{a}^{I} = (a_{i_1}, \cdots, a_{i_{|I|}})$ and the sign $s$ are given by
\begin{align}
    a_{i_1}=0,
    \quad
    \forall 1<r \leq |I|, \; a_{i_r}=\begin{cases} y_1 + y_{r}  & \text{if~} s = +1,\\ \bar{y}_1 + \bar{y}_{r} & \text{if~} s = -1.\end{cases},
    \quad
    s=\begin{cases}+1, & y_1=0,\\-1, & y_1=1,\end{cases}.\label{eq:A22}
\end{align}
Together with $\mathbf a^E=\mathbf y^E$, this reconstructs $\mathbf a$,
and then $\mathbf b$ from the values of $\mathbf{y}_{3} = \mathbf{a}_z + \mathbf{b}_z$ and $\mathbf{y}_{4} = \mathbf{a}_x + \mathbf{b}_x$.
\begin{figure}[htbp!]
    \centering
    \includegraphics[width=0.65\linewidth]{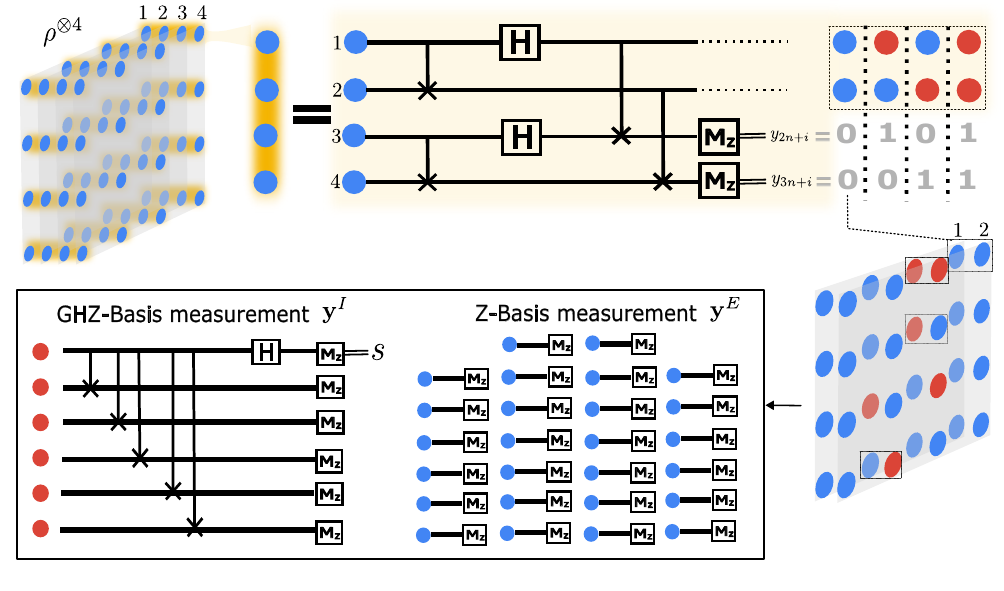}
    \caption{Demonstration of the 4-replica strategy whose measurement results can be used to estimate the nonlinear Pauli average $|\tr(P \rho^2)|$ for all Pauli operators $P$ by post-processing.}
    \label{fig:4rep_strategy}
\end{figure}

\paragraph{Other implications}
Algorithm~\ref{alg:four-replica-squared-pauli} has potential applications in many other fields.
Since a detailed discussion of this is beyond the scope of this work,
we briefly summarize those directions here.

First, in the circuit for Algorithm~\ref{alg:four-replica-squared-pauli}, as shown in Figure~\ref{fig:4rep_strategy},
the outcomes on the third and fourth replicas (i.e., $y_{2n+i}$ and $y_{3n+i}$ for $i = 1,\cdots,n$) have the same distribution as Bell difference sampling~\cite{gross2021schur}, namely,
\begin{align}
\Pr(\mathbf{y}_{34})&=
\sum_{\mathbf{x}\in\mathbb{F}_2^{2n}}
B_{\rho_1,\rho_2}(\mathbf{x})\,
B_{\rho_3,\rho_4}(\mathbf{x}+\mathbf{y}_{34}), \quad B_{\rho_j,\rho_m}(\mathbf{r}) = \frac{1}{2^n}\tr\left(\sigma_{\mathbf{r}}\,\rho_j\,\sigma_{\mathbf{r}}\,\rho_m^*\right)\label{eq:A24}
\end{align}
for $\mathbf{y}_{34} = (\mathbf{y}_3,\mathbf{y}_4) \in \mathbb{F}_2^{2n}$.
However, the whole circuit is inherently different from the ordinary Bell sampling circuit, as it cannot be realized by simply performing Bell sampling twice and then taking the difference of the measurement results by classical post-processing.
Instead, the circuit in Figure~\ref{fig:4rep_strategy} only measures the difference label and preserves coherence inside each unresolved fiber $\{\ket{\Phi_{\mathbf{a}}} \ket{\Phi_{\mathbf{a} + \mathbf{r}}}\}_{\mathbf{a} \in \mathbb{F}_2^{2n}}$
, which is subsequently exploited by the GHZ-type measurement.
The preserved coherence allows us to amplify the signal using GHZ-like measurements, thus enabling the joint estimation of exponentially many mitigated Pauli averages.
Since Algorithm~\ref{alg:four-replica-squared-pauli} only involves low-depth subroutines (Bell state and GHZ state preparations), it is particularly suitable for implementation on noisy intermediate-scale quantum devices, especially neutral-atom systems because of their reconfigurability and high parallelism.

Second, Algorithm~\ref{alg:four-replica-squared-pauli} also has broad applications in quantum chemistry and variational quantum algorithms.
For example, Pauli-term estimation in variational quantum eigensolvers for quantum chemistry~\cite{mcclean2016theory} now can use this primitive to estimate exponentially many mitigated Pauli terms with a single measurement setting.

Third,
many magic resources~\cite{haug2023scalable,leone2022stabilizer} can be regarded as functions of the Pauli spectrum $\{\tr(P \rho)\}_{P \in \cP_n}$.
However, it is still a challenging problem to construct a magic certification protocol for mixed states with constant overhead~\cite{montanaro2017learning,gross2021schur}.
Algorithm~\ref{alg:four-replica-squared-pauli} can be used to certify the magic resource on the virtually distilled state
$\rho^m/\tr(\rho^m)$,
since the purified state $\rho^m/\tr(\rho^m)$, in some cases, pushes the mixed state to its purified counterpart.
This indicates a direction to construct efficient mixed state magic resource certificates.

Last but not least, even though quantum advantage has been demonstrated on many devices, the verification of quantum advantage is extremely challenging~\cite{hangleiter2023computational}.
Recently, the Bell measurement has been introduced to the quantum advantage experiments~\cite{hangleiter2024bell} to solve this problem.
Besides constituting a hard sampling instance, the Bell sampling results allow
many non-linear indicators such as the purity, entanglement, and magic of the output state to be estimated, providing valid and efficient certification of the advantage.
It is thus interesting to investigate if the sampling results of Algorithm~\ref{alg:four-replica-squared-pauli} are hard instances and give noise-robust certification of the quantum advantage.

\appsubsection{Sign recovery and the optimal replica number}\label{app:sign-recovery}
As in Pauli shadow tomography~\cite{chen2024optimal},
the mimicking-state approach~\cite{king2025triply} can be used to recover the signs of the Pauli moments $\tr(P \rho^m)$.
The single-state mimicking condition asks for a state $\sigma$ such that $\bigl||\tr(P\sigma^m)|-|\tr(P\rho^m)|\bigr|\le \varepsilon$ for all $P\in \mathcal{P}_n$.
We generalize this condition by replacing $\sigma$ with a pure-state ensemble $\{(w_j,\chi_j)\}_{j=1}^L$.
Here the weights satisfy $w_j>0$ and $\sum_{j=1}^L w_j = 1$, and each $\chi_j$ is a pure-state density matrix.
We first introduce two oracles and defer their implementation to the subsequent construction.
\begin{definition}[Pauli support oracles]\label{def:mimicking-state-oracles}
For a density matrix $\rho$ and $m\geq 2$, we call $\mathcal{A} \subset \mathcal{P}_n$ an active Pauli subset of $\rho^m$ if and only if $P \notin \mathcal{A} \implies |\tr(P \rho^m)| \leq 3\varepsilon/4$.
We consider the following two oracles for identifying and mimicking the active Pauli subset:
\begin{enumerate}
    \item (\textit{Finding}): Given $\rho$, returns an active Pauli subset $\mathcal{A}$.
    \item (\textit{Mimicking}): Given $(\rho, \mathcal{A})$ from the finding oracle, returns a pure-state ensemble satisfying
    \begin{align}
        \forall P\in \mathcal{A},\quad  \sum_j w_j|\tr(P \chi_j)|^2 \geq 3\varepsilon^2/16.\label{eq:D1}
\end{align}
\end{enumerate}
\end{definition}

\begin{figure}[htbp!]
    \centering
    \begin{quantikz}[
      row sep=0.28cm, column sep=0.28cm,
      wire types={q,q,n,q,q,q},
      every node/.append style={font=\small}
    ]
    \lstick{$c:\ket{+}$}
      & \ctrl{3} & \push{\cdots} & \ctrl{1}
      & \gate{H} & \meter{Z}
      & \rstick[label style={align=left}]
          {$z_c\in\{0,1\}, \, x=(-1)^{z_c}$}\setwiretype{c} \\
    \lstick{$A_m:\rho$}
      & & & \swap{3} & \ground{} \\
    \lstick{$\vdots$}
      & & \push{\vdots} & & \push{\vdots} \\
    \lstick{$A_2:\rho$}
      & \swap{1} & & & \ground{} \\
    \lstick{$A_1:\rho$}
      & \targX{} & \push{\cdots} & \targX{} &
      &
      & \ctrl{1}\wire[l][2]["{\displaystyle\tau_{z_c}=T_{z_c}/p_{z_c}}"{anchor=south,yshift=2pt}]{q}
      & \gate{H^{\otimes n}} & \meter{Z}
      & \rstick{$\mathbf a_z$}\setwiretype{c}
      &[0.35cm] \rstick[2]{$\mathbf a=(\mathbf a_z,\mathbf a_x)$}\setwiretype{n} \\
    \lstick{$C:\chi_j \sim \mu$}
      & & & & & & \targ{} & & \meter[style={label={below:$Z$}}]{}
      & \rstick{$\mathbf a_x$}\setwiretype{c} & \setwiretype{n}
    \end{quantikz}
    \caption{An $m$-replica interference measurement for estimating the Pauli power moments $\tr(P \rho^m)$.}
    \label{fig:m-interference}
\end{figure}

Using the oracles in Definition~\ref{def:mimicking-state-oracles}, we derive a sample-complexity upper bound for the joint estimation of the signed Pauli moments $\{\tr(P \rho^m)\}_{P \in \mathcal{P}_n}$.
Figure~\ref{fig:m-interference} shows the interference circuit.
Lemma~\ref{lem:signal_estim} describes the estimators obtained by classical post-processing of the measurement records $\{(z_{c,t}, \mathbf{a}_{t})\}_{t = 1}^{R_2}$, where $t$ indexes the interference rounds.

\begin{lemma}[Interference measurement with a mimicking ensemble]\label{lem:signal_estim}
Given access to the oracles in Definition~\ref{def:mimicking-state-oracles}, the interference measurement in Figure~\ref{fig:m-interference} estimates $\tr(P \rho^m)$ for any $P\in \mathcal{P}_n$ within additive error $\varepsilon$, with success probability at least $1-\delta/2$, using at most $R_2 = O((n+\log(1/\delta))\varepsilon^{-4})$ rounds.
\end{lemma}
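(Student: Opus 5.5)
The plan is to analyze one round of the circuit in Figure~\ref{fig:m-interference} exactly, build a bounded unbiased estimator of $\tr(P\rho^m)\,D_P$ with a known, state-independent factor $D_P$, and then divide out $D_P$. The $\varepsilon^{-4}$ scaling should come from a variance-sensitive (Bernstein) concentration bound rather than plain Hoeffding.

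\emph{Step 1 (control register).} Write $W=W_{(1,\ldots,m)}$ and $x=(-1)^{z_c}$. After the controlled cyclic shift, the Hadamard on the control, and the outcome $z_c$, the unnormalized operator on the replicas is
\[
\tfrac14\bigl(\rho^{\ox m}+W\rho^{\ox m}W^\dagger+x(W\rho^{\ox m}+\rho^{\ox m}W^\dagger)\bigr).
\]
I will trace out replicas $A_2,\ldots,A_m$ using the cycle-contraction rule already used in Eq.~\eqref{eq:A6}. This gives the reduced operator on $A_1$ as
\[
T_{z_c}=\tfrac12\bigl(\rho+x\,\rho^m\bigr),\qquad p_{z_c}=\tr T_{z_c}=\tfrac12\bigl(1+x\tr\rho^m\bigr).
\]
The key consequence is $\sum_{z_c}x\,T_{z_c}=\rho^m$, so the signed moment appears linearly, with a sign we can track.

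\emph{Step 2 (Bell stage).} Replica $A_1$, now in the normalized state $\tau_{z_c}$, is Bell-measured against a fresh $\chi_j$ drawn from the mimicking ensemble. By the identities~\eqref{eq:4rep-identity}, the $\pm1$ quantity $(-1)^{[\mathbf r,\mathbf a]+\mathbf r_x\cdot\mathbf r_z}$ has conditional mean $\tr(\sigma_{\mathbf r}\tau)\,\tr(\sigma_{\mathbf r}\chi_j^{T})$. The transpose and phase conventions are to be fixed once; the only point is that $c_P(\chi_j):=\tr(P\chi_j^{T})$ is a known real number of modulus at most $1$, and that $\tr(P\chi_j^T)=\pm\tr(P\chi_j)$, so $c_P(\chi_j)^2=\tr(P\chi_j)^2$.

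For each round I define
\[
Y_{P,t}=x_t\,(-1)^{[\mathbf r,\mathbf a_t]+\mathbf r_x\cdot\mathbf r_z}\,c_P(\chi_{j_t}).
\]
Averaging over $z_c$ (weights $p_{z_c}$), over $\mathbf a$, and over $j\sim w$ gives
\[
\mathbb E[Y_{P,t}]=\tr(P\rho^m)\,D_P,\qquad D_P:=\sum_j w_j\,\tr(P\chi_j)^2 ,
\]
which holds regardless of the value of $\tr\rho^m$. Moreover $|Y_{P,t}|\le1$ and $\mathbb E[Y_{P,t}^2]\le D_P$.

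\emph{Step 3 (estimator and concentration).} The output rule is as follows:
\begin{itemize}
\item For $P\notin\mathcal A$, output $0$. The finding oracle guarantees an error of at most $3\varepsilon/4$.
\item For $P\in\mathcal A$, output $\widehat Y_P/D_P$, where $\widehat Y_P$ is the empirical mean of $Y_{P,t}$. Here $D_P$ is classically computable from the ensemble and satisfies $D_P\ge 3\varepsilon^2/16$ by Eq.~\eqref{eq:D1}.
\end{itemize}
We need $|\widehat Y_P-\mathbb E Y_P|\le\varepsilon D_P$. Bernstein's inequality with variance at most $D_P$ and range $1$ requires
\[
R_2=O\!\left(\frac{\log(4^{n+1}/\delta)}{\varepsilon^2D_P}+\frac{\log(4^{n+1}/\delta)}{\varepsilon D_P}\right)=O\!\left(\frac{n+\log(1/\delta)}{\varepsilon^4}\right).
\]
A union bound over all $4^n$ Paulis, with failure budget $\delta/2$, then completes the argument. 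The same records serve every $P$, since $\mathbf a_t$, $x_t$ and $j_t$ do not depend on $P$.

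\emph{Main obstacle.} I expect the delicate points to be twofold. First, the partial-trace computation in Step 1 must be done carefully: fixing the cycle direction so that both $W\rho^{\ox m}$ and $\rho^{\ox m}W^\dagger$ reduce to $\rho^m$ (rather than one of them to its adjoint, harmless here since $\rho^m$ is Hermitian). Second, one must see that plain Hoeffding only yields $\varepsilon^{-6}$. Reaching $\varepsilon^{-4}$ requires exploiting that $\mathbb E[Y^2]\le D_P$, so that the variance shrinks together with the signal.
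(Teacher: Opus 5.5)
Your proposal is correct and follows the paper's proof step for step. The reduction to $T_{z_c}=\tfrac12(\rho+(-1)^{z_c}\rho^m)$ and the Bell-stage identity are the same. Your $\widehat Y_P/D_P$ is exactly the paper's $\frac{1}{R_2}\sum_t Z_{\mathbf r,t}$ with $Z_{\mathbf r}=\tr(\sigma_{\mathbf r}\chi_j)Y_{\mathbf r}/h_{\mathbf r}$. It is concentrated the same way, by Bernstein's inequality with second moment $1/h_{\mathbf r}$ and $h_{\mathbf r}\ge 3\varepsilon^2/16$.

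One correction: no transpose appears in the Bell stage. The quantity $(-1)^{[\mathbf r,\mathbf a]+\mathbf r_x\cdot\mathbf r_z}$ is the eigenvalue of $\sigma_{\mathbf r}\otimes\sigma_{\mathbf r}$ on $\ket{\Phi_{\mathbf a}}$, so its conditional mean is $\tr(\sigma_{\mathbf r}\tau)\tr(\sigma_{\mathbf r}\chi_j)$, and you must take $c_P(\chi_j)=\tr(P\chi_j)$. The literal choice $\tr(P\chi_j^{T})$ differs from this by a sign whenever $P$ has an odd number of $Y$ factors, and would bias those estimates to $-\tr(P\rho^m)$.
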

\begin{proof}
    We use the symplectic representation of the $4^n$ Pauli operators given in Eq.~\eqref{eq:symplectic-representation}.
    Write $f_{\mathbf r}=\tr(\sigma_{\mathbf r}\rho^m)$ for the target Pauli moments.
    In Figure~\ref{fig:m-interference},
    the reference state $\chi_j$ on register $C$ is sampled from the ensemble $\{(w_j,\chi_j)\}_{j=1}^L$, with probability $w_j$.
Registers $A_1,\ldots,A_m$ are prepared in $\rho^{\otimes m}$, and the control qubit $c$ is prepared in $\ket{+}$.
The sequence of controlled-SWAP gates implements the controlled-$W_{[m]}$ operation:
\begin{align}
 W_{[m]}=W_{1m}\cdots W_{13}W_{12},\qquad
 W_{[m]}\ket{i_1,\ldots,i_m}=\ket{i_m,i_1,\ldots,i_{m-1}}.\label{eq:D2}
\end{align}
Apply $H$ to $c$ and measure it in the computational basis, obtaining
$z_c\in\{0,1\}$.
The corresponding unnormalized conditional state of registers $A_1,\ldots,A_m$, given outcome $z_c$, is
\begin{align}
 B_{z_c}=\frac14
 \bigl(I+(-1)^{z_c}W_{[m]}\bigr)\rho^{\otimes m}
 \bigl(I+(-1)^{z_c}W_{[m]}^\dagger\bigr).\label{eq:D3}
\end{align}
Tracing out $A_2,\ldots,A_m$ leaves the normalized conditional state $\tau_{z_c}:=T_{z_c}/p_{z_c}$, where
 \begin{align}
 T_{z_c}:=\tr_{A_2\cdots A_m}(B_{z_c})
       =\frac{\rho+(-1)^{z_c}\rho^m}{2}, \quad
 p_{z_c}:=\tr(T_{z_c})
       =\frac{1+(-1)^{z_c}\tr(\rho^m)}{2},\label{eq:interference-branches}
\end{align}
Next, perform a Bell measurement on $A_1$ and $C$, and record the outcome as $\mathbf a$, using the Bell-basis convention in Eq.~\eqref{eq:Bell-basis}.
From the record $(z_c,\mathbf a)$, define $Y_{\mathbf r}=(-1)^{z_c+[\mathbf r,\mathbf a]+\mathbf r_x\cdot\mathbf r_z}$.
Its conditional expectation satisfies
\begin{align}
 \mathbb E[Y_{\mathbf{r}}\mid \chi_j]
 &=\sum_{z_c=0}^1(-1)^{z_c}
   \tr\!\left[(\sigma_{\mathbf{r}}\otimes \sigma_{\mathbf{r}})(T_{z_c}\otimes\chi_j)\right] =\tr[\sigma_{\mathbf{r}}(T_0-T_1)]\tr(\sigma_{\mathbf{r}}\chi_j)= \mathrm{tr}(\sigma_{\mathbf{r}}\rho^m) \tr(\sigma_{\mathbf{r}}\chi_j),\label{eq:cycle-bell-cross}
\end{align}
The first equality follows from Eq.~\eqref{eq:4rep-identity}.
For the ensemble $\mu=\{(w_j,\chi_j)\}_j$, define the normalization factor $h_{\mathbf{r}}=\sum_jw_j [\mathrm{tr}(\sigma_{\mathbf{r}}\chi_j)]^2$.
The random variable
$Z_{\mathbf{r}}={\mathrm{tr}(\sigma_{\mathbf{r}}\chi_j)\,Y_{\mathbf{r}}}/{h_{\mathbf{r}}}$ is an unbiased estimator of $\tr(\sigma_{\mathbf{r}}\rho^m)$, because
\begin{align}
    \mathbb E Z_{\mathbf r}
 =\sum_jw_j\frac{\tr(\sigma_{\mathbf r}\chi_j)}{h_{\mathbf r}}
       \mathbb E[Y_{\mathbf r}\mid \chi_j]
   =\tr(\sigma_{\mathbf r}\rho^m).\label{eq:D6}
\end{align}
Since $Y_{\mathbf{r}}^2=1$, we obtain the following bounds on the second moment and the centered magnitude of $Z_{\mathbf{r}}$:
\begin{align}
  &\mathbb E Z_{\mathbf r}^2
 =\frac{\sum_jw_j[\tr(\sigma_{\mathbf r}\chi_j)]^2}{h_{\mathbf r}^2}
   =\frac1{h_{\mathbf r}},\label{eq:variance-bound}
\\
&|Z_{\mathbf r}-\tr(\sigma_{\mathbf r}\rho^m)|
 \leq\frac{|\tr(\sigma_{\mathbf r}\chi_j)|\,|Y_{\mathbf r}|}{h_{\mathbf r}}
       +|\tr(\sigma_{\mathbf r}\rho^m)| \leq\frac1{h_{\mathbf r}}+\|\sigma_{\mathbf r}\|_\infty\tr(\rho^m)
   \leq\frac1{h_{\mathbf r}}+1
   \leq\frac2{h_{\mathbf r}} .\label{eq:concentration-bound}
\end{align}
Here we used $|Y_{\mathbf r}|=1$, $\|\sigma_{\mathbf r}\|_\infty=1$,
$|\tr(\sigma_{\mathbf r}\chi_j)|\leq1$, and $\tr(\rho^m)\leq1$.
The last inequality also uses
$0<h_{\mathbf r}=\sum_jw_j[\tr(\sigma_{\mathbf r}\chi_j)]^2
\leq\sum_jw_j=1$.
Repeating the circuit in Figure~\ref{fig:m-interference} independently $R_2$ times produces i.i.d. estimators $\{Z_{\mathbf r,t}\}_{t=1}^{R_2}$.
Using the active subset $\mathcal A$, define the final estimator of $f_{\mathbf r}$ by
\begin{align}
 \widehat f_{\mathbf r}=
 \begin{cases}
 \displaystyle\frac1{R_2}\sum_{t=1}^{R_2}Z_{\mathbf r,t} = \frac{1}{R_2} \sum_{t = 1}^{R_2} \frac{\tr(\sigma_{\mathbf r}\chi_{j_t})}{h_{\mathbf r}} Y_{\mathbf r,t},
       &\sigma_{\mathbf r}\in\mathcal A,\\[0.6em]
 0,    &\sigma_{\mathbf r}\in\mathcal{P}_n\setminus\mathcal A.
 \end{cases}\label{eq:designed-estimator}
\end{align}
For each inactive target $\sigma_{\mathbf r}\in\mathcal{P}_n\setminus\mathcal A$, the error is deterministically bounded by
\begin{align}
 |\widehat f_{\mathbf r}-f_{\mathbf r}|
 =|\tr(\sigma_{\mathbf r}\rho^m)|
 \leq\tfrac34\varepsilon<\varepsilon.\label{eq:D10}
\end{align}
This bound follows from the defining property of the active Pauli subset.
For each active target, the mimicking oracle guarantees
$h_{\mathbf r}>0$, so the single-round estimator is well defined.
Equations~\eqref{eq:variance-bound} and~\eqref{eq:concentration-bound} therefore give
\begin{align}
 \mathbb E\left[\frac{Z_{\mathbf r,t}}{R_2} - \frac{f_{\mathbf r}}{R_2}\right]=0, \quad
 \mathbb{E}\left(\left|\frac{Z_{\mathbf r,t}}{R_2} - \frac{f_{\mathbf r}}{R_2}\right|^2\right) \leq\mathbb E\left[\frac{Z_{\mathbf r,t}^2}{R_2^2}\right] =\frac1{R^2_2 h_{\mathbf r}}, \quad \left|\frac{Z_{\mathbf r,t}}{R_2} - \frac{f_{\mathbf r}}{R_2}\right| \leq\frac2{R_2 h_{\mathbf r}}\label{eq:conditional-estimator-bounds}
\end{align}
Bernstein's inequality~\cite{fan2012hoeffding} yields
\begin{align}
 \Pr\!\left\{
    |\widehat f_{\mathbf r}-f_{\mathbf r}|
       >\varepsilon\right\} &= \Pr\!\left\{
        \left|\sum_{t = 1}^{R_2}\left(\frac{Z_{\mathbf r,t}}{R_2} - \frac{f_{\mathbf r}}{R_2}\right)\right|
           >\varepsilon\right\}
           \leq 2\exp\!\left[-\frac{\varepsilon^2/2}
             {1/(R_2h_{\mathbf r})+2\varepsilon/(3R_2h_{\mathbf r})}\right] \notag
\\
             &\leq2\exp\!\left[-\frac3{10}R_2h_{\mathbf r}\varepsilon^2\right] \leq2\exp\!\left[-\frac{9}{160}R_2\varepsilon^4\right].\label{eq:active-concentration}
\end{align}
The second inequality uses $\varepsilon<1$, and the third uses the guarantee
$h_{\mathbf r}\geq3\varepsilon^2/16$ provided by the mimicking oracle.
Choosing
\begin{align}
 R_2=\left\lceil18\varepsilon^{-4}
                   \log\frac{4|\mathcal{P}_n|}{\delta}\right\rceil = \mathcal{O}((n+\log(1/\delta))\varepsilon^{-4})\label{eq:stage-two-size}
\end{align}
and applying a union bound over all Pauli targets gives
\begin{align}
 \Pr\!\left\{
   \max_{\mathbf r\in \mathbb{F}_2^{2n}}
   |\widehat f_{\mathbf r}-f_{\mathbf r}|
       >\varepsilon\right\}
 \leq\frac\delta2.\label{eq:conditional-simultaneous-guarantee}
\end{align}
All targets use the same $R_2$ measurement records; independence
between different Pauli estimates is unnecessary for this union bound.
\end{proof}

It remains to implement the oracles in Definition~\ref{def:mimicking-state-oracles} using measurements acting jointly on at most two copies of $\rho$.
We first show that estimating the quadratic functionals $q_{\mathbf r}=\tr( \sigma_{\mathbf r} \rho \sigma_{\mathbf r} \rho)$ suffices to implement the finding oracle.

\begin{lemma}[Construction of the finding oracle]
    \label{lem:build_finding}
An oracle that estimates $q_{\mathbf r}=\tr( \sigma_{\mathbf r} \rho \sigma_{\mathbf r} \rho)$ for any $\sigma_{\mathbf r}\in \mathcal{P}_n$ within additive error $\varepsilon^2/16$, with success probability at least $1-\delta/2$, suffices to implement the finding oracle in Definition~\ref{def:mimicking-state-oracles} with success probability at least $1-\delta/2$.
\end{lemma}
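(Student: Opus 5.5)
The plan is to threshold the estimates $\hat q_{\mathbf r}$ and to show that $q_{\mathbf r}$ dominates the squared moment $[\tr(\sigma_{\mathbf r}\rho^m)]^2$. Then a small $q_{\mathbf r}$ certifies that $\sigma_{\mathbf r}$ is inactive. Concretely, on the event (of probability at least $1-\delta/2$) that all estimates satisfy $|\hat q_{\mathbf r}-q_{\mathbf r}|\le\varepsilon^2/16$, I would output
\begin{align*}
\mathcal A:=\{\sigma_{\mathbf r}\in\mathcal P_n:\ \hat q_{\mathbf r}>\varepsilon^2/2\}.
\end{align*}
The finding oracle then succeeds whenever the $q$-oracle does, so the success probability $1-\delta/2$ is inherited directly.

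The key step is the inequality $[\tr(P\rho^m)]^2\le \tr(P\rho P\rho)$ for every Hermitian Pauli $P$ and every $m\ge 2$. I would prove it via the spectral decomposition $\rho=\sum_i\lambda_i\proj{\psi_i}$, writing $p_i:=\bra{\psi_i}P\ket{\psi_i}\in[-1,1]$. On one side,
\begin{align*}
\tr(P\rho P\rho)=\sum_{i,j}\lambda_i\lambda_j|\bra{\psi_i}P\ket{\psi_j}|^2\ge\sum_i\lambda_i^2p_i^2 ,
\end{align*}
since all terms are nonnegative and we may drop the off-diagonal ones. On the other side, split $\lambda_i^mp_i=\lambda_i^{m-1}\cdot\lambda_ip_i$ and apply Cauchy--Schwarz:
\begin{align*}
[\tr(P\rho^m)]^2=\Bigl(\sum_i\lambda_i^m p_i\Bigr)^2\le\Bigl(\sum_i\lambda_i^{2m-2}\Bigr)\Bigl(\sum_i\lambda_i^2p_i^2\Bigr).
\end{align*}
The first factor equals $\tr(\rho^{2m-2})\le1$ because $2m-2\ge 2$ and $\rho$ is a density matrix. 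Combining the two displays gives the claim. This is the step I expect to need the most care: one has to pick the right Cauchy--Schwarz split, and the hypothesis $m\ge2$ is used exactly here. For $m=1$ the factor $\tr(\rho^0)$ would be the dimension, and the bound would fail.

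With the inequality in hand, the verification is short. If $\sigma_{\mathbf r}\notin\mathcal A$, then $\hat q_{\mathbf r}\le\varepsilon^2/2$, so $q_{\mathbf r}\le\varepsilon^2/2+\varepsilon^2/16=9\varepsilon^2/16$. Hence $|\tr(\sigma_{\mathbf r}\rho^m)|\le\sqrt{q_{\mathbf r}}\le 3\varepsilon/4$, which is exactly the defining property of an active Pauli subset in Definition~\ref{def:mimicking-state-oracles}.

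As a by-product useful for the mimicking step, every $\sigma_{\mathbf r}\in\mathcal A$ satisfies $q_{\mathbf r}\ge 7\varepsilon^2/16$. The choice of $\mathcal A$ uses only the classical estimates, so no further copies of $\rho$ are consumed.
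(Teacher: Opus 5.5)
Your proposal is correct and is essentially the paper's proof. It uses the same threshold $\hat q_{\mathbf r}$ versus $\varepsilon^2/2$, which you take strictly and the paper non-strictly, a difference that only affects boundary cases. It also rests on the same Cauchy--Schwarz bound $[\tr(P\rho^m)]^2\le\tr(\rho^{2m-2})\,q_{\mathbf r}\le q_{\mathbf r}$, which the paper proves via the Hilbert--Schmidt inequality with $A=\rho^{m-1}$, $B=\rho^{1/2}P\rho^{1/2}$, while you carry out the same Cauchy--Schwarz in the eigenbasis after dropping the off-diagonal terms.
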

\begin{proof}
    Suppose the oracle returns estimates $\widehat q_{\mathbf r}$ satisfying $|\widehat q_{\mathbf r}- q_{\mathbf r}|\leq \varepsilon^2/16$ for every $\sigma_{\mathbf r}\in \mathcal{P}_n$.
    Applying the Hilbert--Schmidt Cauchy--Schwarz inequality $|\tr(A^\dagger B)|^2 \leq \tr(A^\dagger A)\tr(B^\dagger B)$ with $A=\rho^{m-1}$ and $B=\rho^{1/2}\sigma_{\mathbf r}\rho^{1/2}$ gives
    \begin{align}
     |\tr( \sigma_{\mathbf r} \rho^m)|^2
     \leq \tr(\rho^{2m-2})\cdot \tr( \sigma_{\mathbf r} \rho \sigma_{\mathbf r} \rho)\leq \tr(\sigma_{\mathbf r} \rho \sigma_{\mathbf r} \rho).\label{eq:power-quadratic-envelope}
\end{align}
    The last step uses $m\geq2$ and the fact that the eigenvalues of $\rho$
    lie in $[0,1]$ and sum to one.
    Define
    \begin{align}
        \mathcal{A} := \{ \sigma_{\mathbf r} \in \mathcal{P}_n : \widehat q_{\mathbf r} \geq \varepsilon^2/2\}\label{eq:active_set}
\end{align}
    To verify that $\mathcal{A}$ is an active Pauli subset, consider any $\sigma_{\mathbf r} \notin \mathcal{A}$. Then
    \begin{align}
        |\tr( \sigma_{\mathbf r} \rho^m)| \leq \sqrt{q_{\mathbf r}}\leq \sqrt{\widehat q_{\mathbf r}+\varepsilon^2/16} < \sqrt{\varepsilon^2/2 + \varepsilon^2/16} = {3\varepsilon}/{4}.\label{eq:D17}
\end{align}
    This establishes the required guarantee for the finding oracle in Definition~\ref{def:mimicking-state-oracles}.
\end{proof}

Lemma~\ref{lem:build_finding} selects the active Pauli subset $\mathcal{A}$ by thresholding $\widehat{q}_{\mathbf{r}}$.
To construct the corresponding mimicking oracle, we first establish the existence of a pure-state ensemble satisfying the required lower bounds on its squared Pauli expectations.

\begin{lemma}[A reference ensemble with sufficiently large Pauli squares]
  \label{lem:reference-ensemble}
  For every density matrix $\rho$, there is a probability distribution
  $\mu$ over pure states $\psi$ such that, for every Hermitian Pauli $\sigma_{\mathbf{r}}$,
$\mathbb E_{\psi\sim\mu}[\tr(\sigma_{\mathbf{r}}\psi)^2]\geq {\tr(\sigma_{\mathbf{r}}\rho \sigma_{\mathbf{r}}\rho)}/{2} = q_\mathbf{r}/2$.
  \end{lemma}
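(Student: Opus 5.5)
The plan is to write $\rho=\sum_i\lambda_i\proj{i}$ in its eigenbasis and take $\mu$ to be an equal mixture of two simple pure-state ensembles. Each ensemble captures one part of $q_{\mathbf r}$.

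Expanding in the eigenbasis gives
\begin{align*}
q_{\mathbf r}=\tr(\sigma_{\mathbf r}\rho\sigma_{\mathbf r}\rho)=\sum_{i,j}\lambda_i\lambda_j|\bra{i}\sigma_{\mathbf r}\ket{j}|^2=\sum_i\lambda_i^2(\sigma_{\mathbf r})_{ii}^2+\sum_{i\neq j}\lambda_i\lambda_j|(\sigma_{\mathbf r})_{ij}|^2 .
\end{align*}
Here $(\sigma_{\mathbf r})_{ii}$ is real because $\sigma_{\mathbf r}$ is Hermitian. Since every $\psi$ is Hermitian as well, $\tr(\sigma_{\mathbf r}\psi)$ is real, so its square equals its modulus squared. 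This lets me work throughout with $|\tr(\sigma_{\mathbf r}\psi)|^2$.

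The first ensemble $\mu_1$ handles the diagonal part: draw the eigenvector $\ket{i}$ with probability $\lambda_i$. It gives
\begin{align*}
\mathbb E_{\mu_1}[\tr(\sigma_{\mathbf r}\psi)^2]=\sum_i\lambda_i(\sigma_{\mathbf r})_{ii}^2\ge\sum_i\lambda_i^2(\sigma_{\mathbf r})_{ii}^2,
\end{align*}
where the inequality uses $\lambda_i\le 1$.

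The second ensemble $\mu_2$ handles the off-diagonal part. I take random-phase purifications $\ket{\psi_\theta}=\sum_i\sqrt{\lambda_i}\,e^{\mathrm{i}\theta_i}\ket{i}$. The phases $e^{\mathrm{i}\theta_i}$ are i.i.d. uniform on $\{1,\mathrm{i},-1,-\mathrm{i}\}$, which keeps $\mu$ finite; continuous uniform phases would also work. Then
\begin{align*}
\tr(\sigma_{\mathbf r}\psi_\theta)=\sum_{i,j}\sqrt{\lambda_i\lambda_j}\,e^{\mathrm{i}(\theta_j-\theta_i)}(\sigma_{\mathbf r})_{ij},
\end{align*}
and $|\tr(\sigma_{\mathbf r}\psi_\theta)|^2$ is a double sum over $(i,j,k,l)$ with phase factor $e^{\mathrm{i}(\theta_j-\theta_i-\theta_l+\theta_k)}$.

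The key step is to show that the expectation of this phase factor vanishes except in two cases. Case (a) is $i=j$ and $k=l$. Case (b) is $i=k$ and $j=l$. In case (b) with $i\neq j$, the surviving term is $\lambda_i\lambda_j|(\sigma_{\mathbf r})_{ij}|^2$.

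I expect this phase-moment computation to be the main obstacle, and I would check it case by case. If some index appears exactly once, the factor contains an isolated $e^{\pm\mathrm{i}\theta}$, which has mean zero. If $j=k\ne i=l$, the factor is $e^{2\mathrm{i}(\theta_j-\theta_i)}$, which also has mean zero, since $\mathbb E\,e^{2\mathrm{i}\theta}=0$ for fourth roots of unity. The remaining coincidences are exactly cases (a) and (b). I must be careful not to double count the fully diagonal $i=j=k=l$ terms, so I assign them to case (a).

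Case (a) contributes $\bigl(\sum_i\lambda_i(\sigma_{\mathbf r})_{ii}\bigr)^2\ge0$, which I simply drop. This yields
\begin{align*}
\mathbb E_{\mu_2}[\tr(\sigma_{\mathbf r}\psi)^2]\ge\sum_{i\ne j}\lambda_i\lambda_j|(\sigma_{\mathbf r})_{ij}|^2 .
\end{align*}

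Finally, set $\mu=\tfrac12\mu_1+\tfrac12\mu_2$. Adding the two bounds gives $\mathbb E_\mu[\tr(\sigma_{\mathbf r}\psi)^2]\ge q_{\mathbf r}/2$ for every $\mathbf r$ simultaneously. This works because the construction of $\mu$ depends only on $\rho$, not on $\mathbf r$.

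The random-phase ensemble alone does not suffice. Its diagonal contribution is $(\sum_i\lambda_i(\sigma_{\mathbf r})_{ii})^2$, which can vanish through cancellation, so it cannot control $\sum_i\lambda_i^2(\sigma_{\mathbf r})_{ii}^2$. Mixing in the eigenstate ensemble fixes exactly this deficiency, at the price of the factor $1/2$.
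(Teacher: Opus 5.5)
Your argument is correct, but you build a different ensemble from the paper's. The paper samples an ordered pair $(a,b)$ with probability $\lambda_a\lambda_b$. When $a=b$ it outputs the eigenvector $\ket{a}$. Otherwise it outputs $(\ket{a}+\omega\ket{b})/\sqrt2$ with $\omega$ uniform on $\{1,-1,\mathrm{i},-\mathrm{i}\}$. A four-term phase average then gives $(u+v)^2/4+|z|^2/2\ge |z|^2/2$ for each off-diagonal pair.

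In effect, the paper applies your random-phase trick locally, one two-level superposition at a time. The coincident pairs $a=b$ supply the diagonal part with exactly the weight $\lambda_a^2$ that appears in $q_{\mathbf r}$. So no separate eigenstate component is needed, the factor $1/2$ is lost only on the off-diagonal terms, and the check is a single two-dimensional computation over an ensemble whose size is quadratic in the dimension.

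Your global state $\sum_i\sqrt{\lambda_i}\,e^{\mathrm{i}\theta_i}\ket{i}$ instead captures the whole off-diagonal sum $\sum_{i\ne j}\lambda_i\lambda_j|(\sigma_{\mathbf r})_{ij}|^2$ at once and without loss. Its diagonal contribution, however, collapses to $(\tr\sigma_{\mathbf r}\rho)^2$, which can vanish, as you correctly note. That is why you need the eigenbasis component and pay the factor $1/2$ through the equal mixture.

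Your route costs two things. First, the four-index phase-moment case analysis; your remark that $\mathbb{E}\,e^{2\mathrm{i}\theta}=0$ requires fourth roots of unity rather than signs is exactly what removes the pattern $j=k\neq i=l$. Second, an ensemble whose size is exponential in the dimension. That size is harmless here, since the lemma is used only existentially before the states are rounded to a finite net in the mimicking-oracle construction.
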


\begin{proof}
  Write the spectral decomposition of $\rho$ as $\rho=\sum_a\lambda_a|a\rangle\langle a|$.
  Sample an ordered pair $(a,b)$ with probability $\lambda_a\lambda_b$.
  If $a=b$, select $|a\rangle$. Otherwise, select one of the following four pure states uniformly at random:
  \begin{align}
   |\psi_{ab,\omega}\rangle=
   \frac{|a\rangle+\omega|b\rangle}{\sqrt2},
   \qquad \omega\in\{1,-1,i,-i\}.\label{eq:D18}
\end{align}
  For $a\ne b$, let $u=\bra{a}\sigma_{\mathbf{r}}\ket{a}$, $v=\bra{b}\sigma_{\mathbf{r}}\ket{b}$, and $z=\bra{a}\sigma_{\mathbf{r}}\ket{b}$.
  Averaging over the four phases gives
  \begin{align}
   \frac14\sum_{\omega}
   \langle\psi_{ab,\omega}|\sigma_{\mathbf{r}}|\psi_{ab,\omega}\rangle^2
   =\frac{(u+v)^2}{4}+\frac{|z|^2}{2}
   \geq\frac{|z|^2}{2}.\label{eq:D19}
\end{align}
  For $a=b$, the squared Pauli expectation of the selected state is $|\bra{a} \sigma_{\mathbf{r}}\ket{a}|^2$.
  Consequently,
  \begin{align}
    \mathbb E_{\psi\sim\mu}[\tr(\sigma_{\mathbf{r}}\psi)^2] &= \sum_a\lambda_a^2
    |\langle a|\sigma_{\mathbf r}|a\rangle|^2 +\sum_{a\ne b}\lambda_a\lambda_b
    \frac14\sum_\omega
    \langle\psi_{ab,\omega}|
    \sigma_{\mathbf r}|\psi_{ab,\omega}\rangle^2 \notag
\\
    &\geq\frac12\sum_{a,b}\lambda_a\lambda_b|\bra{a} \sigma_{\mathbf{r}}\ket{b}|^2
   =\frac12\operatorname{tr}( \sigma_{\mathbf{r}}\rho \sigma_{\mathbf{r}}\rho).\label{eq:D20}
\end{align}
\end{proof}

We next use Lemma~\ref{lem:reference-ensemble} to establish the feasibility of a finite-net construction of the mimicking oracle for the active subset $\mathcal{A}$ in Eq.~\eqref{eq:active_set}.

\begin{lemma}[Construction of the mimicking oracle]\label{lem:build_mimicking}
    An oracle that estimates $q_{\mathbf r}=\tr( \sigma_{\mathbf r} \rho \sigma_{\mathbf r} \rho)$ for any $\sigma_{\mathbf r}\in \mathcal{P}_n$ within additive error $\varepsilon^2/16$, with success probability at least $1-\delta/2$, suffices to implement the mimicking oracle in Definition~\ref{def:mimicking-state-oracles} for the active subset $\mathcal{A}$ defined in Eq.~\eqref{eq:active_set}, with success probability at least $1-\delta/2$.
\end{lemma}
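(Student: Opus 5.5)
The plan is to work on the event, of probability at least $1-\delta/2$, on which the $q$-oracle returns $\widehat q_{\mathbf r}$ with $|\widehat q_{\mathbf r}-q_{\mathbf r}|\le\varepsilon^2/16$ for every $\mathbf r$. On this event the mimicking oracle will succeed deterministically. The key observation is that membership in the active set $\mathcal A$ of Eq.~\eqref{eq:active_set} certifies a lower bound on the true quadratic functional. Indeed, $\sigma_{\mathbf r}\in\mathcal A$ means $\widehat q_{\mathbf r}\ge\varepsilon^2/2$, hence
\begin{align}
q_{\mathbf r}\ge \varepsilon^2/2-\varepsilon^2/16=7\varepsilon^2/16 .\notag
\end{align}
Lemma~\ref{lem:reference-ensemble} then gives the existence of a pure-state distribution $\mu$ with
\begin{align}
\mathbb E_{\psi\sim\mu}[\tr(\sigma_{\mathbf r}\psi)^2]\ge q_{\mathbf r}/2\ge 7\varepsilon^2/32 \quad\text{for all }\sigma_{\mathbf r}\in\mathcal A .\notag
\end{align}
This exceeds the required threshold $3\varepsilon^2/16=6\varepsilon^2/32$ in Eq.~\eqref{eq:D1} by a slack of $\varepsilon^2/32$. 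The whole proof consists of spending this slack on making the ensemble finite and findable without knowing $\rho$.

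First, I reduce $\mu$ to a finite uniform ensemble. I draw $L$ i.i.d.\ samples $\psi_1,\dots,\psi_L\sim\mu$. Each $\tr(\sigma_{\mathbf r}\psi)^2$ lies in $[0,1]$, so Hoeffding's inequality and a union bound over $|\mathcal A|\le 4^n$ targets show that with positive probability the empirical average stays within $\varepsilon^2/64$ of its mean for all $\sigma_{\mathbf r}\in\mathcal A$. This requires $L=O(n/\varepsilon^4)$. Since this event has positive probability, there exists an equal-weight ensemble of $L$ pure states achieving at least $7\varepsilon^2/32-\varepsilon^2/64$ on $\mathcal A$.

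Second, I discretize the pure states. I fix an $\eta$-net $\mathcal N$ of $n$-qubit pure states in trace distance, which is finite because the space is compact. For a pure state $\psi'$ with $\|\psi-\psi'\|_1\le\eta$ we have $|\tr(\sigma\psi)-\tr(\sigma\psi')|\le\eta$. Because both values lie in $[-1,1]$, the squares differ by at most $2\eta$. Choosing $\eta=\varepsilon^2/128$ costs another $\varepsilon^2/64$ at most. The rounded ensemble therefore attains at least $7\varepsilon^2/32-\varepsilon^2/32=3\varepsilon^2/16$ on every $\sigma_{\mathbf r}\in\mathcal A$.

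The oracle itself is then a brute-force search. Given $\mathcal A$, which is computed from $\{\widehat q_{\mathbf r}\}$ alone, it enumerates all multisets of $L$ points from $\mathcal N$ and outputs the first one satisfying Eq.~\eqref{eq:D1} on $\mathcal A$. Equivalently, it solves the finite linear feasibility program over weights $w$ on $\mathcal N$ with constraints $\sum_j w_j\tr(\sigma_{\mathbf r}\chi_j)^2\ge3\varepsilon^2/16$ for all $\sigma_{\mathbf r}\in\mathcal A$. Every check uses only classical descriptions and no copies of $\rho$. The existence argument above guarantees that the search terminates with a valid ensemble whenever the $q$-oracle event holds, which gives success probability at least $1-\delta/2$. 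This matches the paper's remark that the construction is information-theoretic rather than computationally efficient.

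The main obstacle is bookkeeping the slack. The two approximation steps (sampling and netting) must together consume no more than $\varepsilon^2/32$, and one must ensure the search criterion depends only on $\widehat q$ and $\mathcal A$, never on $\rho$. If the sampling step is awkward, I will instead use Carathéodory's theorem on the $|\mathcal A|$-dimensional vector $(\mathbb E_\mu\tr(\sigma_{\mathbf r}\psi)^2)_{\mathbf r\in\mathcal A}$ to get finite support directly, and then apply the net rounding as before.
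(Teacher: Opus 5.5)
Your argument is correct and is essentially the paper's: Lemma~\ref{lem:reference-ensemble} supplies the witness ensemble, rounding it to a finite trace-norm $\eta$-net costs at most $2\eta$ on each squared Pauli expectation, and the resulting finite feasibility program depends only on $\widehat q_{\mathbf r}$ and $\mathcal A$. Your Hoeffding sparsification step is valid but not needed (it only buys a uniform ensemble of size $O(n\varepsilon^{-4})$), because mapping each witness state to its nearest net point already gives a finite distribution; this is why the paper can spend the whole $\varepsilon^2/32$ slack on the net, taking $\eta=\varepsilon^2/64$.
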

\begin{proof}
A finite trace-norm $\eta$-net of pure-state density matrices~\cite{chen2022exponential} is a finite set $\mathcal{N}_{\eta} := \{\chi_j\}_j$ such that
\begin{align}
    \forall \psi, \exists \chi_j \in \mathcal{N}_{\eta} \text{ s.t. } \|\psi-\chi_j\|_1\leq\eta.\label{eq:D21}
\end{align}
Such a net can be constructed by rounding complex vectors to a sufficiently fine finite rational coordinate grid.
Fix such a net with $\eta = \varepsilon^2/64$.
Mapping each witness state $\psi$ from Lemma~\ref{lem:reference-ensemble} to a nearest net point induces a distribution $\{w_j\}_j$ on $\mathcal{N}_{\eta}$.
For each such pair $(\psi,\chi_j)$,
\begin{align}
 |\tr(\sigma_{\mathbf{r}}\psi)^2-
   \tr(\sigma_{\mathbf{r}}\chi_j)^2|\leq 2 |\tr(\sigma_{\mathbf{r}}\psi)- \tr(\sigma_{\mathbf{r}}\chi_j)|
 \leq 2\|\psi-\chi_j\|_1\leq2\eta.\label{eq:D22}
\end{align}
The second inequality uses trace-norm duality and the unit operator norm of Pauli operators.
The induced distribution on $\mathcal{N}_{\eta}$ therefore satisfies
\begin{align}
    \sum_j w_j [\tr(\sigma_{\mathbf{r}}\chi_j)]^2 \geq \mathbb{E}_{\psi \sim \mu}[\tr(\sigma_{\mathbf{r}}\psi)^2]-2\eta
 \geq q_{\mathbf{r}}/2-2\eta,\label{eq:D23}
\end{align}
The last inequality follows from Lemma~\ref{lem:reference-ensemble}.
Whenever the oracle estimates satisfy $|\widehat q_{\mathbf{r}}- q_{\mathbf{r}}|\leq \varepsilon^2/16$, the preceding bounds imply that the following finite rational linear feasibility problem is feasible:
\begin{align}
    w_j\geq0,\qquad \sum_j w_j=1,\qquad
    \sum_j w_j [\tr(\sigma_{\mathbf{r}}\chi_j)]^2 \geq \widehat q_{\mathbf{r}}/2 - \varepsilon^2/32 - 2\eta ,
    \quad \forall \sigma_{\mathbf{r}}\in\mathcal A.\label{eq:mimicking-feasible-point}
\end{align}
Any feasible solution $\{w_j\}_j$ defines a pure-state ensemble supported on $\mathcal{N}_{\eta}$.
For each $\sigma_{\mathbf{r}}\in\mathcal A$, Eq.~\eqref{eq:active_set} gives $\widehat q_{\mathbf{r}} \geq \varepsilon^2/2$.
Combining this bound with Eq.~\eqref{eq:mimicking-feasible-point} yields
\begin{align}
    \sum_j w_j [\tr(\sigma_{\mathbf{r}}\chi_j)]^2 \geq  \varepsilon^2/4 - \varepsilon^2/32 -2 \eta = 3\varepsilon^2/16.\label{eq:D25}
\end{align}
\end{proof}

It remains to construct estimators for the quadratic functionals $q_{\mathbf r}=\tr( \sigma_{\mathbf r} \rho \sigma_{\mathbf r} \rho)$.
We use the identities
\begin{align}
    \tr\!\left[W_{(12)}(\sigma_{\mathbf{r}}\otimes \sigma_{\mathbf{r}})\rho^{\otimes2}\right]
    =q_{\mathbf{r}}, \quad
    [W_{(12)}, \sigma_{\mathbf{r}}\otimes \sigma_{\mathbf{r}}] = 0, \quad \forall \sigma_{\mathbf{r}}\in\mathcal{P}_n,\label{eq:D26}
\end{align}
These quantities can be estimated by measuring in the Bell basis, which is a common eigenbasis of $W_{(12)}$ and the operators $\sigma_{\mathbf{r}}\otimes \sigma_{\mathbf{r}}$.
A Bell measurement outcome $\mathbf a=(\mathbf a_z,\mathbf a_x)$ determines the estimator
$Q_{\mathbf{r}}(\mathbf a)=(-1)^{\mathbf a_x\cdot\mathbf a_z+[\mathbf r,\mathbf a]+\mathbf r_x\cdot\mathbf r_z}$.
Perform
\begin{align}
R_1=\left\lceil\frac{2}{(\varepsilon^2/16)^2}\log\frac{4|\mathcal{P}_n|}{\delta}\right\rceil = \mathcal{O}((n+\log(1/\delta))\varepsilon^{-4})\label{eq:D27}
\end{align}
independent Bell measurement rounds and use their outcomes to form $\widehat q_{\mathbf{r}}$ for
$\sigma_{\mathbf{r}}\in\mathcal P_n$.
Hoeffding's inequality gives
$\Pr\{|\widehat q_{\mathbf{r}}-q_{\mathbf{r}}|>\varepsilon^2/16\}\leq2e^{-R_1(\varepsilon^2/16)^2/2} \leq \delta/(2 |\mathcal{P}_n|)$.
A union bound therefore shows that the event $\mathcal E=\{\max_{\sigma_{\mathbf{r}}\in\mathcal P_n}|\widehat q_{\mathbf{r}}-q_{\mathbf{r}}|\leq\varepsilon^2/16\}$
has probability at least $1-\delta/2$.

Combining these estimates with Lemmas~\ref{lem:build_finding} and~\ref{lem:build_mimicking} implements both Pauli support oracles using $R_1$ rounds of two-copy transversal Bell measurements.
Together with Lemma~\ref{lem:signal_estim}, this gives the following theorem.
 \renewcommand{\thetheorem}{3}
\begin{theorem}\label{cor:sample-efficient-pauli-amplitude-spectroscopy}
    For $m\geq 2$, there exists an $m$-replica protocol that estimates $\tr(P \rho^m)$ for any Pauli observable $P$ within additive error $\varepsilon$, with success probability at least $1-\delta$, using at most
    \begin{align}
        2R_1 + m R_2 =
        2 \times \left\lceil 512\varepsilon^{-4}
        \log\frac{4|\mathcal{P}_n|}{\delta}\right\rceil + m \times \left\lceil18\varepsilon^{-4}
        \log\frac{4|\mathcal{P}_n|}{\delta}\right\rceil\label{eq:D28}
\end{align}
    copies of $\rho$.
 \end{theorem}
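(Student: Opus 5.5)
The proof assembles the two-stage protocol already built in this appendix and tracks the failure probabilities and copy counts. \emph{Stage one} runs $R_1$ rounds of two-copy transversal Bell measurements. It forms the estimators $\widehat q_{\mathbf r}$ of $q_{\mathbf r}=\tr(\sigma_{\mathbf r}\rho\sigma_{\mathbf r}\rho)$ via $Q_{\mathbf r}(\mathbf a)$, using the commuting representation in Eq.~\eqref{eq:D26}. Each round consumes two copies, so stage one uses $2R_1$ copies. Since $|Q_{\mathbf r}|\le 1$, Hoeffding's inequality and a union bound over $|\mathcal P_n|=4^n$ targets show that, with $R_1=\lceil 2(\varepsilon^2/16)^{-2}\log(4|\mathcal P_n|/\delta)\rceil=\lceil 512\varepsilon^{-4}\log(4|\mathcal P_n|/\delta)\rceil$, the good event $\mathcal E$ holds with probability at least $1-\delta/2$.

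On $\mathcal E$, I would call Lemma~\ref{lem:build_finding} to obtain the active subset $\mathcal A$ of Eq.~\eqref{eq:active_set}. I would then call Lemma~\ref{lem:build_mimicking} to obtain a feasible weight vector on the net $\mathcal N_\eta$, which gives a pure-state ensemble satisfying Eq.~\eqref{eq:D1}. Both steps are classical post-processing: thresholding, plus solving the finite linear feasibility problem in Eq.~\eqref{eq:mimicking-feasible-point}. They consume no further copies of $\rho$, so the oracles of Definition~\ref{def:mimicking-state-oracles} are implemented exactly on $\mathcal E$.

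\emph{Stage two} runs the interference circuit of Figure~\ref{fig:m-interference} for $R_2=\lceil 18\varepsilon^{-4}\log(4|\mathcal P_n|/\delta)\rceil$ rounds. Each round consumes $m$ copies of $\rho$ on registers $A_1,\dots,A_m$, while the reference state $\chi_j$ is prepared classically. Stage two therefore uses $mR_2$ copies. The ensemble is fixed by the stage-one data, and stage two uses fresh independent copies. So, conditioned on $\mathcal E$ and on the realized $(\mathcal A,\{w_j,\chi_j\})$, Lemma~\ref{lem:signal_estim} applies and gives simultaneous $\varepsilon$-accuracy for all $\tr(P\rho^m)$ with conditional failure probability at most $\delta/2$.

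A union bound over the two stages gives total success probability at least $1-\delta$, with $2R_1+mR_2$ copies, which is Eq.~\eqref{eq:D28}. The replica count per round is $\max(2,m)=m$ since $m\ge2$, so the protocol is $m$-replica.

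The main obstacle is the \emph{conditioning step}. Lemma~\ref{lem:signal_estim} presumes a fixed ensemble, but here the ensemble is data-dependent. The point to verify is that the stage-two estimators are unbiased, with the bounds in Eqs.~\eqref{eq:variance-bound}--\eqref{eq:concentration-bound}, conditionally on any stage-one outcome lying in $\mathcal E$. This holds because stage two draws fresh independent copies, so the Bernstein bound in Eq.~\eqref{eq:active-concentration} holds conditionally. One then takes expectation over stage one.

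A minor point is that Lemma~\ref{lem:signal_estim} requires $h_{\mathbf r}\ge 3\varepsilon^2/16$ only for $\sigma_{\mathbf r}\in\mathcal A$. For inactive Paulis the deterministic bound in Eq.~\eqref{eq:D10} handles the error, and that bound also relies on $\mathcal E$ through Lemma~\ref{lem:build_finding}.
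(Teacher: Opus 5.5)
Your proposal is correct and follows the paper's argument. The paper runs $R_1$ rounds of two-copy Bell measurements to obtain the event $\mathcal E$ with probability at least $1-\delta/2$, realizes both oracles on $\mathcal E$ by classical post-processing via Lemmas~\ref{lem:build_finding} and~\ref{lem:build_mimicking}, and then applies Lemma~\ref{lem:signal_estim} with $R_2$ rounds and a union bound to get $2R_1+mR_2$ copies and confidence $1-\delta$; your explicit conditioning on the stage-one outcome, which fixes the data-dependent ensemble before the stage-two concentration bound is applied to fresh copies, is left implicit in the paper and you handle it correctly.
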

\renewcommand{\thetheorem}{\thesection\arabic{theorem}}

\paragraph{Sample-complexity lower bound.}
We finally ask whether the replica count in Theorem~\ref{cor:sample-efficient-pauli-amplitude-spectroscopy} is optimal.
The lower bound of Ye \textit{et al.}~\cite{ye2025replica} shows that estimating the $m$-th Pauli moment already requires exponentially many samples in $n$ when joint measurements are restricted to $m-1$ copies.

\begin{lemma}[Specialization of Theorem 3 in~\cite{ye2025replica}]\label{lem:lower-bound-on-sampling-complexity}
    Given the condition
    \begin{align}
    2^n\ge 2m^{2m-2},
    \qquad
    \frac{50m^3}{\sqrt{2^n}}
    \le\varepsilon
    \le\frac{(2m)^{-(m-1)}}{10},\label{eq:D29}
\end{align}
    any $(m-1)$-replica protocol that estimates $\tr(\rho^{m}P)$ for an $n$-qubit Pauli observable $P$ within additive error $\varepsilon$, with success probability $0.9$, requires at least $\Omega\left({2^{n/2}}/{[(m-1)\varepsilon^{1/m}]}\right)$ copies of $\rho$.
\end{lemma}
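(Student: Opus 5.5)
The plan is to obtain this lemma as a direct instantiation of the replica lower bound of Ye \emph{et al.}~\cite{ye2025replica}, not to rebuild it. Their Theorem~3 is a lower bound for estimating a degree-$m$ nonlinear functional $\tr(O\rho^m)$ with $k$-replica measurements and $k<m$. It allows adaptive protocols in which each round acts jointly on at most $k$ copies. Its conclusion is a copy lower bound polynomial in $2^{n/2}$ and $\varepsilon^{-1/m}$, valid whenever the dimension is large enough and $\varepsilon$ lies in an explicit window.

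The first step is to fix the parameters and check the observable.
\begin{itemize}
\item Set $k=m-1$, the largest replica number that is still insufficient.
\item Take $O=P$, an $n$-qubit Pauli operator, and check that it satisfies the hypotheses placed on $O$ in~\cite{ye2025replica}: it is Hermitian with $\|P\|_\infty=1$, it is traceless, and its spectrum is $\pm1$ with balanced multiplicities $2^{n-1}$ each.
\item This balanced $\pm1$ structure is what lets the hard instance be chosen inside the eigenspaces of $P$.
\end{itemize}

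The second step is to translate their hypotheses into the present notation.
\begin{itemize}
\item Substitute $d=2^n$ for the dimension.
\item Rewrite their dimension condition and their admissible $\varepsilon$-window as Eq.~\eqref{eq:D29}.
\item Track how the constants ($50m^3$, $(2m)^{-(m-1)}/10$, the factor $2m^{2m-2}$) specialize when the operator norm of $O$ is one.
\end{itemize}

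The third step is to state the resulting bound $\Omega\bigl(2^{n/2}/[(m-1)\varepsilon^{1/m}]\bigr)$ at success probability $0.9$, noting that the factor $m-1$ comes from counting copies rather than rounds.

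To make the proposal self-contained, I would also recall the mechanism behind the cited bound. It is a Le Cam two-point (or two-ensemble) argument.
\begin{itemize}
\item One builds two ensembles of states, for example mixtures supported on the $\pm1$ eigenspaces of $P$ with carefully tuned spectra.
\item The ensembles are chosen so that every moment of degree at most $m-1$ matches. Their averaged $k$-copy marginals then agree to leading order.
\item Meanwhile $\tr(P\rho^m)$ differs by more than $2\varepsilon$ between the two ensembles.
\item Any $(m-1)$-replica protocol, even an adaptive one, sees only these marginals. Hence the total-variation distance between the two $N$-copy outcome distributions grows like $N/2^{n/2}$ times a factor depending on $m$ and $\varepsilon$.
\item Distinguishing with probability $0.9$ therefore forces the stated $N$.
\end{itemize}

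The main obstacle is bookkeeping rather than concept. One must confirm that the cited theorem covers Pauli observables, with its normalization of $O$ and its definition of ``$k$-replica protocol'' matching ours. One must also derive the constants in Eq.~\eqref{eq:D29} correctly from the general statement. I would first check the case $m=2$ against the known $\Omega(2^{n/2})$ single-copy purity-type bound as a sanity check on the constants.
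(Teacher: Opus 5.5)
Your plan matches the paper. The paper gives no argument beyond presenting the lemma as a direct specialization of Theorem~3 of~\cite{ye2025replica}, with $m-1$ replicas, observable $P$ and $d=2^n$, so your Le Cam sketch is supplementary rather than needed; one small caution is that converting a bound on rounds of $(m-1)$-copy measurements into a bound on copies multiplies by $m-1$, so read the $(m-1)^{-1}$ factor off the cited statement in its stated units rather than attributing it to that conversion.
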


Together, Lemma~\ref{lem:lower-bound-on-sampling-complexity} and Theorem~\ref{cor:sample-efficient-pauli-amplitude-spectroscopy} establish the optimal replica count for $m\geq 2$ in the parameter regime of Lemma~\ref{lem:lower-bound-on-sampling-complexity}.

\appsubsection{Collapse without a commuting family}
We close this appendix by showing a family of unit-range \textit{noncommuting} operators that can be jointly estimated with unit estimator range.
Consider the following linear functional family:
\begin{align}
\mathcal{F}_{XYZ} := \left\{ f_X(\rho) = \tr\left(\frac{X_1+X_2+X_3}{3}\rho\right),\; f_Y(\rho) = \tr\left(\frac{Y_1+Y_2+Y_3}{3}\rho\right),\; f_Z(\rho) = \tr\left(\frac{Z_1+Z_2+Z_3}{3}\rho\right) \right\},\label{eq:three-qubit-pauli-expectations}
\end{align}
where $X_i, Y_i, Z_i$ are the Pauli operators on the $i$-th qubit for $i = 1, 2, 3$.
Note that the operators $(X_1+X_2+X_3)/3, (Y_1+Y_2+Y_3)/3$, and $(Z_1+Z_2+Z_3)/3$ do not commute and have eigenvalues ranging in $[-1,1]$. Nevertheless, there exists a strategy to estimate every member with the same unit estimator range, as demonstrated below:
\begin{itemize}
    \item Uniformly sample an element from the permutation group $\pi \in S_3$.
    \item Measure $X,Y,Z$ on the $\pi(1), \pi(2), \pi(3)$ qubits with outcomes $r_1, r_2, r_3 \in \{+1, -1\}$ respectively.
    \item Record the measurement results as a tuple $(\pi,r_X, r_Y, r_Z)$.
    \item Construct the unbiased estimators $h_X(\pi,r_X, r_Y, r_Z) := r_X$, $h_Y(\pi,r_X, r_Y, r_Z) := r_Y$, and $h_Z(\pi,r_X, r_Y, r_Z) := r_Z$ for $f_X, f_Y, f_Z$ respectively.
\end{itemize}
This strategy realizes the following $3! \times 2^3 = 48$ outcome POVM labeled by $\pi \in S_3$ and $r_X, r_Y, r_Z \in \{+1, -1\}$:
\begin{align}
    \Pi_{\pi,r_X, r_Y, r_Z} := \frac{1}{6}\left(\frac{\mathbb{I} + r_XX_{\pi(1)}}{2}\right)\left(\frac{\mathbb{I} + r_YY_{\pi(2)}}{2}\right)\left(\frac{\mathbb{I} + r_ZZ_{\pi(3)}}{2}\right).\label{eq:three-qubit-pauli-expectations-POVM}
\end{align}
One can verify the unbiasedness of the estimator by direct calculation, for example,
\begin{align}
   \sum_{\pi \in S_3} \sum_{r_X, r_Y, r_Z \in \{+1, -1\}} r_X \tr(\Pi_{\pi,r_X, r_Y, r_Z} \rho) = \tr\left(\frac{X_1+X_2+X_3}{3}\rho\right) = f_X(\rho).
\end{align}
Furthermore, the joint estimators $h_X, h_Y, h_Z$ have range $[-1,1]$.
This indicates that for general functionals, searching for commutants as in Theorem~\ref{thm:UB_nonlin_P_shad} is not a necessary condition for the existence of a unit range joint estimator.
In the following section, we introduce the $\mathrm{QB}_k$-norm as a criterion that is both necessary and sufficient for the existence of unit range joint estimators. 
Surprisingly, this definition extends the robustness of measurement incompatibility from POVMs to nonlinear functionals, which we believe is of independent interest.

\appsection{The quantum Bernstein norm and its dual programs}\label{app:QBN}\label{app:dual-primal}

\appsubsection{Introduction from non-parametric estimation}
Given samples drawn from a discrete distribution $\mathbf{p}= (p_1,\cdots,p_d)$ in the simplex $\Delta_{d-1}:= \{\mathbf{p}|~ p_j \geq 0,\forall j \in [d],~ \sum_{j} p_j =1\}$, non-parametric estimation~\cite{rao2014nonparametric} aims to directly estimate a function $f(p_1,\cdots,p_d)$ of that distribution.
This problem includes the estimation of various information measures such as Shannon entropy, R\'enyi entropy, the support size, and the $L_q$ norm~\cite{wu2020polynomial}.
A well-known result from nonparametric statistics is the theory of U-statistics~\cite{hoeffding1992class,halmos1946theory}, which states that optimal unbiased estimation of a degree-$k$ polynomial function of a probability distribution $\mathbf{p}$ requires at least $k$ i.i.d.\ samples and can be expressed by a function ${h}: [d]^k \to \mathbb{R}$ such that $\mathbb{E}_{\mathbf{x} \sim \mathbf{p}^{\ox k}} {h}(\mathbf{x}) = f(\mathbf{p})$,
where $\mathbf{x}=(x_1,\cdots,x_k) \sim \mathbf{p}^{\ox k}$ is sampled from the $k$-product distribution $\mathbf{p}^{\ox k}$.

Although the function $f(\mathbf{p})$ may have different forms,
as established in~\cite{lorentz2012bernstein},
any continuous function $f(\mathbf{p})$ of the probability distribution $\mathbf{p}$ can be expanded in the multivariate Bernstein polynomial basis:
$$
b_{\mathbf{j}, k}(\mathbf{p}) = \binom{k}{\mathbf{j}} \mathbf{p}^{\mathbf{j}} = \frac{k!}{j_1! \cdots j_d!} p_1^{j_1} p_2^{j_2} \cdots p_d^{j_d},
$$
where $\mathbf{j} \vdash_d k$ is a partition of $k$ into $d$ parts, i.e., $\mathbf{j} = (j_1, \dots, j_d) \in \mathbb{Z}_{\geq 0}^d$ is a multi-index with $|\mathbf{j}| := \sum_{i=1}^d j_i = k$.
The quantity $\binom{k}{\mathbf{j}}={k!}/(j_1!\cdots j_d!)$ is the multinomial coefficient.

These polynomials are special because they satisfy the probability conditions $b_{\mathbf{j},k} \geq 0$ and $\sum_{\mathbf{j}}b_{\mathbf{j},k} = 1$ (in analogy with a $k$-replica POVM $\Pi_j \succeq 0, \sum_{j} \Pi_j = \mathbb{I}$).
These features help explain why Bernstein polynomials have been widely used in classical nonparametric estimation~\cite{leblanc2012estimating,gharbi2025bernstein}
and provide high relative accuracy in polynomial least-squares fitting~\cite{marco2010polynomial}.
Importantly, as a complete polynomial basis, any polynomial function can be uniquely expanded in this basis, and the expansion coefficients give the U-statistic~\cite{hoeffding1992class} for this polynomial function,
which coincides with the minimum-variance unbiased construction~\cite{halmos1946theory}.
Under this interpretation, the expansion coefficients, as quantified by the Bernstein norm, serve as a natural complexity calibrator for polynomial approximation and minimax estimation of distributional functionals~\cite{wu2020polynomial}.
We summarize these properties and some other known properties of the Bernstein polynomial basis in Lemma~\ref{lem:Bernstein} below.

\begin{lemma}[Estimator bounds via Bernstein coefficients~\cite{wu2020polynomial,lorentz2012bernstein}] \label{lem:Bernstein}
    Let $f(\mathbf{p})$ be a continuous function on the $(d-1)$-dimensional probability simplex $\Delta_{d-1}$ and $\{b_{\mathbf{j}, k}(\mathbf{p})\}_{\mathbf j\vdash_d k}$ be the $k$-th degree multivariate Bernstein polynomial basis. Then:
    \begin{itemize}
        \item The function $B_k(f)(\mathbf{p}) = \sum_{\mathbf{j} \vdash_d k} f\left({\mathbf{j}}/{k}\right) b_{\mathbf{j}, k}(\mathbf{p})$ converges uniformly to $f(\mathbf{p})$ as $k\to\infty$.
        \item Restrict $f(\mathbf{p})$ to a polynomial of degree at most $k$. By degree elevation utilizing $\sum_{i=1}^d p_i = 1$, it admits a unique and exact algebraic expansion $f(\mathbf{p}) \equiv \sum_{\mathbf{j} \vdash_d k} C_{\mathbf{j}}(f) b_{\mathbf{j}, k}(\mathbf{p})$, where $C_{\mathbf{j}}(f)$ are the Bernstein coefficients of $f(\mathbf{p})$.
        \item Given $k$ independent samples $\mathbf{x} \sim \mathbf{p}^{\ox k}$, any unbiased estimator $h(\mathbf{x})$ for $f(\mathbf{p})$ satisfies $\max_{\mathbf{x}} |h(\mathbf{x})| \geq \max_{\mathbf{j} \vdash_d k} |C_{\mathbf{j}}(f)|$. Furthermore, this lower bound is reached by $h(\mathbf{x}) = C_{\mathbf{j}(\mathbf{x})}(f)$, where $\mathbf{j}(\mathbf{x})$ is the empirical count vector of the sample $\mathbf{x}$, i.e., $[\mathbf{j}(\mathbf{x})]_t = \sum_{i=1}^k \mathbb{I}(x_i = t)$.
    \end{itemize}
\end{lemma}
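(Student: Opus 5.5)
The plan is to prove the three items in order, using that the Bernstein basis is the law of the empirical type: if $\mathbf{x}\sim\mathbf{p}^{\ox k}$, then the count vector $\mathbf{j}(\mathbf{x})$ is multinomial, so $\Pr[\mathbf{j}(\mathbf{x})=\mathbf{j}]=b_{\mathbf{j},k}(\mathbf{p})$. All three statements are read off this probabilistic representation.

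\emph{First item.} Write $B_k(f)(\mathbf{p})=\mathbb{E}\,f(\mathbf{j}(\mathbf{x})/k)$. The empirical frequencies $\hat{\mathbf{p}}:=\mathbf{j}(\mathbf{x})/k$ satisfy $\mathbb{E}\hat p_i=p_i$ and $\mathbb{E}\|\hat{\mathbf{p}}-\mathbf{p}\|_2^2=\sum_i p_i(1-p_i)/k\le 1/k$. Because $\Delta_{d-1}$ is compact, $f$ is uniformly continuous and bounded by some $M$. Fix $\eta>0$ and choose $\gamma$ with $|f(\mathbf{q})-f(\mathbf{p})|\le\eta$ whenever $\|\mathbf{q}-\mathbf{p}\|_2\le\gamma$. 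Split the expectation according to whether $\|\hat{\mathbf{p}}-\mathbf{p}\|_2\le\gamma$. Chebyshev's inequality then gives
\begin{align*}
|B_k(f)(\mathbf{p})-f(\mathbf{p})|\le \eta+2M/(k\gamma^2),
\end{align*}
uniformly in $\mathbf{p}$, which proves uniform convergence.

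\emph{Second item.} For existence, use degree elevation: multiply every monomial of degree $\ell<k$ in $f$ by $(\sum_i p_i)^{k-\ell}=1$. This gives a homogeneous degree-$k$ polynomial $\sum_{\mathbf{j}}a_{\mathbf{j}}\mathbf{p}^{\mathbf{j}}$ that agrees with $f$ on the simplex, and we set $C_{\mathbf{j}}:=a_{\mathbf{j}}/\binom{k}{\mathbf{j}}$. For uniqueness, suppose a homogeneous degree-$k$ form $g$ vanishes on $\Delta_{d-1}$. Every $\mathbf{q}$ in the open positive orthant can be written $\mathbf{q}=t\mathbf{p}$ with $t>0$ and $\mathbf{p}\in\Delta_{d-1}$, so $g(\mathbf{q})=t^k g(\mathbf{p})=0$. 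Hence $g$ vanishes on a nonempty open set and is the zero polynomial. Since the degree-$k$ monomials are linearly independent, the coefficients $C_{\mathbf{j}}(f)$ are uniquely determined. I expect this uniqueness step to be the main point needing care, because the monomials $\mathbf{p}^{\mathbf{j}}$ of different degrees are \emph{not} independent as functions on the simplex; restricting to homogeneous degree $k$ is exactly what rescues independence.

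\emph{Third item.} Let $h$ be any unbiased estimator. Group the sequences $\mathbf{x}\in[d]^k$ by their type, and let $\bar h(\mathbf{j})$ be the average of $h$ over the $\binom{k}{\mathbf{j}}$ sequences of type $\mathbf{j}$. Each sequence of type $\mathbf{j}$ has probability $\mathbf{p}^{\mathbf{j}}$, so
\begin{align*}
f(\mathbf{p})=\mathbb{E}h(\mathbf{x})=\sum_{\mathbf{j}}\bar h(\mathbf{j})\,b_{\mathbf{j},k}(\mathbf{p})\quad\text{for all }\mathbf{p}\in\Delta_{d-1}.
\end{align*}
The right-hand side is a homogeneous degree-$k$ expansion of $f$, so uniqueness from the second item forces $\bar h(\mathbf{j})=C_{\mathbf{j}}(f)$. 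An average never exceeds a maximum in absolute value, hence $\max_{\mathbf{x}}|h(\mathbf{x})|\ge\max_{\mathbf{j}}|\bar h(\mathbf{j})|=\max_{\mathbf{j}}|C_{\mathbf{j}}(f)|$. For attainment, the estimator $h(\mathbf{x})=C_{\mathbf{j}(\mathbf{x})}(f)$ has mean $\sum_{\mathbf{j}}C_{\mathbf{j}}(f)\,b_{\mathbf{j},k}(\mathbf{p})=f(\mathbf{p})$. It is therefore unbiased, and its range is exactly $\max_{\mathbf{j}}|C_{\mathbf{j}}(f)|$.
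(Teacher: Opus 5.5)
Your proof is correct and follows the paper's route for the third item: averaging $h$ over each type class is exactly the paper's $S_k$-symmetrization $\tilde h$, and $\bar h(\mathbf j)=C_{\mathbf j}(f)$ follows from uniqueness of the Bernstein expansion in both. The difference is that you prove two things the paper does not. For the first item you give Bernstein's Chebyshev argument, where the paper cites Lorentz. For the second you prove uniqueness (a homogeneous degree-$k$ form vanishing on $\Delta_{d-1}$ vanishes on the open positive orthant), where the paper calls it trivial and only sketches existence.
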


\begin{proof}
    The first item is a well-known result in the theory of Bernstein polynomials, see Theorem 1.1.1 and its multivariate simplex generalization in Section 2.9 of~\cite{lorentz2012bernstein}.
    The second item is trivial since every polynomial $P(\mathbf{p})$ of uniform degree $k$ can be expanded in the monomial basis $P(\mathbf{p})  = \sum_{x_1=1}^d \dots \sum_{x_k=1}^d C_{x_1, \dots, x_k} p_{x_1} \dots p_{x_k}$, using the fact that $p_{x_1}\cdots p_{x_k} = p_{x_{\sigma(1)}}\cdots p_{x_{\sigma(k)}}$, where $\sigma \in S_k$ is a permutation of $[k]$, we can merge the terms with the same empirical count of symbol $\mathbf{j}$.
    The third item is a slight generalization of the well-known proof of the Rao-Blackwell theorem and the Lehmann-Scheffé theorem~\cite{lehmann2011completeness}, from the variance to the infinity norm.
    Given an arbitrary unbiased strategy $h(\mathbf{x}) := h(x_1, \dots, x_k)$, a new symmetric unbiased strategy $\tilde{h}(\mathbf{x}) = (k!)^{-1}\sum_{\sigma \in S_k} h(x_{\sigma(1)}, \dots, x_{\sigma(k)})$ can be constructed such that $\max_{\mathbf{x}\in[d]^k} |h(\mathbf{x})| \geq \max_{\mathbf{x}\in[d]^k} |\tilde{h}(\mathbf{x})|$ by the convexity of the infinity norm. Due to the symmetry, it again depends on the empirical count of symbol $\mathbf{j}$ only and gives $\tilde{h}(\mathbf{j})$. Given that $\mathbf{x} \sim \mathbf{p}^{\ox k}$, we have
    \begin{align}
        \mathrm{Pr}(\mathbf{j}) = \frac{k!}{j_1! \cdots j_d!} p_1^{j_1} p_2^{j_2} \cdots p_d^{j_d} = b_{\mathbf{j}, k}(\mathbf{p}).\label{eq:B1}
\end{align}
    The unbiasedness of $\tilde{h}(\mathbf{x})$ further implies $\mathbb{E}_{\mathbf{x} \sim \mathbf{p}^{\ox k}} [\tilde{h}(\mathbf{x})] = \sum_{\mathbf{j} \vdash_d k} \tilde{h}(\mathbf{j}) b_{\mathbf{j}, k}(\mathbf{p}) = f(\mathbf{p})$, which gives $\tilde{h}(\mathbf{j}) = C_{\mathbf{j}}(f)$ by the uniqueness of the Bernstein expansion.
\end{proof}

Lemma~\ref{lem:Bernstein} gives the Bernstein coefficients a direct
operational meaning in classical estimation theory as it
exactly characterizes the minimum worst-case amplitude required by
any unbiased estimator of a polynomial function.
 More precisely,
for a degree-$k$ polynomial function $f(\mathbf p)$, define the Bernstein norm as
\begin{align}
   \|f\|_k
    :=
    \max_{\mathbf j\vdash_d k}|C_{\mathbf j}(f)| .\label{eq:B2}
\end{align}
Then Lemma~\ref{lem:Bernstein} implies the variational identity
\begin{align}
    \|f\|_k
    =
    \inf_{h:[d]^k\to\mathbb R}
    \left\{
        \|h\|_\infty:
        \mathbb E_{\mathbf x\sim \mathbf p^{\otimes k}}
        h(\mathbf x)=f(\mathbf p),
        \ \forall \mathbf p\in\Delta_{d-1}
    \right\}.\label{eq:B3}
\end{align}
The Bernstein norm $\|f\|_k$ is not related to the functional range $\max_{\mathbf{p}} f(\mathbf{p})$.
In other words, even if the value of $f(\mathbf{p})$ is stable for inputs $\mathbf{p}$, its best unbiased estimator may fluctuate drastically, as quantified by $\|f\|_k$.
This prevents us from constructing efficient estimation strategies for some finite range functionals.

\appsubsection{The quantum Bernstein norm}

In the quantum setting, distributions are generated by positive operator-valued measures (POVMs), and the classical functions $f(\mathbf{p})$ generalize to scalar functions $f(\rho)$ with variables being the matrix entries $\rho_{j,k}$ of the density operator $\rho$.
As in the classical setting where the unbiased estimator $h(\mathbf{x})$ is built on the $k$-replica product extension $\mathbf{p}^{\ox k}$, we first discuss the $k$-linearization of functions in Definition~\ref{def:linearization}.

\begin{definition}[Linearization of $f(\rho)$]\label{def:linearization}
    For a real-valued function $f(\rho)$ with $\rho \in \mathcal{D}(\mathcal{H})$, the $k$-linearization is the operator $F$ acting on $\mathcal{H}^{\ox k}$ such that $f(\rho) = \tr(F\,\rho^{\ox k})$.
\end{definition}

We say $f_\alpha(\rho)$ is a degree-$r$ polynomial function of $\rho$ if it is a degree-$r$ multivariate polynomial in $\rho_{j,k}$.
This includes degree-$r$ polynomial functions such as $\left[\tr(O_i\rho)\right]^r$ and $\tr(O_i\rho^r)$ for $O_i \in \mathrm{Herm}(\mathcal{H})$.
For a degree-$r$ polynomial function of $\rho$, a $k$-replica linearization $F$ with $k\geq r$ acting on $\cH^{\ox k}$, such that $f(\rho) = \tr(F\,\rho^{\ox k})$, always exists.
Furthermore, the linearization operator $F$ that is invariant under $k$-replica permutation operation is unique, as shown in Lemma~\ref{lem:linearization_exist} below.

\begin{lemma}[Linearization of degree-$r$ polynomials]\label{lem:linearization_exist}
     For a real-valued polynomial function $f(\rho)$ with degree $r \leq k$, the $k$-linearization always exists. Furthermore, the linearization $\tilde{F}$ such that $\tilde{F}$ is Hermitian and invariant under $k$-replica permutation operation is unique.
\end{lemma}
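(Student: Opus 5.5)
We prove existence first and uniqueness second, and we handle a polynomial of degree $r<k$ by homogenizing it to degree $k$ with the constraint $\tr\rho=1$.

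\emph{Existence.} Write $f(\rho)=\sum_{s=0}^{r} f_s(\rho)$, where $f_s$ is homogeneous of degree $s$ in the entries $\rho_{j,l}$. Each monomial $\rho_{j_1l_1}\cdots\rho_{j_sl_s}$ equals $\tr\bigl(\ket{l_1\cdots l_s}\!\bra{j_1\cdots j_s}\,\rho^{\otimes s}\bigr)$. Therefore $f_s(\rho)=\tr(G_s\rho^{\otimes s})$ for some operator $G_s$ on $\cH^{\otimes s}$. Since $\tr\rho=1$ on $\mathcal D(\cH)$, we may pad with identities:
\begin{align*}
f_s(\rho)=\tr\bigl[(G_s\otimes\mathbb I^{\otimes(k-s)})\rho^{\otimes k}\bigr].
\end{align*}
Setting $F:=\sum_s G_s\otimes \mathbb I^{\otimes(k-s)}$ then gives a $k$-linearization in the sense of Definition~\ref{def:linearization}.

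Next we make $F$ Hermitian and symmetric. Because $f$ is real, $\tr(F^\dagger\rho^{\otimes k})=\overline{\tr(F\rho^{\otimes k})}=f(\rho)$, so $(F+F^\dagger)/2$ is also a linearization. Because $W_\pi\rho^{\otimes k}W_\pi^\dagger=\rho^{\otimes k}$ for every $\pi\in S_k$, the average
\begin{align*}
\tilde F:=\frac{1}{k!}\sum_{\pi\in S_k}W_\pi\,\frac{F+F^\dagger}{2}\,W_\pi^\dagger
\end{align*}
is a linearization that is both Hermitian and permutation-invariant.

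\emph{Uniqueness.} Let $\tilde F_1,\tilde F_2$ be two such linearizations and set $D:=\tilde F_1-\tilde F_2$. Then $D$ is Hermitian and permutation-invariant, and $\tr(D\rho^{\otimes k})=0$ for all density matrices $\rho$. We must show $D=0$, and this is the main step. Note that the vanishing holds only on states, not on all Hermitian operators, so the argument has two parts.

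First we extend the identity to all positive semidefinite $A$. For $A\succeq0$ with $A\neq0$, we have $\tr(DA^{\otimes k})=(\tr A)^k\,\tr\bigl(D(A/\tr A)^{\otimes k}\bigr)=0$ by homogeneity.

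Second, we extend from the cone to all Hermitian operators. The function $A\mapsto\tr(DA^{\otimes k})$ is a polynomial on the real vector space $\mathrm{Herm}(\cH)$, and it vanishes on the cone of positive definite operators, which is open. A polynomial vanishing on an open set is identically zero, so $\tr(DA^{\otimes k})=0$ for every Hermitian $A$. The map $A\mapsto\tr(DA^{\otimes k})$ is a homogeneous polynomial of degree $k$, so polarization applies: expanding $\tr\bigl(D(\sum_i t_iA_i)^{\otimes k}\bigr)$ and extracting the coefficient of $t_1\cdots t_k$ gives
\begin{align*}
\sum_{\pi\in S_k}\tr\bigl(D\,A_{\pi(1)}\otimes\cdots\otimes A_{\pi(k)}\bigr)=0 .
\end{align*}
By permutation invariance of $D$, every term in this sum is equal, so $\tr(D\,A_1\otimes\cdots\otimes A_k)=0$ for all Hermitian $A_1,\dots,A_k$.

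Products of Hermitian operators span all operators on $\cH^{\otimes k}$ over $\mathbb C$. Hence the Hilbert--Schmidt inner product of $D$ with every operator vanishes, and $D=0$.
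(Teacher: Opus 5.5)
Your proof is correct and follows the same route as the paper: existence comes from writing each monomial as $\tr(A\rho^{\otimes k})$ and using linearity, with $\tr\rho=1$ padding the lower-degree terms. Uniqueness comes from polarization combined with permutation invariance. The difference is that you prove directly that a Hermitian, permutation-invariant $D$ annihilating every $\rho^{\otimes k}$ must vanish, and you spell out the extension from states to all Hermitian operators; the paper compresses this into the cited identity $\mathrm{span}_{\mathbb R}\{\rho^{\otimes k}\}=\{A\in\mathrm{Herm}(\mathcal{H}^{\otimes k}): W_\pi A W_\pi^\dagger=A\}$.
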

\begin{proof}
    To show the existence, we note that a monomial $\mathbf{m} = \prod_{t = 1}^r \rho_{i_t, j_t}$ can always be written as $\tr( A_{m} \rho^{\ox k})$ for a certain linear operator $A_m$ on $\cH^{\ox k}$. By linearity, we show the existence of $k$-linearization for $k \geq r$.
    Using the standard polarization identity for symmetric tensors~\cite{landsberg2011tensors}, we conclude:
    \begin{align}
    \mathrm{span}_{\mathbb R}\{\rho^{\otimes k}:\rho\in\mathcal D(\cH)\} =\{ A \in \mathrm{Herm}(\cH^{\otimes k})~|~ W_{\pi} A W_{\pi^{-1}} = A, \forall \pi \in S_k\},\label{eq:B6}
\end{align}
     where $W_{\pi} \ket{\psi_1} \otimes \cdots \otimes \ket{\psi_k} = \ket{\psi_{\pi(1)}} \otimes \cdots \otimes \ket{\psi_{\pi(k)}}$ defines the representation of $S_k$ on $k$ replicas of the state.
    Thus, $\tr(F\, \rho^{\ox k}) = \tr(F'\, \rho^{\ox k})$ implies $\Phi_k(F) = \Phi_k(F')$, where $\Phi_k(O) = (k!)^{-1}\sum_{\pi \in S_k} W_{\pi} O W_{\pi}^\dagger$ is the symmetrization over the $k$-replica permutation group. This proves the uniqueness of $\tilde{F}:=\Phi_k(F)=\Phi_k(F')$. The Hermiticity can also be proved by the fact that $f^*(\rho) = f(\rho) = \tr(\Phi_k(F)\, \rho^{\ox k}) = \tr(\Phi_k(F)^\dagger\, \rho^{\ox k})$.
\end{proof}

Similar to the classical case, the condition $\max_{\rho \in \mathcal{D}(\mathcal{H})}|f(\rho)|<s$ does not necessarily indicate the existence of a bounded estimator with weights $|h(j)|<s, \forall j \in \cJ$. For a specific quantum function $f(\rho)$, the optimal POVM $\{\Pi_j\}_j$ that minimizes the problem
\begin{align}
   \min_{\{\Pi_j\}_j}[\max_{j} |h(j)|], \quad \mathrm{s.t.~} f(\rho) = \sum_{j} h(j) \tr(\Pi_j \rho^{\ox k}),\quad \Pi_j \geq 0, \quad \sum_j \Pi_j = \1_k,\label{eq:quantum-bernstein-basis}
\end{align}
plays the exact role of the \emph{$k$-replica quantum Bernstein basis}. The expansion coefficients $h(j)$ under this basis fully characterize the existence of bounded unbiased estimators for the quantum function $f(\rho)$.
However, instead of considering a single property $f(\rho)$, the true quantum bottleneck emerges when we consider the \textit{joint estimation} of \textit{a set of} properties $\cF := \{f_{\alpha}(\rho)\}_{\alpha \in \cA}$.

In classical statistics, as established in Lemma~\ref{lem:Bernstein}, Bernstein polynomials are \emph{fixed} for different choices of $f(\mathbf{p})$.
Consequently, joint estimation essentially comes for free.
One can simply evaluate all estimators on the same drawn classical sample $\mathbf{x} \sim \mathbf{p}^{\ox k}$.
The sample complexity overhead is merely logarithmic $O(\log |\cA|)$ via a union bound.
In sharp contrast, the quantum realm strictly forbids this.
Even if each \textit{single} property $f_\alpha(\rho)$ admits a highly efficient quantum Bernstein expansion $(\Pi^{(\alpha)}, h_\alpha)$ with coefficients individually bounded by $s$,
one cannot jointly deploy all these optimal bases $\{\Pi^{(\alpha)}\}_{\alpha \in \cA}$ on a single copy of $\rho^{\ox k}$ unless the POVMs are perfectly compatible (jointly measurable)~\cite{guhne2023colloquium}.
To estimate all $f_\alpha$ jointly from the same ensemble without dataset splitting, one is forced to construct a single \textit{joint} POVM basis $\Pi_{j}$ alongside a new set of estimator coefficients ${h}'_\alpha(j)$.

A familiar example is the joint estimation of the Pauli observables $\{X, Y, Z\}$. Individually, the expectation value of each Pauli operator $P \in \{X, Y, Z\}$ can be perfectly estimated with a bounded estimator $|h_P(\pm)| = 1$ by measuring in its own eigenbasis, which corresponds to the POVM $\Pi^{(P)}_\pm = ({1}/{2})(\1 \pm P)$. However, because these operators do not commute, their optimal individual POVMs are incompatible and cannot be jointly measured~\cite{heinosaari2015noise, carmeli2016quantum}. To jointly estimate all three properties from the same ensemble, one can prove that the optimal joint quantum Bernstein basis must be chosen as:
\begin{align}
    \left\{\tilde{\Pi}_c = \frac{1}{8}\Bigl(\1 + \frac{c_X X + c_Y Y + c_Z Z}{\sqrt{3}}\Bigr), \quad h'_P(c) = \sqrt{3}\,c_P\right\}_{c \in \{+,-\}^{\times 3}},\label{eq:case2-povm}
\end{align}
which inherently incurs a statistical penalty overhead of $s^* = \sqrt{3}$.
We thus generalize the Bernstein framework to both matrix-valued functions and a collection of properties $\cF$.

To quantify the existence of bounded unbiased estimators for a function set $\cF := \{f_{\alpha}(\rho)\}_{\alpha \in \cA}$ and to find the optimal $k$-replica strategy, we formalize the optimization problem~\eqref{eq:quantum-bernstein-basis} in Definition~\ref{def:k-copy-qBern_coeff}.

\begin{definition}[$\mathrm{QB}_k$-norm]\label{def:k-copy-qBern_coeff}
Given $s \in [0,\infty]$, we say that a set of state properties $\cF := \{f_{\alpha}(\rho)\}_{\alpha \in \cA}$ is \textit{$k$-replica jointly estimable with scaling $s$} if there exist a POVM $\{\Pi_j\}_{j \in \cJ}$ on $\cH^{\ox k}$ and classical functions $\{h_\alpha: \cJ \to \mathbb{R}\}$ such that $|h_\alpha(j)| \leq s, \, \forall j \in \cJ, \forall \alpha \in \cA$, and
\begin{align}
f_\alpha(\rho) = \sum_{j \in \cJ} h_\alpha(j)\;\tr(\Pi_j\;\rho^{\ox k}), \quad \forall\,\rho,\;\forall\,\alpha.\label{eq:k-copy-estimateability}
\end{align}
The minimum scaling $s^*_k(\cF) := \min_{s} \{s|~\cF\text{~is~} k\text{-replica jointly estimable with scaling } s \in [0, \infty]\} $ is defined as the $k$-replica quantum Bernstein norm $(\mathrm{QB}_k\text{-norm})$, also denoted $\|\cF\|_k := s^*_k(\cF)$.
The optimal solution $(\Pi^*_j, h^*_\alpha(j))$ that reaches $s^*_k(\cF)$ is defined as the $k$-replica quantum Bernstein basis and $k$-replica quantum Bernstein coefficients for $\cF$, respectively.
\end{definition}

 After defining the $\mathrm{QB}_k$-norm, we remark that the $\mathrm{QB}_k$-norm $s^*_k(\cF)$ of a set of degree-$r$ polynomial properties is determined by the unique $k$-replica symmetric linearization $\{\tilde{F}_{\alpha}\}_{\alpha \in \cA}$ of properties $\cF = \{f_{\alpha}\}_{\alpha \in \cA}$.
 For $k < r$ , the $k$-replica $\mathrm{QB}_k$-norm satisfies $s^*_k(\cF) = \infty$ since no $k$-replica linear estimator can estimate a polynomial of degree greater than $k$.
 This corresponds to Lemma~\ref{lem:linearization_exist}, which states the non-existence of $k<r$ linearization of degree-$r$ polynomial properties.
For the case $k >r$, we first show Lemma~\ref{lem:linearization_symmetrization}.

\begin{lemma}\label{lem:linearization_symmetrization}
    Let $(\Pi_j, h_\alpha(j))$ be an unbiased $k$-replica estimation strategy for a set of properties $\cF := \{f_{\alpha}(\rho)\}_{\alpha \in \cA}$ with symmetric $k$-linearization $\tilde{F}_\alpha$. Then $\sum_j h_\alpha(j)\,\Phi_k(\Pi_j) = \tilde{F}_\alpha$, where $\Phi_k(\Pi_j) = (k!)^{-1}\sum_{\pi \in S_k} P_\pi \Pi_j P_\pi^\dagger$ is the symmetrization over the symmetric group $S_k$.
\end{lemma}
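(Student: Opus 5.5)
The plan is to exploit the permutation invariance of $\rho^{\otimes k}$ together with the uniqueness statement of Lemma~\ref{lem:linearization_exist}. Unbiasedness of $(\Pi_j,h_\alpha(j))$ says that for every state $\rho$,
\begin{align*}
f_\alpha(\rho)=\sum_j h_\alpha(j)\tr(\Pi_j\rho^{\otimes k})=\tr\Bigl(G_\alpha\,\rho^{\otimes k}\Bigr),\qquad G_\alpha:=\sum_j h_\alpha(j)\Pi_j .
\end{align*}
So $G_\alpha$ is itself a $k$-linearization of $f_\alpha$ in the sense of Definition~\ref{def:linearization}. Since the $h_\alpha(j)$ are real and the $\Pi_j\succeq 0$ are Hermitian, $G_\alpha$ is Hermitian.

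The first step is to note that $\rho^{\otimes k}$ commutes with every $W_\pi$, i.e.\ $W_\pi^\dagger\rho^{\otimes k}W_\pi=\rho^{\otimes k}$. Cyclicity of the trace then gives $\tr(W_\pi G_\alpha W_\pi^\dagger\rho^{\otimes k})=\tr(G_\alpha\rho^{\otimes k})$ for every $\pi\in S_k$. Averaging over $S_k$ yields
\begin{align*}
\tr\bigl(\Phi_k(G_\alpha)\rho^{\otimes k}\bigr)=f_\alpha(\rho)\quad\forall\rho .
\end{align*}
By linearity of $\Phi_k$ and finiteness of the sum, $\Phi_k(G_\alpha)=\sum_j h_\alpha(j)\Phi_k(\Pi_j)$.

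The second step invokes uniqueness. The operator $\Phi_k(G_\alpha)$ is Hermitian, because conjugation by unitaries preserves Hermiticity. It is also $S_k$-invariant, because $W_\sigma\Phi_k(X)W_\sigma^\dagger=\Phi_k(X)$ by reindexing the group average. It linearizes $f_\alpha$. Lemma~\ref{lem:linearization_exist} states that the Hermitian permutation-invariant linearization is unique, so $\Phi_k(G_\alpha)=\tilde F_\alpha$.

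For completeness, the uniqueness can be made explicit through Eq.~\eqref{eq:B6}. The difference $D=\Phi_k(G_\alpha)-\tilde F_\alpha$ is Hermitian, $S_k$-invariant, and satisfies $\tr(D\rho^{\otimes k})=0$ for all $\rho$. Hence $D$ is orthogonal in the Hilbert--Schmidt inner product to the real span of $\{\rho^{\otimes k}\}$. By Eq.~\eqref{eq:B6} that span is exactly the space of Hermitian $S_k$-invariant operators, which contains $D$ itself. Therefore $\tr(D^2)=0$, and so $D=0$.

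No step is a serious obstacle; the only care needed is that uniqueness must be applied within the invariant Hermitian class. This is why one passes from $G_\alpha$ to $\Phi_k(G_\alpha)$, since $G_\alpha$ itself need not be invariant.
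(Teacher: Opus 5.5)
Your proposal is correct and follows essentially the same route as the paper. You observe that $\sum_j h_\alpha(j)\Pi_j$ is a $k$-linearization of $f_\alpha$, symmetrize it, and invoke the uniqueness of the Hermitian permutation-invariant linearization from Lemma~\ref{lem:linearization_exist}; your Hilbert--Schmidt orthogonality argument via Eq.~\eqref{eq:B6} spells out the uniqueness step that the paper only cites.
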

\begin{proof}
    By Definition~\ref{def:k-copy-qBern_coeff} and Definition~\ref{def:linearization}, $f_\alpha(\rho) = \sum_j h_\alpha(j)\,\tr(\Pi_j\,\rho^{\ox k}) = \tr(F_\alpha\,\rho^{\ox k}), \forall \rho \in \mathcal{D}(\mathcal{H}), \forall \alpha \in \cA.$
    By the uniqueness of symmetric linearization, as proved in Lemma~\ref{lem:linearization_exist},
    this indicates $\sum_j h_\alpha(j)\,\Phi_k(\Pi_j) = \tilde{F}_\alpha$.
\end{proof}

The linearization operators $\{{F}_{\alpha}\}_{\alpha \in \cA}$ of $\cF := \{f_{\alpha}(\rho)\}_{\alpha \in \cA}$ can be considered as a set of linear functions $\{\tr(\sigma \cdot {F}_{\alpha})\}_{\alpha \in \cA}$ for an enlarged state $\sigma \in \mathcal{D}(\mathcal{H}^{\ox k})$.
For simplicity, we denote this set of linear functions as $\{F_{\alpha}\}_{\alpha \in \cA}$.
Based on Definition~\ref{def:k-copy-qBern_coeff}, one can also ask for the best single-copy estimation scaling factor $s_1^*(\{F_{\alpha}\}_{\alpha \in \cA})$ for the linear functions $\{F_{\alpha}\}_{\alpha \in \cA}$.
Based on Lemma~\ref{lem:linearization_symmetrization}, the unbiasedness condition of strategy $(\Pi_j, h_\alpha(j))$ also implies the unbiasedness condition on the linear functions $\{F_{\alpha}\}_{\alpha \in \cA}$.
We then have the following theorem.
\begin{theorem}[Linearization theorem]\label{thm:structure_fcomp}
    Given a set of real-valued functions $\cF$ with degree $r<k$ and one of its $k$-linearizations $\{{F}_{\alpha}\}_{\alpha \in \cA}$, we have $s_k^*(\cF) \leq s_1^*\left(\{F_{\alpha}\}_{\alpha \in \cA}\right)$ and equality holds if ${F}_{\alpha} = \tilde{F}_{\alpha}$ is the unique symmetric and Hermitian $k$-linearization (see Lemma~\ref{lem:linearization_exist}) of $\cF$.
\end{theorem}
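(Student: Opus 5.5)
The plan is to prove the inequality and the equality case separately, working directly from Definition~\ref{def:k-copy-qBern_coeff}. The linear family $\{F_\alpha\}_{\alpha\in\cA}$ is read as linear functionals $\sigma\mapsto\tr(\sigma F_\alpha)$ on $\mathcal D(\cH^{\ox k})$.

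\emph{Inequality $s_k^*(\cF)\le s_1^*(\{F_\alpha\})$.} Take any single-copy strategy $(\Pi_j,h_\alpha)$ on $\cH^{\ox k}$ that is unbiased for all $F_\alpha$ with range $s$. Unbiasedness means $\sum_j h_\alpha(j)\tr(\Pi_j\sigma)=\tr(F_\alpha\sigma)$ for every state $\sigma$ on $\cH^{\ox k}$. Since density matrices span all operators, this is equivalent to the operator identity $\sum_j h_\alpha(j)\Pi_j=F_\alpha$. Specializing to $\sigma=\rho^{\ox k}$ gives $\sum_j h_\alpha(j)\tr(\Pi_j\rho^{\ox k})=\tr(F_\alpha\rho^{\ox k})=f_\alpha(\rho)$. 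So the same pair $(\Pi_j,h_\alpha)$ is a $k$-replica unbiased strategy for $\cF$ with the same range $s$. Taking the infimum over strategies gives the inequality. This holds for any linearization, and we would note that the standing assumption of degree at most $k$ only guarantees that a linearization exists.

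\emph{Equality when $F_\alpha=\tilde F_\alpha$.} We need to show $s_1^*(\{\tilde F_\alpha\})\le s_k^*(\cF)$. Take any $k$-replica strategy $(\Pi_j,h_\alpha)$ for $\cF$ with range $s$. By Lemma~\ref{lem:linearization_symmetrization}, $\sum_j h_\alpha(j)\Phi_k(\Pi_j)=\tilde F_\alpha$. The twirl $\Phi_k$ is a unital, positive (indeed CPTP) map, so $\{\Phi_k(\Pi_j)\}_j$ is again a POVM on $\cH^{\ox k}$: each element is positive and $\sum_j\Phi_k(\Pi_j)=\Phi_k(\mathbb I^{\ox k})=\mathbb I^{\ox k}$. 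The coefficients are unchanged, so their range is still $s$. The resulting operator identity $\sum_j h_\alpha(j)\Phi_k(\Pi_j)=\tilde F_\alpha$ implies $\tr(\tilde F_\alpha\sigma)=\sum_j h_\alpha(j)\tr(\Phi_k(\Pi_j)\sigma)$ for every $\sigma\in\mathcal D(\cH^{\ox k})$, not just product states. Hence $(\Phi_k(\Pi_j),h_\alpha)$ is a single-copy unbiased strategy for $\{\tilde F_\alpha\}$ with range $s$. Minimizing over strategies yields $s_1^*(\{\tilde F_\alpha\})\le s_k^*(\cF)$, and combining with the first part gives equality.

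The main subtlety is the step that upgrades unbiasedness on product inputs $\rho^{\ox k}$ to unbiasedness on all $\sigma$. Product states span only the symmetric operators, so in general the identity $\sum_j h_\alpha(j)\Pi_j=F_\alpha$ fails. That is exactly why the twirl is needed and why equality requires the symmetric linearization. Lemma~\ref{lem:linearization_exist}, through the span identity, supplies the uniqueness that makes Lemma~\ref{lem:linearization_symmetrization} valid.

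A second subtlety is whether the minima in Definition~\ref{def:k-copy-qBern_coeff} are attained. We would avoid this by working with arbitrary feasible $s$, or with infima, throughout.
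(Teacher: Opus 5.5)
Your proposal is correct and follows the same two-step argument as the paper. The inequality reuses a single-copy strategy for $\{F_\alpha\}$ directly on $\rho^{\ox k}$, and the equality case twirls the POVM by $\Phi_k$ and invokes Lemma~\ref{lem:linearization_symmetrization} to obtain a single-copy strategy for $\{\tilde F_\alpha\}$ with the same range. Your explicit check that $\{\Phi_k(\Pi_j)\}_j$ is again a POVM, and your use of arbitrary feasible $s$ instead of assuming the minimum is attained, tighten two points that the paper leaves implicit.
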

\begin{proof}
        To prove $s_k^*(\cF) \leq s_1^*\left(\{F_{\alpha}\}_{\alpha \in \cA}\right)$, we suppose a $k$-linearization $\{F_\alpha\}$ can be single-copy unbiasedly estimated with scaling factor $s$ on $\cH^{\ox k}$. Then there exists a POVM $\{\Pi_j\}$ on $\cH^{\ox k}$ and bounded estimators $h_\alpha(j)$ with $|h_\alpha(j)| \leq s$ and $\sum_j h_\alpha(j)\,\Pi_j = F_\alpha$. Then for all $\rho$,
        \begin{align}
        \sum_j h_\alpha(j)\,\tr(\Pi_j\,\rho^{\ox k}) = \tr\!\left[\sum_j h_\alpha(j)\,\Pi_j\;\rho^{\ox k}\right] = \tr(F_\alpha\,\rho^{\ox k}) = f_\alpha(\rho),\label{eq:B10}
\end{align}
        which is exactly $k$-replica unbiased estimation of $\cF$.
        To show the equality, we prove $s_k^*(\cF) \geq s_1^*\left(\{\tilde{F}_{\alpha}\}_{\alpha \in \cA}\right)$ given that $\tilde{F}_{\alpha}$ is the unique symmetric $k$-linearization of $\cF$.
         Suppose $\cF$ can be $k$-replica unbiasedly estimated by $(\Pi_j, h_\alpha(j))$ with $|h_\alpha(j)| \leq s^*_k(\cF)$.
         Using Lemma~\ref{lem:linearization_symmetrization}, we have $\sum_j h_\alpha(j)\,\Phi_k(\Pi_j) = \tilde{F}_\alpha$.
        Then $\{\Phi_k(\Pi_j),h_\alpha(j)\}_j$ is a strategy on $\cH^{\ox k}$ satisfying $\sum_j h_\alpha(j)\,\tr(\Phi_k(\Pi_j)\,\sigma) = \tr(\tilde{F}_\alpha\,\sigma)$ for all states $\sigma\in\mathcal D(\cH^{\otimes k})$. By construction, $\{\tilde{F}_\alpha\}$ admits a feasible single-copy estimation with scaling $s^*_k(\cF)$ via $\{\Phi_k(\Pi_j), h_\alpha(j)\}$, which implies $s_k^*(\cF) \geq s_1^*\left(\{\tilde{F}_{\alpha}\}_{\alpha \in \cA}\right)$.
    \end{proof}
In what follows, without further specification, the $k$-linearization $\{F_{\alpha}\}_{\alpha}$ of $\cF := \{f_{\alpha}(\rho)\}_{\alpha}$ is always taken to be the unique symmetric one.
For example, for a set of linear functions $\{\tr(O_i\rho)\}$ for $\rho \in \mathcal{H}$, its $k$-linearization is the \emph{symmetric embedding} of $O_i$ into $\cH^{\ox k}$:
\begin{align}
\1^{\ox(k-1)}\odot O_i := \frac{1}{k}\sum_{i=1}^{k} \1^{\ox(i-1)}\ox O_i \ox \1^{\ox(k-i)},\label{eq:sym-embed}
\end{align}
which satisfies $\tr\left((\1^{\ox(k-1)}\odot O_i)\,\rho^{\ox k}\right) = \tr(O_i\rho)$ for all $\rho$.

\appsubsection{Relation to measurement incompatibility}
One of the most distinctive features of quantum theory is the incompatibility and unavoidable disturbance of measurement.
This feature gives rise to uncertainty relations for noncommuting observables~\cite{coles2017entropic,busch2007heisenberg}
and precludes the joint estimation of their values.
This notion of uncertainty can be refined through the resource theory of joint measurability~\cite{guhne2023colloquium, heinosaari2016invitation}.

\begin{definition}
    Consider a set of positive-operator-valued measures (POVMs) $\{E_{a|x}\}_{a}$,
with $E_{a|x}\geq 0$ and $\sum_a E_{a|x}=\1$ for $x=1,\ldots,M$.
This set is jointly measurable if there exists a parent POVM $\{G_\lambda\}$ such that $E_{a|x} = \sum_\lambda p(a|x,\lambda)\,G_\lambda$.
\end{definition}
This definition means all measurement settings $x \in \{1,\cdots,M\}$ can be jointly realized by first implementing $\{G_{\lambda}\}_{\lambda}$,
and then applying classical post-processing through $E_{a|x} = \sum_\lambda p(a|x,\lambda)\,G_\lambda$.
Different POVMs are not jointly measurable in general.
A famous example is that the Pauli measurements are not jointly measurable.
 However, their unsharp versions are jointly measurable, as shown in Theorem~\ref{thm:unsharp-pauli-jointly-measurable} below.

\begin{theorem}[Joint measurability of unsharp Pauli measurements~\cite{Busch1986}]\label{thm:unsharp-pauli-jointly-measurable}
Define the noisy unsharp versions of Pauli measurements as $E^{\eta_x}_{a|x = X} = (1/2)\cdot (\1 + a \eta_x X)$, $E^{\eta_y}_{a|x = Y} = (1/2)\cdot (\1 + a \eta_y Y )$  and $E^{\eta_z}_{a|x = Z} = (1/2)\cdot (\1+a\eta_z Z)$, with outcomes $a\in \{+,-\}$  and $0 \leq \eta_x, \eta_y, \eta_z \leq 1$.
The triple is jointly measurable if and only if $(\eta_x)^2 + (\eta_y)^2 +(\eta_z)^2 \leq 1$. The corresponding joint measurement, whose marginals are the unsharp Pauli observables, is $G(x, y, z) = \frac{1}{8}(\1 + x\eta_x X + y\eta_y Y + z\eta_z Z)$.
\end{theorem}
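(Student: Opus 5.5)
The proof splits into two directions. For sufficiency, I verify directly that the stated $G(x,y,z)$ is a valid parent POVM whenever $\eta_x^2+\eta_y^2+\eta_z^2\le 1$. For necessity, I use a Bloch-vector Cauchy--Schwarz argument applied to an arbitrary parent POVM.

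\emph{Sufficiency.} I check three things for $G(x,y,z)=\frac18(\1+x\eta_xX+y\eta_yY+z\eta_zZ)$ with $x,y,z\in\{\pm1\}$.
\begin{itemize}
\item Positivity: the operator $\1+\mathbf v\cdot\boldsymbol\sigma$ has eigenvalues $1\pm|\mathbf v|$. Here $|\mathbf v|^2=\eta_x^2+\eta_y^2+\eta_z^2\le1$, so each $G(x,y,z)\succeq0$.
\item Normalization: summing over the eight sign patterns cancels every linear term, leaving $\sum_{x,y,z}G(x,y,z)=\1$.
\item Marginals: summing over $y,z$ with $x=a$ fixed gives $4\cdot\frac18(\1+a\eta_xX)=E^{\eta_x}_{a|X}$, and the other two marginals follow the same way.
\end{itemize}
The post-processing is deterministic, namely $p(a|X,(x,y,z))=\delta_{a,x}$ and similarly for $Y$ and $Z$.

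\emph{Necessity, step 1 (reduction to triples).} Suppose a parent POVM $\{G_\lambda\}$ and post-processings $p(a|x,\lambda)$ exist. I define a new POVM indexed by triples,
\[
\tilde G(a,b,c):=\sum_\lambda p(a|X,\lambda)\,p(b|Y,\lambda)\,p(c|Z,\lambda)\,G_\lambda .
\]
It is positive, sums to $\1$, and its three marginals are exactly $E^{\eta_x}_{\cdot|X}$, $E^{\eta_y}_{\cdot|Y}$, $E^{\eta_z}_{\cdot|Z}$, because each $p(\cdot|x,\lambda)$ sums to one. It therefore suffices to treat joint POVMs $\tilde G(a,b,c)$ with deterministic marginals.

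\emph{Necessity, step 2 (Bloch decomposition).} I write $\tilde G(a,b,c)=g_0(abc)\1+\mathbf g(abc)\cdot\boldsymbol\sigma$. The constraints translate as follows.
\begin{itemize}
\item Positivity is equivalent to $g_0\ge|\mathbf g|$.
\item Normalization gives $\sum_{abc}g_0=1$.
\item The marginal condition $\sum_{abc}a\,\tilde G(a,b,c)=\eta_xX$ gives $\sum_{abc}a\,\mathbf g(abc)=\eta_x\mathbf e_x$, and likewise for $b$ with $\eta_y\mathbf e_y$ and for $c$ with $\eta_z\mathbf e_z$.
\end{itemize}

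\emph{Necessity, step 3 (key inequality).} Set $\boldsymbol\eta=(\eta_x,\eta_y,\eta_z)$. Contracting the marginal identities against $\boldsymbol\eta$ gives
\[
|\boldsymbol\eta|^2=\sum_{abc}\mathbf g(abc)\cdot(a\eta_x,b\eta_y,c\eta_z)\le\sum_{abc}|\mathbf g(abc)|\,|\boldsymbol\eta|\le|\boldsymbol\eta|\sum_{abc}g_0(abc)=|\boldsymbol\eta|,
\]
using Cauchy--Schwarz and $|(a\eta_x,b\eta_y,c\eta_z)|=|\boldsymbol\eta|$. Hence $|\boldsymbol\eta|\le1$.

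The main obstacle I expect is step 1: making sure that general parents with stochastic post-processing reduce to deterministic triple-indexed ones. The product construction above handles it. After that, step 3 is a one-line estimate and needs no symmetrization or twirling.
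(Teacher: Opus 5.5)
Your proof is correct in both directions. Step~1 properly handles the paper's definition of joint measurability, which allows stochastic post-processing.

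The paper does not prove this statement itself. It quotes it from Busch and uses it only to motivate the robustness of incompatibility, asserting the one-qubit value $\|\{X,Y,Z\}\|_1=\sqrt3$ via Eq.~\eqref{eq:case2-povm}. Your argument is nonetheless exactly what the paper's conic-program machinery gives when specialized to this case:
\begin{itemize}
\item Applying Theorem~\ref{thm:equiv_boundedop} to the unsharp triple itself ($O_P=\eta_P P$) reduces joint measurability to $\|\{\eta_P\tr(P\rho)\}_P\|_1\le 1$.
\item Your $G(x,y,z)$ with sign estimators is a primal feasible point with unit-range estimators.
\item Your product weights in Step~1 are the same construction as $\lambda_j(c)$ in Eq.~\eqref{eq:B25}.
\item Your Step~3 is weak duality for Eq.~\eqref{eq:dual_unbiased} with the certificate $Z=\1/2$, $Y_P=\eta_P P/(2|\boldsymbol\eta|)$.
\end{itemize}
To see the last point, write $s=(a,b,c)$. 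Then
\[
\tr\Bigl[\tilde G(a,b,c)\Bigl(Z-\sum_P s_P Y_P\Bigr)\Bigr]=g_0-\frac{\mathbf g\cdot(a\eta_x,b\eta_y,c\eta_z)}{|\boldsymbol\eta|}\ge 0 ,
\]
which is precisely your Cauchy--Schwarz inequality, and the dual objective equals $|\boldsymbol\eta|$. Contracting against $\boldsymbol\eta$ is the optimal direction that the paper's Pauli-twirling reduction (Lemma~\ref{lem:pauli-optimal-encoding}) would select for targets $\eta_P P$.

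Your route buys brevity. It needs neither strong duality nor symmetrization, and it gives the anisotropic condition directly. By contrast, the homogeneous-robustness result the paper records only yields the isotropic threshold $\eta=1/\sqrt3$. The paper's route buys generality: the same primal/dual certificate structure extends to multi-qubit families, restricted measurement cones, and nonlinear functionals, where no Bloch-sphere Cauchy--Schwarz bound is available.
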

Here the admixture of $\1$ can be interpreted as a $(1-\eta_P)$ fraction of random guessing, i.e., depolarizing noise on Pauli measurement $P$.
$$
E^{\eta_{P}}_{a|x = P} = \frac{1}{1+r_P} \frac{\1 \pm P}{2} +  \frac{r_P}{1+r_P} \frac{\1}{2} =  \frac{\1 \pm P\eta_P}{2}, P \in \{X, Y, Z\}, \eta_{P} = \frac{1}{1+r_P}
$$
 Based on this example,
the robustness of incompatibility~\cite{heinosaari2015noise,designolle2019incompatibility} is used to
quantify this minimum amount of noise, or imprecision,
required to render incompatible POVMs jointly measurable.

\begin{definition}[Homogeneous robustness of incompatibility~\cite{heinosaari2015noise}]
Given a set of binary POVMs $\cM:= \{M_{\pm|x}\}_{x}$ on a $d$-dimensional Hilbert space, its homogeneous robustness $r^*_1(\cM)$ of incompatibility is defined as the minimum $r \in [0,\infty]$ such that for each $x$, the POVM $\{(1+r)^{-1}M_{\pm|x} + r(1+r)^{-1}{\1}/{2}\}_{x}$ is compatible.
\end{definition}

For $\mathcal{M}$ being the three Pauli projectors, the homogeneous robustness of incompatibility is exactly $r_1^{*}(\mathcal{M}) = \sqrt{3} - 1$.
In \cite{carmeli2016quantum}, the authors extend the single replica compatibility to the $k$-replica compatibility.
Given $K$ POVMs $\{M_{a|x}\}_{x = 1}^K, a \in \Omega_x$ on outcome sets $\{\Omega_1,\ldots,\Omega_K\}$ and $k$ copies of a quantum state $\rho$, a single POVM $\{G_{\lambda}\}$ with outcomes $(a_1,\ldots,a_K) \in \Omega_1\times\cdots\times\Omega_K$ on $\rho^{\ox k}$ can be designed such that for each $i$, the marginal statistics of $G$ on $\Omega_i$ reproduce the measurement statistics of $\{M_{a|i}\}$ on $\rho$.
In the case $k \geq K$, this is always achievable by implementing $\{M_{a|1}\}\ox \cdots \ox \{M_{a|K}\} \ox \1^{\ox (k-K)}$ on $\rho^{\ox k}$, and the marginals of the $i$-th copy correctly reproduce the individual statistics of the $i$-th distribution. In the case $k < K$, the problem becomes non-trivial.

\begin{definition}[Multicopy compatibility~\cite{carmeli2016quantum}]\label{def:multicopy-compatibility}
Given $K$ POVMs $\{M_{a|x}\}_{x = 1}^K$ on outcome sets $\{\Omega_i\}_{i=1}^K$, they are $k$-replica compatible if there exists a POVM $\{G_{\lambda}\}$ with outcomes $\lambda = (a_1,\ldots,a_K) \in \Omega_1\times\cdots\times\Omega_K$ on the product state $\rho^{\ox k}$, such that
\begin{align}
\sum_{\lambda:\,\lambda_x = a} \tr\!\left[\rho^{\ox k}\, G_\lambda \right] = \tr\!\left[\rho\, M_{a|x}\right],\quad \forall x = 1, \ldots, K, \forall a \in \Omega_x.\label{eq:k-replica-compatibility}
\end{align}
\end{definition}

For generic POVM sets, there may exist no $k$ with $1<k<K$ such that $\{M_{a|x}\}_x$ is $k$-replica compatible.
To quantify the improvement from lifting to the $k$-replica case,
we thus define the homogeneous robustness of the $k$-replica incompatibility on binary POVMs as follows:

\begin{definition}[k-replica homogeneous robustness]\label{def:k-replica-homogeneous-robustness}
 Given a set of binary POVMs $\cM:=\{M_{\pm|x}\}_{x = 1}^K$ on a $d$-dimensional Hilbert space, its homogeneous robustness $r^*_k(\cM)$ of $k$-replica incompatibility is defined as the minimum $r \in [0, \infty]$ such that the POVM set $\left\{(1+r)^{-1}M_{\pm|x} + r(1+r)^{-1}{\1}/{2}\right\}_{x}$ is $k$-replica compatible.
\end{definition}

We now show that the $\mathrm{QB}_k$-norm problem is related to the standard compatibility problem.
For traceless Hermitian operators $O_i$ with bounded eigenvalues, $\|O_i\|_{\infty} \leq 1$, each induces a two-outcome POVM $\{M_{\pm|i}:= {(\1 \pm O_i)}/{2}\}$.
We then note that the $\mathrm{QB}_k$-norm problem of the set of properties $\cF:= \{\tr\left(O_i\rho\right)\}_i$ is equivalent to the $k$-replica joint measurability problem of the POVMs $\{M_i^{(+)},M_i^{(-)}\}$, as proved in Theorem~\ref{thm:equiv_boundedop}.
A familiar example is the Pauli compatibility problem: given a set of Pauli observables $\{P_i\}_i$, jointly estimating their averages within constant additive error is as hard as compatibly realizing their 2-outcome POVMs $\{(\1 \pm P_i)/{2}\}$.
\begin{theorem}[Equivalence for bounded operators]\label{thm:equiv_boundedop}
    Given a set of binary POVMs $\cM:=\{M_{\pm|i}\}_{i=1}^K$,
    define $\cF:= \{\tr(O_i\rho)\}_{i=1}^K$ with $O_i = M_{+|i} - M_{-|i}$.
    Then the homogeneous robustness of $\mathcal{M}$'s $k$-replica incompatibility satisfies $r^*_{k}(\cM)= \max(\|\cF\|_k-1, 0)$.
\end{theorem}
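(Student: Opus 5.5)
We prove the two inequalities $r^*_k(\cM)\le \max(\|\cF\|_k-1,0)$ and $\|\cF\|_k\le 1+r^*_k(\cM)$. Both follow from translating between a parent POVM $\{G_\lambda\}$ on $\rho^{\ox k}$ with marginals $(1+r)^{-1}M_{\pm|i}+r(1+r)^{-1}\1/2$, and a $k$-replica unbiased estimator $(\Pi_j,h_i)$ for $\tr(O_i\rho)$. A preliminary remark is that $\|\cF\|_k\ge 1$ whenever some $O_i\neq 0$ has unit norm. More generally, $\|\cF\|_k\ge\sup_{i,\rho}|\tr(O_i\rho)|$, since the estimator's mean is a convex combination of its values. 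The $\max(\cdot,0)$ handles degenerate cases such as $O_i=0$; there $r^*_k=0$ trivially, and we treat those cases separately.

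\emph{Compatibility gives an estimator.} Suppose the noisy family with parameter $r$ is $k$-replica compatible via $\{G_\lambda\}$, $\lambda=(a_1,\dots,a_K)\in\{\pm\}^K$. We take $\Pi_\lambda=G_\lambda$ and $h_i(\lambda)=(1+r)a_i$. The marginal condition in Definition~\ref{def:multicopy-compatibility} gives
\[
\sum_\lambda a_i\tr(G_\lambda\rho^{\ox k})=\tr\bigl[((1+r)^{-1}M_{+|i}-(1+r)^{-1}M_{-|i})\rho\bigr].
\]
The identity contributions cancel in the difference $M_{+}-M_{-}$ of the noisy POVMs, so the right-hand side equals $(1+r)^{-1}\tr(O_i\rho)$. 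Hence $h_i$ is unbiased with range $1+r$, which gives $\|\cF\|_k\le 1+r^*_k(\cM)$.

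\emph{An estimator gives compatibility.} This is the harder direction. Let $(\Pi_j,h_i)$ be unbiased with $|h_i(j)|\le s$, where $s\ge1$. We randomize each $h_i(j)/s\in[-1,1]$ into a binary outcome. Given outcome $j$, we output $a_i=\pm1$ with probability $(1\pm h_i(j)/s)/2$, independently across $i$. This defines a POVM $G_\lambda=\sum_j\Pi_j\prod_i\frac{1+a_ih_i(j)/s}{2}$ on $\cH^{\ox k}$. Its $i$-th marginal is
\[
\sum_j\Pi_j\,\frac{1\pm h_i(j)/s}{2}.
\]
The statistics of this marginal on $\rho^{\ox k}$ are $\tfrac12\bigl(1\pm\tr(O_i\rho)/s\bigr)$. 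These equal $\tr\bigl[\bigl(s^{-1}M_{\pm|i}+(1-s^{-1})\1/2\bigr)\rho\bigr]$, using $M_{\pm|i}=(\1\pm O_i)/2$, which holds since $M_++M_-=\1$. With $1+r=s$ this is exactly the noisy family of Definition~\ref{def:k-replica-homogeneous-robustness}. So $r^*_k\le s-1$, and minimizing over $s$ gives $r^*_k\le\|\cF\|_k-1$.

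The main obstacle is bookkeeping: the compatibility requirement in Eq.~\eqref{eq:k-replica-compatibility} constrains only the statistics on product states $\rho^{\ox k}$, not operator identities. Our argument therefore needs only expectation-level equalities, which the constructions supply, so no symmetrization via Lemma~\ref{lem:linearization_symmetrization} is needed. We also check that the minima are attained. If we use infima instead, the two inequalities still give equality of the infima, and attainment follows from compactness of the POVM set with a bounded number of outcomes.
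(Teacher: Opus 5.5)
Your proof is correct and follows the paper's argument. It uses the same estimator $h_i(\lambda)=(1+r)a_i$ read off a compatible parent POVM, and the same randomization $p(\pm|i,j)=\bigl(1\pm h_i(j)/s\bigr)/2$ for the converse. The only difference is that you work directly with statistics on $\rho^{\ox k}$ and write the product-form joint POVM explicitly, whereas the paper phrases both directions as symmetrized operator identities via $\Phi_k$.

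One small fix: $\|\cF\|_k<1$ is not limited to degenerate cases like $O_i=0$. A single unsharp $O=\eta Z$ with $\eta<1$ already gives $\|\cF\|_k=\eta$. In that regime you should run your converse with $s=1$, which is admissible because an estimator of range below $1$ also has range at most $1$. That direction then yields $r^*_k\le\max(\|\cF\|_k,1)-1=\max(\|\cF\|_k-1,0)$ rather than $r^*_k\le\|\cF\|_k-1$; this is exactly the paper's separate case $s\le 1$.
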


\begin{proof}
    Suppose $\{M_{+|i}, M_{-|i}\}_{i=1}^K$ are $k$-replica jointly measurable with robustness $r = r^*_k(\cM)$.
    By Definition~\ref{def:k-replica-homogeneous-robustness}, there exists a parent POVM $\{G_\lambda\}_\lambda$ supported on $\cH^{\ox k}$ and conditional distributions $p(a|i,\lambda)$ with $a \in \{+1,-1\}$ such that:
\begin{align}
\1^{\ox(k-1)}\odot \frac{M_{a|i} + r\cdot 2^{-1}\1 }{1 + r} = \sum_\lambda p(a|i,\lambda)\, \Phi_k(G_\lambda), \quad \forall\, a, i.\label{eq:C71}
\end{align}
where $\Phi_k(O) = (k!)^{-1}\sum_{\pi \in S_k} W_{\pi} O W_{\pi}^\dagger$ is the symmetrization over the $k$-replica permutation group.
Taking the difference $a=+1$ minus $a=-1$:
\begin{align}
     \sum_\lambda h_i(\lambda)\,\Phi_k(G_\lambda)&:= (1+r)\sum_\lambda \bigl[p(+1|i,\lambda) - p(-1|i,\lambda)\bigr]\, \Phi_k(G_\lambda)\label{eq:C72}
\\
     &=  \1^{\ox(k-1)}\odot  (M_{+|i} - M_{-|i}) = \1^{\ox(k-1)}\odot  O_i.\label{eq:C73}
\end{align}
Since $p(+1|i,\lambda)$ and $p(-1|i,\lambda)$ are both non-negative and sum to 1, we have $|p(+1|i,\lambda) - p(-1|i,\lambda)| \leq 1$, hence $|h_i(\lambda)| \leq (1+r)$.
 Based on Definition~\ref{def:k-copy-qBern_coeff} and the fact that $\tr(O_i\rho) = \sum_\lambda h_i(\lambda)\,\tr(G_\lambda\rho^{\ox k})$ for all $\rho$, we have $\|\cF\|_k \leq 1+r = r^*_k(\cM) + 1$.
Conversely, suppose that $\|\cF\|_k=s$. By the definition of
$\|\cF\|_k$, there exists a POVM $\{\Pi_j\}_j$ on $\cH^{\ox k}$
and real-valued post-processing functions $\{h_i(j)\}_{i,j}$ such that
\begin{align}
    |h_i(j)|\le s,\quad\tr(O_i\rho) = \sum_j h_i(j)\tr(\Pi_j\rho^{\ox k}),\quad \forall \rho,\ \forall i .\label{eq:C74}
\end{align}
Equivalently, after symmetrizing the POVM elements, we may assume
without loss of generality that
\begin{align}
    \sum_j h_i(j)\,\Phi_k(\Pi_j)=\1^{\ox(k-1)}\odot O_i,\quad \forall i.\label{eq:C75}
\end{align}
Indeed, $\{\Phi_k(\Pi_j)\}_j$ is still a POVM and gives the same
statistics on product states $\rho^{\ox k}$.
If $s\le 1$, define $p(\pm1|i,j):={(1\pm h_i(j))}/{2}$.
Since $|h_i(j)|\le 1$, these are valid conditional probabilities.
Moreover,
\begin{align}
    \sum_j p(\pm1|i,j)\,\Phi_k(\Pi_j)
    &=
    \frac{1}{2}\sum_j \Phi_k(\Pi_j)\pm \frac{1}{2}\sum_j h_i(j)\Phi_k(\Pi_j)
    = \frac{\1^{\ox k}\pm \1^{\ox(k-1)}\odot O_i}{2}\label{eq:C76}
\\
&=\1^{\ox(k-1)}\odot\frac{\1\pm (M_{+|i} - M_{-|i})}{2} =\1^{\ox(k-1)}\odot M_{\pm|i},\label{eq:C77}
\end{align}
where in the last equality we used the fact that $M_{+|i} + M_{-|i} = \1$.
Thus the original binary POVMs are already $k$-replica compatible,
and hence $r^*_{k}(\cM)=0$.
Now assume $s>1$.
Define $p(\pm1|i,j):={1\pm s^{-1}h_i(j)}/{2}$.
Since $|h_i(j)|\le s$, these are valid conditional probabilities. Similarly to the case $s\le 1$, we have:
\begin{align}
    \sum_j p(\pm1|i,j)\,\Phi_k(\Pi_j)
    &=\frac{1}{2}\sum_j \Phi_k(\Pi_j)\pm\frac{1}{2s}
    \sum_j h_i(j)\,\Phi_k(\Pi_j) =\1^{\ox(k-1)}\odot\frac{\1\pm s^{-1}O_i}{2}.\label{eq:C78}
\end{align}
Notice that the last term is the noisy binary POVMs as defined in Definition~\ref{def:k-replica-homogeneous-robustness}, as:
\begin{align}
    \frac{\1\pm s^{-1}O_i}{2} = \frac{s^{-1}(s - 1)\1 + s^{-1}\1 \pm s^{-1}(M_{+|i} - M_{-|i})}{2}=\frac{1}{1+(s-1)}M_{\pm|i} + \frac{s-1}{1+(s-1)}\frac{\1}{2},\label{eq:C79}
\end{align}
Hence $r^*_{k}(\cM)\le s-1$.
Combining this with the first direction gives $r^*_{k}(\cM)= \max(\|\cF\|_k-1, 0)$.
\end{proof}
Theorem~\ref{thm:equiv_boundedop} shows that the compatibility problem of
binary POVMs $\{M_{\pm|i}\}_{i=1}^K$ is exactly equivalent to the
$\mathrm{QB}_k$-norm problem for the associated scalar linear
functions $\{\tr(O_i\rho)\}_i$ with $O_i = M_{+|i} - M_{-|i}$.
Thus, the scalar QB norm recovers the usual notion of measurement
incompatibility robustness in the binary linear setting.

We emphasize that the restriction to binary POVMs is not merely a
technical convenience, but is naturally matched to the scalar function
framework considered in this work. A binary POVM is completely specified
by a single real-valued function $\tr(O_i\rho)$.
One may also consider more general noise models. Instead of
mixing with the unbiased random-guessing POVM $\{\1/2,\1/2\}$, one may
allow arbitrary POVMs $N_{a|x}$ to be mixed with $M_{a|x}$, as in the
general robustness framework~\cite{skrzypczyk2019all}. Developing the
corresponding function-estimation interpretation of such generalized
robustness measures is an interesting direction for future work.

\appsubsection{The bias-variance trade-off}
The $\mathrm{QB}_k$-norm is restricted to unbiased estimation
and therefore does not fully reflect the sampling complexity of practical estimation protocols.
Motivated by the classical polynomial approximation~\cite{wu2020polynomial}, biased estimation can be incorporated by optimizing bias and variance.
Given $f(\mathbf{p})$ as a functional of a classical distribution and $h(\mathbf{x})$ as a biased estimator,
the minimax risk is the worst-case mean-squared error~\cite{jiao2015minimax,wu2020polynomial}
\begin{align}
    R := \min_{h(\mathbf{x})}\max_{\mathbf{p}} \mathbb{E}_{\mathbf{x} \sim \mathbf{p}^{\ox k}} [h(\mathbf{x}) - f(\mathbf{p})]^2.\label{eq:minimax-risk}
\end{align}
The direct calculation of this optimization problem is generally intractable.
Thus, a general methodology is to first select a biased polynomial $P(\mathbf{p})$ that approximates $f(\mathbb{p})$ with a bias $|f(\mathbf{p}) - P(\mathbf{p})| \leq b$.
Then we seek a variance-minimizing unbiased estimator for the polynomial $P$.
Although its Bernstein norm $\|P\|_k$ is not the variance itself, it provides a tight bound
\begin{align}
    \operatorname{Var}_{\mathbf p}(h)
    \leq
    \mathbb E_{\mathbf p}[h^2(\mathbf{x})]
    \leq
    \|P\|_k^2 .\label{eq:bn_bd_variance}
\end{align}
By minimizing both $b$ and $\|P\|_k$, many efficient biased estimators for nonsmooth functionals such as Shannon entropy, R\'enyi entropy and support size can be constructed~\cite{wu2020polynomial}.    

 For the quantum case, by analogy, we seek degree-$k$ polynomials $w_\alpha(\rho)$ and residuals $g_\alpha(\rho)$ with $|g_\alpha(\rho)|\le b$ such that $f_\alpha=w_\alpha+g_\alpha$ and  $\|\{w_{\alpha}(\rho)\}_{\alpha \in \cA}\|_k\le r$.
The parameter $b$ quantifies how much uniformly bounded bias must be mixed into $f$ in order to obtain a degree-$k$ polynomial whose $\mathrm{QB}_k$-norm is at most $r$.
This leads to the following bias-variance trade-off notion.

\begin{definition}[Bias-variance trade-off]\label{def:bias-robustness}
    Given a set of properties $\cF=\{f_\alpha\}_{\alpha\in\cA}$,
    the robustness of $k$-replica incompatibility of $\cF$ is $s_{k,b}^*(\cF):=\min\Bigl\{r\geq 0 \;\Big|\; \exists\, \mathcal{G},
    \mathcal{W}:={\cF+\cG}
    \text{~satisfies~}
    \|\mathcal{W}\|_k\le r
    \Bigr\}$,
    where $\mathcal{G} =\{g_\alpha(\rho)\}_{\alpha\in\cA}$ such that $|g_{\alpha}(\rho)| \leq b, \forall \rho, \forall \alpha \in \cA$.
\end{definition}
As defined in Definition~\ref{def:bias-robustness}, by restricting to the case $b = 0$, it can be verified that $s_{k,0}^*(\cF)$ is identical to $ s_k^*(\cF) = \|\cF\|_k$, which is the $\mathrm{QB}_k$-norm of $\cF$.
In the case that $\cG$ contains non-zero functions, the value of $s_{k,b}^*(\cF)$ reflects the sampling complexity of joint estimation of $\cF$ when a bias $g_{\alpha}(\rho)$  for $f_{\alpha}(\rho)$ is allowed for each $\alpha \in \cA$ respectively.
More generally, any biased joint estimation of $\cF = \{f_{\alpha}(\rho)\}_{\alpha \in \cA}$ can be built through unbiased estimation of auxiliary sets and post-processing functions.

\begin{definition}[Compositional biased estimator]
    Let $\mathcal F=\{f_\alpha\}_{\alpha\in\mathcal A}$
    be a target function set.
    Suppose that there exist auxiliary function
    sets $\mathcal{G}_1,\dots,\mathcal{G}_r$,
    where $\forall i, \|\mathcal{G}_i\|_k \leq r$ and $\mathcal G_i$ is adaptively chosen based on the sampling result of previous function sets $\mathcal{G}_1,\cdots,\mathcal{G}_{i - 1}$, and along with these auxiliary function sets,
    we can set up the classical post-processing map
    $Q_{\alpha}: \mathbb{R}^{|\mathcal{G}_1|} \times \cdots \times \mathbb{R}^{|\mathcal{G}_r|} \to \mathbb{R}$ such that
   \begin{align}
    \left|f_\alpha(\rho)-Q_\alpha(\mathcal{G}_1(\rho),\dots,\mathcal{G}_r(\rho))\right|\le b, \forall \rho,\alpha.\label{eq:E18}
\end{align}
    Then we say the function set $\mathcal F$ admits a $k$-replica compositional biased estimator with scaling factor $r$ and bias $b$.
    \end{definition}

    For example, the memory-assisted Pauli estimation protocol of Ref.~\cite{chen2024optimal}
    can be interpreted within this compositional biased-estimation viewpoint for
    the target family $\cF=\{\tr(P\rho)\}_{P\in\cP_n}$.
    First, one estimates the degree-$2$ auxiliary family
    $\mathcal{G}_1:=\{g_{1,P}(\rho)=[\tr(P\rho)]^2\}_{P\in\cP_n}$,
    which provides estimates of the magnitudes $|\tr(P\rho)|$.
    Using these estimates, one then adaptively chooses an auxiliary state $\sigma$
    satisfying $\bigl|\tr(P\sigma)-\sqrt{g_{1,P}(\rho)}\bigr|\leq \varepsilon$,
    and estimates the linear-in-$\rho$ auxiliary family
    $\mathcal{G}_2:=\{g_{2,P}(\rho)=\tr(P\rho)\tr(P\sigma)\}_{P\in\cP_n}$.
    The final estimate is obtained by the post-processing map
    \begin{align}
        Q_P(\mathcal{G}_1,\mathcal{G}_2)
        =
        \frac{g_{2,P}}{\tr(P\sigma)}
        \cdot \mathbbm{1}\{g_{1,P}\geq 4\varepsilon^2\}.\label{eq:E19}
\end{align}
    Since both auxiliary families admit bounded unbiased estimators with
    $\|\mathcal{G}_1\|_k=\|\mathcal{G}_2\|_k=1$,
    and the thresholding step only discards components with small Pauli expectation,
    we have
    \begin{align}
        \bigl|Q_P(\mathcal{G}_1,\mathcal{G}_2)-\tr(P\rho)\bigr|
        =
        \bigl|\tr(P\rho)\bigr|\,
        \mathbbm{1}\{|\tr(P\rho)|\leq 2\varepsilon\}
        \leq 2\varepsilon .\label{eq:E20}
\end{align}
    Therefore, the Pauli expectation-value family
    $\cF=\{\tr(P\rho)\}_{P\in\cP_n}$
    admits a $k$-replica compositional biased estimator with scaling factor $1$.
    We also note that the primitive strategy for efficient Pauli-resolved virtual distillation, as discussed in Section~\ref{app:sign-recovery}, also shares the same adaptive structure.
    These examples suggest that the $\mathrm{QB}_k$-norm can serve not only as a
    complexity measure for unbiased joint estimation, but also as a useful
    building block for designing multi-stage biased estimation protocols for
    general functional families.
    We leave the systematic development of this
    multi-stage biased framework for future work.

\appsubsection{Derivation of the primal and dual problems}
In this section, we write the conic convex program for the calculation of the bias-robustness $s_{k,b}^*(\cF)$ (Definition~\ref{def:bias-robustness}) when our target function $\cF$ contains polynomial functions with degree $r \leq k$.
When $b = 0$, this problem reduces to the standard $\mathrm{QB}_k$-norm $\|\cF\|_k$ (Definition~\ref{def:k-copy-qBern_coeff}).
Before that, we note that the POVM operators $\{\Pi_j\}_j$ might also be restricted to a certain physical convex set $\mathcal{C}$ among all possible POVMs on $k$-replica space $\cH^{\ox k}$.
Many physical restrictions on the POVM operators exist.
Our conic convex program framework naturally accommodates hardware
limitations by restricting the primal POVM variables
to specific convex cones $\mathcal{C}$. For illustration, defining
$\mathcal{H} = (\mathbb{C}^2)^{\otimes n}$ as the
single-copy Hilbert space and $\mathcal{H}^{\otimes k}$ as the $k$-replica space, we specify two physical restrictions on the POVM operators.

 \paragraph{Memory free strategy.}
Although $k$ copies of the unknown state, $\rho^{\otimes k}$, are available,
the quantum device can only store and jointly measure
$c < k$ copies at a time.
 The protocol proceeds in
$L$ adaptive rounds with $k = c_1 + \cdots + c_L, c_{\ell} \leq c$.
Take  $\Pi_{j_\ell | j_{<\ell}}^{(r)} \in
\mathcal{L}(\mathcal{H}^{\otimes c_{\ell}})$ to be the POVM
element in round $\ell$ conditioned on prior outcomes.
Then in the whole protocol, a $k$-copy POVM:
\begin{align}
    \Pi_j = \bigotimes_{\ell=1}^{L}
    \Pi_{j_\ell | j_{<\ell}}^{(\ell)}
   ,\quad \text{~s.t.~}
    \forall\,j_{\ell}, \,  j_{<\ell} = (j_1, \ldots, j_{\ell-1}),\;
    \Pi_{j_\ell | j_{<\ell}}^{(\ell)} \succeq 0,\;
    \sum_{j_\ell} \Pi_{j_\ell | j_{<\ell}}^{(\ell)}
    = \1^{\otimes c_{\ell}},\label{eq:B12}
\end{align}
is
applied,
with each POVM element depending on all previous outcomes,
and the post-measurement state is discarded.
Specifically, in the case $c = 1$, the resulting convex restriction on POVMs has the form:
\begin{align}
    \mathcal{C}_{\mathrm{mem}}
    = \mathrm{conv}\!\left\{ \bigotimes_{\ell=1}^{k}
    \Pi^{(\ell)} \;|\; \Pi^{(\ell)} \in \mathrm{Herm}(\mathcal{H}) , \,  0 \preceq \Pi^{(\ell)} \preceq  \1
    \right\}.\label{eq:block-positive-cone-mem}
\end{align}
Specifically, when $c = 1$, this restriction recovers positive semidefinite operators on $\mathcal{H}^{\otimes k}$ that are separable with respect to the replica indices.

\paragraph{Local multi-copy strategy.}
In the distributed setting, $n$ parties labeled by $l = 1,\cdots,n$ each hold $k$ qubits so that the $k$-replica state $\rho^{\otimes k} \in \mathcal{H}_0^{\otimes k}$ is
shared across sites with the $l$-th party holding
the $l$-th qubit of each copy. Then each party
independently measures their $k$ qubits. This gives the convex restriction on POVMs:
\begin{align}
    \mathcal{C}_{\mathrm{loc}} = \mathrm{conv}\!\left\{ \bigotimes_{l=1}^n \Pi_{j_l}^{(l)} \;|\; \Pi_{j_l}^{(l)} \in \mathrm{Herm}(\mathcal{H}^{\otimes k}) , \, 0 \preceq \Pi_{j_l}^{(l)} \preceq \1^{\otimes k} \right\}.\label{eq:site-product-cone-loc}
\end{align}
This is the natural measurement constraint for
local shadow protocols where each site applies
an independent $k$-qubit unitary and measures in
the computational basis.

After specifying the physical restriction $\mathcal{C}$, based on Theorem~\ref{thm:structure_fcomp}, the value $s_k^*(\mathcal{W}) := \|\mathcal{W}\|_k$ in Definition~\ref{def:bias-robustness} fully depends on the $\mathrm{QB}_1$-norm of its $k$-replica symmetric linearization $\{W_{\alpha}\}_{\alpha \in \cA}$.
The primal problem of $s_{k,b}^*(\cF)$ is then given by:

\begin{equation}
 {
\begin{aligned}
s_{k,b}^*(\cF,\mathcal{C}) := \min\quad & r\\
\text{Subject to:}\quad &
\sum_j h_\alpha(j)\Pi_j = W_\alpha,\qquad \forall \alpha,\\
& \sum_j \Pi_j=\1^{\otimes k},\qquad \Pi_j\in \mathcal C,\\
& |h_\alpha(j)|\le r,\qquad \forall \alpha,j,\\
& \left|\tr(W_\alpha\rho^{\otimes k})-f_\alpha(\rho)\right|\le b,
\qquad \forall \rho,\ \forall \alpha.
\end{aligned}}\label{eq:primal-robustness}
\end{equation}

If we further restrict the target function $\cF = \{f_\alpha\}_{\alpha \in \cA}$ to be a set of polynomials with degree $r < k$, in the last equation, $f_\alpha$ can also be replaced by its symmetric $k$-linearization $\{F_\alpha\}_{\alpha \in \cA}$ and transformed to the restriction
\begin{align}
   \|W_\alpha - F_{\alpha}\|_{s,\mathrm{inj}} \leq b, \quad \text{where~} \|A\|_{s,\mathrm{inj}} := \sup_{\rho \in \cD(\cH)} \left|\tr\left(A\rho^{\otimes k}\right)\right|.\label{eq:injective-norm}
\end{align}
It is easy to see that $\|\cdot\|_{s,\mathrm{inj}}$ forms a semi-norm on $\mathrm{Herm}(\cH^{\ox k})$ and becomes a norm on the space $\mathrm{V}_k:= \mathrm{span}_{\mathbb R}(\rho^{\otimes k}: \rho \in \cD(\cH))$, which we call the symmetric injective norm~\cite{dartois2024injective,harrow2013testing}.
The supremum in Eq.\eqref{eq:injective-norm} can then be extended to the convex body $\mathcal E_k := \mathrm{conv}\{\rho^{\otimes k}, \rho \in \cD(\cH)\}$, which forms a convex subset of $\mathrm{V}_k$.
To characterize the dual problem of Eq.\eqref{eq:primal-robustness}, we first introduce the dual norm of the symmetric injective norm.

\begin{lemma}[Dual norm of symmetric injective norm]\label{lem:dual-atomic}
    Given $X \in \mathrm{Herm}(\cH^{\ox k})$, the dual norm of the symmetric injective norm $\|\cdot\|_{s,\mathrm{inj}}$, defined as $ \|X\|_{s,\mathrm{inj}}^\circ :=\sup_{\|A\|_{s,\mathrm{inj}}\le 1}\tr(AX)$, is given by $ \|X\|_{s,\mathrm{inj}}^\circ = +\infty$ if $X \notin \mathrm{V}_k$, otherwise it is given by:
    \begin{align}
        \|X\|_{s,\mathrm{inj}}^\circ
        =
        \inf\Bigl\{\sum_{i=1}^m c^{i,+} + c^{i,-}:\;
        X=\sum_{i=1}^m c^{i,+}\,\rho^{\otimes k}_{i,+} - c^{i,-}\,\rho^{\otimes k}_{i,-},\quad
        \rho_{i,+}, \rho_{i,-}\in\cD(\cH), c^{i,+}, c^{i,-} \geq 0\Bigr\}, \forall X \in \mathrm{V}_k.\label{eq:dual-atomic}
\end{align}
\end{lemma}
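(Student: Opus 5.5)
The plan is to view $\|\cdot\|_{s,\mathrm{inj}}$ as the gauge (Minkowski functional) of a symmetric convex body and use the standard duality between gauges and support functions. First I would split off the case $X\notin \mathrm{V}_k$. Since $\mathrm{V}_k$ is a real subspace of $\mathrm{Herm}(\cH^{\ox k})$, its orthogonal complement $\mathrm{V}_k^\perp$ (Hilbert--Schmidt) is nontrivial when $X\notin\mathrm{V}_k$. Pick $A_0\in\mathrm{V}_k^\perp$ with $\tr(A_0X)\neq 0$. Then $\tr(A_0\rho^{\ox k})=0$ for every $\rho$, so $\|tA_0\|_{s,\mathrm{inj}}=0\le 1$ for all $t\in\mathbb R$. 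Letting $t\to\pm\infty$ makes $\tr(tA_0X)$ unbounded, so the dual norm is $+\infty$.

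For $X\in\mathrm{V}_k$, I would define the atomic set $\mathcal S:=\{\pm\rho^{\ox k}:\rho\in\cD(\cH)\}$ and its closed convex hull $B:=\mathrm{conv}(\mathcal S)=\mathrm{conv}(\mathcal E_k\cup-\mathcal E_k)$. The set $\mathcal E_k$ is compact, being a continuous image of the compact set $\cD(\cH)$. Hence $B$ is compact (convex hull of a compact set in finite dimension), symmetric, and spans $\mathrm{V}_k$ by Eq.~\eqref{eq:B6}. So $B$ has nonempty interior relative to $\mathrm{V}_k$, and its gauge $\gamma_B$ is a norm on $\mathrm{V}_k$. Moreover $\gamma_B(X)$ equals the infimum in Eq.~\eqref{eq:dual-atomic}. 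A decomposition with total weight $t=\sum_i(c^{i,+}+c^{i,-})$ exhibits $X/t$ as a convex combination of atoms, and conversely. By Carath\'eodory's theorem, finitely many atoms suffice.

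The key identity is that, restricted to $\mathrm{V}_k$, the seminorm $\|A\|_{s,\mathrm{inj}}=\sup_{Y\in B}\tr(AY)$ is the support function $\sigma_B(A)$. Indeed, a linear functional's sup over $\mathrm{conv}(\mathcal S)$ equals its sup over $\mathcal S$, which is $\sup_\rho|\tr(A\rho^{\ox k})|$. Write $A=A_V+A_\perp$. Then $\sigma_B(A)=\sigma_B(A_V)$ and $\tr(AX)=\tr(A_VX)$ for $X\in\mathrm{V}_k$, so the dual-norm supremum may be taken over $A\in\mathrm{V}_k$ alone. There the sup of $\tr(AX)$ over $\{\sigma_B\le 1\}=B^\circ$ (the polar of $B$ in $\mathrm{V}_k$) equals $\sigma_{B^\circ}(X)=\gamma_{B^{\circ\circ}}(X)$. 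By the bipolar theorem, $B^{\circ\circ}=B$ because $B$ is closed, convex, and contains $0$. This yields $\gamma_B(X)$, i.e., Eq.~\eqref{eq:dual-atomic}.

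The main obstacle is making the bipolar step rigorous. This requires compactness of $B$, so that the infimum is attained and the closure is not needed, together with $0$ lying in the relative interior of $B$ within $\mathrm{V}_k$. The interior condition follows from symmetry and the fact that $\mathcal E_k$ spans $\mathrm{V}_k$ (Eq.~\eqref{eq:B6}). As an alternative for the key inequality $\gamma_B(X)\le\|X\|^\circ$, I would use a direct separating-hyperplane argument in $\mathrm{V}_k$. If $X/t\notin B$, there exists $A\in\mathrm{V}_k$ with $\tr(AX)>t\,\sigma_B(A)$, which after normalization gives $\|X\|^\circ>t$. The reverse inequality is immediate: $\tr(AX)\le\sum_i(c^{i,+}+c^{i,-})\|A\|_{s,\mathrm{inj}}$ for any decomposition.
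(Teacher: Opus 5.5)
Your proposal is correct and follows essentially the paper's route: you write $\|\cdot\|_{s,\mathrm{inj}}$ as the support function of the balanced hull $\mathrm{conv}(\mathcal E_k\cup-\mathcal E_k)$, read off the dual norm as its gauge (equal to the atomic infimum), and dispose of $X\notin\mathrm V_k$ by scaling an operator $A_0$ that annihilates every $\rho^{\otimes k}$. The only difference is that you spell out what the paper delegates to the cited gauge--support duality: compactness of the hull, restricting $A$ to its $\mathrm V_k$ component, and the bipolar or separating-hyperplane step.
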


\begin{proof}
For arbitrary $\sigma = \sum_{i} p_i \rho_i^{\otimes k} \in \mathcal{E}_k$, we conclude that:
\begin{align}
    |\tr(A \sigma)| = \left|\sum_{i} p_i \tr(A \rho^{\otimes k})\right| \leq \sum_{i} p_i |\tr(A \rho_i^{\otimes k})| \leq \sup_{\rho \in \mathcal{D}(\mathcal{H})} |\tr(A \rho^{\otimes k})|.\label{eq:B18}
\end{align}
Besides, $\sup_{\sigma \in \mathcal{E}_k}|\tr(A \sigma)| \geq \sup_{\rho \in \mathcal{D}(\mathcal{H})} |\tr(A \rho^{\otimes k})|$ is trivially satisfied since $\{\rho^{\otimes k}| \rho \in \mathcal{D}(\mathcal{H})\} \subset \mathcal E_k := \mathrm{conv}\{\rho^{\otimes k}, \rho \in \cD(\cH)\}$.
Thus, we conclude $\sup_{\sigma \in \mathcal{E}_k}|\tr(A\sigma)| =\sup_{\rho \in \mathcal{D}(\mathcal{H})} |\tr(A \rho^{\otimes k})|$.
The restricted symmetric injective norm $\|A\|_{s,\mathrm{inj}}
=\sup_{\omega\in\mathcal E_k}|\tr(A\omega)|$ then can be transformed to the supremum over the balanced hull $\mathrm{bal}(\mathcal E_k):=\mathrm{conv}(\mathcal E_k\cup -\mathcal E_k)$ with
\begin{align}
\|A\|_{s,\mathrm{inj}} &:= \sup_{\rho \in \cD(\cH)} \left|\tr\left[A\rho^{\otimes k}\right]\right| = \sup_{\sigma \in \mathcal{E}_k}  \left|\tr\left[A\cdot \sigma\right]\right|=\sup_{\sigma \in \mathcal{E}_k}\max_{c = \{+1,-1\}} \tr\left[A\cdot (c\cdot \sigma)\right] \notag
\\
&=\sup_{\omega \in \mathcal E_k\cup -\mathcal E_k}\tr(A\omega)=\sup_{\omega\in\operatorname{bal}(\mathcal E_k)}\tr(A\omega),\label{eq:B19}
\end{align}
where in the last equality, we use the fact that $\tr(A\omega)$ is a linear function, thus its supremum over $\mathcal E_k\cup -\mathcal E_k$ is the same as over its convex hull $\operatorname{bal}(\mathcal E_k)$.
Thus, on the real span $\mathrm V_k$, the quantity $\|\cdot\|_{s,\mathrm{inj}}$ is precisely the support function of the closed, convex, balanced body $K:=\operatorname{bal}(\mathcal E_k)$; therefore, by the standard gauge--support duality for balanced convex sets~\cite{friedlander2014gauge}, its dual norm is the Minkowski function of $K$.
\begin{align}
\|X\|_{s,\mathrm{inj}}^\circ
:=\sup_{\|A\|_{s,\mathrm{inj}}\le 1}\tr(AX)
=\inf\{t\ge 0: X\in t\,\operatorname{bal}(\mathcal E_k)\}.\label{eq:B20}
\end{align}
By expanding $X \in \mathrm{bal}(\mathcal E_k)$ as a signed combination of vertex elements of the convex set $\mathcal E_k$, we obtain the claimed decomposition.
To prove the case $\|X\|_{s,\mathrm{inj}}^\circ = +\infty$ when $X \notin \mathrm{V}_k$, we note there exists $A_0 \in \mathrm{Herm}(\cH^{\ox k})$ such that $\mathrm{tr}(A_0 \rho^{\otimes k}) = 0$ for all $\rho \in \cD(\cH)$ while $\tr(X A_0) \neq 0$. Then for any $\lambda\in \mathbb{R}$ such that $ \lambda\tr(A_0X) > 0$, we have
\begin{align}
    \|X\|_{s,\mathrm{inj}}^\circ := \sup_{\|A\|_{s,\mathrm{inj}}\le 1}\tr(AX) \geq \tr(\lambda A_0 X) = |\lambda|\cdot |\tr(A_0 X)| \xrightarrow{|\lambda| \to +\infty} +\infty.\label{eq:B21}
\end{align}
\end{proof}

Based on Lemma~\ref{lem:dual-atomic}, the dual norm $\|X\|_{s,\mathrm{inj}}^\circ$ indicates the minimum cost to decompose any $k$-replica observable $X \in \mathrm{Herm}(\cH^{\ox k})$ into a signed difference of $k$-replica moments of two mixed state ensembles $\mathcal{E}_+ := \{p_{i,+},\rho_{i,+}\}_{i=1}^m$ and $\mathcal{E}_- := \{p_{i,-},\rho_{i,-}\}_{i=1}^m$.
By restricting $\tr(X)$, the optimal solution in Eq.~\eqref{eq:dual-atomic} gives the following decomposition
\begin{align}
    X &= \sum_{i=1}^m c^{i,+}\,\rho^{\otimes k}_{i,+} - \sum_{i=1}^m c^{i,-}\,\rho^{\otimes k}_{i,-} =\left(\sum_{i=1}^m c^{i,+}\right) \cdot  \sum_{i=1}^m p_{i,+}\,\rho^{\otimes k}_{i,+} -\left(\sum_{i=1}^m c^{i,-}\right) \cdot \sum_{i=1}^m p_{i,-}\,\rho^{\otimes k}_{i,-} \notag
\\
    &= \frac{t(X) + \tr(X)}{2} \sum_{i=1}^m p_{i,+}\,\rho^{\otimes k}_{i,+} - \frac{t(X) - \tr(X)}{2} \sum_{i=1}^m p_{i,-}\,\rho^{\otimes k}_{i,-}.\label{eq:B22}
\end{align}
with $p_{i,+} = {c^{i,+}}/{(\sum_{i=1}^m c^{i,+})}$, $p_{i,-} ={c^{i,-}}/{(\sum_{i=1}^m c^{i,-})}$ and
\begin{align}
  t(X):= \sum_{i=1}^m (c^{i,-} + c^{i,+}) > \|X\|^{\circ}_{s,\mathrm{inj}} .\label{eq:B23}
\end{align}
The condition $\|X\|^{\circ}_{s,\mathrm{inj}} = +\infty, \forall X \notin \mathrm{V}_k$ means that only when $X \in \mathrm{V}_k$, the decomposition of $X$ into a signed difference of $k$-replica moments of two mixed state ensembles is possible.

Based on this understanding of the dual norm $\|X\|_{s,\mathrm{inj}}^\circ$, the dual problem of Eq.\eqref{eq:primal-robustness} is given below.
\begin{lemma}[Dual problem of the robustness problem \eqref{eq:primal-robustness}]\label{lem:dual-robustness}
    Given that the trivial $k$-replica POVM $\{\1^{\otimes k}\}$ is in the interior of the POVM restriction $\mathcal{C}$, the primal problem \eqref{eq:primal-robustness} satisfies the strong duality condition and its dual problem is given in conic convex program form:
    \begin{align}
         {
        \begin{aligned}
        s_{k,b}^*(\cF, \mathcal{C}) = \max_{Z, \{Y_\alpha\}} \quad &  \sum_{\alpha \in \cA} \tr(Y_\alpha F_\alpha) -b \sum_{\alpha}\|Y_{\alpha}\|_{s,\mathrm{inj}}^\circ  \\
        \text{s.t.} \quad & Z - \sum_{\alpha \in \cA} c_\alpha Y_\alpha \in \mathcal{K}_{\mathcal{C}}^*, \quad \forall\, c \in \{-1, 1\}^{|\cA|} \\
        & \tr Z = 1,\\
        & Z \in \mathrm{Herm}(\cH^{\ox k}),\quad Y_\alpha \in \mathrm{V}_k,
        \end{aligned}
        }\label{eq:dual-robustness}
\end{align}
 where $\mathcal{K}_{\mathcal{C}}^*$ is the dual cone of the cone $\mathcal{K}_{\mathcal{C}}:=\{ \lambda \Pi \mid \lambda \ge 0, \Pi \in \mathcal{C} \}$.
 \end{lemma}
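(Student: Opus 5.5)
To prove Lemma~\ref{lem:dual-robustness}, I will use standard Lagrangian (conic) duality. The first task is to rewrite the primal \eqref{eq:primal-robustness} as a finite-dimensional conic program. The variables are $r$, the outcome weights $h_\alpha(j)$ and the POVM elements $\Pi_j\in\mathcal K_{\mathcal C}$. The bilinear products $h_\alpha(j)\Pi_j$ are the obstacle to convexity, so I remove them first.

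Since $|h_\alpha(j)|\le r$, each $h_\alpha(j)$ is a convex combination of $\pm r$. I will therefore refine every outcome $j$ into sign patterns $c\in\{-1,1\}^{|\cA|}$ (extreme points of the box), with new POVM elements $\Pi_c:=\sum_j p_j(c)\Pi_j$. These still lie in $\mathcal K_{\mathcal C}$ because $\mathcal C$ is convex and closed under such post-processing. The constraints then become linear:
\begin{align*}
\textstyle\sum_c \Pi_c=\1^{\otimes k},\qquad r\sum_c c_\alpha\Pi_c=W_\alpha .
\end{align*}
Substituting $\tilde\Pi_c:=r\Pi_c\in\mathcal K_{\mathcal C}$ gives the objective $r=\tr(\sum_c\tilde\Pi_c)/\tr\1^{\otimes k}$, equivalently the constraint $\sum_c\tilde\Pi_c=r\1^{\otimes k}$. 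For the bias constraint I write it as $\|W_\alpha-F_\alpha\|_{s,\mathrm{inj}}\le b$ via \eqref{eq:injective-norm}, with $W_\alpha$ ranging over $\mathrm V_k$. The resulting problem is: minimize $r$ over $\tilde\Pi_c\in\mathcal K_{\mathcal C}$, $r\ge 0$ and $W_\alpha\in\mathrm V_k$, subject to these linear equalities and the norm-ball constraints.

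Next I form the Lagrangian. The multiplier for $\sum_c\tilde\Pi_c=r\1$ is a Hermitian $Z$, the multipliers for $\sum_c c_\alpha\tilde\Pi_c=W_\alpha$ are $Y_\alpha\in\mathrm V_k$, and I keep the norm-ball constraint explicit. Minimizing over $r$ forces $\tr Z=1$. Minimizing over each cone variable $\tilde\Pi_c\in\mathcal K_{\mathcal C}$ forces $Z-\sum_\alpha c_\alpha Y_\alpha\in\mathcal K_{\mathcal C}^*$ for every $c$, since otherwise the infimum is $-\infty$. Minimizing $\sum_\alpha\tr(Y_\alpha W_\alpha)$ over $W_\alpha\in F_\alpha+b\cdot\mathrm{Ball}$ gives $\tr(Y_\alpha F_\alpha)-b\|Y_\alpha\|^\circ_{s,\mathrm{inj}}$, by the definition of the dual norm and Lemma~\ref{lem:dual-atomic}. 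That lemma also explains why $Y_\alpha$ may be restricted to $\mathrm V_k$: any component of $Y_\alpha$ outside $\mathrm V_k$ makes the dual norm infinite, and it is annihilated anyway because $W_\alpha-F_\alpha\in\mathrm V_k$. Together these steps yield exactly \eqref{eq:dual-robustness}, and weak duality follows immediately.

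For strong duality I will use Slater's condition, which I expect to be the delicate step. I need a strictly feasible primal point. Take $r$ large and split $\tilde\Pi_c=(r/2^{|\cA|})\1^{\otimes k}+\Delta_c$, where the $\Delta_c$ solve the linear equations $\sum_c\Delta_c=0$ and $\sum_c c_\alpha\Delta_c=F_\alpha$. Such $\Delta_c$ exist by Fourier inversion over $\{\pm1\}^{|\cA|}$: set $\Delta_c=2^{-|\cA|}\sum_\alpha c_\alpha F_\alpha$. Because $\1^{\otimes k}$ lies in the interior of $\mathcal K_{\mathcal C}$ by hypothesis, each $\tilde\Pi_c$ lies in the interior once $r$ is large. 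In the same point $W_\alpha=F_\alpha$ sits strictly inside the norm ball when $b>0$; when $b=0$ that constraint is affine and Slater needs no strictness for it. The primal optimum is finite, since it is bounded below by $0$ and feasible. Slater's condition then gives zero duality gap and attainment of the dual, so the maximum in \eqref{eq:dual-robustness} is achieved. The points I would check carefully are that the sign-pattern reduction is lossless, that the symmetrization of Theorem~\ref{thm:structure_fcomp} is compatible with the restriction $\mathcal C$, and how the normalization of $\tr Z$ comes out.
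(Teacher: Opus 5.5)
Your proposal follows the paper's proof essentially step for step: the sign-vector convexification of $|h_\alpha(j)|\le r$ into coarse-grained effects $N_c=rM_c\in\mathcal{K}_{\mathcal{C}}$, Lagrange multipliers $Z$ and $Y_\alpha$ that force $\tr Z=1$ and $Z-\sum_\alpha c_\alpha Y_\alpha\in\mathcal{K}_{\mathcal{C}}^*$, the dual-norm term $-b\|Y_\alpha\|^\circ_{s,\mathrm{inj}}$, and Slater's condition at exactly the paper's point $N_c=2^{-|\cA|}\bigl(r\1^{\ox k}+\sum_\alpha c_\alpha F_\alpha\bigr)$, $B_\alpha=0$. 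The one thing to tidy is that you should not restrict $W_\alpha$ to $\mathrm{V}_k$; instead let $B_\alpha=W_\alpha-F_\alpha$ range over the whole seminorm ball in $\mathrm{Herm}(\cH^{\ox k})$, as the paper does, so that $Y_\alpha\in\mathrm{V}_k$ is forced because $\|Y_\alpha\|^\circ_{s,\mathrm{inj}}=+\infty$ off $\mathrm{V}_k$ (your first justification) and no compatibility between symmetrization and $\mathcal{C}$ is needed, whereas your second justification (``annihilated anyway'') ignores that the $\mathrm{V}_k^\perp$ part of $Y_\alpha$ still enters the cone constraint.
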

 \begin{proof}
 Defining $c = \{c_{\alpha}\}_{ \alpha \in \cA}  \in \{-1,1\}^{\cA}$ and
 \begin{align}
    \lambda_j(c) =2^{-|\cA|} \prod_{\alpha \in \cA} \left(1 + r^{-1}c_{\alpha} h_{\alpha}(j)\right).\label{eq:B25}
\end{align}
 Using the fact that $h_{\alpha}(j) \in [-r,r]$ for all $\alpha \in \cA$ and $j \in \cJ$, we have
 \begin{align}
   1 \geq \lambda_j(c) \geq 0, \quad \sum_{c \in \{-1,1\}^{|\cA|}} \lambda_j(c) = 1, \sum_{c \in \{-1,1\}^{|\cA|}} c_{\alpha}\lambda_j(c) = r^{-1}h_{\alpha}(j).\label{eq:B26}
\end{align}
 Then we introduce the coarse-grained operators $ M_c := \sum_{j} \lambda_j(c)\, \Pi_j \in \mathcal{K}_{\mathcal{C}},c \in \{-1,1\}^{|\cA|}$, where $\mathcal{K}_{\mathcal{C}} = \{ \lambda \Pi \mid \lambda \ge 0, \Pi \in \mathcal{C} \}$ is the conic hull of $\mathcal{C}$.
 This gives the identity $\sum_c  c_{\alpha} M_c = r^{-1}\sum_{j} h_{\alpha}(j) \Pi_j = r^{-1} W_{\alpha}$ and $\sum_c  M_c = \1^{\otimes k}$.
Further introducing variables $N_c$ and $B_\alpha$ as:
\begin{align}
    &N_c := rM_c \in \mathcal{K}_{\mathcal{C}}, \quad \forall c \in \{-1,1\}^{|\cA|},\label{eq:B27}
\\
    & B_\alpha := \sum_c c_\alpha N_c -  F_{\alpha}, \quad \forall \alpha \in \cA,\label{eq:B28}
\end{align}
Eq.\eqref{eq:primal-robustness} reduces to the following convex program:
\begin{align}
     {
    \begin{aligned}
        s_{k,b}^*(\cF,\mathcal{C}) = \min_{B_\alpha, N_c} \quad & r \\
    \text{Subject to:}\quad & \sum_c N_c - r\1^{\ox k} = 0, \\
    & F_\alpha + B_\alpha - \sum_c c_\alpha N_c = 0, \quad \forall \alpha \in \cA,\\
    & N_c \in \mathcal{K}_{\mathcal{C}}, \quad \forall c \in \{-1,1\}^{|\cA|},\\
    &\|B_\alpha\|_{s,\mathrm{inj}} \leq b, \quad \forall \alpha \in \cA.
    \end{aligned}
    }\label{eq:primal2}
\end{align}
To derive the dual problem, we first introduce the Lagrange multipliers $Z \in \mathrm{Herm}(\cH^{\ox k})$ for $\sum_c N_c - r\1^{\ox k} = 0$, $Y_\alpha \in \mathrm{Herm}(\cH^{\ox k})$ for $F_\alpha + B_\alpha - \sum_c c_\alpha N_c = 0$, the Lagrangian is:
\begin{align}
    \cL(Z, Y_\alpha, N_c, B_\alpha, r) &= r + \tr\left[Z\left(\sum_c N_c - r\1^{\ox k}\right)\right] + \sum_\alpha \tr\left[Y_\alpha\left(F_\alpha + B_\alpha - \sum_c c_\alpha N_c\right)\right]\label{eq:B30}
\\
    &=  \sum_{\alpha}\tr(Y_\alpha F_\alpha) +\sum_{\alpha}\tr(Y_\alpha B_\alpha) + \sum_{c} \tr\left[(Z -\sum_{\alpha} c_\alpha Y_\alpha) N_c\right] + r(1 -\tr Z).\label{eq:B31}
\end{align}
Here $\inf_{N_c, B_\alpha, r}\sup_{Z, Y_\alpha} \cL(Z, Y_\alpha, N_c, B_\alpha, r)$ indicates the minimization result of Eq.\eqref{eq:primal2}. Its dual problem can be generated through exchanging the sup and inf in the Lagrangian, which gives:
\begin{align}
  &g(Z, Y_\alpha) := \inf_{\substack{r, N_c \in \mathcal{K}_{\mathcal{C}}, \\ B_\alpha,\|B_\alpha\|_{s,\mathrm{inj}} \leq b}} \cL(Z, Y_\alpha, N_c, B_\alpha, r) \notag
\\
    &= \sum_{\alpha}\tr(Y_\alpha F_\alpha)  + \inf_{\substack{B_\alpha, \\ \|B_\alpha\|_{s,\mathrm{inj}} \leq b}}\sum_{\alpha}\tr(Y_\alpha B_\alpha) + \sum_{c} \inf_{N_c \in \mathcal{K}_{\mathcal{C}}} \tr\left[(Z -\sum_{\alpha} c_\alpha Y_\alpha) N_c\right] + \inf_{ r} \left[r(1 -\tr Z)\right].\label{eq:dual-temp-eq}
\end{align}
Using the fact that $\mathcal{K}_{\mathcal{C}}$ is a cone and $\|\cdot\|_{s,\mathrm{inj}}$ is a semi-norm on $\mathrm{Herm}(\cH^{\ox k})$ with its dual norm specified in Lemma~\ref{lem:dual-atomic}, we have
\begin{align}
    \inf_{N_c \in \mathcal{K}_{\mathcal{C}}} \tr\left[X N_c\right]  = \begin{cases}
        0, & \text{if } X \in \mathcal{K}^*_{\mathcal{C}}, \\
        -\infty, & \text{otherwise}.
    \end{cases},\quad  \inf_{B_\alpha,\|B_\alpha\|_{s,\mathrm{inj}} \leq b}\tr(X B_\alpha) = - b\|X\|_{s,\mathrm{inj}}^\circ\label{eq:B33}
\end{align}
where $\mathcal{K}_{\mathcal{C}}^*$ is the dual cone of $\mathcal{K}_{\mathcal{C}}$, defined as $\mathcal{K}_{\mathcal{C}}^* := \left\{ W \in \mathrm{Herm}(\cH^{\ox k}) \;\Big|\; \tr(WM) \ge 0, \;\forall M \in \mathcal{K}_{\mathcal{C}} \right\}$.
 Substituting into Eq.\eqref{eq:primal2}, we have $g(Z, Y_\alpha) = -\infty$ if $Z - \sum_{\alpha} c_\alpha Y_\alpha \notin \mathcal{K}^*_{\mathcal{C}}$ or $\tr Z \neq 1$, otherwise
 \begin{align}
    g(Z, Y_\alpha)
    &= \sum_{\alpha}\tr(Y_\alpha F_\alpha) - b\sum_{\alpha}\|Y_{\alpha}\|_{s,\mathrm{inj}}^\circ.\label{eq:B34}
\end{align}
Based on Lemma~\ref{lem:dual-atomic}, to keep the dual norm $\|Y_{\alpha}\|_{s,\mathrm{inj}}^\circ$ finite rather than $+\infty$, we restrict $Y_\alpha \in \mathrm{V}_k$. Then the dual problem of Eq.\eqref{eq:primal2}, given by the supremum of $g(Z, Y_\alpha)$, is given by:
\begin{align}
    d_{k,b}^*(\cF) := \max_{Z, \{Y_\alpha\}} \quad & \sum_{\alpha \in \cA} \text{Tr}(Y_\alpha F_\alpha) - b\sum_{\alpha \in \cA}\|Y_{\alpha}\|_{s,\mathrm{inj}}^\circ\label{eq:B35}
\\
    \text{s.t.} \quad & Z - \sum_{\alpha \in \cA} c_\alpha Y_\alpha \in \mathcal{K}_{\mathcal{C}}^*, \quad \forall\, c \in \{-1, 1\}^{|\cA|}\label{eq:B36}
\\
    & \tr Z = 1,\quad  Y_\alpha \in \mathrm{V}_k, Z \in \mathrm{Herm}(\cH^{\ox k}),\label{eq:B37}
\end{align}
To verify that the problem in Eq.\eqref{eq:primal2} satisfies the strong duality condition $d_{k,b}^*(\cF) = s_{k,b}^*(\cF)$, we show that Slater's condition is satisfied.
Using the fact that $\1^{\otimes k} \in \mathrm{int}(\mathcal{K}_{\mathcal{C}})$, we take:
\begin{align}
    N_c = 2^{-|\cA|} \left(r \1^{\otimes k} + \sum_{ \alpha \in \cA} c_{\alpha} F_{\alpha}\right), \quad B_{\alpha} = 0,\label{eq:B38}
\end{align}
such that for $r$ sufficiently large, $N_c \in \mathrm{int}(\mathcal{K}_{\mathcal{C}})$ is always satisfied and $\|B_{\alpha}\|_{s, \mathrm{inj}} = 0 < b$.
When $b=0$, using the fact that $F_{\alpha}$ is the unique symmetric $k$-replica linearization and $\Pi_j$ is also symmetric (see Theorem~\ref{thm:structure_fcomp}),
 i.e., $F_{\alpha}, N_c \in \mathrm{V}_k$,
the condition $\|B_\alpha\|_{s,\mathrm{inj}}\le0$ forces $B_\alpha=0$ within $\mathrm{V}_k$.
The reduced exact program then satisfies Slater's condition by the same
choice of $N_c$ above, with $r$ sufficiently large.
 \end{proof}

We note that the $\mathrm{QB}_k$-norm problem is recovered from this robustness program by restricting the admissible bias family to $\cG=\{0\}$, equivalently, by imposing $b=0$ in Eq.~\eqref{eq:primal2}. Accordingly, the resource-restricted $\mathrm{QB}_k$-norm is defined as $\|\cF\|_{k,\mathcal{C}}:=s_{k,0}^{*}(\cF,\mathcal{C})$; note that $s_{k,b}^{*}(\cF,\mathcal{C})$ is genuinely a norm only at $b=0$. In this case the resulting dual problem is
\begin{align}
 {
\begin{aligned}
    \|\cF\|_{k,\mathcal{C}} = s_{k,0}^{*}(\cF,\mathcal{C}) =\max_{Z,\{Y_\alpha\}}\quad &
\sum_{\alpha\in\cA}\tr(Y_\alpha F_\alpha) \\
\text{s.t.}\quad &
Z-\sum_{\alpha\in\cA} c_\alpha Y_\alpha \in \mathcal K_{\mathcal C}^*,
\quad \forall c\in\{-1,1\}^{|\cA|},\\
& \tr Z = 1,\quad  Y_\alpha, Z\in \mathrm{Herm}(\cH^{\otimes k}).
\end{aligned}}\label{eq:dual_unbiased}
\end{align}
Note that the effect of $b = 0$ still leaves a restriction $Y_\alpha \in \mathrm{V}_k$.
However, if we use the fact that $F_{\alpha}$ is the unique symmetric $k$-replica linearization, i.e., $F_{\alpha} \in \mathrm{V}_k$,
we conclude that the restriction $Y_\alpha \in \mathrm{V}_k$ in the dual problem Eq.\eqref{eq:dual-robustness} can be relaxed to the $Y_{\alpha} \in \mathrm{Herm}(\mathcal{H}^{\otimes k})$ in Eq.\eqref{eq:dual_unbiased} without changing the optimal value.
Here below, we will focus on the case $b = 0$ only.

The dual problem in Eq.\eqref{eq:dual-robustness} induces the restriction on the specific dual cones $\mathcal{K}_{\mathcal{C}}^* :=
\{W \in \mathrm{Herm}(\mathcal{H}_0^{\otimes k})
\mid \tr(W\,M) \geq 0,\;\forall\, M \in
\mathcal{C}\}$.
Taking the memory-free restriction in Eq.\eqref{eq:block-positive-cone-mem} and site-product restriction in Eq.\eqref{eq:site-product-cone-loc} as examples, the dual cones $\mathcal{K}_{\mathcal{C}}^*$ are respectively:
\begin{align}
    &\mathcal{K}_{\mathrm{mem}}^{*}
    = \left\{ W \;\Big|\;
    \bra{\Phi_1}\!\cdots\!\bra{\Phi_{k}}
    W
    \ket{\Phi_1}\!\cdots\!\ket{\Phi_{k}}
    \geq 0,\; \forall\, \ket{\Phi_{\ell}} \in
    (\mathbb{C}^2)^{\otimes n}\right\},\label{eq:block-positive-cone}
\\
    &\mathcal{K}_{\mathrm{loc}}^{*}
    = \left\{ W \;\Big|\;
    \bra{\phi_1}\!\cdots\!\bra{\phi_n} W
    \ket{\phi_1}\!\cdots\!\ket{\phi_n} \geq 0,\;
    \forall\, \ket{\phi_l} \in
    (\mathbb{C}^2)^{\otimes k} \right\}.\label{eq:site-product-cone}
\end{align}

\paragraph{Primal recovery via KKT conditions.}
Since the problem Eq.\eqref{eq:primal2} and its dual problem Eq.\eqref{eq:dual-robustness} are standard conic convex optimizations, by strong duality and the KKT conditions~\cite{boyd2004convex} for conic convex optimization, there exist an optimal primal solution $r^*=s_{k,b}^*(\cF,\mathcal{C}),\{N_c^*\},\{B_\alpha^*\}$ and dual solutions $Y^*_{\alpha}, Z^*$ satisfying all the dual and primal feasibility conditions as well as
\begin{align}
&\text{Cone complementary slackness:}\quad
\tr(N_c^*\Delta_c^*)=0,\quad \forall c,\label{eq:cone-cs}
\\
&\text{Norm-ball complementary slackness:}\quad
-\tr(Y_\alpha^*B_\alpha^*)=b\|Y_\alpha^*\|_{s,\mathrm{inj}}^\circ,\quad \forall \alpha.\label{eq:norm-ball-cs}
\end{align}
Hence the optimal dual variables $Z^*, \{Y^*_{\alpha}\}_{\alpha}$ first determine an optimal residual family $\{B_\alpha^*\}$ through Eq.\eqref{eq:norm-ball-cs}, and then determine the optimal conic variables $\{N_c^*\}$ through Eq.\eqref{eq:cone-cs}. The recovered optimal strategy is
\begin{align}
\Pi_c^*:=\frac{N_c^*}{r^*},\qquad h_\alpha^*(c):=r^* c_\alpha,  \qquad \sum_c h_\alpha^*(c)\Pi_c^*={F_\alpha+B_\alpha^*}.\label{eq:POVM_recovery}
\end{align}

\appsection{Further properties of the quantum Bernstein norm}\label{app:further-properties}

\appsubsection{The operational meaning of the dual problem}
\paragraph{The shift-invariant relaxation.}
The raw robustness $s_{k,b}^*(\cF,\mathcal{C})$ indicates the existence of a $b$-bias estimator $|h_{\alpha}(j)| \leq r$ averaged over the measurement results $j$.
However, the current optimal $h_{\alpha}(j)$ still contains an unnecessary constant that might increase the value $|h_{\alpha}(j)|$.
For example, although the estimation of constant properties $\cF := \{\tr\rho = 1\}$ requires no information about $\rho$, any POVM strategy still requires estimators $h(j) \equiv 1$.
Operationally,  without changing the POVM or the number of state copies,
any estimation strategy for $\cF:=\{f_\alpha(\rho)\}_{\alpha\in\cA}$
can be converted into one for the shifted family $\cF+\mathbf t := \{f_\alpha(\rho)+t_\alpha\}_{\alpha\in\cA}$
by just adding the state-independent constants $\{t_\alpha\}_{\alpha\in\cA}$ to each final result.
This motivates introducing a shift-invariant version of the robustness by quotienting out these label-dependent additive offsets:
\begin{align}
    s_{k,b}^{*,\mathrm{shift}}(\cF,\mathcal{C})
    :=
    \inf_{\{t_\alpha\in\mathbb R\}}
    s_{k,b}^*(\{f_\alpha+t_\alpha\}_{\alpha\in\cA}, \mathcal{C}).\label{eq:B45}
\end{align}
On the primal side in Eq.\eqref{eq:primal2}, this is equivalent to replacing the matching constraints by
\begin{align}
    F_\alpha + B_\alpha + t_\alpha \1^{\otimes k} - \sum_c c_\alpha N_c = 0,
    \qquad \forall \alpha\in\cA,\label{eq:B46}
\end{align}
with additional optimization variables $\{t_\alpha\}_{\alpha\in\cA}$.
Accordingly, the original Lagrangian in Eq.\eqref{eq:dual-temp-eq} acquires the extra term $ \sum_{\alpha\in\cA} t_\alpha \tr(Y_\alpha)$.
Since each $t_\alpha$ is unrestricted, finiteness of the dual function gives the shift-invariant restriction on the original dual problem in Eq.\eqref{eq:dual-robustness}:
\begin{align}
    \tr Y_\alpha=0,\qquad \forall \alpha\in\cA.\label{eq:B47}
\end{align}
Adding this restriction to the dual problem will give $s^{*, \mathrm{shift}}_{k,b}(\cF,\mathcal{C}) \leq s^*_{k,b}(\cF,\mathcal{C})$.
This indeed corresponds to a shift assisted estimation strategy for $\cF$ with estimators bounded by $s_{k,b}^*(\cF,\mathcal{C})$.

\paragraph{The state discrimination lower bound.}
The dual variable $(Z^*, \{Y_\alpha^*\})$ of the dual problem~\eqref{eq:dual-robustness} can be interpreted as a family of dual-feasible binary ensemble discrimination tasks, indexed by $\alpha \sim \lambda$, whose average distinguishability is at most $1/\Lambda$, while their average function gap certifies a lower bound on $s_{k,b}^{*,\mathrm{shift}}(\cF,\mathcal{C})$.
Using Lemma~\ref{lem:dual-atomic}, for any $Y_\alpha\in \mathrm V_k$, the finiteness of $\|Y_\alpha\|_{s,\mathrm{inj}}^\circ$ is equivalent to the existence of a decomposition $\mathcal{E}_{\alpha, +} :=(p_{\alpha,+,i},\rho_{\alpha,i,+})$ and $\mathcal{E}_{\alpha, -} :=(p_{\alpha,-,i},\rho_{\alpha,i,-})$ such that
\begin{align}
    Y_{\alpha} &= \frac{\|Y_{\alpha}\|_{s,\mathrm{inj}}^\circ + \tr(Y_{\alpha})}{2} \sum_{i=1}^m p_{\alpha,+,i} \rho_{\alpha,i,+}^{\otimes k} - \frac{\|Y_{\alpha}\|_{s,\mathrm{inj}}^\circ - \tr(Y_{\alpha})}{2} \sum_{i=1}^m p_{\alpha,-,i} \rho_{\alpha,i,-}^{\otimes k} \notag
\\
    &:= \frac{\Lambda}{2} \cdot \lambda_{\alpha} \left[\Omega_{\alpha,+} - \Omega_{\alpha,-}\right].\label{eq:B48}
\end{align}
Here we use the shift-invariant restriction $\tr Y_{\alpha} = 0$ and set
\begin{align}
    \Lambda = \sum_{\alpha}\|Y_\alpha\|_{s,\mathrm{inj}}^\circ,  \quad \lambda_{\alpha} = \frac{\|Y_{\alpha}\|_{s,\mathrm{inj}}^\circ }{\Lambda}, \quad \Omega_{\alpha,\pm}:=\mathbb E_{\rho\sim\mathcal E_{\alpha,\pm}}[\rho^{\otimes k}].\label{eq:normalized_prob}
\end{align}
Note that $\Omega_{\alpha,\pm}$ are the $k$-replica moments of the mixed state ensembles $\mathcal{E}_{\alpha,\pm}$.
Such a moment difference also appears in the classical best polynomial approximation~\cite{wu2020polynomial} and further yields an optimal lower bound based on Le Cam's two-point method.
As a generalization, we show that this also induces a tight operational lower bound for the $\mathrm{QB}_k$-norm of the set of functions $\cF$.
By Lemma~\ref{lem:dual-atomic}, we also conclude that $\|Y_{\alpha}\|_{s,\mathrm{inj}}^\circ \geq 0$, thus giving $\lambda_{\alpha}\geq 0$ and $\sum_{\alpha} \lambda_{\alpha} = 1$, which define a valid probability distribution on $\cA$.
Plugging these equations into the dual problem~\eqref{eq:dual-robustness} to replace all $Y_{\alpha}$ variables transforms the dual problem into
\begin{align}
    s^{*,\mathrm{shift}}_{k,b}(\cF) &= \max_{\{\mathcal{E}_{\alpha, \pm}\},\{ \lambda_{\alpha}\}, \Lambda, Z} \frac{\Lambda}{2} \cdot \sum_{\alpha \in \cA} \lambda_{\alpha} \left[\mathbb{E}_{\rho \sim \mathcal{E}_{\alpha,+}}f_{\alpha}(\rho) -  \mathbb{E}_{\rho \sim \mathcal{E}_{\alpha,-}}f_{\alpha}(\rho)\right]  - b \Lambda\label{eq:discrimination-game-mid-obj}
\\
\text{s.t.} \quad & Z - \frac{\Lambda}{2}\sum_{\alpha \in \cA} \lambda_{\alpha}c_\alpha\left(\Omega_{\alpha,+} - \Omega_{\alpha,-}\right) \in \mathcal{K}_{\mathcal{C}}^*, \quad \forall\, c \in \{-1, 1\}^{|\cA|}\label{eq:mid-op-bound}
\\
& \tr Z = 1,\quad Z \in \mathrm{Herm}(\cH^{\ox k}).\label{eq:mid-op-bound2}
\end{align}
Here we use the fact that $F_{\alpha}$ is a $k$-linearization of $f_{\alpha}$, such that $\tr(F_{\alpha} \rho^{\otimes k}) = f_{\alpha}(\rho)$.
We then show that Eqs.\eqref{eq:discrimination-game-mid-obj}-\eqref{eq:mid-op-bound2} admit physical interpretations.

The maximization objective in Eq.\eqref{eq:discrimination-game-mid-obj} contains the average function value gap between the two mixed state ensembles $\mathcal{E}_{\alpha,+}$ and $\mathcal{E}_{\alpha,-}$, given that the index $\alpha$ is selected with probability $\lambda_{\alpha}$:
\begin{align}
    \Delta_\cF(\mathbf{W}) := \sum_{\alpha \in \cA} \lambda_\alpha \left| \mathbb{E}_{\rho \sim \mathcal{E}_{\alpha, +}} f_\alpha(\rho) - \mathbb{E}_{\rho \sim \mathcal{E}_{\alpha, -}} f_\alpha(\rho) \right|,\label{eq:B53}
\end{align}
where $\mathbf{W} = (\mathcal{E}_{\alpha,+}, \mathcal{E}_{\alpha,-}, \lambda_{\alpha})$ denotes the mixed-state ensemble pairs for $\alpha \in \mathcal{A}$ sampled from the distribution $\lambda$.
Taking an allowed POVM $\mathsf M=\{M_j\}_{j \in \cJ}$ with $M_j\in \mathcal C$ on $k$-replica state moments, we also get the classical distribution $\{q_{\alpha,\pm}^{\mathsf M}(j) :=\tr(M_j\Omega_{\alpha,\pm}) \}_{j \in \cJ}$.
In terms of these post-measurement distributions $\{q_{\alpha,\pm}^{\mathsf M}(j)\}_{j \in \cJ, \alpha \in \cA}$,
Eq.\eqref{eq:mid-op-bound} translates to the inequality
\begin{align}
&\forall c\in\{-1,1\}^{|\cA|},\forall j, \forall M_j \in \mathcal{C},\quad \frac{\Lambda}{2}\sum_{\alpha\in\cA}c_\alpha  \lambda_{\alpha} \left[q_{\alpha,+}^{\mathsf M}(j)-q_{\alpha,-}^{\mathsf M}(j)\right]
\le \tr (Z M_j),\label{eq:B54}
\\
&\Leftrightarrow \forall j,\quad \sum_{\alpha\in\cA}\lambda_{\alpha} \frac{\left|q_{\alpha,+}^{\mathsf M}(j)-q_{\alpha,-}^{\mathsf M}(j)\right|}{2}
\le \frac{ \tr (Z M_j) }{\Lambda} \Rightarrow \mathbb{E}_{\alpha \sim \{\lambda_{\alpha}\}} \|q_{\alpha, +}^{\mathsf M}- q_{\alpha, -}^{\mathsf M}\|_{\mathrm {TV}} \leq \frac{1}{\Lambda},\label{eq:tvd_equiv}
\end{align}
where the total variation distance is defined as $\|p - q\|_{\mathrm {TV}}:=\frac{1}{2}\sum_{i = 1}^n|p(i) - q(i)|$.
This inequality admits an interpretation in terms of distinguishability of the post-measurement distribution.
\begin{lemma}[Le Cam's two-point method~\cite{lecam1973convergence}]
    \label{lemma:TV_bound}
    Given two distributions $p,q$ on the set $S = \{1,\cdots,n\}$,
    the maximum success rate to distinguish $p,q$ satisfies:
    \begin{align}
        \mathrm{Pr}_s(p,q) = \frac{1}{2} + \frac{1}{2} \|p - q\|_{\mathrm {TV}} .\label{eq:B56}
\end{align}
    \end{lemma}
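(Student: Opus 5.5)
The plan is to reduce the discrimination task to a linear optimization over decision rules. We read the statement as the equal-prior setting: a referee flips a fair coin, hands the player a sample drawn from $p$ or from $q$, and the player guesses which distribution produced it. Any (possibly randomized) decision rule is described by a function $\phi:S\to[0,1]$, where $\phi(i)$ is the probability of answering ``$p$'' on observing $i$. Deterministic rules correspond to indicator functions of subsets $A\subseteq S$, and randomized rules are their convex combinations, so optimizing over $\phi\in[0,1]^S$ covers all strategies.

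First we write the success probability as an affine function of $\phi$:
\begin{align}
\Pr{}_s(\phi) = \frac12\sum_{i\in S} p(i)\phi(i) + \frac12\sum_{i\in S} q(i)\bigl(1-\phi(i)\bigr) = \frac12 + \frac12\sum_{i\in S}\phi(i)\bigl(p(i)-q(i)\bigr),
\end{align}
using $\sum_i q(i)=1$. Since the objective is linear in each $\phi(i)$ and $\phi(i)\in[0,1]$, it is maximized pointwise. The optimal choice is $\phi(i)=1$ when $p(i)>q(i)$ and $\phi(i)=0$ otherwise. Ties contribute nothing, so they can be broken arbitrarily. This gives both the upper bound for every rule and its attainment by the likelihood-ratio (maximum-likelihood) test $A^*=\{i:p(i)>q(i)\}$.

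Second, we identify the resulting value with the total variation distance. The maximum equals $\frac12+\frac12\sum_{i\in A^*}(p(i)-q(i))$. Because $\sum_i(p(i)-q(i))=0$, the positive and negative parts of $p-q$ have equal mass, so
\begin{align}
\sum_{i\in A^*}\bigl(p(i)-q(i)\bigr) = \frac12\sum_{i\in S}|p(i)-q(i)| = \|p-q\|_{\mathrm{TV}},
\end{align}
which yields Eq.~\eqref{eq:B56}.

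No step is a genuine obstacle. The only point needing care is the first one: making explicit that randomized strategies reduce to $\phi\in[0,1]^S$, so that the pointwise maximization really establishes optimality over all strategies rather than only over deterministic tests. The equal-prior assumption should also be stated explicitly, since it is what produces the constant $\tfrac12$ and the factor $\tfrac12$ in front of the total variation distance.
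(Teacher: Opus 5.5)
Your proof is correct and is essentially the paper's argument. Both reduce to maximizing $p(U)-q(U)$ over decision regions, which the paper phrases as the two error probabilities summing to at least $1-\|p-q\|_{\mathrm{TV}}$, and both attain the bound with the likelihood-ratio test. Your version is somewhat more complete: it covers randomized rules, derives $\max_U[p(U)-q(U)]=\|p-q\|_{\mathrm{TV}}$ instead of assuming it, and makes explicit the equal-prior (average-success) reading under which the stated equality actually holds.
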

    \begin{proof}
        We first show that when $\|p - q\|_{\mathrm {TV}} = \varepsilon$, there is no single sample algorithm $\mathcal{A}: S \to \{p,q\}$ that is $\varepsilon/2$ better than random guessing, in other words, satisfies $\text{Pr}_{x \sim p}[\mathcal{A}(x) = p] \geq 1/2 + \varepsilon/2$ and $ \text{Pr}_{x \sim q}[\mathcal{A}(x) = q] \geq 1/2 + \varepsilon/2$.
        It is sufficient to show that for an arbitrary algorithm $\mathcal{A}$, either $\text{Pr}_{x \sim p}[\mathcal{A}(x) = q] \geq (1-\varepsilon)/2$ or $\text{Pr}_{x \sim q}[\mathcal{A}(x) = p] \geq  (1-\varepsilon)/2.$
        Notice that
        \begin{align}
            \text{Pr}_{x \sim p}[\mathcal{A}(x) = q] + \text{Pr}_{x \sim q}[\mathcal{A}(x) = p] & = 1 - \text{Pr}_{x \sim p}[\mathcal{A}(x) = p] +\text{Pr}_{x \sim q}[\mathcal{A}(x) = p]\label{eq:B57}
\\
            &\geq 1 - \sup_{U \subseteq S} \left |\text{Pr}_{x \sim p}[U] -\text{Pr}_{x \sim q}[U]\right |\label{eq:B58}
\\
            &= 1 - \|p - q\|_{\mathrm{TV}} = 1 - \varepsilon.\label{eq:B59}
\end{align}
        Then either $\text{Pr}_{x \sim p}[\mathcal{A}(x) = q] \geq   (1-\varepsilon)/2$ or $\text{Pr}_{x \sim q}[\mathcal{A}(x) = p] \geq  (1-\varepsilon)/2$, which means that either $\text{Pr}_{x \sim p}[\mathcal{A}(x) = p] \leq   (1+\varepsilon)/2$ or $\text{Pr}_{x \sim q}[\mathcal{A}(x) = p] \leq  (1+\varepsilon)/2$. One can verify that this algorithm is reachable by the likelihood-ratio test $\mathcal{A}^*(x) = p, \text{~if~} p(x)\geq q(x)$ and $\mathcal{A}^*(x) = q, \text{if~} p(x) < q(x)$.
    \end{proof}
Based on Lemma~\ref{lemma:TV_bound}, Eq.\eqref{eq:tvd_equiv} implies that,
given an index $\alpha$ selected with probability $\lambda_{\alpha}$ and $k$ replicas of the state $\rho$ prepared with either $\rho \sim \mathcal{E}_{\alpha,+}$ or $\rho \sim \mathcal{E}_{\alpha,-}$, any allowed $k$-replica POVM $\mathsf M=\{M_j\}_{j \in \cJ}$ with $M_j\in {\mathcal C}$ cannot distinguish the cases indexed by $\{+,-\}$ with probability better than ${1}/{2} + {1}/{(2\Lambda)}$. We thus define:
\begin{align}
    \mathrm{Pr}_s(\mathbf{W}) :=\frac{1}{2} + \frac{1}{2} \max_{M_j \in \mathcal{C}} \mathbb{E}_{\alpha \sim \{\lambda_{\alpha}\}} \|q_{\alpha, +}^{\mathsf M}- q_{\alpha, -}^{\mathsf M}\|_{\mathrm {TV}} \leq \frac{1}{2} + \frac{1}{2\Lambda},\label{eq:B60}
\end{align}
With this interpretation in mind, we further show an operational and attainable lower bound for $s_{k,b}^*(\mathcal{F},\mathcal{C})$.

\begin{lemma}[Functional gap--distinguishability ratio bound]\label{lem:function-gap-distinguishability-ratio}
    Given any mixed state ensemble pairs $(\mathcal{E}_{\alpha,+}, \mathcal{E}_{\alpha, -})$ for $\alpha \in \mathcal{A}$ sampled from the distribution $\lambda$.
    Then:
    \begin{align}
        s_{k,b}^*(\mathcal{F},\mathcal{C}) \geq \frac{\Delta_\cF(\mathbf{W}) - 2b}{4\cdot \left[\mathrm{Pr}_s(\mathbf{W})  - 1/2\right]}.\label{eq:B61}
\end{align}
\end{lemma}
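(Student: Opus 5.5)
The plan is to prove the bound directly from the primal side. Every feasible estimation strategy in Eq.~\eqref{eq:primal-robustness} turns into a distinguisher for the ensemble pairs in $\mathbf W$, and its estimator range $r$ then controls how far apart the two ensembles can push the averaged function values. This avoids constructing an explicit dual point $(Z,\{Y_\alpha\})$. By Lemma~\ref{lem:dual-robustness} the two routes give the same value, and the dual reformulation in Eqs.~\eqref{eq:discrimination-game-mid-obj}--\eqref{eq:tvd_equiv} is just the same inequality written in dual variables. Note that $\mathrm{Pr}_s(\mathbf W)-1/2=\tfrac12 T$, where $T:=\max_{\mathsf M\subset\mathcal C}\mathbb E_{\alpha\sim\lambda}\|q^{\mathsf M}_{\alpha,+}-q^{\mathsf M}_{\alpha,-}\|_{\mathrm{TV}}$. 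So the claim is equivalent to $\Delta_\cF(\mathbf W)\le 2rT+2b$ for every feasible $r$.

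\textbf{Step 1 (feasible strategy).} Fix a feasible point $(r,\{\Pi_j\},\{h_\alpha\},\{W_\alpha\})$ of Eq.~\eqref{eq:primal-robustness}. It satisfies $\Pi_j\in\mathcal C$, $\sum_j\Pi_j=\1^{\otimes k}$, $|h_\alpha(j)|\le r$, $W_\alpha=\sum_j h_\alpha(j)\Pi_j$, and $|\tr(W_\alpha\rho^{\otimes k})-f_\alpha(\rho)|\le b$ for all $\rho$. Set $q_{\alpha,\pm}(j):=\tr(\Pi_j\Omega_{\alpha,\pm})$. These are probability distributions, and $\mathsf M=\{\Pi_j\}$ is an admissible POVM in the maximum defining $\mathrm{Pr}_s(\mathbf W)$ in Eq.~\eqref{eq:B60}.

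\textbf{Step 2 (per-label gap bound).} Average the bias constraint over $\rho\sim\mathcal E_{\alpha,\pm}$ using linearity of $\tr(W_\alpha\,\cdot\,)$. This gives $\bigl|\mathbb E_{\mathcal E_{\alpha,\pm}}f_\alpha-\sum_j h_\alpha(j)q_{\alpha,\pm}(j)\bigr|\le b$. Applying the triangle inequality together with $|h_\alpha(j)|\le r$,
\begin{align*}
\bigl|\mathbb E_{\mathcal E_{\alpha,+}}f_\alpha-\mathbb E_{\mathcal E_{\alpha,-}}f_\alpha\bigr|
\le \sum_j |h_\alpha(j)|\,|q_{\alpha,+}(j)-q_{\alpha,-}(j)|+2b
\le 2r\,\|q_{\alpha,+}-q_{\alpha,-}\|_{\mathrm{TV}}+2b .
\end{align*}

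\textbf{Step 3 (average and optimize).} Average Step 2 over $\alpha\sim\lambda$ and bound the averaged TV distance by its maximum over admissible POVMs. This yields $\Delta_\cF(\mathbf W)\le 2rT+2b=4r\,[\mathrm{Pr}_s(\mathbf W)-1/2]+2b$. Rearranging and minimizing over feasible $r$ gives Eq.~\eqref{eq:B61}.

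Two degenerate cases need a remark. If $\mathrm{Pr}_s(\mathbf W)=1/2$, the same chain forces $\Delta_\cF(\mathbf W)\le 2b$, so the right-hand side is nonpositive (or read as $0/0$ by convention) and the bound holds trivially. If the primal is infeasible, then $s^*_{k,b}=\infty$ and there is nothing to prove.

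The main point needing care is that the ensembles $\mathcal E_{\alpha,\pm}$ are arbitrary mixed ensembles, not only the product moments seen by a symmetrized POVM. Since each $\Omega_{\alpha,\pm}$ is a convex combination of states $\rho^{\otimes k}$, the bias constraint, which holds pointwise in $\rho$, averages correctly. Whether or not $\Pi_j$ has been symmetrized is irrelevant, because $\Omega_{\alpha,\pm}$ is permutation invariant. I expect no genuine obstacle beyond this bookkeeping.
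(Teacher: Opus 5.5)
Your proposal is correct and follows essentially the same route as the paper's own proof. Both fix an arbitrary feasible strategy, bound each per-label function gap by $2r\,\|q_{\alpha,+}-q_{\alpha,-}\|_{\mathrm{TV}}+2b$, average over $\alpha\sim\lambda$, and dominate the averaged TV distance by $2\,\mathrm{Pr}_s(\mathbf W)-1$. Your write-up is in fact slightly more careful than the paper's: you place the absolute value on $h_\alpha(j)$, justify averaging the pointwise bias constraint over the mixed ensembles, and address the degenerate case $\mathrm{Pr}_s(\mathbf W)=1/2$.
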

\begin{proof}
    Take any $k$-replica joint estimator $\{h_{\alpha}(j), \Pi_j\}_{j \in \mathcal{J}}$ of $\mathcal{F}$ with bias bound $b$ and scaling $s = \max_{\alpha, j} |h_{\alpha}(j)|$, we conclude that:
    \begin{align}
        \Delta_\cF(\mathbf{W}) - 2b &\leq \sum_{\alpha,j} \lambda_{\alpha} h_{\alpha}(j)  \left|q^{M}_{\alpha,+}(j) - q^{M}_{\alpha,-}(j)\right| \notag
\\
        &\leq 2s \cdot \mathbb{E}_{\alpha} \|q^{M}_{\alpha,+}(j) - q^{M}_{\alpha,-}(j)\|_{\mathrm{TV}} \leq 2s \cdot \left[2\cdot \mathrm{Pr}_s(\mathbf{W}) - 1\right],\label{eq:B62}
\end{align}
where $\{q_{\alpha,\pm}^{\mathsf M}(j) :=\tr(\Pi_j\Omega_{\alpha,\pm}) \}_{j \in \cJ}$ are distributions generated by sampling $k$ replicas from ensembles using POVMs $\{\Pi_j\}_{j \in \mathcal{J}}$. This then gives the desired lower bound for $s$. Since this lower bound does not depend on the choice of strategy, it is valid for any $k$-replica joint estimator and also holds for the optimal value $s^*_{k,b}(\mathcal{F},\mathcal{C})$.
\end{proof}

Lemma~\ref{lem:function-gap-distinguishability-ratio} gives an operational method to witness the incompatibility of a set of properties $\mathcal{F}$ by measuring the function gap and the distinguishability of the post-measurement distributions.
To be specific, one can define the discrimination game in Fig.~\ref{fig:alice-bob-discrimination-witness} to witness the incompatibility of a set of properties $\mathcal{F}$. In this game:
\begin{enumerate}
\item Alice chooses a signal $\{+,-\}$ and prepares the mixed state ensembles $\mathcal{E}_{\alpha,\pm}$ based on the index $\alpha \in \mathcal{A}$ sampled from the distribution $\lambda_{\alpha} = p_{\alpha}$ and the signal she chooses.
\item Alice sends $k$ replicas of the mixed state to Bob. Bob is required to use an allowed $k$-replica POVM $\mathcal{C}$ to distinguish the signal $\{+,-\}$, with a maximum success probability $\mathrm{Pr}_s(\mathbf{W})$. 
\item Alice implements a certain experiment (not necessarily optimal or required to be jointly measurable) to get the function gap $\Delta_\cF(\mathbf{W})$.  
\item Use Lemma~\ref{lem:function-gap-distinguishability-ratio} to get the lower bound for $s_{k,b}^*(\mathcal{F},\mathcal{C})$.
\end{enumerate}

\begin{figure}[htbp!]
    \centering
    \includegraphics[width=\textwidth]{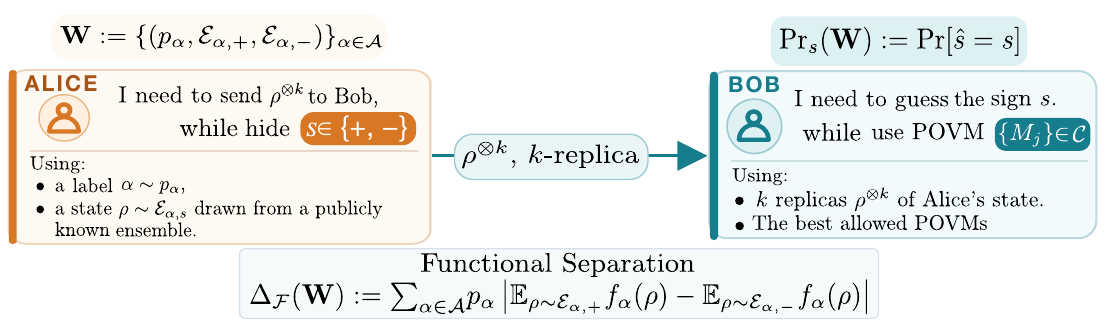}
    \caption{Alice and Bob play a discrimination game to witness the incompatibility of a set of properties $\mathcal{F}$.}
    \label{fig:alice-bob-discrimination-witness}
\end{figure}

Using Eq.\eqref{eq:discrimination-game-mid-obj}, we further show that this desired lower bound for $s_{k,b}^*(\mathcal{F},\mathcal{C})$ is attainable and tight.

\begin{lemma}[Tightness of the function gap--distinguishability ratio bound]\label{lem:reachability-function-gap-distinguishability-ratio}
    Let $\cF=\{f_\alpha\}_{\alpha\in\cA}$ be a family of polynomial functions, and let $\mathcal C$ be a class of allowed POVMs.
    Then there exist a probability distribution $\lambda$ on $\cA$ and a family of ensemble pairs
    $\{\mathcal E_{\alpha,+},\mathcal E_{\alpha,-}\}_{\alpha\in\cA}$ such that no allowed POVM in $\mathcal C$ can distinguish the binary label $\{+,-\}$ with average success probability larger than $p^*$, while for small enough $b$, the corresponding average function gap satisfies
    \begin{align}
        s_{k,b}^{*, \mathrm{shift}}(\mathcal{F},\mathcal{C})=\frac{\Delta_\cF(\mathbf{W}) - 2b}{4\cdot \left(p^*- 1/2\right)} \leq \frac{\Delta_\cF(\mathbf{W}) - 2b}{4\cdot \left[\mathrm{Pr}_s(\mathbf{W})  - 1/2\right]} .\label{eq:reachability-function-gap-distinguishability-ratio}
\end{align}
\end{lemma}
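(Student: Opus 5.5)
The plan is to read the tightness off the shift-invariant dual program, Eqs.~\eqref{eq:discrimination-game-mid-obj}--\eqref{eq:mid-op-bound2}, which by Lemma~\ref{lem:dual-robustness} (strong duality) together with the constraint $\tr Y_\alpha=0$ of Eq.~\eqref{eq:B47} equals $s_{k,b}^{*,\mathrm{shift}}(\cF,\mathcal C)$. Slater's condition holds, so the dual attains its optimum. I take an optimal dual pair $(Z^*,\{Y^*_\alpha\})$ and apply the atomic decomposition of Lemma~\ref{lem:dual-atomic} to each $Y^*_\alpha\in\mathrm V_k$. Since $\tr Y^*_\alpha=0$, the decomposition takes the form of Eq.~\eqref{eq:B48} and produces $\Lambda^*=\sum_\alpha\|Y^*_\alpha\|^\circ_{s,\mathrm{inj}}$, weights $\lambda_\alpha$, and ensemble pairs $\mathcal E_{\alpha,\pm}$. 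These are the claimed $\lambda$ and $\{\mathcal E_{\alpha,+},\mathcal E_{\alpha,-}\}$. I define $p^*:=\tfrac12+\tfrac1{2\Lambda^*}$.

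The first step is indistinguishability. Feasibility of $Z^*$ in Eq.~\eqref{eq:mid-op-bound} gives Eq.~\eqref{eq:tvd_equiv}: for any allowed POVM $\mathsf M$, summing over $j$ and using $\sum_j\tr(Z^*M_j)=\tr Z^*=1$ yields $\mathbb E_\alpha\|q^{\mathsf M}_{\alpha,+}-q^{\mathsf M}_{\alpha,-}\|_{\mathrm{TV}}\le 1/\Lambda^*$. Lemma~\ref{lemma:TV_bound} then gives $\mathrm{Pr}_s(\mathbf W)\le p^*$, which is Eq.~\eqref{eq:B60}. This also supplies the inequality in Eq.~\eqref{eq:reachability-function-gap-distinguishability-ratio}, because its right-hand side has denominator $4[\mathrm{Pr}_s(\mathbf W)-\tfrac12]\le 4(p^*-\tfrac12)$, provided the numerator $\Delta_\cF(\mathbf W)-2b$ is nonnegative. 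That is what ``small enough $b$'' buys.

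The second step is the equality. At the optimum, the objective in Eq.~\eqref{eq:discrimination-game-mid-obj} equals $s^{*,\mathrm{shift}}_{k,b}=\tfrac{\Lambda^*}{2}\sum_\alpha\lambda_\alpha[\mathbb E_+f_\alpha-\mathbb E_-f_\alpha]-b\Lambda^*$. Each signed gap can be made nonnegative by swapping the labels $\pm$ for $\alpha$, i.e.\ replacing $Y_\alpha\to -Y_\alpha$. This swap preserves feasibility, since the constraint is quantified over all sign vectors $c$, and it cannot decrease the objective, so at the optimum $\sum_\alpha\lambda_\alpha(\cdots)=\Delta_\cF(\mathbf W)$. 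Hence $s^{*,\mathrm{shift}}=\Lambda^*(\Delta_\cF-2b)/2$. Substituting $\Lambda^*=1/(2(p^*-\tfrac12))$ gives exactly $(\Delta_\cF-2b)/(4(p^*-\tfrac12))$.

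The main obstacle is ensuring $\Lambda^*$ is finite and positive and that the atomic infimum in Lemma~\ref{lem:dual-atomic} is attained. I expect to handle attainment by compactness of $\mathrm{bal}(\mathcal E_k)$ together with Carathéodory's theorem. If $\Lambda^*=0$, then all $Y^*_\alpha=0$, the norm is $0$, and the statement degenerates to $0/0$, so it should be excluded by assuming a nonconstant family. A second point is matching $\mathrm{Pr}_s(\mathbf W)$ with $p^*$: the lemma only needs the upper bound, so equality there is not required.
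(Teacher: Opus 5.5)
Your proposal is correct and follows the same route as the paper. You take an optimal dual pair of the shift-invariant program and decompose each $Y^*_\alpha$ via Lemma~\ref{lem:dual-atomic} into ensemble pairs with $\Lambda^*=\sum_\alpha\|Y^*_\alpha\|^\circ_{s,\mathrm{inj}}$. You then set $p^*=\tfrac12+\tfrac1{2\Lambda^*}$ and read off both the indistinguishability bound and the identity $s^{*,\mathrm{shift}}_{k,b}(\cF,\mathcal C)=\Lambda^*\bigl(\Delta_\cF(\mathbf W)-2b\bigr)/2$.

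Your additional care fills in steps the paper's short proof leaves implicit: flipping $Y_\alpha\to-Y_\alpha$ so the signed objective equals the absolute-value gap $\Delta_\cF$, attaining the atomic decomposition by compactness, and excluding $\Lambda^*=0$. One small correction: nonnegativity of $\Delta_\cF(\mathbf W)-2b$ already follows from that identity, so ``small enough $b$'' is needed only for strict positivity. Strict positivity, through Lemma~\ref{lem:function-gap-distinguishability-ratio}, gives $\mathrm{Pr}_s(\mathbf W)>\tfrac12$ and makes the right-hand side well defined.
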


\begin{proof}
Given that $s^* = s_{k,b}^{*, \text{shift}}(\cF, \mathcal{C})$, the optimal solution $Y^*_\alpha, Z^*$
exists and admits an interpretation in terms of density matrix ensembles $\{\mathcal{E}_{\alpha, \pm}\}_{\alpha \in \cA}, \lambda_{\alpha}$ such that no allowed POVM can
distinguish the index $\{+,-\}$ with probability better than $p^* = 1/2 + 1/(2\Lambda)$ while $s^* = ({\Lambda}/{2})\Delta_\cF(\mathbf{W}) - b \Lambda$.  
This then gives a valid ensemble satisfying $\Delta_\cF(\mathbf{W}) = 2(2p^* - 1)s^* + 2b$.
Note that $\Delta_\cF(\mathbf{W}) -2b>0$ must be satisfied for small enough $b$.
Then the upper bound in Eq.\eqref{eq:reachability-function-gap-distinguishability-ratio} is proved by using the fact $p^* \geq \mathrm{Pr}_s(\mathbf{W}) > 1/2$.
\end{proof}
To indicate the tightness, we note that $\Delta_\cF(\mathbf{W}) = \Delta_{\cF+ \mathbf{t}}(\mathbf{W}), \forall \mathbf{t} \in \mathbb{R}^{|\cA|}$, thus the lower bound in Lemma~\ref{lem:function-gap-distinguishability-ratio} also applies to $s_{k,b}^{*, \text{shift}}(\cF, \mathcal{C})$.

\appsubsection{Subadditivity of $\mathrm{QB}_k$-norm}
In this section, we show that the $\mathrm{QB}_k$-norm $\|\mathcal{F}\|_k:= s_k^*(\cF)$ (Definition~\ref{def:k-copy-qBern_coeff}) is a well-behaved norm by regarding the pointwise addition and scalar multiplication as the operations on the property set.
As a starting point, we first specify the vector space that $\mathcal{F}$ belongs to:
\begin{align}
    &\mathcal{V}_{k,\mathcal{A}} := \left\{\cF = \{f_\alpha\}_{\alpha \in \mathcal{A}}
    \; : \; \forall \alpha \in \cA, \, f_{\alpha}(\rho) \text{~is a polynomial of~} \rho \text{~ with degree at most~} k
    \right\}.\label{eq:E1}
\\
    & \text{Addition rule:~} \forall \mathcal{F}:= \{f_{\alpha}\}_{\alpha \in \cA} \in \mathcal{V}_{k,\mathcal{A}} , \mathcal{G} := \{g_{\alpha}\}_{\alpha \in \cA} \in \mathcal{V}_{k,\mathcal{A}},\quad \mathcal{F} + \mathcal{G}:= \{f_{\alpha}(\rho) + g_{\alpha}(\rho)\}_{\alpha \in \cA}.\label{eq:E2}
\\
    & \text{Scalar multiplication rule:~} \forall \mathcal{F}:= \{f_{\alpha}\}_{\alpha \in \cA} \in \mathcal{V}_{k,\mathcal{A}}, \forall c \in \mathbb{R}, \quad c \cdot \mathcal{F}:= \{c \cdot f_{\alpha}(\rho)\}_{\alpha \in \cA}.\label{eq:E3}
\end{align}
The zero vector of $\mathcal{V}_{k,\mathcal{A}}$ is written as $\cF = \{f_{\alpha} := 0\}_{\alpha \in \cA} := \{0\}$.
Then we show that $\|\mathcal{F}\|_k$ is a norm on $\mathcal{V}_{k,\mathcal{A}}$.
\begin{lemma}\label{lem:isnorm}
    The $\mathrm{QB}_k$-norm $\|\mathcal{F}\|_k$ (Definition~\ref{def:k-copy-qBern_coeff}) satisfies:
    (i). positivity $\|\cF\|_k \geq 0$, (ii). homogeneity $\|c\cdot \cF\|_k = |c| \cdot \|\cF\|_k, \forall c \in \mathbb{R}$, and (iii). faithfulness: $\|\cF\|_k = 0 \Leftrightarrow \cF = \{0\}$.
\end{lemma}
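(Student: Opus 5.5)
I will verify the three properties directly from Definition~\ref{def:k-copy-qBern_coeff}, passing to the symmetric linearization of Theorem~\ref{thm:structure_fcomp} wherever convenient. Positivity (i) is immediate: $\|\cF\|_k$ is a minimum of quantities $\max_{\alpha,j}|h_\alpha(j)|\ge 0$. Since $\cF\in\mathcal V_{k,\cA}$ has degree at most $k$, Lemma~\ref{lem:linearization_exist} gives a linearization. Taking the trivial POVM on its eigenbasis, or the feasible point of Eq.~\eqref{eq:B38}, shows that the feasible set is nonempty, so the norm is finite.

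For homogeneity (ii), I split into $c=0$ and $c\neq0$. If $c=0$, the family $\{0\}$ is estimated with $h_\alpha\equiv 0$ on any POVM, giving norm $0=|c|\,\|\cF\|_k$. If $c\neq 0$, any feasible strategy $(\Pi_j,h_\alpha)$ for $\cF$ yields the strategy $(\Pi_j,c\,h_\alpha)$ for $c\cdot\cF$, with range $|c|\max|h_\alpha|$. Hence $\|c\cF\|_k\le|c|\,\|\cF\|_k$. Applying the same argument to $c^{-1}$ and $c\cF$ gives the reverse inequality. I will also note that the minimum is attained, by the SDP formulation of Theorem~\ref{thm:main_dual} with strong duality from Lemma~\ref{lem:dual-robustness}. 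Even without attainment, the argument works verbatim with infima.

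For faithfulness (iii), the direction $\cF=\{0\}\Rightarrow\|\cF\|_k=0$ is covered by the $c=0$ case. For the converse, I will use the elementary bound already quoted in the main text, $\|\cF\|_k\ge\sup_{\alpha,\rho}|f_\alpha(\rho)|$. This follows because for any feasible strategy $|f_\alpha(\rho)|=|\sum_j h_\alpha(j)\tr(\Pi_j\rho^{\ox k})|\le \max_j|h_\alpha(j)|\sum_j\tr(\Pi_j\rho^{\ox k})=\max_j|h_\alpha(j)|$. Thus $\|\cF\|_k=0$ forces $f_\alpha(\rho)=0$ for every density matrix $\rho$ and every $\alpha$, so each $f_\alpha$ vanishes as a function on $\mathcal D(\cH)$.

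The main obstacle is interpretive. I need to make sure that ``$\cF=\{0\}$'' means equality as functions on states rather than as formal polynomials in the entries $\rho_{j,k}$. Two polynomials can agree on the density-matrix set while differing formally; for example, $\tr\rho-1$ vanishes on states. I will therefore identify elements of $\mathcal V_{k,\cA}$ with their restrictions to $\mathcal D(\cH)$, or equivalently with their unique symmetric linearizations $\tilde F_\alpha\in\mathrm V_k$. Under this identification, vanishing on all $\rho$ means $\tr(\tilde F_\alpha\rho^{\ox k})=0$ for all $\rho$, and Eq.~\eqref{eq:B6} then gives $\tilde F_\alpha=0$. So the zero vector is unique and faithfulness holds. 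The same identification also makes the addition and scalar rules of Eqs.~\eqref{eq:E2}--\eqref{eq:E3} well defined.
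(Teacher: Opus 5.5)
Your proposal is correct and follows the paper's proof. Positivity and homogeneity come directly from rescaling the post-processing functions in Definition~\ref{def:k-copy-qBern_coeff}, and faithfulness comes from the bound $|f_\alpha(\rho)|\le\max_{j}|h_\alpha(j)|$, which holds because $\sum_j\tr(\Pi_j\rho^{\ox k})=1$; your extra care (finiteness so that $c=0$ avoids $0\cdot\infty$, and reading $\cF=\{0\}$ as vanishing on $\mathcal D(\cH)$, equivalently $\tilde F_\alpha=0$) makes explicit what the paper leaves implicit without changing the argument.
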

\begin{proof}
     The positivity and the homogeneity follow directly from Definition~\ref{def:k-copy-qBern_coeff}. To show the faithfulness, we note that for any valid unbiased estimation strategy:
     \begin{align}
        |f_{\alpha}(\rho)| = \left|\sum_{j \in \cJ} h_{\alpha}(j) \tr(\Pi_j \rho^{\otimes k})\right| \leq \sum_{j \in \cJ} |h_{\alpha}(j)| \tr(\Pi_j \rho^{\otimes k}) \leq \max_{j\in\cJ, \alpha \in \cA} |h_{\alpha}(j)|.\label{eq:E4}
\end{align}
     If $\|\mathcal F\|_k=0$, then $\forall \varepsilon >0$, there exists an unbiased estimator such that $\max_{j\in\cJ, \alpha \in \cA} |h_{\alpha}(j)| \leq \varepsilon$. This directly indicates $|f_{\alpha}(\rho)| \leq \varepsilon, \forall \varepsilon >0 \Rightarrow f_{\alpha}(\rho) = 0, \forall \alpha \in \cA$.
\end{proof}
One critical property left is the triangle inequality, or, the subadditivity of $\mathrm{QB}_k$-norm. We show it below.
\begin{lemma}[Pointwise subadditivity of $\mathrm{QB}_k$-norm]\label{lem:pointwise-subadditivity}
    For two sets of function properties $\cF:= \{f_{\alpha}\}_{\alpha \in \cA}$ and $\cG := \{g_{\alpha}\}_{\alpha \in \cA}$, define the pointwise addition as $\mathcal{W} = \cF + \cG = \{f_{\alpha} + g_{\alpha}\}_{\alpha \in \cA}$. Then
    \begin{align}
        \|\cF + \cG\|_k \leq \|\cF\|_k + \|\cG\|_k.\label{eq:pointwise_subadditivity}
\end{align}
\end{lemma}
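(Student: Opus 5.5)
We prove the triangle inequality directly from Definition~\ref{def:k-copy-qBern_coeff}, using a probabilistic mixture of the two optimal strategies. For the degenerate case, if either norm is zero, Lemma~\ref{lem:isnorm} says that family is identically zero, and the claim is immediate. So assume $s_F:=\|\cF\|_k>0$ and $s_G:=\|\cG\|_k>0$, and write $s:=s_F+s_G$.

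Take unbiased strategies $(\{\Pi_j\}_{j\in\cJ},h_\alpha)$ for $\cF$ and $(\{\Pi'_l\}_{l\in\cJ'},g'_\alpha)$ for $\cG$ that attain these norms. If the minimum is only approached, take strategies with ranges $s_F+\eta$ and $s_G+\eta$ and let $\eta\to0$ at the end. Define $p:=s_F/s$, and build a new POVM on the disjoint union of outcome sets $\cJ\sqcup\cJ'$:
\begin{align}
\tilde\Pi_{(1,j)}:=p\,\Pi_j,\qquad \tilde\Pi_{(2,l)}:=(1-p)\,\Pi'_l .
\end{align}
This is a valid POVM: the elements are positive and sum to $p\,\1^{\ox k}+(1-p)\,\1^{\ox k}=\1^{\ox k}$. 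Operationally, it flips a coin with bias $p$ and then runs one of the two strategies.

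Next define the estimators
\begin{align}
\tilde h_\alpha(1,j):=\frac{h_\alpha(j)}{p},\qquad \tilde h_\alpha(2,l):=\frac{g'_\alpha(l)}{1-p}.
\end{align}
Unbiasedness follows by linearity:
\[
\sum_j \tfrac{h_\alpha(j)}{p}\,p\,\tr(\Pi_j\rho^{\ox k})+\sum_l \tfrac{g'_\alpha(l)}{1-p}\,(1-p)\,\tr(\Pi'_l\rho^{\ox k})=f_\alpha(\rho)+g_\alpha(\rho).
\]
The range bound follows from the choice of $p$. On the first branch, $|\tilde h_\alpha(1,j)|\le s_F/p=s$. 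On the second branch, $|\tilde h_\alpha(2,l)|\le s_G/(1-p)=s$.

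Hence $\cF+\cG$ is $k$-replica jointly estimable with scaling $s_F+s_G$, and therefore $\|\cF+\cG\|_k\le\|\cF\|_k+\|\cG\|_k$.

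The only delicate point is whether the minimum in Definition~\ref{def:k-copy-qBern_coeff} is attained, and the $\eta\to0$ limit handles this. Alternatively, the primal program of Theorem~\ref{thm:main_dual} attains its optimum, which settles it directly. Splitting off the zero-norm case is needed so that $p\in(0,1)$ and the divisions by $p$ and $1-p$ are well defined.
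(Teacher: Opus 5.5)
Your proposal is correct and follows the paper's proof essentially verbatim: the same mixture of the two optimal POVMs with weights $s_1/(s_1+s_2)$ and $s_2/(s_1+s_2)$, and estimators rescaled by $(s_1+s_2)/s_i$. Your extra care is fine but does not change the argument: you treat the zero-norm case via faithfulness, handle possible non-attainment by $\eta\to 0$, and verify unbiasedness directly on $\rho^{\otimes k}$ rather than through the symmetrized operator identity.
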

\begin{proof}
    Suppose $\|\cF\|_k>0, \|\cG\|_k>0$ without loss of generality.
    Let $\{\Pi_j, h_\alpha(j)\}$ and $\{\Gamma_l, g_\alpha(l)\}$ be the optimal strategies for $\cF$ and $\cG$ with scaling factors $s_1 = \|\cF\|_k$ and $s_2 = \|\cG\|_k$.
    By Definition~\ref{def:k-copy-qBern_coeff}, we have $s_1 := \max_{\alpha,j} |h_{\alpha}(j)| \geq|h_\alpha(j)|$ and $s_2 := \max_{\alpha,l} |g_{\alpha}(l)| \geq |g_\alpha(l)|$ for all $\alpha \in \cA$.
    Define the mixed POVM
    \begin{align}
        \tilde{\Pi}_{(1,j)} = \frac{s_1}{s_1 + s_2}\,\Pi_j, \qquad \tilde{\Pi}_{(2,l)} = \frac{s_2}{s_1 + s_2}\,\Gamma_l,\label{eq:E6}
\end{align}
    which satisfies $\sum_{j} \tilde{\Pi}_{(1,j)}+ \sum_{l} \tilde{\Pi}_{(2,l)} = \1$. Set estimators for the unbiased estimation of $f_{\alpha} + g_{\alpha}$ as:
     \begin{align}
        \tilde{h}_{\alpha}(1,j) = \frac{s_1+s_2}{s_1} h_{\alpha}(j), \quad  \tilde{h}_{\alpha}(2,l) = \frac{s_1+s_2}{s_2} g_{\alpha}(l),\label{eq:E7}
\end{align}
    with $|\tilde{h}_\alpha(1,j)|, |\tilde{h}_\alpha(2,l)| \leq s_1 + s_2, \forall j,l$, given that $s_1 \geq |h_\alpha(j)|$ and $s_2 \geq |g_\beta(l)|$. This rescaling also cancels the POVM dilution, thus yielding an unbiased estimator of $f_{\alpha} + g_{\alpha}, \forall \alpha \in \cA$:
    \begin{align}
        \sum_{j} \tilde{h}_\alpha(1,j)\,\tilde{\Pi}_{(1,j)} + \sum_{l} \tilde{h}_\alpha(2,l)\,\tilde{\Pi}_{(2,l)} &= \sum_j \frac{s_1+s_2}{s_1}\,h_\alpha(j) \cdot \frac{s_1}{s_1+s_2}\,\Pi_j + \sum_l \frac{s_1+s_2}{s_2}\,g_\alpha(l) \cdot \frac{s_2}{s_1+s_2}\,\Gamma_l\label{eq:E8}
\\
         &= \sum_j h_\alpha(j)\,\Pi_j  + \sum_l g_\alpha(l)\,\Gamma_l = F_\alpha + G_{\alpha}.\label{eq:E9}
\end{align}
    Here in the last equality, we use Lemma~\ref{lem:linearization_symmetrization}.
    This completes the proof.
\end{proof}

Besides these ordinary properties, two basic and physically reasonable operations on $\cF$ are subsetting and extending $\cF$ with more functions.

\begin{lemma}[Monotonicity under subsetting and convex combination]\label{lem:mono1}
    Given a set of normalized state properties $\cF := \{f_{\alpha}(\rho)\}_{\alpha \in \cA}$ with linearizations $\{F_\alpha\}_{\alpha \in \cA}$, the $\mathrm{QB}_k$-norm $\|\cF\|_k$ satisfies:
    \begin{enumerate}
        \item Subsetting. If $\cG \subseteq \cF$, then $\|\cG\|_k \leq \|\cF\|_k$.
        \item Convex combination. Replacing any $f_{\alpha_0} \in \cF$ with a convex combination $g(\rho) := \sum_{\alpha \in \cA} p(\alpha)\, f_{\alpha}(\rho)$, where $p(\alpha) \geq 0$ and $\sum_\alpha p(\alpha) = 1$, does not increase $\|\cF\|_k$.
    \end{enumerate}
\end{lemma}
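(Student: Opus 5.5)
Both items come straight from Definition~\ref{def:k-copy-qBern_coeff}: any feasible strategy for $\cF$ is turned into a feasible strategy for the modified family, with the same POVM and estimators whose range does not grow. Taking the infimum over strategies then gives the inequality. Throughout, let $\{\Pi_j, h_\alpha(j)\}$ be an unbiased $k$-replica strategy for $\cF$ with $\max_{\alpha,j}|h_\alpha(j)| = s$. I take $s$ arbitrarily close to $\|\cF\|_k$, or equal to it if the minimum is attained.

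\textbf{Subsetting.} If $\cG = \{f_\alpha\}_{\alpha \in \cA'}$ with $\cA' \subseteq \cA$, keep the POVM $\{\Pi_j\}$ and retain only the estimators $h_\alpha$ with $\alpha \in \cA'$. The unbiasedness condition \eqref{eq:k-copy-estimateability} holds for each $\alpha \in \cA'$, since it held for all $\alpha \in \cA$. Moreover $\max_{\alpha \in \cA', j}|h_\alpha(j)| \le s$. Hence $\cG$ is $k$-replica jointly estimable with scaling $s$, and minimizing over strategies gives $\|\cG\|_k \le \|\cF\|_k$.

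\textbf{Convex combination.} Let $\cF'$ be $\cF$ with $f_{\alpha_0}$ replaced by $g = \sum_\alpha p(\alpha) f_\alpha$. Keep $\{\Pi_j\}$ and all $h_\alpha$ with $\alpha \neq \alpha_0$, and define $h'_{\alpha_0}(j) := \sum_\alpha p(\alpha) h_\alpha(j)$. Linearity of the expectation gives
\[
\sum_j h'_{\alpha_0}(j)\tr(\Pi_j\rho^{\ox k}) = \sum_\alpha p(\alpha) f_\alpha(\rho) = g(\rho)
\]
for every $\rho$. The triangle inequality with $p(\alpha)\ge 0$ and $\sum_\alpha p(\alpha)=1$ gives $|h'_{\alpha_0}(j)| \le \sum_\alpha p(\alpha)|h_\alpha(j)| \le s$. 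So $\cF'$ is jointly estimable with scaling $s$, and $\|\cF'\|_k \le \|\cF\|_k$.

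There is one subtle point. In the convex combination, $p$ ranges over all of $\cA$, including $\alpha_0$, and the original $h_{\alpha_0}$ is used to build $h'_{\alpha_0}$ before it is discarded. This is harmless because every $h_\alpha$ is a function of the same outcome $j$. A second point is whether the minimum in the definition is attained. To avoid depending on that, I will run the argument for an arbitrary feasible scaling $s$ and pass to the infimum. Alternatively, attainment follows from the SDP formulation in Lemma~\ref{lem:dual-robustness}. I expect no real obstacle here: the whole proof is the observation that the feasible set of strategies maps into the feasible set for the new family without increasing the range.
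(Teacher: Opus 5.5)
Your proposal is correct and follows the paper's proof essentially verbatim. For subsetting you keep only the estimators indexed by the subfamily, and for the convex combination you reuse the same POVM with the averaged estimator $\sum_\alpha p(\alpha)h_\alpha(j)$, bounded by the triangle inequality. The one difference is harmless: you run the argument for an arbitrary feasible scaling and pass to the infimum, whereas the paper starts from an optimal strategy; your version simply avoids the attainment question.
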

\begin{proof}
\emph{Subsetting.}
    Let $\{\Pi_j,\, h_\alpha(j)\}$ be an optimal strategy for $\cF$ with $|h_\alpha(j)| \leq \|\cF\|_k$. Restricting to the estimators $\{h_\alpha(j)\}_{\alpha \in \cG}$ gives a valid strategy for $\cG$ with the same POVM and the same bound, so $\|\cG\|_k \leq \|\cF\|_k$.
\emph{Convex combination.}
    Let $\mathcal{C}(\cF)$ denote the set obtained by replacing only one of its functions $f_{\alpha_0}$ with a new function $g = \sum_\alpha p(\alpha) f_\alpha$. Using the same optimal POVM $\{\Pi_j\}$ for $\cF$, define the estimator for $g$ as $h_g(j) := \sum_\alpha p(\alpha)\, h_\alpha(j)$. Then $\sum_j h_g(j)\,\Pi_j = \sum_\alpha p(\alpha) F_\alpha = G$, and the triangle inequality gives $|h_g(j)| \leq \sum_\alpha p(\alpha)\,|h_\alpha(j)| \leq \|\cF\|_k$. All other estimators are unchanged, so $\|\mathcal{C}(\cF)\|_k \leq \|\cF\|_k$.
\end{proof}

It is well known that the averages of a set of observables $\{O_1,\ldots,O_{K}\}$ with $\|O_i\|_{\infty} \leq 1$ can be jointly estimated if they mutually commute.
We then show that inserting operators whose $k$-linearizations commute with \emph{every} element of the $k$-linearization of $\cF$ does not increase $\|\cF\|_k$.

\begin{lemma}[Commuting extensions]\label{lem:commuting_extension}
    Let $\cF, \{F_\alpha\}_{\alpha \in \cA}$ be a set of functions of $\rho \in \cH$ and its $k$-linearization with $\mathrm{QB}_k$-norm $\|\cF\|_k$, and let $c(\rho)$ be a function with $k$-linearization $C$ satisfying  $[C, F_\alpha] = 0$ for all $\alpha \in \cA$. Then $\|\cF \cup \{c\}\|_k = \max\!\left(\|\cF\|_k,\; \|\{c\}\|_k\right)$.
    In particular, if $\|\{c\}\|_k \leq \|\cF\|_k$, adding the function $c$ to $\cF$ does not increase the $\mathrm{QB}_k$-norm.
\end{lemma}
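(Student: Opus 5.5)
The equality will be proved as two inequalities. For the lower bound, the subsetting part of Lemma~\ref{lem:mono1} gives both $\|\cF\cup\{c\}\|_k\ge\|\cF\|_k$ and $\|\cF\cup\{c\}\|_k\ge\|\{c\}\|_k$, since $\cF$ and $\{c\}$ are subsets of $\cF\cup\{c\}$. All the work is in the upper bound. There I will build one joint strategy whose estimators for the $f_\alpha$ are bounded by $\|\cF\|_k$ and whose estimator for $c$ is bounded by $\|\{c\}\|_k$.

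\emph{Step 1: identify $\|\{c\}\|_k$.} Following the paper's convention, $C$ and $F_\alpha$ are the symmetric Hermitian $k$-linearizations. By Theorem~\ref{thm:structure_fcomp}, $\|\{c\}\|_k=s_1^*(\{C\})$. For a single Hermitian operator this equals $\|C\|_\infty$:
\begin{itemize}
\item Upper bound: measure in the spectral decomposition $C=\sum_\mu \mu E_\mu$ and output $\mu$.
\item Lower bound: apply the argument of Eq.~\eqref{eq:E4} to arbitrary states $\sigma$ on $\cH^{\ox k}$. This gives $|\tr(C\sigma)|\le\max_j|h(j)|$ for every $\sigma$, hence $\max_j|h(j)|\ge\|C\|_\infty$.
\end{itemize}
So $\|\{c\}\|_k=\|C\|_\infty$, and the spectral measurement of $C$ attains it.

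\emph{Step 2: dephase an optimal strategy for $\cF$ in the eigenbasis of $C$.} Let $\{\Pi_j,h_\alpha(j)\}$ be optimal for $\cF$. By Theorem~\ref{thm:structure_fcomp} and Lemma~\ref{lem:linearization_symmetrization} we may take $\sum_j h_\alpha(j)\Pi_j=F_\alpha$ as an operator identity (symmetrize the $\Pi_j$ if needed). Define
\begin{align*}
\Pi'_{(j,\mu)}:=E_\mu\Pi_jE_\mu\succeq0, \qquad \sum_{j,\mu}\Pi'_{(j,\mu)}=\sum_\mu E_\mu=\1^{\ox k}.
\end{align*}
Assign the estimators $h'_\alpha(j,\mu):=h_\alpha(j)$ and $h'_c(j,\mu):=\mu$. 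Unbiasedness for $f_\alpha$ uses the hypothesis $[C,F_\alpha]=0$: each spectral projector $E_\mu$ commutes with $F_\alpha$, so
\begin{align*}
\sum_{j,\mu}h_\alpha(j)E_\mu\Pi_jE_\mu=\sum_\mu E_\mu F_\alpha E_\mu=F_\alpha.
\end{align*}
Unbiasedness for $c$ holds because $\sum_{j,\mu}\mu\,E_\mu\Pi_jE_\mu=\sum_\mu\mu E_\mu=C$. The ranges are $|h'_\alpha|\le\|\cF\|_k$ and $|h'_c|\le\|C\|_\infty=\|\{c\}\|_k$. Taking traces against $\rho^{\ox k}$ turns these operator identities into unbiased estimates on product states, so $\|\cF\cup\{c\}\|_k\le\max(\|\cF\|_k,\|\{c\}\|_k)$. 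The ``in particular'' clause then follows immediately.

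The main subtlety is Step 1. The bound $|h_c|\le\|C\|_\infty$ only matches $\|\{c\}\|_k$ because $C$ is the symmetric linearization. For a non-symmetric linearization, $\|C\|_\infty$ could exceed $\|\{c\}\|_k$, so I will state explicitly that the commutation hypothesis refers to the symmetric linearizations. I will also note that the dephased POVM need not be permutation-symmetric; this is harmless, since only operator-level unbiasedness on $\cH^{\ox k}$ is needed.
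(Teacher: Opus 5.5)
Your proposal is correct and follows essentially the same route as the paper. Both use subsetting monotonicity for the lower bound, the identification $\|\{c\}\|_k=\|C\|_\infty$ via the symmetric linearization and the spectral measurement, and the dephased POVM $E_\mu\Pi_jE_\mu$ with estimators $h_\alpha(j)$ and $\mu$, whose unbiasedness for the $f_\alpha$ rests on $[E_\mu,F_\alpha]=0$.

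Your explicit symmetrization step, used to secure the operator identity $\sum_j h_\alpha(j)\Pi_j=F_\alpha$, is left implicit in the paper. Your caveat about the dephased POVM is unnecessary but harmless: because $C$ is permutation invariant, its spectral projectors commute with every $W_\pi$, so the dephased POVM is in fact symmetric.
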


\begin{proof}
    The lower bounds $\|\cF \cup \{c\}\|_k \geq \|\cF\|_k$ and $\|\cF \cup \{c\}\|_k \geq \|\{c\}\|_k$ follow from monotonicity in Lemma~\ref{lem:mono1} for $\cF \subset \cF \cup \{c\}$ and $\{c\} \subset \cF \cup \{c\}$, respectively.
    For the upper bound, let $F_{\alpha} = \sum_{j} h_\alpha(j)\Pi_j$ be optimal for $\cF$ with $\max_{\alpha,j}|h_\alpha(j)|= \|\cF\|_k$.
     Let $C = \sum_\mu \lambda_\mu Q_\mu$ be the spectral decomposition of the Hermitian operator $C$.
     Since $C$ is the symmetric linearization of $\{c\}$, we conclude $\|\{c\}\|_k = \|C\|_\infty$.
    To prove this, following from Lemma~\ref{lem:linearization_symmetrization}, we note that
     \begin{align}
     \|C\|_\infty = \left\|\sum_j h(j)\Phi_k(\Pi_j)\right\|_\infty \leq \max_j|h(j)| \leq \|\{c\}\|_k,\label{eq:E10}
\end{align}
    and the upper bound $\|C\|_\infty \geq \|\{c\}\|_k$ is achieved by the spectral measurement $\{Q_\mu\}$ with estimators $h(\mu) = \lambda_\mu$.

     Since $[C, F_\alpha] = 0$ for all $\alpha$, the spectral projectors of $C$ commute with all $F_\alpha$. This follows from the spectral-calculus fact that each $Q_\mu$ is a polynomial in $C$, via Lagrange interpolation on the spectrum of $C$, and $[C, F_\alpha] = 0$ implies $[p(C), F_\alpha] = 0$ for any polynomial $p$.
    Define the fine-grained POVM
    $\Pi_{j,\mu} := Q_\mu\, \Pi_j\, Q_\mu$, which
    satisfies $\Pi_{j,\mu} \succeq 0$, since $Q_\mu$ is
    a projector and $\Pi_j \succeq 0$, and completeness:
    \begin{align}
        \sum_{j,\mu} \Pi_{j,\mu}
        = \sum_\mu Q_\mu\!\left(\sum_j \Pi_j\right)\!Q_\mu
        = \sum_\mu Q_\mu\,\1\,Q_\mu
        = \sum_\mu Q_\mu = \1.\label{eq:E11}
\end{align}
    Set estimators $\tilde{h}_\alpha(j,\mu) = h_\alpha(j)$
    for $\alpha \in \cA$ and $\tilde{h}_C(j,\mu) = \lambda_\mu$.
    Using $[Q_\mu, F_\alpha] = 0$, we get the unbiasedness for $F_{\alpha}$
    \begin{align}
        \sum_{j,\mu} \tilde{h}_\alpha(j,\mu)\,Q_\mu\,\Pi_j\,Q_\mu
        = \sum_\mu Q_\mu\,F_\alpha\,Q_\mu
        = F_\alpha \sum_\mu Q_\mu = F_\alpha.\label{eq:E12}
\end{align}
    The unbiasedness for $C$ follows as
    $\sum_{j,\mu} \lambda_\mu\,Q_\mu\,\Pi_j\,Q_\mu
    = \sum_\mu \lambda_\mu\,Q_\mu = C$.
    The scaling factor is
    \begin{align}
        \max\{|\tilde{h}_\alpha(j,\mu)|,|\tilde{h}_C(j,\mu)|\}
      = \max\{\|\cF\|_k,\,\|C\|_\infty\},\label{eq:E13}
\end{align}
    which completes the proof.
\end{proof}

By iterating Lemma~\ref{lem:commuting_extension}, any set of mutually commuting Hermitian operators $\{F_\alpha\}$ satisfies $\|\{F_\alpha\}\|_k = \max_\alpha \|F_\alpha\|_\infty$.
When the commutativity condition fails, the cost of extending $\cF$ is bounded but generically non-zero and turns out to give the subadditivity of $k$-replica $\mathrm{QB}_k$-norm measure.

\begin{lemma}[Set-union subadditivity of $\mathrm{QB}_k$-norm]\label{lem:subadditivity}
    For two sets of function properties $\cF$ and $\cG$,
    \begin{align}
        \max(\|\cF\|_k, \|\cG\|_k) \leq \|\cF \cup \cG\|_k \leq \|\cF\|_k + \|\cG\|_k.\label{eq:subadditivity}
\end{align}
    When $\cG$ contains a single function with Hermitian linearization $A$ such that $[A, F_\alpha] = 0$ for all $\alpha$, Lemma~\ref{lem:commuting_extension} tightens the upper bound to $\max(\|\cF\|_k, \|\cG\|_k)$.
\end{lemma}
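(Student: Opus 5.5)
The lower bound $\max(\|\cF\|_k,\|\cG\|_k)\le\|\cF\cup\cG\|_k$ follows directly from the subsetting part of Lemma~\ref{lem:mono1}. Both $\cF$ and $\cG$ are subsets of $\cF\cup\cG$. An optimal strategy for the union, restricted to the estimators indexed by $\cF$ (respectively $\cG$), is feasible for $\cF$ (respectively $\cG$) with the same scaling. So each norm is at most $\|\cF\cup\cG\|_k$.

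For the upper bound I would reuse the POVM-mixing construction from the proof of Lemma~\ref{lem:pointwise-subadditivity}, now with index sets that differ. First dispose of degenerate cases. If $\|\cF\|_k=0$, faithfulness (Lemma~\ref{lem:isnorm}) gives $\cF=\{0\}$, and the zero functions can be estimated with $h\equiv0$ on any POVM, so $\|\cF\cup\cG\|_k=\|\cG\|_k$; the case $\|\cG\|_k=0$ is symmetric. If either norm is infinite there is nothing to prove.

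Otherwise take optimal strategies $\{\Pi_j,h_\alpha(j)\}_{\alpha\in\cA_\cF}$ and $\{\Gamma_l,g_\beta(l)\}_{\beta\in\cA_\cG}$ with scalings $s_1=\|\cF\|_k$ and $s_2=\|\cG\|_k$. If the minimum is not attained, take $\epsilon$-optimal strategies and let $\epsilon\to0$ at the end. Define the mixed POVM $\tilde\Pi_{(1,j)}=\tfrac{s_1}{s_1+s_2}\Pi_j$ and $\tilde\Pi_{(2,l)}=\tfrac{s_2}{s_1+s_2}\Gamma_l$, which sums to $\1^{\otimes k}$. The estimators are then:
\begin{itemize}
\item for $\alpha\in\cA_\cF$: $\tilde h_\alpha(1,j)=\tfrac{s_1+s_2}{s_1}h_\alpha(j)$ and $\tilde h_\alpha(2,l)=0$;
\item for $\beta\in\cA_\cG$: $\tilde h_\beta(1,j)=0$ and $\tilde h_\beta(2,l)=\tfrac{s_1+s_2}{s_2}g_\beta(l)$.
\end{itemize}
Unbiasedness holds because $\sum_{j}\tilde h_\alpha(1,j)\tr(\tilde\Pi_{(1,j)}\rho^{\otimes k})=\sum_j h_\alpha(j)\tr(\Pi_j\rho^{\otimes k})=f_\alpha(\rho)$, and similarly for $\cG$. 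The zero-padding contributes nothing to the expectation. Every estimator is bounded in magnitude by $s_1+s_2$.

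If a function appears in both families (identified as the same label), I would use the average estimator $\tilde h(1,j)=\tfrac{s_1+s_2}{s_1}h(j)$ and $\tilde h(2,l)=\tfrac{s_1+s_2}{s_2}g(l)$ scaled by $\tfrac12$. I would check that this is unbiased; that step is the only place requiring care. Simpler is to note that each branch alone is unbiased once renormalized. A cleaner choice for a shared label is $\tilde h=\tfrac{s_1+s_2}{s_1}h$ on branch 1 and $\tilde h=\tfrac{s_1+s_2}{s_2}g$ on branch 2, which yields $2f$. Halving it gives an unbiased estimator with bound $\tfrac{s_1+s_2}{2}\cdot\max(s_1,s_2)/\min(s_1,s_2)$. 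That bound is awkward, so I would instead treat shared labels as belonging to $\cF$ only: set $\tilde h=\tfrac{s_1+s_2}{s_1}h$ on branch 1 and $0$ on branch 2, which keeps the bound $s_1+s_2$.

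The final clause, where $\cG=\{c\}$ has a linearization commuting with all $F_\alpha$, is exactly Lemma~\ref{lem:commuting_extension} and needs no new argument. I expect the bookkeeping of zero-padding and shared labels to be the only subtle point.
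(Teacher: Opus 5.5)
Your proposal is correct and follows essentially the same proof as the paper. The lower bound comes from the subsetting part of Lemma~\ref{lem:mono1}, the upper bound from the $s_1/(s_1+s_2)$, $s_2/(s_1+s_2)$ mixture of the two optimal POVMs with rescaled, zero-padded estimators, and the final clause is deferred to Lemma~\ref{lem:commuting_extension}. Your handling of zero or infinite norms, non-attained optima and shared labels goes beyond the paper, which tacitly assumes $s_1,s_2>0$ and disjoint index sets; the halved two-branch estimator you set aside is in fact bounded by $(s_1+s_2)/2$, smaller than your estimate, though your final zero-padding choice is valid either way.
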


\begin{proof}
    The identity $\|\cF \cup \cG\|_k \geq \max(\|\cF\|_k, \|\cG\|_k)$ follows directly from monotonicity.
    Let $\{\Pi_j, h_\alpha(j)\}$ and $\{\Gamma_l, g_\beta(l)\}$ be the optimal strategies for $\cF$ and $\cG$ with scaling factors $s_1 = \|\cF\|_k$ and $s_2 = \|\cG\|_k$.
    By Definition~\ref{def:k-copy-qBern_coeff}, we have $s_1 := \max_{\alpha,j} |h_{\alpha}(j)| \geq|h_\alpha(j)|$ and $s_2 := \max_{\beta,l} |g_{\beta}(l)| \geq |g_\beta(l)|$ for all $\alpha \in \cA$ and $\beta \in \cB$.
    Define the mixed POVM
    \begin{align}
        \tilde{\Pi}_{(1,j)} = \frac{s_1}{s_1 + s_2}\,\Pi_j, \qquad \tilde{\Pi}_{(2,l)} = \frac{s_2}{s_1 + s_2}\,\Gamma_l,\label{eq:E15}
\end{align}
    which satisfies $\sum_{(t,m)} \tilde{\Pi}_{(t,m)} = \1$. Set estimators $\tilde{h}_\alpha(1,j) = (s_1+s_2)s_1^{-1}\,h_\alpha(j)$ and $\tilde{h}_\alpha(2,l) = 0$ for each $F_\alpha \in \cF$. The rescaling cancels the POVM dilution:
    \begin{align}
        \sum_{(t,m)} \tilde{h}_\alpha(t,m)\,\tilde{\Pi}_{(t,m)} = \sum_j \frac{s_1+s_2}{s_1}\,h_\alpha(j) \cdot \frac{s_1}{s_1+s_2}\,\Pi_j = \sum_j h_\alpha(j)\,\Pi_j = F_\alpha,\label{eq:E16}
\end{align}
        with $|\tilde{h}_\alpha(1,j)| \leq s_1 + s_2$ given that $s_1 \geq |h_\alpha(j)|$. The estimators for $\cG$ are constructed symmetrically with $\tilde{g}_\beta(1,j) = 0$ and $\tilde{g}_\beta(2,l) ={s_2}^{-1}{(s_1+s_2)}g_\beta(l)$, whose absolute value is also upper bounded by $s_1 + s_2$.
    This completes the proof.
\end{proof}

The gap between the bounds in~\eqref{eq:subadditivity} is generically non-zero and reflects algebraic structure beyond pairwise relations. For instance, $\|\{X,Y\}\|_1 = \sqrt{2}$ and $\|\{Z\}\|_1 = \|Z\|_\infty = 1$ give the window $\sqrt{2} \leq \|\{X,Y,Z\}\|_1 \leq \sqrt{2}+1$, while the true value is $\sqrt{3}$, as computed in Eq~\eqref{eq:case2-povm} above.
 More generally, the subset values $\{\|\cF\|_k, \|\cG\|_k\}$ do not uniquely determine $\|\cF \cup \cG\|_k$. The full operator algebra such as anticommutation relations is needed.


\appsubsection{Relation to the worst-case variance}
In the $\mathrm{QB}_k$-norm framework,
we use the $\mathrm{QB}_k$-norm as a sampling complexity indicator.
However, a more standard approach in estimation theory is to use the worst-case estimation variance (or worst-case mean-squared risk in the biased case) to bound the sampling complexity.
Indeed, the Bernstein norm and quantum Bernstein norm themselves already provide an upper bound for the worst-case estimation variance by Hoeffding's inequality (see Eq.~\eqref{eq:bn_bd_variance} for the classical case).
However, further discussion is needed to show that computing the variance bound is computationally intractable compared to the $\mathrm{QB}_k$-norm approach.

\paragraph{Worst-case second moment for a single property.}
Fix any $k$-replica unbiased estimator $\hat{f}$ for a function $f(\rho)$ generated by strategy $\mathcal{S}:= \{\Pi_j, h(j)\}_{j \in \cJ}$ , the estimation variance is related to the second moment of the estimator $\mathbb{E}_\rho[\hat{f}^2] = \sum_j |h(j)|^2 \tr(\Pi_j \rho^{\ox k})$ through $\mathrm{Var}_\rho(\hat{f}) = \mathbb{E}_\rho[\hat{f}^2] - (\mathbb{E}_\rho[\hat{f}])^2$. Unbiasedness gives $\mathbb{E}_\rho[\hat{f}] = f(\rho)$, which is bounded if we consider a normalized property $\cF$ with $\max_\rho |f(\rho)| \leq 1$.
The worst-case variance of property $f(\rho)$ is thus related to the second moment $M_2(\mathcal{S},\rho) := \mathbb{E}_\rho[\hat{f}^2]$.
To search for the best strategy that minimizes $M_2(\mathcal{S},\rho)$ with respect to an arbitrary state $\rho$, we define the second-moment operator $\mathbf{M}(\mathcal{S})$ with respect to the strategy $\mathcal{S}$ that gives the second moment of the estimator $\hat{f}$ through $M_2(\mathcal{S},\rho) = \tr(\mathbf{M}(\mathcal{S}) \rho^{\ox k})$:
\begin{align}
    \mathbf{M}(\mathcal{S}) := \sum_{j \in \cJ} |h(j)|^2\, \Phi_k(\Pi_j).\label{eq:E33}
\end{align}
Here $\Phi_k$ is the $k$-fold symmetrization map.
Similar to the proof in linearization Lemma~\ref{thm:structure_fcomp}, we can show that the second-moment operator $\mathbf{M}(\mathcal{S})$ is not unique, and thus should be considered up to its symmetrization over replicas.
Below, we suppose that $\Pi_j$ itself is a symmetric POVM element and omit the symmetrization operator $\Phi_k$ for simplicity.
The worst-case second moment is then given by $M_2(\mathcal{S}) := \max_\rho M_2(\mathcal{S},\rho) = \max_\rho \tr(\mathbf{M}(\mathcal{S}) \rho^{\ox k})$. We also define the minimax second moment as
\begin{align}
   M_2^*(f) := \min_{\mathcal{S} = \{h(j), \Pi_j\}_{j \in \mathcal{J}}} \quad & M_2(\mathcal{S}) = \min_{\mathcal{S}} \left\|\mathbf{M}(\mathcal{S})\right\|_{s,\mathrm{inj}}  =\min_{\mathcal{S}} \left\|\sum_{j \in \cJ} |h(j)|^2\, \Pi_j\right\|_{s,\mathrm{inj}} ,\label{eq:E34}
\end{align}
which is related to the injective tensor norm of $\mathbf{M}(\mathcal{S})$ in Eq.\eqref{eq:injective-norm}.
We first prove that the square of the symmetric linearization of $f$ can serve as a lower bound of $\mathbf{M}(\mathcal{S})$.
\begin{lemma}\label{lem:variance-lower-bound-symmetric}
    If the function $f(\rho)$ admits  a symmetric $k$-linearization $F$, then for any $k$-replica unbiased estimation strategy $\mathcal{S}:= \{\Pi_j, h(j)\}_{j \in \cJ}$, the second-moment operator satisfies $\mathbf{M}(\mathcal{S}) \succeq F^2$.
\end{lemma}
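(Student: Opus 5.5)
The proof is an operator Cauchy--Schwarz (Kadison-type) inequality applied after symmetrization. Throughout, as in the paragraph above the lemma, each $\Pi_j$ is taken to be symmetric. If it is not, we replace it by $\Phi_k(\Pi_j)$. This is again a POVM, gives the same statistics on $\rho^{\otimes k}$, and leaves $\mathbf{M}(\mathcal{S})$ unchanged by its definition in Eq.~\eqref{eq:E33}.

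\emph{Step 1.} Lemma~\ref{lem:linearization_symmetrization} gives the exact operator identity
\begin{align}
F=\sum_j h(j)\,\Pi_j, \qquad \mathbf{M}(\mathcal{S})=\sum_j |h(j)|^2\,\Pi_j .
\end{align}
This identity is what lets us compare operators rather than just expectation values on product states.

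\emph{Step 2.} Consider
\begin{align}
\mathbf{M}(\mathcal{S})-F^2=\sum_j h(j)^2\Pi_j-\Big(\sum_j h(j)\Pi_j\Big)\Big(\sum_l h(l)\Pi_l\Big).
\end{align}
Define the isometry-like map $V:\cH^{\otimes k}\to\cH^{\otimes k}\otimes\mathbb{C}^{|\cJ|}$ by $V\ket{\psi}=\sum_j \Pi_j^{1/2}\ket{\psi}\otimes\ket{j}$. Completeness $\sum_j\Pi_j=\1$ gives $V^\dagger V=\1$. Let $D=\1\otimes\sum_j h(j)\proj{j}$, which is Hermitian because $h$ is real. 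Then $V^\dagger D V=F$ and $V^\dagger D^2 V=\mathbf{M}(\mathcal{S})$, so
\begin{align}
\mathbf{M}(\mathcal{S})-F^2=V^\dagger D(\1-VV^\dagger)DV\succeq 0,
\end{align}
since $VV^\dagger$ is a projector when $V^\dagger V=\1$.

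An equivalent elementary route is to check, for every vector $\ket{\psi}$,
\begin{align}
\|F\psi\|^2=\Big\|\sum_j \Pi_j^{1/2}\big(h(j)\Pi_j^{1/2}\psi\big)\Big\|^2\le\Big\|\sum_j\Pi_j\Big\|\sum_j h(j)^2\|\Pi_j^{1/2}\psi\|^2=\bra{\psi}\mathbf{M}(\mathcal{S})\ket{\psi},
\end{align}
which is the Cauchy--Schwarz inequality for the row operator $(\Pi_j^{1/2})_j$ with $\|\sum_j\Pi_j\|=1$.

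The only point needing care is Step 1: $F$ must be the symmetric linearization, so that an equality of expectations on $\rho^{\otimes k}$ upgrades to an operator equality. Lemma~\ref{lem:linearization_exist} and Lemma~\ref{lem:linearization_symmetrization} supply exactly this. Everything afterwards is the standard Kadison--Schwarz inequality for the unital completely positive map $X\mapsto V^\dagger X V$.
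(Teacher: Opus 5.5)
Your proof is correct and follows essentially the same route as the paper: a Naimark isometry $V=\sum_j\Pi_j^{1/2}\otimes\ket{j}$ combined with the Kadison--Schwarz inequality for the diagonal estimator operator $\1\otimes\sum_j h(j)\ket{j}\!\bra{j}$. The differences are cosmetic. You symmetrize the POVM before dilating, whereas the paper composes the compression with $\Phi_k$. You also verify the inequality directly via $\mathbf{M}(\mathcal{S})-F^2=V^\dagger D(\1-VV^\dagger)DV\succeq 0$ instead of citing Kadison's theorem.
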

\begin{proof}
    Based on Theorem~\ref{thm:structure_fcomp}, the symmetric $k$-linearization $F$ satisfies $F = \sum_j h(j)\,\Phi_k(\Pi_j)$.
    We then use the Naimark dilation to reduce the claim to the Kadison–Schwarz inequality~\cite{kadison1952generalized}.
    Define the Naimark isometry $V: \cH^{\ox k} \to \cH^{\ox k} \otimes \cH_{\mathrm{anc}}$ by  $V = \sum_{j} \Pi_j^{1/2} \otimes \ket{j}$ and $V^\dagger = \sum_{j} \Pi_j^{1/2} \otimes \bra{j}$, such that $V\ket{\psi} = \sum_j \Pi_j^{1/2}\ket{\psi} \otimes \ket{j}$. Here $\{\ket{j}\}$ is an orthonormal basis for the ancillary space $\cH_{\mathrm{anc}}$.
    Since $V^\dagger V = \sum_j \Pi_j = \1^{\ox k}$, the map $\Psi(\cdot) := \Phi_k\left[V^\dagger(\cdot)V\right]$, where $\Phi_k$ is the $k$-fold symmetrization map, is a unital completely positive (hence positive) map.
    Introducing the diagonal operator $H := \sum_j h(j)\,\proj{j}$ on $\cH_{\mathrm{anc}}$, we have
    \begin{align}
        &\Psi(\1 \otimes H) =\Phi_k\left[\left(\sum_{m} \Pi_m^{1/2} \otimes \bra{m}\right) \left(\sum_j h(j)\,\1 \otimes \proj{j}\right) \left(\sum_{m'} \Pi_{m'}^{1/2} \otimes \ket{m'}\right)\right] =  \sum_j h(j)\,\Phi_k(\Pi_j) = F,\label{eq:E35}
\\
        &\Psi(\1 \otimes H^2) = \Phi_k\left[\left(\sum_{m} \Pi_{m}^{1/2} \otimes \bra{m}\right) \left(\sum_j  |h(j)|^2\,\1 \otimes \proj{j}\right) \left(\sum_{m'} \Pi_{m'}^{1/2} \otimes \ket{m'}\right)\right] =  \sum_j |h(j)|^2\,\Phi_k(\Pi_j) = \mathbf{M}.\label{eq:E36}
\end{align}
    The Kadison–Schwarz Inequality~\cite{kadison1952generalized} states that for the positive and unital linear map $\Psi$, any self-adjoint operator $O$ satisfies $ \Psi(O^2) \succeq \Psi(O)^2$.
    Applying this to the case $O = \1 \otimes H$ gives $ \mathbf{M} = \Psi(O^2) \succeq \Psi(O)^2 = F^2$.
\end{proof}
Lemma~\ref{lem:variance-lower-bound-symmetric} implies that $\tr(\mathbf{M}(\mathcal{S}) \rho^{\ox k}) \geq \tr(F^2\,\rho^{\ox k})$ for any state $\rho$.
Then any unbiased estimator $\hat{f}$ that is constructed from the $k$-replica strategy $(\Pi_j, h(j))$ satisfies
    \begin{align}
        M_2(\hat{f}) = \max_\rho \mathbb{E}_\rho[\hat{f}^2]  = \max_\rho \tr(\mathbf{M}(\mathcal{S}) \rho^{\ox k}) \geq \max_\rho \tr(F^2 \rho^{\ox k}).\label{eq:E37}
\end{align}
Applying this inequality to the single non-linear estimator $f(\rho) = \mathrm{tr}(\rho\cdot \rho^*)$ proves a no-go theorem for the two-replica unbiased estimation, which is sharply different from the case of purity estimation $f(\rho) = \mathrm{tr}(\rho^2)$ that admits $O(1)$ sampling complexity under Bell measurement.
\begin{corollary}\label{cor:variance-lower-bound-rho-rho-star}
Any 2-replica unbiased estimation strategy for $f(\rho) = \mathrm{tr}(\rho \rho^*), \forall \rho \in \cH(d)$ satisfies $M_2^*(f) \geq d$.
\end{corollary}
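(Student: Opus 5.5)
Following the preceding discussion, the plan is to combine Lemma~\ref{lem:variance-lower-bound-symmetric} with the bound in Eq.~\eqref{eq:E37}. This reduces the corollary to one computation: the symmetric $2$-linearization $F$ of $f(\rho)=\tr(\rho\rho^*)$, and then a lower bound on $\max_\rho \tr(F^2\rho^{\ox 2})$.

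\textbf{Step 1: linearize.} Writing $\rho^*=\rho^T$ in the computational basis, $\tr(\rho\rho^T)=\sum_{i,j}\rho_{ij}\rho_{ij}$. This equals $\tr(\Phi^+_{\mathrm{un}}\,\rho\ox\rho)$ up to a partial transpose, where $\ket{\Omega}=\sum_i\ket{ii}$ is the unnormalized maximally entangled vector. Indeed, $\bra{\Omega}(\rho\ox\rho)\ket{\Omega}=\sum_{i,j}\rho_{ij}\rho_{ij}=\tr(\rho\rho^T)$. A candidate linearization is therefore $F_0=\proj{\Omega}$, which is already invariant under the swap $W_{12}$ since $W_{12}\ket{\Omega}=\ket{\Omega}$. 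By the uniqueness in Lemma~\ref{lem:linearization_exist}, $F=\proj{\Omega}$ is the symmetric Hermitian linearization.

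\textbf{Step 2: evaluate $F^2$.} We have $F^2=\langle\Omega|\Omega\rangle\,\proj{\Omega}=d\,F$. Hence $\tr(F^2\rho^{\ox 2})=d\,\tr(\rho\rho^*)$.

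\textbf{Step 3: maximize over states.} Choose a real pure state, e.g.\ $\rho=\proj{0}$. Then $\rho^*=\rho$ and $\tr(\rho\rho^*)=1$. Therefore $\max_\rho\tr(F^2\rho^{\ox2})\ge d$, and Eq.~\eqref{eq:E37} gives $M_2(\mathcal S)\ge d$ for every unbiased $2$-replica strategy $\mathcal S$. Taking the minimum over $\mathcal S$ yields $M_2^*(f)\ge d$.

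The only point needing care is Step 1. One must confirm that $\proj{\Omega}$ is Hermitian and permutation-invariant, so that Lemma~\ref{lem:variance-lower-bound-symmetric} applies to it directly with no further symmetrization. This contrasts with purity, whose linearization $W_{12}$ satisfies $W_{12}^2=\1$ and so carries no $d$ factor. I expect no real obstacle beyond getting the index convention right in $\bra{\Omega}\rho\ox\rho\ket{\Omega}$.
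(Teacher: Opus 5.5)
Your proposal is correct and is essentially the paper's proof. The paper likewise takes the symmetric linearization $F = d\,\proj{\Phi} = \proj{\Omega}$ with $\ket{\Phi} = d^{-1/2}\sum_j \ket{jj}$, checks $\tr(F\rho^{\ox 2}) = \sum_{j,k}\rho_{jk}\rho_{jk}$, and applies Lemma~\ref{lem:variance-lower-bound-symmetric} through Eq.~\eqref{eq:E37} to get $\max_\rho \tr(F^2\rho^{\ox 2}) = d$; you additionally spell out what it leaves implicit, namely swap invariance of $\ket{\Omega}$, $F^2 = dF$, and that the value $d$ is reached at a real pure state.
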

\begin{proof}
The symmetric $2$-linearization of $\mathrm{tr}(\rho\rho^*)$ is given by $ F = d \ket{\Phi}\bra{\Phi}$ with $\ket{\Phi} = \sqrt{d^{-1}} \sum_{j = 1}^d\ket{jj}$.
To verify this, one uses the fact $(\rho^*)_{kj} = \rho_{jk}$ and computes $\tr(F \rho^{\ox 2}) =\sum_{j,k = 1}^d \rho_{jk} \rho_{jk} = \tr(\rho \cdot \rho^*) = f(\rho)$.
Then based on Lemma~\ref{lem:variance-lower-bound-symmetric}, for any $2$-replica unbiased estimation strategy $\mathcal{S}$, we have $\max_\rho \mathbb{E}_\rho[\hat{f}^2] \geq \max_\rho \tr(F^2 \rho^{\ox 2}) = d$. This means that any unbiased $2$-replica estimation strategy for $f(\rho) = \mathrm{tr}(\rho\cdot \rho^*)$ with respect to an arbitrary density matrix must have a second moment lower bounded by $d$.
\end{proof}

\paragraph{Worst-case second moment for multiple properties.}
For the case of multiple properties $\cF = \{f_\alpha\}_{\alpha \in \cA}$,
let $\cS := \{(\Pi_j,h_\alpha(j))\}_{j \in \cJ,\alpha \in \cA}$ be a valid $k$-replica estimation strategy for a set of properties $\cF = \{f_\alpha\}_{\alpha \in \cA}$.
The strategy $\cS = \{(\Pi_j,h_\alpha(j))\}_{j \in \cJ, \alpha \in \cA}$ now constructs a $k$-replica unbiased estimator $\hat{f}_{\alpha}$ for each function $f_{\alpha}(\rho)$ respectively.
Define the set of \emph{second-moment operators} $\{\mathbf{M}_\alpha(\cS)\}_{\alpha \in \cA}$ that satisfy $\mathbb{E}_{\rho}(\hat{f}^2_{\alpha}) = \tr({\mathbf{M}_{\alpha}(\cS)} \rho^{\ox k})$ for each $\alpha \in \mathcal{A}$:
\begin{align}
     \mathbf{M}_\alpha(\cS) = \sum_{j \in \cJ} |h_\alpha(j)|^2\, \Phi_k(\Pi_j), \quad \forall \alpha \in \cA.\label{eq:second_moment_op}
\end{align}
The individual lower bound also applies through $\mathbf{M}_\alpha(\cS) \succeq F_{\alpha}^2$ for each $\alpha \in \cA$ and arbitrary unbiased estimation strategy $\cS$.
However, this lower bound is not tight due to the incompatibility of the properties in $\cF$.
Using the $\|\cdot\|_{s,\mathrm{inj}}$-norm notation defined in Eq.~\eqref{eq:injective-norm}, an optimization problem in analogy with $\mathrm{QB}_k$-norm can be proposed:

\begin{equation}
     {
    \begin{aligned}
    M_2^*(\cF,\mathcal{C}) := \min_{(h_\alpha(j), \Pi_j)} \quad & \max_{\rho,\alpha} \mathrm{tr}(\mathbf{M}_\alpha(\mathcal{S}) \rho^{\ox k}) = \max_{\alpha} \left\|\sum_{j \in \cJ} |h_\alpha(j)|^2\, \Pi_j\right\|_{s,\mathrm{inj}} \\
    \text{Subject to:} \quad & \sum_{j} h_\alpha(j)\,\Pi_j = F_\alpha, \quad \forall\,\alpha \in \cA ,\\
    & \sum_j \Pi_j = \,\1^{\ox k},\quad  \Pi_j \in \mathcal{C}.
    \end{aligned}
    }\label{variance-sdp-problem}
\end{equation}

This benchmark is generally sharper than the $\mathrm{QB}_k$-norm,
and has been partially investigated by \cite{innocenti2023shadow} in the single-replica case and for certain simple choices of $\Pi_j$.
It is substantially less tractable compared to the $\mathrm{QB}_k$-norm problem.
The reason is that the bounded-output constraint
$|h_\alpha(j)|\le r$ admits an exact sign-vector convexification (See proof in Lemma~\ref{lem:dual-robustness}) that converts the $\mathrm{QB}_k$ problem into a finite
linear conic program. In contrast, the second moment benchmark requires square terms of $h_{\alpha}(j)$ on the second-moment operator $\mathbf{M}_\alpha$.
This mirrors the classical situation that the exact minimax risk is usually hard
to compute directly~\cite{wu2020polynomial,jiao2015minimax}, and Bernstein-type quantities are useful to replace the original risk optimization by a tractable convex approximation
problem.

Here below we show how $M_2^*(\cF)$ relates to the $\mathrm{QB}_k$-norm through $\|\cF\|_{k}^2$.
We first show that the optimal strategy, recovered via the complementary slackness procedure, achieves variance \emph{exactly} $\|\cF\|_{k}^2$.
This gives an upper bound on the worst-case variance of the $k$-replica unbiased estimation strategy.

\begin{lemma}\label{lem:sdp-variance-saturation}
    Let $\cS^* = \{(\tilde{\Pi}_j, h_\alpha(j))\}_{j \in \mathcal{J}}$ be an estimation strategy for $\cF = \{f_{\alpha}(\rho)\}_{\alpha \in \cA}$ that reaches the optimal $\mathrm{QB}_k$-norm $s^*_k(\cF)$ in Definition~\ref{def:k-copy-qBern_coeff}, and gives a $k$-replica unbiased estimator $\hat{f}_{\alpha}$ for each function $f_{\alpha}(\rho)$ respectively.
    Then $\mathbb{E}_{\rho}(\hat{f}^2_{\alpha})  = \|\cF\|_{k}^2, \forall \alpha \in \cA, \rho \in \mathcal{H}$.
\end{lemma}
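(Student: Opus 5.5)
The plan is to read $\cS^*$ in the sense fixed by the sentence just before the lemma: the optimal strategy \emph{recovered via the complementary-slackness procedure}, namely the sign-vector strategy of Eq.~\eqref{eq:POVM_recovery} at $b=0$. I will not try to treat an arbitrary minimizer of Definition~\ref{def:k-copy-qBern_coeff}. For a generic minimizer, an outcome with $|h_\alpha(j)|<\|\cF\|_k$ makes the second moment strictly smaller, so equality can only hold for this recovered representative. The proof then reduces to the observation that this strategy has estimator values of constant modulus.

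\textbf{Step 1: the recovered strategy at $b=0$.} By Lemma~\ref{lem:dual-robustness} and the KKT discussion, there are optimal $r^*=\|\cF\|_k$ and $\{N_c^*\}_{c\in\{-1,1\}^{|\cA|}}\subset\mathcal K_{\mathcal C}$. At $b=0$ the residuals $B^*_\alpha$ vanish inside $\mathrm V_k$, as argued in the Slater step. The recovered strategy is then
\begin{align}
\tilde\Pi_c=N_c^*/r^*, \qquad h_\alpha(c)=r^*c_\alpha .
\end{align}
I will verify three facts directly from the constraints of Eq.~\eqref{eq:primal2}:
\begin{itemize}
\item $\tilde\Pi_c\succeq0$, which follows from $N_c^*\in\mathcal K_{\mathcal C}$;
\item $\sum_c\tilde\Pi_c=\1^{\otimes k}$, which follows from $\sum_cN_c^*=r^*\1^{\otimes k}$;
\item $\sum_c h_\alpha(c)\tilde\Pi_c=\sum_c c_\alpha N_c^*=F_\alpha$.
\end{itemize}
So $\cS^*$ is unbiased and attains scaling exactly $r^*$. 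Sign vectors with $N_c^*=0$ are null outcomes and can be discarded without affecting anything.

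\textbf{Step 2: the second moment is constant.} By Eq.~\eqref{eq:second_moment_op}, the second-moment operator is
\begin{align}
\mathbf M_\alpha(\cS^*)=\sum_c (r^*c_\alpha)^2\,\Phi_k(\tilde\Pi_c)=(r^*)^2\,\Phi_k\Bigl(\sum_c\tilde\Pi_c\Bigr)=(r^*)^2\,\1^{\otimes k}.
\end{align}
Since $c_\alpha^2=1$ for every outcome, this holds for every $\alpha$. Hence $\mathbb E_\rho(\hat f_\alpha^2)=\tr(\mathbf M_\alpha\rho^{\otimes k})=\|\cF\|_k^2$ for every state $\rho$, which is the claimed equality.

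\textbf{Main obstacle.} The only delicate point is justifying that an optimal strategy of this sign-vector form exists with value exactly $\|\cF\|_k$, i.e.\ that the infimum is attained. I will take this from the proof of Lemma~\ref{lem:dual-robustness}. There, any feasible $(\Pi_j,h_\alpha)$ with scaling $r$ is coarse-grained through $\lambda_j(c)$ of Eq.~\eqref{eq:B25} into $M_c$, so the sign-vector program has the same optimal value. Attainment then follows from strong duality together with the KKT existence statement. Once this is in hand, Step 2 is a one-line computation.
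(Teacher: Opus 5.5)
Your proposal is correct and matches the paper's proof. Both take the sign-vector strategy recovered in Eq.~\eqref{eq:POVM_recovery}, note that $|h_\alpha(c)|=r^*=\|\cF\|_k$ on every outcome, and conclude that $\mathbf M_\alpha=(r^*)^2\1^{\otimes k}$. Your caveat is also correct: the equality holds only for this recovered representative, not for an arbitrary minimizer of Definition~\ref{def:k-copy-qBern_coeff}, and the paper leaves this restriction implicit.
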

\begin{proof}
    From the complementary slackness in Eq.\eqref{eq:POVM_recovery}, $h_\alpha(c) = s^*_k(\cF)\,c_\alpha$ for every active sign vector $c$, so $|h_\alpha(c)| = s^*_k(\cF) =\|\cF\|_{k}$ universally. This then gives the second-moment operator as $\mathbf{M}_\alpha(\cS^*) = \sum_{c \in \cJ} |h_\alpha(c)|^2 \tilde{\Pi}_c = {[s^*_k(\cF)]}^2 \1^{\ox k}$. By definition of $\mathbf{M}_\alpha(\cS^*)$, we have $\mathbb{E}_{\rho}(\hat{f}^2_{\alpha})= \tr(\mathbf{M}_\alpha \rho^{\ox k}) =\|\cF\|_{k}^2$.
\end{proof}

Because the optimal strategy $\cS^*$ always outputs a result $h_{\alpha}(j)$ with the same magnitude $|h_{\alpha}(j)| = \|\cF\|_{k}$,
Lemma~\ref{lem:sdp-variance-saturation} indicates that the optimal strategy $\cS^*$ in terms of minimizing the $\mathrm{QB}_k$-norm
always gives an $\alpha$ independent and state independent second-moment operator.
We now propose a fundamental lower bound based on this geometry of the measurement strategy.

\begin{definition}[Concentration factor of strategy]\label{def:concentration-factor}
    For any estimation strategy $\cS = \{(\Pi_j,h_\alpha(j))\}_{j \in \cJ, \alpha \in \cA}$,
    let $\rho$ be a fixed state. The concentration index set is defined as:
    \begin{align}
       \cJ_{\alpha}^*(\eta, \mathcal{S}):= \{j \in \cJ: |h_\alpha(j)| \geq (1- \eta) \max_{j \in \mathcal{J}, \alpha \in \cA} |h_\alpha(j)|\}, \quad \eta \in [0,1].\label{eq:E40}
\end{align}
    Then the concentration factor $w_{\eta}(\cS, \rho)$ is defined as $w_{\eta}(\cS, \rho) :=\min_{\alpha \in \cA}\tr\left(\sum_{j\in \cJ_{\alpha}^*(\eta, \mathcal{S})}\Pi_{j} \rho^{\ox k}\right)$.
\end{definition}

The geometric concentration bound perfectly captures how sub-optimal measurement geometries inflate the actual variance.

\begin{lemma}[Variance lower bounds]\label{lem:variance-lower-bound}
For any function set $\cF :=\{f_\alpha(\rho)\}_{\alpha \in \cA}$,
consider a strategy $\mathcal{S} = \{(\Pi_j, h_\alpha(j))\}_{j \in \cJ, \alpha \in \cA}$, which gives a $k$-replica unbiased estimator $\hat{f}$ and satisfies $w_{\eta}(\mathcal{S}, \rho) \geq r$ for the fixed state $\rho$ and $\eta \in [0,1]$,
then $\mathbb{E}_\rho[\hat{f}_\alpha^2] \geq  r \cdot \left(1- \eta\right)^2 \cdot \|\cF\|_{k}^2, \forall \alpha \in \cA$.
\end{lemma}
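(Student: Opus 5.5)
The plan is to bound the second moment from below by discarding every outcome outside the concentration index set and then comparing the remaining coefficients with the quantum Bernstein norm. Fix $\alpha\in\cA$ and write $s(\mathcal S):=\max_{j\in\cJ,\beta\in\cA}|h_\beta(j)|$ for the estimator range of $\mathcal S$.

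First I will write the second moment through the second-moment operator in Eq.~\eqref{eq:second_moment_op}:
\begin{align}
\mathbb E_\rho[\hat f_\alpha^2]=\sum_{j\in\cJ}|h_\alpha(j)|^2\tr(\Pi_j\rho^{\otimes k}) .
\end{align}
If the $\Pi_j$ are not symmetric, I will pass to $\Phi_k(\Pi_j)$. This does not change $\tr(\Pi_j\rho^{\otimes k})$, because $\rho^{\otimes k}$ is permutation invariant.

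Every summand is nonnegative, since $\tr(\Pi_j\rho^{\otimes k})\ge 0$. So I can drop all $j\notin\cJ_\alpha^*(\eta,\mathcal S)$. On the remaining indices, Eq.~\eqref{eq:E40} gives $|h_\alpha(j)|\ge(1-\eta)s(\mathcal S)\ge 0$, and therefore $|h_\alpha(j)|^2\ge(1-\eta)^2 s(\mathcal S)^2$. This yields
\begin{align}
\mathbb E_\rho[\hat f_\alpha^2]\ge(1-\eta)^2 s(\mathcal S)^2\,\tr\Bigl(\sum_{j\in\cJ_\alpha^*(\eta,\mathcal S)}\Pi_j\rho^{\otimes k}\Bigr).
\end{align}
The trace on the right is at least $\min_{\beta\in\cA}\tr\bigl(\sum_{j\in\cJ_\beta^*}\Pi_j\rho^{\otimes k}\bigr)=w_\eta(\mathcal S,\rho)$, which by hypothesis is at least $r$ (Definition~\ref{def:concentration-factor}).

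Next I will compare $s(\mathcal S)$ with the norm. The strategy $\mathcal S$ is a feasible unbiased $k$-replica strategy for $\cF$ in the sense of Definition~\ref{def:k-copy-qBern_coeff}, with scaling $s(\mathcal S)$. Since $\|\cF\|_k$ is the minimum of the scaling over all such strategies, $s(\mathcal S)\ge\|\cF\|_k\ge 0$, and hence $s(\mathcal S)^2\ge\|\cF\|_k^2$. Combining the three inequalities gives $\mathbb E_\rho[\hat f_\alpha^2]\ge r(1-\eta)^2\|\cF\|_k^2$ for every $\alpha\in\cA$.

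The argument is elementary, so the main step needing care is bookkeeping rather than an estimate. The threshold in $\cJ^*_\alpha$ uses the global maximum over both $\alpha$ and $j$, not the per-$\alpha$ maximum, and this is exactly what lets me invoke the norm, which is itself a global maximum. The min over $\alpha$ in $w_\eta$ is what makes the bound hold uniformly in $\alpha$.
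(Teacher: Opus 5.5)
Your proposal is correct and follows the paper's proof essentially step for step: you drop the outcomes outside $\cJ_\alpha^*(\eta,\mathcal S)$ using nonnegativity, bound the surviving coefficients by $(1-\eta)^2 s(\mathcal S)^2$, and then use $s(\mathcal S)\ge\|\cF\|_k$ (feasibility of $\mathcal S$ in Definition~\ref{def:k-copy-qBern_coeff}) together with $w_\eta(\mathcal S,\rho)\ge r$. You also correctly note that the threshold in Eq.~\eqref{eq:E40} uses the global maximum over $(j,\alpha)$, and that this is exactly what makes the comparison with $\|\cF\|_k$ go through.
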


\begin{proof}
The bound can be obtained by discarding the non-extreme outcomes and substituting the peak value, we have:
\begin{align}
     \mathbb{E}_\rho[\hat{f}_\alpha^2] &=  \sum_{j \in \cJ} |h_\alpha(j)|^2 \tr(\Pi_j \rho^{\ox k}) \geq \sum_{j \in \cJ_{\alpha}^*(\eta, \mathcal{S})} |h_\alpha(j)|^2 \tr(\Pi_j \rho^{\ox k})\label{eq:E41}
\\
    &\geq  (1- \eta)^2 \left[\max_{j \in \mathcal{J}, \alpha \in \cA} |h_\alpha(j)|\right]^2 \tr\left(\sum_{j \in \cJ_{\alpha}^*(\eta, \mathcal{S})}\Pi_j \rho^{\ox k}\right) \geq (1- \eta)^2 \|\cF\|_{k}^2 w_{\eta}(\cS,\rho).\label{eq:E42}
\end{align}
The final inequality follows from the universal definition of the $\mathrm{QB}_k$-norm $\|\cF\|_{k}\leq s_{\mathrm{strat}}$ and the definition of the concentration factor $w_{\eta}(\cS)$ in Definition~\ref{def:concentration-factor}.
\end{proof}

\paragraph{Variance lower bounds for multi-copy shadows.}
Shadow tomography~\cite{aaronson2018shadow} is a family of strategies that enable us to learn many properties of a quantum state with very few measurements.
One critical problem is that given a known set of target properties $\cF$,
 what is the optimal shadow strategy to estimate it?
Using the $\mathrm{QB}_k$-norm framework, we partially answer this question by showing an inherent limitation of the classical shadow strategy for a fixed target property set.
We first review the standard non-adaptive multi-copy local shadow tomography~\cite{liu2026auxiliary}:
\begin{itemize}
    \item Set $U \sim \cU$, where $\cU$ is a $d^k$-dimensional unitary ensemble.
    \item Measure the state $U \rho^{\ox k} U^\dagger$ in the computational basis with outcomes $\ket{\mathbf{b}}\!\bra{\mathbf{b}}$. Record the classical description of the circuit $U$ and the measurement outcome string $\mathbf{b} \in \mathbb{F}_d^{kn}$.
    \item \textit{Virtually} compute the shadow matrix $\hat \sigma_{U, \mathbf{b}}:= \cM^{-1}(U^\dagger \ket{\mathbf{b}}\!\bra{\mathbf{b}} U)$, where $\cM$ is the entanglement-breaking channel on the Hilbert space $\cH^{\ox k}$ that maps $\sigma \to \mathbb{E}_{U \sim \cU} \sum_{\mathbf b}\bra{\mathbf b}U \sigma U^\dagger \ket{\mathbf{b}} \cdot  U^\dagger\ket{\mathbf b}\!\bra{\mathbf b} U$.
\end{itemize}
The explicit form of the channel $\cM$ relies on the ensembles $\cU_i$.
For an informationally complete classical shadow protocol, the inverse $\cM^{-1}$ exists.
As shown in Lemma~\ref{lem:single_shadow_unbiased},
the shadow matrix $\hat{\sigma}$ can be used to construct an arbitrary unbiased estimator of $\tr(O \rho^{\ox k})$ for an arbitrary observable $O \in \mathrm{Herm}(\mathcal{H})$.

\begin{lemma}[Unbiasedness of multi-copy shadows]\label{lem:single_shadow_unbiased}
    The estimator $\hat{o}^{(k)} := \tr(O \hat{\sigma})$ is an unbiased estimator for the expectation $\tr(O \rho^{\ox k})$, where $\hat{\sigma}$ is sampled according to the multi-copy shadow tomography protocol above.
\end{lemma}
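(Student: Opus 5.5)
The argument is a short chain of equalities: first average the shadow over the measurement outcome conditioned on the sampled unitary, then over the unitary ensemble, and recognize the result as the measurement channel applied to $\rho^{\ox k}$. The inverse channel then returns $\rho^{\ox k}$, and trace duality finishes the proof. Fix an observable $O$ on $\cH^{\ox k}$; the statement for $O\in\mathrm{Herm}(\cH)$ follows by taking its symmetric embedding as in Eq.~\eqref{eq:sym-embed}.

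First, fix $U$. The outcome $\mathbf b$ occurs with probability $\bra{\mathbf b}U\rho^{\ox k}U^\dagger\ket{\mathbf b}$, so
\begin{align}
\mathbb E[\hat\sigma]=\mathbb E_{U\sim\cU}\sum_{\mathbf b}\bra{\mathbf b}U\rho^{\ox k}U^\dagger\ket{\mathbf b}\,\cM^{-1}\!\left(U^\dagger\ket{\mathbf b}\!\bra{\mathbf b}U\right).
\end{align}
Since $\cM^{-1}$ is a fixed linear map, independent of $U$ and $\mathbf b$, it can be pulled outside the finite sum and the expectation over $U$. What remains inside is $\cM(\rho^{\ox k})$, exactly by the definition of $\cM$. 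Informational completeness gives the existence of $\cM^{-1}$ on $\mathrm{Herm}(\cH^{\ox k})$, so $\mathbb E[\hat\sigma]=\cM^{-1}\cM(\rho^{\ox k})=\rho^{\ox k}$.

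Second, the map $\hat\sigma\mapsto\tr(O\hat\sigma)$ is linear, so $\mathbb E[\hat o^{(k)}]=\tr(O\,\mathbb E[\hat\sigma])=\tr(O\rho^{\ox k})$, which is the claimed unbiasedness.

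The main obstacle is justifying the interchange of $\cM^{-1}$ with the expectation over $U$. If $\cU$ is a continuous ensemble such as the Haar measure, one must make sure that $\cM$ is well defined and that the linear extension of $\cM^{-1}$ commutes with the integral. I would handle this by noting that everything lives in the finite-dimensional space $\mathrm{Herm}(\cH^{\ox k})$. There, $\cM^{-1}$ is a bounded linear map, and the integrand is bounded in norm because it is a probability times a rank-one projector. Bounded linear maps commute with Bochner integrals, so the interchange is valid.
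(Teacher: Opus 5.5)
Your proof is correct and is the same argument as the paper's: pull the fixed linear map $\cM^{-1}$ and the trace against $O$ outside the average over $(U,\mathbf b)$, recognize $\cM(\rho^{\ox k})$, and invert. Your finite-dimensional Bochner-integral remark only makes explicit the interchange the paper performs without comment. One small point: drop your opening aside about symmetric embedding. For $\tr(O\rho^{\ox k})$ to make sense, $O$ must act on $\cH^{\ox k}$, and embedding a single-copy $O$ would give an unbiased estimator of $\tr(O\rho)$, not of $\tr(O\rho^{\ox k})$.
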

\begin{proof}
$ \mathbb{E}_{\hat \sigma} \tr(O \hat{\sigma}) = \mathbb{E}_{U \sim \cU}\mathbb{E}_{\mathbf{b} \sim U \rho^{\ox k} U^\dagger} \tr\left[O \cM^{-1}(U^\dagger \ket{\mathbf{b}}\!\bra{\mathbf{b}} U)\right]=  \tr\left[O \cM^{-1}(\mathbb{E}_{U\sim \cU}\mathbb{E}_{\mathbf{b} \sim U \rho^{\ox k} U^\dagger}U^\dagger \ket{\mathbf{b}}\!\bra{\mathbf{b}} U)\right]=\tr(O \rho^{\ox k}).$
\end{proof}

By fixing the target observable set $O \in \{F_{\alpha}\}_{\alpha \in \cA}$,
any classical shadow strategy gives a specific unbiased estimation strategy $\mathcal{S}_{\mathcal{U}}:= \{(\Pi_j, h_{\alpha}(j))\}_{j \in \mathcal{J}}$ that jointly estimates $f_{\alpha} := \tr(F_{\alpha} \rho^{\otimes k})$.
Specifically, the combined measurement outcome is indexed by $j = (U, \mathbf{b})$, and the corresponding POVM elements are $\Pi_j = p(U) U^\dagger \ket{\mathbf{b}}\!\bra{\mathbf{b}} U$, where $p(U)$ is the probability of drawing the unitary $U$ from the ensemble $\cU$.
The single-shot estimator value for the outcome $j =(U, \mathbf{b})$ is exactly the trace inner product between the target operator and the inverted shadow:
\begin{align}
    h_\alpha(j) = \tr\left[F_\alpha \cdot \cM^{-1}(U^\dagger \ket{\mathbf{b}}\!\bra{\mathbf{b}} U)\right] = \bra{\mathbf{b}} U \cM^{-1}(F_\alpha) U^\dagger \ket{\mathbf{b}},\label{eq:E43}
\end{align}
where we use the self-adjoint property of the inverse measurement channel under the Hilbert-Schmidt inner product, i.e., $\tr\left(O \cM^{-1}(A)\right) = \tr\left(\cM^{-1}(O)A\right)$.
To show an effective lower bound on the estimation variance, we first calculate the scaling factor $s$ for $\mathcal{S}_{\mathcal{U}}$, as defined in Definition~\ref{def:k-copy-qBern_coeff}, and the concentration factor $w_{\eta}(\cS_{\mathcal{U}},\rho)$ in Definition~\ref{def:concentration-factor}:
\begin{align}
    s(\mathcal{S}_{\mathcal{U}}) &:=\max_{j \in \cJ, \alpha \in \cA} |h_\alpha(j)| = \max_{U \in \cU, \mathbf{b},\alpha \in \cA} \left| \bra{\mathbf{b}} U \cM^{-1}(F_\alpha) U^\dagger \ket{\mathbf{b}} \right|,\label{eq:E44}
\\
    w_{\eta}(\cS_{\mathcal{U}},\rho)
    &= \min_{\alpha \in \cA} \sum_{U}p(U)  \cdot \sum_{\substack{\mathbf{b}}} I\left[ |\tr(\hat \sigma_{U, \mathbf{b}} F_{\alpha})| \geq (1- \eta) s(\mathcal{S}_{\mathcal{U}})\right] \cdot \tr\left(\ket{\mathbf{b}}\!\bra{\mathbf{b}} \cdot U\rho^{\otimes k}U^\dagger\right).\label{eq:E45}
\end{align}
The concentration factor $w_{\eta}(\cS_{\mathcal{U}},\rho)$ actually quantifies the probability that the absolute value of the output of the classical shadow result, $|\tr(\hat \sigma_{U, \mathbf{b}} F_{\alpha})|$, is concentrated on its maximum value.
Crucially, by invoking the second-moment operator bound from Lemma~\ref{lem:variance-lower-bound}, we immediately obtain a worst-case estimation variance bound for classical shadows without needing to compute the full state-dependent trace:
\begin{align}
  \mathbb{E}_\rho[\hat{f}_\alpha^2] \geq  w_{\eta}(\cS_{\mathcal{U}},\rho) \cdot \left(1- \eta\right)^2 \cdot \|\cF\|_{k}^2 , \quad \forall \alpha \in \cA, \quad \forall \eta \in [0,1].\label{eq:E46}
\end{align}
This inequality bounds the worst-case estimation variance through the identity $\mathbb{E}_\rho[\hat{f}_\alpha^2] = \mathrm{Var}_\rho[\hat{f}_\alpha] + \mathbb{E}_\rho[\hat{f}_\alpha]^2$.
Specifically, when the result with sharply large and small values rarely happens (i.e., $w_{\eta}(\cS_{\mathcal{U}},\rho)$ is a constant) and the function property $\{f_{\alpha}\}_{\alpha \in \cA}$ we consider contains too much incompatibility (i.e., $\|\cF\|_{k}$ is exponentially large), the estimation variance of this classical shadow strategy will always be large.

\begin{corollary}[classical shadow lower bounds]\label{cor:shadow-variance-lower-bound}
    Consider a $k$-replica classical shadow strategy devoted to constructing an unbiased estimator $\hat{f}_{\alpha}$ for a set of polynomial properties $\cF = \{f_{\alpha}(\rho)\}_{\alpha \in \cA}$.
    Suppose the absolute value of the estimator's output is concentrated in the range $[(1- \eta)\cdot r, r]$ with probability $w_{\eta}$, where $r$ is the peak output value and $\eta \in [0,1]$.
    Then the estimation variance for each property $f_{\alpha}(\rho)$ is lower bounded by $\mathrm{Var}_\rho[\hat{f}_\alpha] \geq  w_{\eta} \cdot \left(1- \eta\right)^2 \cdot \|\cF\|_{k}^2 - f_{\alpha}(\rho)^2, \,\forall \alpha \in \cA$.
\end{corollary}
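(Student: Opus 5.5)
This corollary follows by specializing Lemma~\ref{lem:variance-lower-bound} to the classical-shadow strategy $\mathcal{S}_{\mathcal U}$ and then converting the second-moment bound into a variance bound. First, view the shadow protocol as the estimation strategy $\mathcal{S}_{\mathcal U}=\{(\Pi_j,h_\alpha(j))\}$. Its outcomes are $j=(U,\mathbf b)$, its POVM elements are $\Pi_j=p(U)U^\dagger\ket{\mathbf b}\!\bra{\mathbf b}U$, and its single-shot values $h_\alpha(j)$ are given by Eq.~\eqref{eq:E43}. Lemma~\ref{lem:single_shadow_unbiased} shows that each $\hat f_\alpha$ is an unbiased $k$-replica estimator of $f_\alpha$. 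Hence $\mathcal{S}_{\mathcal U}$ is a feasible point in Definition~\ref{def:k-copy-qBern_coeff}, and its peak value satisfies $r=s(\mathcal{S}_{\mathcal U})\ge\|\cF\|_k$.

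Second, I will identify the hypothesis ``the output is concentrated in $[(1-\eta)r,r]$ with probability $w_\eta$'' with Definition~\ref{def:concentration-factor}. The index set $\cJ^*_\alpha(\eta,\mathcal S_{\mathcal U})$ collects exactly the outcomes with $|h_\alpha(j)|\ge(1-\eta)r$. Its probability mass under $\rho^{\otimes k}$ is the quantity in Eq.~\eqref{eq:E45}, so the hypothesis gives $w_\eta(\mathcal S_{\mathcal U},\rho)\ge w_\eta$. Here I read the hypothesis as holding for every $\alpha$, that is, as a lower bound on the minimum over $\alpha$.

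Third, I will apply Lemma~\ref{lem:variance-lower-bound}, or equivalently Eq.~\eqref{eq:E46}, to obtain $\mathbb E_\rho[\hat f_\alpha^2]\ge w_\eta(1-\eta)^2\|\cF\|_k^2$. Unbiasedness gives $\mathbb E_\rho[\hat f_\alpha]=f_\alpha(\rho)$, and so
\[
\mathrm{Var}_\rho[\hat f_\alpha]=\mathbb E_\rho[\hat f_\alpha^2]-f_\alpha(\rho)^2\ge w_\eta(1-\eta)^2\|\cF\|_k^2-f_\alpha(\rho)^2 .
\]

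The only delicate step is the second one: matching the informal concentration hypothesis to $w_\eta(\mathcal S,\rho)$. In particular, the lemma normalizes by the maximum over both $j$ and $\alpha$, so $r$ must be taken as that global peak. This causes no loss, because replacing $r$ by the global peak only strengthens the lemma's inequality through $r\ge\|\cF\|_k$.
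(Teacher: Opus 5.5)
Your proposal is correct and follows essentially the same route as the paper: you treat the shadow protocol as a feasible unbiased strategy, so its global peak $s(\mathcal{S}_{\mathcal U})$ dominates $\|\cF\|_k$; Lemma~\ref{lem:variance-lower-bound}, i.e.\ Eq.~\eqref{eq:E46}, then bounds $\mathbb{E}_\rho[\hat f_\alpha^2]$; and unbiasedness turns this into the variance bound. You are right that $r$ must be the peak over both $j$ and $\alpha$, as in Definition~\ref{def:concentration-factor}, because a per-$\alpha$ peak can fall below $\|\cF\|_k$ and the bound would then fail.
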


The problem left is whether an appropriate $\eta$ can generate a large enough concentration factor such that Corollary~\ref{cor:shadow-variance-lower-bound} gives a useful lower bound.
This depends strongly on the specific choice of the shadow tomography ensemble $\cU$ and the corresponding properties $\{f_{\alpha}\}_{\alpha \in \cA}$.
For some uniform classical shadow strategies~\cite{mcnulty2023estimating,nguyen2022optimizing}, Corollary~\ref{cor:shadow-variance-lower-bound} gives a tight lower bound.
For general strategies,
we believe that Corollary~\ref{cor:shadow-variance-lower-bound} at least provides a feasible way to check the effectiveness of the shadow tomography strategy for a given set of properties by measuring $w_{\eta}$.

\paragraph{Target-aware strategy optimization.}
After the first proposal of practical shadow tomography~\cite{huang2020predicting},
many strategies have been designed to optimize the shadow tomography protocol for known target observables.
Some of them fix the available measurement to the single qubit Pauli projectors and then derandomize~\cite{huang2021efficient} or add bias~\cite{hadfield2022measurements} to the protocol based on the target Hamiltonian.
Such an optimization strategy has been extended to more general fixed POVMs~\cite{korhonen2025improving,innocenti2023shadow,cha2026operator}, and has also been considered from the perspective of measurement compatibility~\cite{mcnulty2023estimating}.
In the following, we will show that the $\mathrm{QB}_k$-norm optimization problem also gives a valid target-aware shadow strategy in parallel with those previous works.
Such a target-aware shadow strategy can be applied to targets specified by a set of non-linear polynomials while only involving a linear program.

Following the general definition in Eq.\eqref{eq:second_moment_op}, given a set of function properties $\cF = \{f_\alpha(\rho)\}_{\alpha \in \cA}$ with $k$-replica linearizations $\{F_\alpha\}_{\alpha \in \cA}$, the second-moment operator $\mathbf{M}_\alpha$ for this classical shadow strategy is exactly:
\begin{align}
    \mathbf{M}_{\alpha} &= \sum_{j \in \cJ} |h_\alpha(j)|^2\, \Pi_j = \sum_{U, \mathbf{b}} \left( \bra{\mathbf{b}} U \cM^{-1}(F_\alpha) U^\dagger \ket{\mathbf{b}} \right)^2 p(U)\, U^\dagger \ket{\mathbf{b}}\!\bra{\mathbf{b}} U \notag
\\
    &= \mathbb{E}_{U \sim \cU} \sum_{\mathbf{b}} U^\dagger \ket{\mathbf{b}}\!\bra{\mathbf{b}} U \bra{\mathbf{b}} U \cM^{-1}(F_\alpha) U^\dagger \ket{\mathbf{b}}^2 \notag
\\
    &= \mathbb{E}_{U \sim \cU} \sum_{\mathbf{b}} \tr_{2,3}\left[ \left(U^\dagger \ket{\mathbf{b}}\!\bra{\mathbf{b}} U\right)_{1} \ox \left(U^\dagger \ket{\mathbf{b}}\!\bra{\mathbf{b}} U \cdot \cM^{-1}(F_{\alpha})\right)_{2} \ox \left(U^\dagger \ket{\mathbf{b}}\!\bra{\mathbf{b}} U \cdot \cM^{-1}(F_{\alpha})\right)_{3} \right] \notag
\\
    &= \tr_{2,3}\left[ \left\{ \mathbb{E}_{U \sim \cU} (U^\dagger)^{\ox 3} \left( \sum_{\mathbf{b}} (\ket{\mathbf{b}}\!\bra{\mathbf{b}})^{\ox 3} \right) U^{\ox 3} \right\} \left( \1 \ox \cM^{-1}(F_{\alpha}) \ox \cM^{-1}(F_{\alpha})\right) \right].\label{eq:shadow_variance_povm}
\end{align}
This operator reflects the worst-case estimation variance when we use this classical shadow strategy to estimate the property $f_{\alpha}$:
\begin{align}
   \max_{\rho \in \cH} \left[ \mathrm{Var}_{\rho}(\hat{f}_{\alpha}) + [f_{\alpha}(\rho)]^2 \right] = \max_{\rho \in \cH} \mathbb{E}_{\rho}(\hat{f}^2_{\alpha}) = \max_{\rho \in \cH} \tr({\mathbf{M}_{\alpha}} \rho^{\ox k}) = \|\mathbf{M}_{\alpha}\|_{s, \mathrm{inj}}.\label{eq:E48}
\end{align}
Here the symmetrization is defined in Eq.\eqref{eq:injective-norm} before.
When $k = 1$, the value of $\|\mathbf{M}_{\alpha}\|_{s, \mathrm{inj}}$ is equal to the infinity norm, thus matching the original definition of shadow norm $\|F_{\alpha}\|^2_{\mathrm{shadow}}$~\cite{huang2020predicting}.

Since classical shadow tomography is a special case of the $\mathrm{QB}_k$-norm framework, a more general optimization problem can be proposed by reframing the dual problem~\eqref{eq:dual_unbiased} with an appropriate POVM restriction $\mathcal{C}$.
For example,
consider the POVMs that attain the solution $\|\cF\|_{k,\mathcal{C}}$ in Lemma~\ref{lem:dual-robustness},
with restriction $\mathcal{C}$ being the projectively simulable POVMs~\cite{oszmaniec2017simulating,guerini2017operational}.
Lemma~\ref{lem:sdp-variance-saturation} states that these POVMs form a valid classical shadow strategy that satisfies $\|\mathbf{M}_{\alpha}\|_{s, \mathrm{inj}}= \|\cF\|_{k,\mathcal{C}}^2$.
Thus, we directly get a classical shadow strategy for estimating $\mathcal{F}$ with variance upper bounded by $\|\cF\|_{k,\mathcal{C}}^2$ and reaches the optimum in terms of the minimum magnitude of estimators (as indicated by Definition~\ref{def:k-copy-qBern_coeff}).

More generally,
for a $k$-replica classical shadow strategy,
one can set up a set of unitaries $\mathbb{U}:=\{U_1,\cdots,U_M\}$ where $U_i$ is the available unitary operation on the $k$-replica space $\mathcal{H}^{\otimes k}$.
Then a classical shadow ensemble $\mathcal{U}$ is constructed by building up a discrete distribution $p(U)$ on this set.
For example, in the case of $k =1$, setting $\mathbb{U} = \mathrm{Cliff}(1),\mathrm{Cliff}(n)$ with uniform distribution recovers the classical shadow based on randomized single qubit Pauli measurement and global Clifford twirling~\cite{huang2020predicting}, respectively.
The choice of $\mathbb{U}$ restricts the feasible POVM operators to the combination:
\begin{align}
    \Pi_j &= \sum_{ U \in \mathbb{U},\mathbf{b} \in (\mathbb{F}_2^{n})^{k}} p(U) D(j|U,\mathbf{b}) U^\dagger \ket{\mathbf{b}}\!\bra{\mathbf{b}} U,\label{eq:E49}
\end{align}
where $D(j|U,\mathbf b)$ is a classical stochastic map satisfying $D(j|U,\mathbf b)\geq 0, \sum_j D(j|U,\mathbf b)=1$.
For notational simplicity, we define the rank-one effects $E_{U,\mathbf b}:=U^\dagger \ket{\mathbf b}\!\bra{\mathbf b}U$.
Take post-processing functions $h_{\alpha}(j)$ targeting the $k$-replica linearization operator $F_{\alpha}$.
The unbiasedness condition for estimating
$f_\alpha(\rho)=\tr(F_\alpha\rho^{\otimes k})$ is
\begin{align}
    F_\alpha = \sum_j h_\alpha(j)\Pi_j =
    \sum_{U,\mathbf b}p(U)\left[\sum_j D(j|U,\mathbf b)h_\alpha(j)\right]E_{U,\mathbf b} := \sum_{U,\mathbf b}p(U) \widetilde h_\alpha(U,\mathbf b)E_{U,\mathbf b},
    \qquad
    \forall \alpha\in\mathcal A,\label{eq:E50}
\end{align}
where the effective raw-outcome post-processing function satisfies $|\widetilde h_\alpha(U,\mathbf b)|\leq \max_j |h_\alpha(j)|$ since $D(\cdot|U,\mathbf b)$ is a probability distribution.
Hence, without loss of optimality, an exact
random-basis shadow strategy can be written in the form $\Pi_{U,\mathbf b}=p(U)E_{U,\mathbf b},\, h_\alpha(U,\mathbf b)$.
This motivates the following fixed-unitary-ensemble $\mathrm{QB}_k$-norm:
\begin{align}
    s_{\mathbb U,k}^{\mathrm{sh}}(\mathcal F)
    :=\inf_{p(U),h_{\alpha}(U, \mathbf{b})}\,\max_{\alpha,U,\mathbf b}|h_\alpha(U,\mathbf b)|,\quad
    \text{s.t.~} F_\alpha=\sum_{U,\mathbf b} p(U)h_\alpha(U,\mathbf b) E_{U,\mathbf b}, \,\forall \alpha\in\mathcal A ; \quad
     p(U)\geq 0, \,\sum_{U\in\mathbb U}p(U)=1\label{eq:E51}
\end{align}
This can be converted into a linear program by first proposing $ c_{\alpha,U,\mathbf b}:=p(U)h_\alpha(U,\mathbf b),\, g_U:=s\cdot p(U)$,
where $s$ denotes the estimator envelope, and then translating the normalization condition for $0\leq p(U) \leq 1$ and envelope restriction $|h_\alpha(U,\mathbf b)| \leq s$ to
$\sum_U g_U=s$ and $|c_{\alpha,U,\mathbf b}|\leq g_U$, respectively.
Therefore, $s_{\mathbb U,k}^{\mathrm{sh}}(\mathcal F)$ is equal to the optimal
value of the following linear program:
\begin{align}
 {
\begin{aligned}
s_{\mathbb U,k}^{\mathrm{sh}}(\mathcal F)=
\min_{\{c_{\alpha,U,\mathbf b}\},\{g_U\}}\quad&\sum_{U\in\mathbb U} g_U\\
\mathrm{s.t.}\quad&F_\alpha=\sum_{U\in\mathbb U}\sum_{\mathbf b\in(\mathbb F_2^n)^k}c_{\alpha,U,\mathbf b}E_{U,\mathbf b},
\qquad\forall \alpha\in\mathcal A,\\
&-g_U\leq c_{\alpha,U,\mathbf b}\leq g_U, \qquad\forall \alpha,U,\mathbf b,\\
&g_U\geq 0, \qquad \forall U\in\mathbb U .
\end{aligned}
}\label{eq:lp-qbn-norm}
\end{align}
Given an optimal solution
$s^* = s_{\mathbb U,k}^{\mathrm{sh}}(\mathcal F)$, $\{c^*_{\alpha,U,\mathbf b}\}$ and $\{g_U^*\}$,
we construct
\begin{align}
    \sum_U g_U^* = s^*,\quad  p^*(U) := \frac{g_U^*}{s^*}, \quad h_\alpha^*(U,\mathbf b) =\frac{s^*\cdot c^*_{\alpha,U,\mathbf b}}{g_U^*},\text{~if~} g_U^* > 0.\label{eq:E53}
\end{align}
If $g_U^*=0$, then the value of $h_\alpha^*(U,\mathbf b)$ can be chosen arbitrarily.
Then the resulting classical shadow strategy is:
(i). sample $U\sim p^*$, apply $U$ to $\rho^{\otimes k}$, (ii). measure in the
computational basis with outcome $\mathbf{b}$, and (iii). output $h_\alpha^*(U,\mathbf b)$ for the estimation of $f_\alpha(\rho)$.
One can verify that this strategy is unbiased and the second-moment operator satisfies the upper bound:
\begin{align}
    \mathbf M_\alpha =
    \sum_{U,\mathbf b} p^*(U) \left[h_\alpha^*(U,\mathbf b)\right]^2 E_{U,\mathbf b}  =  \sum_{U:g_U^* > 0,\mathbf b} \frac{s^*}{g_U^*} \cdot \left(c^*_{\alpha,U,\mathbf b}\right)^2 E_{U,\mathbf b} \preceq  \sum_{U} {s^*}{g_U^*} \cdot \sum_{\mathbf b}E_{U,\mathbf b} = \left[s_{\mathbb U,k}^{\mathrm{sh}}(\mathcal F)\right]^2 \1^{\otimes k},\label{eq:E54}
\end{align}
where in the last inequality we use the fact that $c^2_{\alpha,U,\mathbf b}\leq g^2_U$ and in the last equality we use the fact that $\sum_{\mathbf b}E_{U,\mathbf b} = \1^{\otimes k}$.

\appsection{Analytical benchmarks}\label{app:examples}\label{app:compare}

In this appendix, we analytically evaluate the dual conic programs
associated with the $\mathrm{QB}_k$-norm in two representative settings:
single-copy Pauli shadow tomography and purity estimation under
memory-free measurement restrictions. The purpose of this appendix is
not to rederive the full sample-complexity theorems of shadow tomography,
but rather to show how their characteristic lower-bound scalings emerge
from the geometry of the $\mathrm{QB}_k$-norm dual program.

The two examples play complementary roles.
The Pauli shadow example
illustrates how measurement incompatibility among a large family of
linear observables appears as a large dual norm.
The purity example
illustrates how a nonlinear function can be handled by the same
replica-linearized formalism, with the unitary symmetry reducing the
dual program to a low-dimensional invariant problem.
 In both cases, the
main technical step is to use group symmetry to restrict the dual
variables to the commutant algebra: Clifford covariance for Pauli
observables and unitary covariance for purity.
This reduces the original
semidefinite program, whose variables act on high-dimensional replica
Hilbert spaces, to an optimization problem with scalar variables only.
For convenience, we denote $\|\cdot \|_{N,\mathcal{C}_{\mathrm{mem}}}$ and $\|\cdot \|_{N,\mathcal{C}_{\mathrm{loc}}}$ as $\|\cdot \|_{N,{\mathrm{mem}}}$ and $\|\cdot \|_{N,{\mathrm{loc}}}$, respectively, and summarize the conclusions as follows.

\begin{lemma}[$\mathrm{QB}_k$-norm for Pauli shadow tomography]
\label{lem:Pauli-analytical-solutions}
Let $d=2^n$ and $\cF = \{\tr(P\rho)\}_{P\in\mathcal P_n}$ for
$\mathcal P_n=\{\1,X,Y,Z\}^{\otimes n}$. The following statements hold:
\begin{align}
    &\|\cF\|_1
    =\frac{d^2-1}{\Sigma}, \qquad  \|\cF\|_{N,{\mathrm{mem}}} \geq \frac{d^2 - 1}{N \Sigma} ,\qquad \|\cF\|_{1,{\mathrm{loc}}}
    =(\sqrt{3})^n\label{eq:C1}
\\
    &\text{where~}\Sigma:=\sup_{\psi} \sum_{P \in \{\1,X,Y,Z\}^{\ox n} \setminus \{\1_n\}} \left|   \langle \psi | P | \psi \rangle\right| \in \left[(1 + \sqrt{3})^n -1 ,
    \sqrt{(2^n - 1)(4^n - 1)}\right].\label{eq:C2}
\end{align}
\end{lemma}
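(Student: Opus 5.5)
The plan is to sandwich each quantity between a dual feasible point of Eq.~\eqref{eq:dual_unbiased} and an explicit primal strategy, using Clifford covariance to make both sides computable. The identity element $P=\1_n$ is harmless throughout. Its target is the constant $1$, estimated by $h\equiv1$, and it adds nothing to the dual when we put $Y_{\1}=0$. So only the $d^2-1$ nonidentity Paulis matter.

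\emph{Global single copy.} For the lower bound I take $Z=\1/d$ and $Y_P=yP$ for $P\neq\1$. The constraint $Z-\sum_P c_PY_P\succeq0$ for all signs $c$ means $y\,\lambda_{\max}(\sum_P c_PP)\le 1/d$. Maximizing over signs and over the top eigenvector gives $\max_c\lambda_{\max}(\sum_Pc_PP)=\sup_\psi\sum_P|\langle\psi|P|\psi\rangle|=\Sigma$, so I set $y=1/(d\Sigma)$. The objective is then $\sum_P\tr(yP\cdot P)=y\,d(d^2-1)=(d^2-1)/\Sigma$.

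For the matching upper bound, let $\psi^*$ attain $\Sigma$ and take the Clifford-orbit POVM $\Pi_C=\frac{d}{|\mathrm{Cl}_n|}C\psi^*C^\dagger$. This is complete because the Clifford group is a unitary $1$-design. Choose the estimator $h_P(C)=\kappa\,\mathrm{sgn}\langle\psi^*|C^\dagger PC|\psi^*\rangle$. Clifford covariance, together with transitivity on nonidentity Paulis up to sign, forces $\sum_C h_P(C)\Pi_C$ to be proportional to $P$. Taking the trace against $P$ fixes the constant: $\kappa\,\frac{d}{|\mathrm{Cl}_n|}\sum_C|\langle C^\dagger PC\rangle|=\kappa\,d\,\Sigma/(d^2-1)$, which must equal $d$. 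Hence $\kappa=(d^2-1)/\Sigma$. I expect this twirl computation to be the main obstacle: one has to check carefully that the sign-weighted twirl has no component outside $\mathrm{span}\{P\}$, and that averaging $|\langle Q\rangle|$ over the orbit of $P$ reproduces $\Sigma/(d^2-1)$.

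\emph{Memory-free $N$ copies.} The $N$-linearization is the symmetric embedding $\1^{\otimes(N-1)}\odot P$. I take $Z=\1^{\otimes N}/d^N$ and $Y_P=y\,(\1^{\otimes(N-1)}\odot P)$, and need the constraint in the product-state cone~\eqref{eq:block-positive-cone}. On a product state the signed sum equals $\frac1N\sum_i\sum_Pc_P\langle P\rangle_{\Phi_i}\le\Sigma$, while $Z$ contributes $d^{-N}$. So $y=1/(d^N\Sigma)$ is feasible. Since $\tr P=0$, the cross terms vanish and $\tr[(\1^{\otimes(N-1)}\odot P)^2]=d^N/N$. The objective is therefore $(d^2-1)/(N\Sigma)$.

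\emph{Site-local.} For the lower bound I restrict to the full-weight Paulis $\{X,Y,Z\}^{\otimes n}$, which is allowed by monotonicity under subsetting (Lemma~\ref{lem:mono1}). With $Y_P=yP$ and $Z=\1/d$, the local cone~\eqref{eq:site-product-cone} only requires positivity on products of single-qubit states. There the sup factorizes as $\prod_l(|x_l|+|y_l|+|z_l|)\le3^{n/2}$, because each Bloch vector has $\ell_1$ norm at most $\sqrt3$. This gives $y=1/(d\,3^{n/2})$ and objective $d\,3^n y=3^{n/2}$. For the upper bound I tensor the single-qubit optimal POVM~\eqref{eq:case2-povm} across sites. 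The product estimator for $P$ is $\prod_{l:P_l\neq\1}\sqrt3\,c_{P_l}$, whose magnitude is $\sqrt3^{|P|}\le\sqrt3^n$.

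\emph{Bounds on $\Sigma$.} For the lower bound, evaluate at the product of single-qubit states with Bloch vector $(1,1,1)/\sqrt3$. Each site then contributes $1+\sqrt3$ to the sum over $\{\1,X,Y,Z\}$, giving $(1+\sqrt3)^n-1$ after removing the identity. For the upper bound, apply Cauchy--Schwarz together with the purity identity $\sum_{P\neq\1}\langle P\rangle^2=d-1$, giving $\Sigma\le\sqrt{(d^2-1)(d-1)}$.
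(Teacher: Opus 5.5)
Your proposal is correct, and for the equality $\|\cF\|_1=(d^2-1)/\Sigma$ it takes a genuinely different route from the paper.

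\textbf{Global single-copy case.} The paper never writes down a primal strategy here. It Pauli-twirls and then Clifford-twirls an arbitrary dual-feasible point to show that $Z=\1/d$, $Y_P=\lambda P$ with a common $\lambda$ loses nothing. It then solves the resulting one-parameter program and invokes strong duality (Lemma~\ref{lem:dual-robustness}) to identify the dual optimum with $\|\cF\|_1$.

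You use the same symmetric dual point only as a weak-duality lower bound. You close the gap with an explicit primal: the Clifford orbit of a $\Sigma$-maximizing state, with sign estimators.

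The obstacle you flagged resolves cleanly. Take $\mathrm{sgn}(0)=0$. Then $A_P:=\sum_C h_P(C)\Pi_C$ satisfies $VA_PV^\dagger=A_{VPV^\dagger}$ and $A_{-P}=-A_P$. Conjugation by Paulis alone, using the same orthogonality identity as the paper's twirl, forces $A_P\propto P$. Transitivity of the Clifford group on nonidentity Paulis makes all fibers of $C\mapsto C^\dagger PC$ equal in size. Hence the orbit average of $|\langle\psi^*|C^\dagger PC|\psi^*\rangle|$ is exactly $\Sigma/(d^2-1)$, independent of $P$.

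\textbf{What each route buys.} Yours gives a constructive optimal strategy, essentially what the paper's KKT primal-recovery step would return. It also needs no Slater or strong-duality argument. The paper's route reduces the dual systematically without guessing a primal, and it reuses that reduction to motivate the Clifford-commutant ansatz in the $N$-copy setting.

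\textbf{Remaining parts.} Elsewhere the differences are cosmetic. After unwinding the partial trace, your memory-free dual point is exactly the paper's choice $\Gamma=a_1G^{(N+1)}_1$ with $z_m=0$, without the lifting machinery. Your site-local primal and dual are the product constructions that Lemma~\ref{lem:binary-tensor-factor} performs abstractly. Because you verify both sides directly, you also avoid relying on the value $\|\{X,Y,Z,\1\}\|_1=\sqrt3$, which the paper only asserts. Your bounds on $\Sigma$ coincide with the paper's.
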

\begin{lemma}[$\mathrm{QB}_k$-norm for purity estimation]\label{lem:Purity-analytical-solutions}
Let $d =2^n$ and $\cF = \{\tr(\rho^2)\}$. The following statements hold:
\begin{align}
    \|\mathcal F\|_{2,\mathrm{mem}}=d, \quad
    \|\mathcal F\|_{N,\mathrm{mem}}\ge
     \frac{2(d-1)(d+1)}{d N (N-1)} + \frac{1}{d}.\label{eq:C3}
\end{align}
\end{lemma}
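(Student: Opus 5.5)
The plan is to prove the upper bound for $N=2$ with an explicit memory-free protocol, and to prove every lower bound with a single permutation-invariant dual witness plugged into the unbiased dual program \eqref{eq:dual_unbiased} with $\mathcal C=\mathcal C_{\mathrm{mem}}$. Weak duality is all the lower bounds need; strong duality from Lemma~\ref{lem:dual-robustness} is not required.

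\emph{Upper bound for $N=2$.} First draw a Haar-random (or unitary 2-design) $U$. Then measure each of the two copies separately in the basis $\{U^\dagger\ket{b}\}$, obtaining outcomes $b_1,b_2$, and output $h=(d+1)\delta_{b_1b_2}-1$. The two-design identity gives
\[
\mathbb E_U\sum_b (U^\dagger\proj{b}U)^{\otimes 2}=\frac{2\Pi_{\mathrm{sym}}}{d+1},
\]
so that $\sum_j h(j)\Pi_j=2\Pi_{\mathrm{sym}}-\1=W_{12}$. This is exactly the symmetric 2-linearization of $\tr(\rho^2)$. Each POVM element is a convex mixture of product effects, so it lies in $\mathcal C_{\mathrm{mem}}$. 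The estimator takes only the values $d$ and $-1$, which gives $\|\mathcal F\|_{2,\mathrm{mem}}\le d$.

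\emph{Lower bound for general $N$.} The symmetric $N$-linearization of $\tr(\rho^2)$ is
\[
S:=\frac{1}{M}\sum_{i<j}W_{ij},\qquad M=\binom N2 .
\]
By unitary covariance it is natural to restrict the dual variables to the span of $\1^{\otimes N}$ and $S$. Write $Z=z_1\1+z_2S$ and $Y=y_1\1+y_2S$.

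The dual cone $\mathcal K^*_{\mathrm{mem}}$ from \eqref{eq:block-positive-cone} consists of the operators that are nonnegative on all product pure states. On a product state $\ket{\phi_1}\cdots\ket{\phi_N}$ we have
\[
\bra{\phi_1}\cdots\bra{\phi_N}(a\1+bS)\ket{\phi_1}\cdots\ket{\phi_N}=a+b\cdot\frac1M\sum_{i<j}|\braket{\phi_i}{\phi_j}|^2 .
\]
The average overlap ranges over $[0,1]$: the value $1$ is attained by equal states, and for $d\ge N$ the value $0$ is attained by orthogonal states. Hence membership in the cone reduces to the two conditions $a\ge0$ and $a+b\ge0$. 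If $d<N$, the stated witness is still block-positive, because it has $a=0$ and $b>0$.

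The simplest feasible choice is $z_1=y_1=0$ and $y_2=z_2>0$. Then $Z\pm Y$ equals either $2z_2S$ or $0$, and both are block-positive. The normalization $\tr Z=1$ forces $z_2=1/\tr S=d^{1-N}$, and the objective becomes $\tr(YS)=\tr(S^2)/d^{N-1}$.

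\emph{Computing $\tr(S^2)$.} Expand $S^2$ into products $W_{ij}W_{kl}$ and count their cycle types. There are $M$ terms with equal pairs, each contributing $d^N$. The remaining $M^2-M$ terms either share one index, giving a 3-cycle, or are disjoint, giving two 2-cycles; in both cases the trace is $d^{N-2}$. Therefore
\[
\tr(S^2)=\frac{d^N}{M}+\Bigl(1-\frac1M\Bigr)d^{N-2},
\]
and dividing by $d^{N-1}$ gives
\[
\frac{2d}{N(N-1)}+\frac1d-\frac{2}{dN(N-1)}=\frac{2(d-1)(d+1)}{dN(N-1)}+\frac1d .
\]
This is exactly the claimed bound. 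For $N=2$ it evaluates to $d$, which matches the upper bound and proves $\|\mathcal F\|_{2,\mathrm{mem}}=d$.

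\emph{Main obstacle.} The delicate step is justifying the reduction of block-positivity to the two scalar conditions $a\ge0$ and $a+b\ge0$. It must hold for the adaptive memory-free cone of \eqref{eq:B12}, not only for the non-adaptive product cone. I expect this to follow because adaptive sequences of single-copy POVMs still produce effects in the conic hull of product effects $\bigotimes_\ell\Pi^{(\ell)}$. The dual cone is therefore still characterized by product pure states, and our witness only needs $b\ge0$ there.
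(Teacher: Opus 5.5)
Your proposal is correct. For general $N$ it matches the paper. Your witness $Z=Y=d^{1-N}S$ is exactly the paper's feasible point $z_\tau=y_\tau=1/(Kd^{N-1})$, and your count of $\tr(S^2)$ is Lemma~\ref{lem:trace-formulas}. The paper reaches that point only after twirling the dual variables into the center of $\mathbb{C}[S_N]$, which a single feasible point does not need.

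The genuine difference is in the equality $\|\mathcal F\|_{2,\mathrm{mem}}=d$.
\begin{itemize}
\item The paper twirls $(Z,Y)$ into $\mathrm{span}\{\1,W_{12}\}$ and solves the resulting four-variable LP exactly, obtaining optimum $d$ at $Z^*=Y^*=W_{12}/d$. It then transfers this value to the primal via strong duality (Lemma~\ref{lem:dual-robustness}).
\item You get the upper bound from an explicit primal point: the shared-random-basis collision estimator with values in $\{d,-1\}$.
\item You get the lower bound from weak duality, i.e.\ $|\tr(Y\Pi_j)|\le\tr(Z\Pi_j)$ summed over $j$. This holds even for non-symmetric $\Pi_j$, because your $Y$ is permutation invariant.
\end{itemize}
Your route is more elementary, needs neither the symmetry reduction nor strong duality, and exhibits an optimal protocol operationally. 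The paper's route instead certifies optimality of the whole reduced dual program, and it builds the twirling machinery that it reuses for $N$ copies.

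The ``obstacle'' you flag is not one. By Eq.~\eqref{eq:B12}, each outcome of an adaptive single-copy protocol has a product effect, so the dual cone is still tested only on product states. There your witness needs nothing beyond $\bra{\Psi}S\ket{\Psi}\ge 0$.
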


We also emphasize the limitations of these analytical reductions.
 First,
except in the simplest cases, the symmetry reduction gives computable
finite-dimensional programs rather than closed-form optimal values.
For example, the exact value of the $N$-copy Pauli overlap parameter
$\Sigma$ is a magic measure~\cite{leone2022stabilizer} and usually hard to compute exactly~\cite{cuffaro2024quantum}.
 Second, the $N$-copy
bounds above are lower bounds obtained from symmetric dual ansatzes or
restricted feasible reductions.
They need not be tight for all adaptive
or memory-assisted protocols.
Finally, these calculations focus on the
unbiased $\mathrm{QB}_k$-norm. Extending the same analytical program to
biased estimation and more general measurement
restriction cones remains an important direction for future work.

\appsubsection{Examples for Pauli shadow tomography}

\appsubsubsection{The single-replica strategy}

We first consider $\|\cF\|_1$ for $\cF = \{\tr(P\rho)\}_{P\in\mathcal P_n}$ by exploiting the algebraic structure of the Pauli group.
The measurement is not restricted so $\mathcal{K}_{\mathcal{C}}^*$ contains all positive semidefinite operators.
Since $\1_k$ commutes with all Pauli operators, by Lemma~\ref{lem:commuting_extension}, we are free to remove $\1_k$ from the function set $\cF$ without changing the $\mathrm{QB}_k$-norm. Thus we are considering $\cA' = \cA/\{\1_k\}$ instead.
\begin{lemma}[Optimal Pauli encoding]\label{lem:pauli-optimal-encoding}
    For the Pauli estimation task $\cF = \{\tr(P\rho)\}_{P \in \cA}$ with $F_P = P$, the optimal dual variables in Eq.~\eqref{eq:dual_unbiased} can always be set to the form $Z^* = {\1}/{2^n}, Y_P^* = \lambda_P\, P, \forall\, P \in \cA$.
\end{lemma}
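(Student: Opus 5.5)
We will prove this by symmetrization. The dual program \eqref{eq:dual_unbiased}, with $k=1$ and no restriction, is a concave maximization over a convex feasible set. It is invariant under the Pauli group action and under the Clifford group action, so any optimal solution can be averaged into an invariant one.

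Fix an optimal dual pair $(Z,\{Y_P\}_{P\in\cA'})$. Both the feasibility constraints and the objective $\sum_P \tr(Y_P P)$ are preserved by each of the following maps.

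\textbf{Pauli conjugation.} For a Pauli $Q$, conjugate every variable: $Z\mapsto QZQ$ and $Y_P\mapsto QY_PQ$. The relabeling of sign vectors is needed because $QPQ=\pm P$. Write $Q P Q=\epsilon_{P}(Q)P$ with $\epsilon_P(Q)\in\{\pm1\}$. Define $Y_P'=\epsilon_P(Q)\,QY_PQ$, so that $\tr(Y_P'P)=\tr(Y_PP)$. The constraint $Z-\sum_P c_PY_P\succeq0$ for all $c$ maps to $QZQ-\sum_P c_P\epsilon_P(Q)Y'_P\succeq0$. Since $c\mapsto(c_P\epsilon_P(Q))_P$ is a bijection of $\{\pm1\}^{|\cA'|}$, the pair $(QZQ,\{Y'_P\})$ is again feasible and optimal.

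\textbf{Averaging over Paulis.} Average these transformed solutions over all $Q\in\cP_n$. Convexity of the PSD cone keeps feasibility, and the objective is unchanged. The averaged $Z$ becomes the Pauli twirl of $Z$, which is the fully depolarizing map, so $Z=\tr(Z)\1/2^n=\1/2^n$.

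For $Y_P$, expand $Y_P=\sum_R y_{P,R}R$. Then
\[
\frac{1}{4^n}\sum_Q \epsilon_P(Q)\,QRQ=\Bigl(\frac{1}{4^n}\sum_Q\epsilon_P(Q)\epsilon_R(Q)\Bigr)R .
\]
The character orthogonality $\frac1{4^n}\sum_Q (-1)^{[P,Q]+[R,Q]}=\delta_{P,R}$ kills every component except $R=P$. Hence $Y_P\mapsto y_{P,P}P=:\lambda_P P$, and $\lambda_P$ is real because $Y_P$ is Hermitian. This yields exactly the claimed form.

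\textbf{Clifford averaging (optional).} One can further average over the Clifford group, which acts transitively on $\cA'$ up to signs, with the same sign-relabeling trick. This would make $\lambda_P$ constant. The lemma only requires the diagonal form, so we stop after the Pauli averaging.

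\textbf{Main subtlety.} The delicate step is checking that the sign relabeling preserves the universally quantified constraint over all $c$. This holds precisely because the constraint ranges over all sign vectors, which is the feature of the sign-vector convexification in Lemma~\ref{lem:dual-robustness}. Removing $\1$ via Lemma~\ref{lem:commuting_extension} is harmless.
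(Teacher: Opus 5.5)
Your proposal is correct and follows essentially the same route as the paper: a sign-weighted Pauli twirl $Y_P\mapsto 4^{-n}\sum_Q(-1)^{g(Q,P)}QY_PQ$ preserves feasibility via the bijection $c_P\mapsto c_P(-1)^{g(Q,P)}$ on sign vectors and convexity of the PSD cone, and the orthogonality relation then collapses $Z$ to $\1/2^n$ and $Y_P$ to $y_{P,P}P$. The only cosmetic difference is that you twirl an optimal pair, whose existence follows from compactness since feasibility forces $Z\succeq 0$ and $-Z\preceq Y_P\preceq Z$, whereas the paper twirls an arbitrary feasible point and notes that the objective depends only on the diagonal coefficients $y_{P,P}$.
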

\begin{proof}
Consider any feasible dual solution $(Z, \{Y_P\})$, expanding in the $n$-qubit Pauli basis $\{Q\}_{Q \in \{I,X,Y,Z\}^{\ox n}}$ as $Z = \sum_Q z_Q\, Q$ and $Y_P = \sum_Q y_{P,Q}\, Q$.
Let $g(Q,P) \in \{0,1\}$ denote the commutation indicator such that $QPQ = (-1)^{g(Q,P)} P$.
We construct $(\bar{Z}, \{\bar{Y}_P\})$ via Pauli twirling:
\begin{align}
    \bar{Z} = \frac{1}{4^n}\sum_{Q \in \{I,X,Y,Z\}^{\ox n}} Q Z Q, \quad \bar{Y}_P = \frac{1}{4^n}\sum_{Q \in \{I,X,Y,Z\}^{\ox n}} (-1)^{g(Q,P)} Q Y_P Q, \quad \forall P \in \cA,\label{eq:C4}
\end{align}
Evaluating the twirled operators gives:
\begin{align}
    \bar{Z} &= \sum_R z_R\, R \cdot \frac{1}{4^n}\sum_Q (-1)^{g(Q,R)} = \1/2^n,\label{eq:C5}
\\
    \bar{Y}_P &= \sum_R y_{P,R}\, R \cdot \frac{1}{4^n}\sum_Q (-1)^{g(Q,P) + g(Q,R)} = y_{P,P}\, P,\label{eq:C6}
\end{align}
where we use the Pauli group orthogonality identity $\frac{1}{4^n}\sum_{Q \in \{I,X,Y,Z\}^{\ox n}} (-1)^{g(Q,P) + g(Q,R)} = \delta_{P,R}$, since each non-identity Pauli anticommutes with exactly half of the group.
The constant $z_{I}$ is determined by the restriction $\tr Z = 1$.

We then show that the dual feasibility condition is still satisfied by the twirled operators.
Suppose $(Z, \{Y_P\})$ is dual-feasible, i.e., $\forall c, \, Z + \sum_{P \in \cA} c_P\, Y_P \succeq 0$.
For any fixed Pauli $Q \in \{I,X,Y,Z\}^{\ox n}$, the map $c_P \mapsto c_P(-1)^{g(Q,P)}$ is a bijection on $\{-1,1\}^{|\cA|}$.
So $\forall c, \,Z + \sum_{P} c_P(-1)^{g(Q,P)}\, Y_P \succeq 0$.
Conjugating both sides by $Q$, which preserves positive semidefiniteness, gives $Q Z Q + \sum_{P} c_P(-1)^{g(Q,P)}\, Q Y_P Q \succeq 0$.
Averaging over all $4^n$ Pauli operators $Q$, the positive semidefinite cone is preserved by convexity:
\begin{align}
    {\frac{1}{4^n}\sum_Q Q Z Q} + \sum_{P \in \cA} c_P \frac{1}{4^n}\sum_Q (-1)^{g(Q,P)} Q Y_P Q= \bar{Z} + \sum_{P \in \cA} c_P  \bar{Y}_P \succeq 0.\label{eq:pauli-twirl-result}
\end{align}
Thus, the symmetrized solution $(\bar{Z}, \{\bar{Y}_P\}) = (z_I\,\1,\, \{y_{P,P}\,P\})$ is dual-feasible.

To show that the dual objective is not decreased by the twirled operators, we first note that the dual objective $\sum_P \tr(Y_P\, F_P) = \sum_{P \in \cA} y_{P,P}$ for $F_P =P$ depends only on the diagonal coefficients.
All cross-components $y_{P,Q}$ with $Q \neq P$ make no contribution to the objective.
Therefore, the dual objective is not decreased by the twirled operators.
\end{proof}
Based on Lemma~\ref{lem:pauli-optimal-encoding}, the optimal payoff reduces to the scalar program~\eqref{eq:pauli-dual-reduced}
\begin{align}
     {
    d^* = \max_{\{\lambda_P\}} \; 2^n \sum_{P \in \cA} \lambda_P \qquad \text{s.t.} \quad \frac{\1_n}{2^n} + \sum_{P \in \cA} c_P\, \lambda_P\, P \succeq 0, \quad \forall\, c \in \{-1,1\}^{|\cA|}
    }\label{eq:pauli-dual-reduced}
\end{align}
This scalar program relies heavily on the structure of $\cA \subset \{\1,X,Y,Z\}^{\ox n}$.
When $\cA = \{I,X,Y,Z\}^{\ox n}/\{\1_n\} :=  \mathcal{P}_n^{\times}$, all $\lambda_P$ can be set equal to a common value $\lambda$ since $\cA$ is invariant under a transitive subgroup of the Clifford group.

\begin{lemma}[Clifford reduction for full Pauli set]
    \label{lem:clifford-reduction-full}
    For $\mathcal{A}$ closed under the Clifford group, after Lemma~\ref{lem:pauli-optimal-encoding} yields
    $Z = 2^{-n}\1_{n}$ and $Y_P = \lambda_P\,P$, the
    optimal $\{\lambda_P\}$ can without loss of generality
    be taken to have a common value $\lambda$
    for all $P \neq I$.
\end{lemma}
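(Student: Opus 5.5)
The plan is a symmetrization argument of the same kind as the Pauli-twirling step in Lemma~\ref{lem:pauli-optimal-encoding}, now averaged over the Clifford group instead of the Pauli group. Start from an optimal dual point of the reduced scalar program~\eqref{eq:pauli-dual-reduced}, that is, coefficients $\{\lambda_P\}_{P\in\cA}$ satisfying
\begin{align*}
\frac{\1_n}{2^n}+\sum_{P\in\cA} c_P\lambda_P P\succeq 0 \quad \text{for all } c\in\{-1,1\}^{|\cA|}.
\end{align*}
For a Clifford unitary $U$, conjugation acts on Pauli operators as $UPU^\dagger=\epsilon_U(P)\,\pi_U(P)$. Here $\pi_U$ is a permutation of $\mathcal P_n^{\times}$ that maps $\cA$ to itself, because $\cA$ is Clifford-closed, and $\epsilon_U(P)\in\{\pm1\}$ is a sign.

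The first step is to show that the relabeled coefficients $\lambda'_{Q}:=\lambda_{\pi_U^{-1}(Q)}$ are again feasible. Conjugate the constraint operator by $U$; this preserves positive semidefiniteness and leaves $2^{-n}\1_n$ unchanged. The result is
\begin{align*}
2^{-n}\1_n+\sum_P c_P\epsilon_U(P)\lambda_P\,\pi_U(P)\succeq 0.
\end{align*}
The map $c_P\mapsto c_P\epsilon_U(P)$ is a bijection of the sign cube, so quantifying over all $c$ absorbs the signs. Reindexing by $Q=\pi_U(P)$ then gives exactly the constraint family for $\lambda'$. The objective $2^n\sum_P\lambda_P$ is invariant under permuting the coefficients, so $\lambda'$ is also optimal.

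The second step averages over the finite Clifford group. Let $\bar\lambda_Q:=|\mathrm{Cl}_n|^{-1}\sum_U\lambda_{\pi_U^{-1}(Q)}$. The feasible set is convex: for each fixed $c$ the constraint is linear in $\lambda$, and the intersection over the $2^{|\cA|}$ sign vectors keeps this convex. Hence $\bar\lambda$ is feasible, and it has the same objective value. The averaged vector is constant on each orbit of the permutation action on $\cA$. The Clifford group acts transitively on the non-identity Paulis, since any two non-identity Paulis are Clifford conjugate up to sign. So if $\cA=\mathcal P_n^{\times}$ (or any Clifford-closed $\cA$ forming one orbit), $\bar\lambda_P=\lambda$ is a common value for all $P\neq I$. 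For a general Clifford-closed $\cA$ the same argument gives a common value on each orbit.

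The main obstacle is being careful about the signs $\epsilon_U(P)$. They are what force the argument to use invariance under the full sign cube rather than just relabeling, and this is the same trick as in Lemma~\ref{lem:pauli-optimal-encoding}. The other point to check is that $\pi_U$ is a genuine permutation of $\cA$ and not only of $\mathcal P_n^{\times}$, which is exactly where the Clifford-closure hypothesis is used. Once these two points are settled, the convexity and averaging steps are routine.
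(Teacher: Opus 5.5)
Your proposal is correct and follows the paper's proof essentially step for step. You relabel the coefficients through the Clifford permutation $\sigma_U$, with the sign-cube bijection that absorbs $\epsilon(U,P)$ made explicit where the paper only asserts feasibility; you note that the objective is permutation-invariant; and you average over the Clifford group using convexity, with transitivity on non-identity Paulis giving the common value $\lambda$. Your remark about ``a common value on each orbit'' is unnecessary: by the transitivity you already invoke, a Clifford-closed $\cA$ contains at most one orbit of non-identity Paulis.
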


\begin{proof}
    Let $(Z, \{Y_P = \lambda_P P\}_{P \in \mathcal{P}_n})$
    be any dual-feasible solution after Pauli twirling. For
    any Clifford unitary $U \in \mathrm{Cl}(n)$, define the
    conjugation map $\sigma_U$ on Paulis by
    $UPU^\dagger = \epsilon(U,P)\,\sigma_U(P)$, where
    $\epsilon(U,P) \in \{\pm 1\}$ is the sign and
    $\sigma_U : \mathcal{A} \to \mathcal{A}$ is the
    symplectic permutation within $\mathcal{A}$ since it is closed under Clifford operations.
    Define:
    \begin{align}
        Z_{U} := UZU^\dagger = 2^{-n}\1_{n}, \qquad
        Y_{U,Q} = \lambda_{\sigma_U^{-1}(Q)}\,Q, \, \forall Q \in \mathcal{A}.\label{eq:C9}
\end{align}
    The operator $Y_{U,Q}$ is just a coefficient shift of the original dual variables.
     We will show that $(Z_{U}, \{Y_{U,Q}\})$ is a feasible solution with the same objective.
    Since the original solution
    is feasible for \emph{all} sign vectors, $(Z_{U}, \{Y_{U,Q}\})$
    is equally feasible.
  For the objective preservation, we use the fact that $Q = \sigma_U(P)$ is a bijection for all $P \in \mathcal{A}$ to conclude
    \begin{align}
        \sum_{Q \in \mathcal{A}} \tr(Y_{U,Q}\,Q)
        = \sum_{Q \in \mathcal{A}} \lambda_{\sigma_U^{-1}(Q)}\,\tr(Q^2)
        = 2^n \sum_{Q \in \mathcal{A}} \lambda_{\sigma_U^{-1}(Q)}
        = 2^n \sum_{P \in \mathcal{A}} \lambda_P.\label{eq:C10}
\end{align}
    By convexity of the PSD cone, we take the Clifford-averaged solution:
    \begin{align}
        \bar{Y}_Q := \frac{1}{|\mathrm{Cl}(n)|} \sum_{U \in \mathrm{Cl}(n)} Y_{U,Q}
        = \left(\frac{1}{|\mathrm{Cl}(n)|} \sum_{U \in \mathrm{Cl}(n)} \lambda_{\sigma_U^{-1}(Q)}\right) Q
        =: \bar{\lambda}_Q\,Q.\label{eq:C11}
\end{align}
    such that $(\bar{Z}, \{\bar{Y}_Q\})$ is feasible with the same objective as $(Z, \{Y_P\})$.
    For $Q =  \1_n$, since $\sigma_U( \1_n) = \1_n$ for all $U$, we have $\bar{\lambda}_I = \lambda_I$.
    For $Q \neq  \1_n$, by the orbit-stabilizer theorem~\cite{dankert2009exact},
    $\sigma_U^{-1}(Q)$ hits each non-identity Pauli the same number of times.
    Therefore:
    \begin{align}
        \bar{\lambda}_Q
        = \frac{1}{|\mathrm{Cl}(n)|} \sum_U
        \lambda_{\sigma_U^{-1}(Q)}
        = \frac{1}{4^n - 1} \sum_{P \neq  \1_n} \lambda_P
        =: \lambda, \qquad \forall\, Q \neq  \1_n.\label{eq:C12}
\end{align}
    This completes the proof that $\bar{\lambda}_Q = \lambda$ for all $Q \neq I$.
\end{proof}

Based on Lemma~\ref{lem:clifford-reduction-full}, the dual conic convex program
reduces to a two-parameter scalar program over $\lambda$:
\begin{align}
     {
    d^*(\cP_n) = \max_{\lambda}\;
    2^n \cdot (4^n - 1)\lambda
    \qquad \text{s.t.} \quad
    \frac{\1_n}{2^n}
    + \lambda \sum_{P \neq I} c_P\, P \succeq 0,
    \quad \forall\, c \in \{-1,1\}^{4^n -1}.
    }\label{eq:pauli-dual-reduced-AllPauli}
\end{align}
The constraint becomes ${2^{-n}} \;\geq\; |\lambda|\,\max_{c} \bigl\|\textstyle\sum_{P \neq I} c_P P\bigr\|_\infty =: |\lambda|\,\Sigma$, where $\Sigma$ is the maximum signal strength for Pauli estimation.

\begin{lemma}\label{lem:pauli-max-signal}
         $\Sigma := \max_{c} \bigl\|\textstyle\sum_{P \neq I} c_P P\bigr\|_\infty$ satisfies:
        \begin{align}
            (1 + \sqrt{3})^n -1 \leq \Sigma \leq
            \sqrt{(2^n - 1)(4^n - 1)} < 2^{3n/2}.\label{eq:C14}
\end{align}
    \end{lemma}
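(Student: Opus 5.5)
The plan is to first rewrite $\Sigma$ variationally as a maximum over pure states, and then bound that maximum from above by Cauchy--Schwarz and from below by an explicit product state. For any sign vector $c$, the operator $\sum_{P\neq I}c_PP$ is Hermitian. I will show that its operator norm is attained by its largest eigenvalue, up to flipping $c\mapsto -c$, which leaves the set of sign vectors invariant. Hence
\begin{align}
\Sigma=\max_c\sup_{\psi}\sum_{P\neq I}c_P\langle\psi|P|\psi\rangle=\sup_\psi\sum_{P\neq I}\bigl|\langle\psi|P|\psi\rangle\bigr|.
\end{align}
The second equality holds because, for fixed $\psi$, the optimal choice is $c_P=\mathrm{sign}\langle\psi|P|\psi\rangle$. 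This matches the form of $\Sigma$ stated in Lemma~\ref{lem:Pauli-analytical-solutions}, so the remaining task is to bound this supremum.

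For the upper bound, I will use the purity identity for pure states: $\sum_{P\in\mathcal P_n}\langle\psi|P|\psi\rangle^2=2^n\tr(\psi^2)=2^n$. Removing the identity term gives $\sum_{P\neq I}\langle\psi|P|\psi\rangle^2=2^n-1$. Cauchy--Schwarz over the $4^n-1$ non-identity Paulis then yields
\begin{align}
\sum_{P\neq I}|\langle\psi|P|\psi\rangle|\le\sqrt{(4^n-1)(2^n-1)}<\sqrt{4^n\cdot 2^n}=2^{3n/2}.
\end{align}

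For the lower bound, it suffices to exhibit one pure state. I will take the product state $\psi=\phi^{\otimes n}$, where $\phi$ is the single-qubit state with Bloch vector $(1,1,1)/\sqrt3$. For this state $|\langle X\rangle|=|\langle Y\rangle|=|\langle Z\rangle|=1/\sqrt3$, so $\sum_{Q\in\{I,X,Y,Z\}}|\langle\phi|Q|\phi\rangle|=1+\sqrt3$. Pauli expectations factorize over tensor products, so
\begin{align}
\sum_{P\in\mathcal P_n}|\langle\psi|P|\psi\rangle|=\prod_{i=1}^n(1+\sqrt3)=(1+\sqrt3)^n.
\end{align}
Subtracting the identity contribution, which equals $1$, gives $\Sigma\ge(1+\sqrt3)^n-1$.

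No step is a real obstacle here. The only point needing care is the first reduction, namely that the norm of the signed sum equals the maximal eigenvalue over sign choices. The $c\mapsto -c$ symmetry handles this cleanly.
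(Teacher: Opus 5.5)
Your proposal is correct and follows the paper's proof step for step. You rewrite $\Sigma$ as $\sup_\psi \sum_{P\neq I}|\langle\psi|P|\psi\rangle|$, bound it above by Cauchy--Schwarz together with the pure-state identity $\sum_{P\neq I}\langle\psi|P|\psi\rangle^2=2^n-1$, and bound it below with the product magic state having $|\langle X\rangle|=|\langle Y\rangle|=|\langle Z\rangle|=1/\sqrt{3}$. The only addition is that you justify the variational reformulation (via the $c\mapsto -c$ symmetry and the choice $c_P=\mathrm{sign}\langle\psi|P|\psi\rangle$), which the paper simply asserts.
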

    \begin{proof}
        For the upper bound, we write
        $ \Sigma = \max_{|\psi\rangle} \sum_{P \neq I} |\langle\psi|P|\psi\rangle|
        = \max_{|\psi\rangle}\left(\sum_{P \neq I}|\langle\psi|P|\psi\rangle|\right).$
        By Cauchy--Schwarz we get $ \sum_{P \neq I} |\langle\psi|P|\psi\rangle| \leq \sqrt{\sum_{P \neq I} \langle\psi|P|\psi\rangle^2} \cdot \sqrt{4^n - 1}$.
        The purity identity
        $2^{-n}\sum_P \langle\psi|P|\psi\rangle^2 = 1$
        for any pure state gives
        $\sum_{P \neq I} \langle\psi|P|\psi\rangle^2
        = 2^n - 1$, yielding
        $\Sigma \leq \sqrt{(2^n-1)(4^n-1)} < 2^{3n/2}$.
        For the lower bound, the magic state~\cite{seddon2021quantifying}
        $|\Psi\rangle = |H\rangle^{\otimes n}$ with
        $|\langle X\rangle| = |\langle Y\rangle|
        = |\langle Z\rangle| = 1/\sqrt{3}$ gives:
        $    \sum_{P \neq I} |\langle\Psi|P|\Psi\rangle|
        = \prod_{i=1}^n (1 + \sqrt{3}) -1
        = (1+\sqrt{3})^n -1.$
    \end{proof}

Substituting $\lambda = 2^{-n}/\Sigma$
into the objective gives:
\begin{align}
    d^* &=
    2^n\!\left[
    (4^n - 1)\,\frac{2^{-n} }{\Sigma}
    \right] =
    \frac{4^n - 1}{\Sigma}.\label{eq:C15}
\end{align}
Combining this with
Lemma~\ref{lem:pauli-max-signal} and strong duality
(Lemma~\ref{lem:dual-robustness}), the
$\mathrm{QB}_k$-norm for the full Pauli set
$\cF = \{I,X,Y,Z\}^{\otimes n}$ is bounded by:
\begin{align}
    \frac{4^n - 1}{2^{3n/2}} < \|\cF\|_1
    = \frac{4^n - 1}{\Sigma}
    \leq \frac{4^n}{(1+\sqrt{3})^n -1}
    \approx 1.464^n,\label{eq:pauli-s-star-bounds}
\end{align}
giving the variance bound $\left(2^{n/2} - 2^{-3n/2}\right)^2= \Omega(2^n) \leq \|\cF\|_1^2 \leq 2.143^n$.
The lower bound $2^n$ matches the information-theoretic
bound of~\cite{chen2024optimal} (which is based on the
quantity $\delta_A^{-1} = 4^n/2^n = 2^n$), while the
upper bound $2.143^n$ reflects the tighter constraint
imposed by the $\mathrm{QB}_k$-norm framework on single-POVM
estimation strategies.

\appsubsubsection{The $N$-replica memory restricted strategy}
The above discussion only uses identical POVMs for each round and excludes the $N$-replica memory restricted strategy.
We thus consider the $N$-replica adaptive memory restricted strategy~\cite{chen2024optimal} with POVMs restricted to the convex set $\mathcal{C}_{\mathrm{mem}}$ defined in Eq.~\eqref{eq:block-positive-cone-mem}. That is, we evaluate $\|\cF\|_{N,{\mathrm{mem}}}$ defined in Eq.\eqref{eq:dual_unbiased}.
The N-copy linearization is written as $\{F^{(N)}_P = \1^{\otimes (N-1)} \odot P\}_{P \in \cA}$.
Following the same reasoning as in the last section, we conclude that $Z,Y_P$ actually can be represented as a linear combination of Clifford commutant $\mathrm{Com}(\mathrm{Cl}(n)^{\ox (N)}) : \{A \in \cH^{\otimes N} : [A, C^{\ox N}] = 0, \forall C \in \mathrm{Cl}(n)\}$.

\begin{lemma}
    For the $N$-replica memory restricted strategy, the dual variables $Z,Y_P$ can be taken as
    \begin{align}
        Z &\in \mathrm{Com}(\mathrm{Cl}(n)^{\ox (N)}), \quad Y_P = \frac{1}{d} \tr_0\left[(P \otimes \1^{\otimes N}) \Gamma\right],\label{eq:C17}
\end{align}
    where $\Gamma \in \mathrm{Com}(\mathrm{Cl}(n)^{\ox (N+1)}), \tr(\Gamma) = 0$, $d = 2^n$, and $\tr_0$ denotes the partial trace over the first replica of qubits.
\end{lemma}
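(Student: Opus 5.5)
This is a symmetrization argument over the Clifford group $\mathrm{Cl}(n)$, run in the same way as Lemmas~\ref{lem:pauli-optimal-encoding} and~\ref{lem:clifford-reduction-full}. We start from any feasible dual point $(Z,\{Y_P\})$ of Eq.~\eqref{eq:dual_unbiased} with the cone $\mathcal K^*_{\mathrm{mem}}$ of Eq.~\eqref{eq:block-positive-cone}. The claim is that averaging over $C^{\otimes N}$ (together with Pauli sign twirling and relabeling of $P$) gives a feasible point with the same objective and the stated form.

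\textbf{Feasibility is preserved.} The cone $\mathcal K^*_{\mathrm{mem}}$ is defined through product test vectors $\ket{\Phi_1}\cdots\ket{\Phi_N}$. Conjugating by a product unitary $U^{\otimes N}$ maps such test vectors to product vectors, so $\mathcal K^*_{\mathrm{mem}}$ is invariant under every $U^{\otimes N}$ and in particular under $C^{\otimes N}$ for $C\in\mathrm{Cl}(n)$.

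\textbf{The twirled variables.} For $C\in \mathrm{Cl}(n)$, write $CPC^\dagger=\epsilon(C,P)\sigma_C(P)$ and set
\[
Z_C=C^{\otimes N}ZC^{\dagger\otimes N},\qquad Y^{(C)}_{Q}=\epsilon(C,P)\,C^{\otimes N}Y_{P}C^{\dagger\otimes N}\quad\text{with } Q=\sigma_C(P).
\]
The sign vector $c$ ranges over all of $\{\pm1\}^{|\cA|}$, so the reindexing $c_Q\mapsto \epsilon c_P$ is a bijection. Hence each constraint $Z-\sum c_\alpha Y_\alpha\in\mathcal K^*$ maps to another constraint of the same family, and $(Z_C,\{Y^{(C)}_Q\})$ is feasible.

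\textbf{The objective is preserved.} The $N$-copy linearization $F^{(N)}_P=\1^{\otimes(N-1)}\odot P$ transforms covariantly, $C^{\otimes N}F_P^{(N)}C^{\dagger\otimes N}=\epsilon(C,P)F^{(N)}_{\sigma_C(P)}$. Therefore $\sum_Q\tr(Y^{(C)}_QF_Q)=\sum_P\tr(Y_PF_P)$.

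\textbf{Averaging.} By convexity of the cone, averaging over the finite group $\mathrm{Cl}(n)$ gives a feasible point $\bar Z,\{\bar Y_P\}$ with the same objective. Since $\bar Z$ is a group average of $C^{\otimes N}ZC^{\dagger\otimes N}$, it lies in $\mathrm{Com}(\mathrm{Cl}(n)^{\otimes N})$ immediately.

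\textbf{Covariance of $\bar Y_P$.} It remains to encode the family $\{\bar Y_P\}$ by a single operator $\Gamma$. After averaging, the family is covariant: $C^{\otimes N}\bar Y_PC^{\dagger\otimes N}=\epsilon(C,P)\bar Y_{\sigma_C(P)}$. Define
\[
\Gamma:=\frac1{d}\sum_{P\in\cA}P\otimes \bar Y_P
\]
on $N+1$ replicas, with replica $0$ the new register. Covariance of $\bar Y_P$ together with the conjugation rule $CPC^\dagger=\epsilon\,\sigma_C(P)$ on replica $0$ makes the signs cancel, so $C^{\otimes(N+1)}\Gamma C^{\dagger\otimes(N+1)}=\Gamma$ and $\Gamma\in\mathrm{Com}(\mathrm{Cl}(n)^{\otimes(N+1)})$.

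\textbf{Recovering $Y_P$ and tracelessness.} Using Pauli orthogonality $\tr(PQ)=d\,\delta_{PQ}$, we get
\[
\frac1d\tr_0[(P\otimes\1^{\otimes N})\Gamma]=\frac1d\bar Y_P,
\]
so the normalization $1/d$ is matched by rescaling $\Gamma$ by $d$ (or equivalently choosing the prefactor in the definition of $\Gamma$). Because $\cA$ excludes the identity (we removed $\1$ via Lemma~\ref{lem:commuting_extension}), every term $P\otimes\bar Y_P$ is traceless on replica $0$, hence $\tr\Gamma=0$.

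\textbf{Main obstacle.} The delicate point is the sign bookkeeping $\epsilon(C,P)$. It must be checked simultaneously that:
\begin{itemize}
\item the reindexed constraint family is exactly the original one, which uses that all sign vectors $c$ appear;
\item the sign in the covariance relation for $Y_P$ cancels against the sign on replica $0$ in $\Gamma$, so that $\Gamma$ is truly invariant rather than invariant only up to a character.
\end{itemize}
I would verify these first on single-qubit Cliffords $H$ and $S$, and then extend to all of $\mathrm{Cl}(n)$ via its generators.
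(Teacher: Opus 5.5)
Your proposal is correct and follows essentially the same route as the paper. You Clifford-twirl an arbitrary feasible dual point so that $Z$ lands in the commutant and the $Y_P$ become covariant. You then package the family as $\Gamma=\sum_{P}P\otimes \bar Y_P$ (without the $1/d$ prefactor you first wrote, as you yourself noticed). Its invariance follows from $\epsilon(C,P)^2=1$, its partial trace against $P$ recovers $Y_P$ by Pauli orthogonality, and it is traceless because every $P\in\cA$ is non-identity.

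Your write-up in fact spells out the feasibility, objective-preservation and covariance checks that the paper only asserts by analogy with Lemma~\ref{lem:pauli-optimal-encoding}.
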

\begin{proof}
    Following the same reasoning as in the proof of Lemma~\ref{lem:pauli-optimal-encoding}, we can show that $Z,Y_P$ can be taken to satisfy the covariant condition:
    \begin{align}
         U^{\otimes N} Z U^{\dagger,\otimes N} = Z , \quad U^{\otimes N} Y_P U^{\dagger,\otimes N} = \varepsilon(U,P) Y_{\sigma_U(P)}.\label{eq:C18}
\end{align}
    This immediately proves $Z \in \mathrm{Com}(\mathrm{Cl}(n)^{\ox (N)})$. For the other one, we construct an extended operator in the space $\cH^{\ox (N+1)}$ as $\Gamma = \sum_{P \in  \mathcal{P}_n^{\times}} P \otimes Y_P$. By applying the $N+1$-fold Clifford $U^{\otimes (N+1)}$, we get
    \begin{align}
        U^{\otimes (N+1)} \Gamma U^{\dagger,\otimes (N+1)} &= \sum_{P \in  \mathcal{P}_n^{\times}} U^{\otimes (N+1)} P \otimes Y_P U^{\dagger,\otimes (N+1)} \notag
\\
        &= \sum_{P \in  \mathcal{P}_n^{\times}} \varepsilon^2(U,P)\sigma_U(P) \otimes  Y_{\sigma_U(P)} = \sum_{P \in  \mathcal{P}_n^{\times}} P \otimes Y_{P}\label{eq:C19}
\end{align}
    This implies that $\Gamma \in \mathrm{Com}(\mathrm{Cl}(n)^{\ox (N+1)})$, and thus $Y_P ={2^{-n}} \tr_0\left[(P \otimes \1^{\otimes N}) \Gamma\right]$. For the converse, $\Gamma \in \mathrm{Com}(\mathrm{Cl}(n)^{\ox (N+1)}) \Rightarrow Y_P ={2^{-n}} \tr_0\left[(P \otimes \1^{\otimes N}) \Gamma\right]$ is covariant, as one can verify by expanding $\Gamma$ on $\sum_{Q} Q \otimes Y_Q$.
\end{proof}

Based on the extension of $Y_P$, we define
\begin{align}
    F_{\mathcal{P}} := 2^{-n} \sum_{P \in  \mathcal{P}_n^{\times}} P \otimes F^{(N)}_P = 2^{-n} \sum_{P \in  \mathcal{P}_n^{\times}} P \otimes \1^{\otimes (N-1)} \odot P = \frac{1}{N}\sum_{i = 1}^N W_{0,i} - \frac{1}{d}\1^{\otimes (N+1)},\label{eq:C20}
\end{align}
where we use the fact $W_{i,j} =d^{-1} \sum_{P \in \mathcal{P}(n)}{P_i \otimes P_j}$ for $d = 2^n$ and $W_{i,j}$ is the n-qubit SWAP operator on the $i$-th and $j$-th replicas.
The target function is then
\begin{align}
   \sum_{P \in  \mathcal{P}_n^{\times}} \tr(Y_{{P}} F^{(N)}_P)  = \sum_{P \in  \mathcal{P}_n^{\times}}2^{-n} \tr_0 \tr_{1 \cdots N} \left[(P \otimes \1^{\otimes N}) \cdot \Gamma \cdot (\1 \otimes F^{(N)}_P)\right]  = \tr(F_{\mathcal{P}} \cdot \Gamma) = \frac{1}{N}\sum_{i = 1}^N \tr(\Gamma \cdot W_{0,i}),\label{eq:C21}
\end{align}
where we use the fact $\tr(\Gamma) = 0$.
Based on the lifted representation, the original dual problem can be rewritten
as an optimization over two Clifford-commutant variables $Z$ and $\Gamma$.
Define
\begin{align}
    \mathfrak A_N:=\mathrm{Com}(\mathrm{Cl}(n)^{\otimes N})\cap \mathrm{Com}(S_N), \;
     \mathfrak B_{N+1}:=\left\{ \Gamma\in \mathrm{Com}(\mathrm{Cl}(n)^{\otimes(N+1)}): [\1 \otimes W_\pi,\Gamma]=0,\ \forall \pi\in S_N \right\}.\label{eq:C22}
\end{align}
For $c\in\{\pm1\}^{|\mathcal A|}$, let $C_c:=\sum_{P\in\mathcal A}c_P P$ with $\mathcal L_c(\Gamma):=\frac1d\tr_0[(C_c\otimes \1^{\otimes N})\Gamma]$.
Then the dual is rigorously equivalent to
\begin{align}
     {
\begin{aligned}
\|\mathcal F\|_{N,{\mathrm{mem}}}
=
\max_{Z,\Gamma}\quad&
\tr(\Gamma F_{\mathcal P})\\
\mathrm{s.t.}\quad&
Z-\mathcal L_c(\Gamma)\in
\mathcal K_{{\mathrm{mem}}}^*,
\quad \forall c,\\
&\tr Z=1,\\
&Z\in\mathfrak A_N,\quad \Gamma\in\mathfrak B_{N+1}.
\end{aligned}
}\label{eq:n-replica-Pauli-dual}
\end{align}
As proved in Theorem 57 of~\cite{bittel2025}, the $N$-fold Clifford commutant is the algebra generated by $S_N \cup \{\Omega_4, \Omega_6\}$ and $\Omega_N$ if $N$ is divisible by 4, where $\Omega_k := \sum_{P \in \mathcal{P}(n)} P^{\otimes k}$ is a simple Pauli monomial supported on $k$ replicas of $n$-qubit states.
However, the direct computation of Eq.\eqref{eq:n-replica-Pauli-dual} is still challenging since the Clifford-Weingarten calculus for $N \geq 5$ is not well-developed.
Here we provide a lower bound by restricting $\Gamma$ and $Z$ to be a linear combination of $\Omega_k$. To be specific, we set up
\begin{align}
    \Gamma =\sum_{l \leq N,l \text{~is odd} } a_l G^{(N+1)}_{l}, \quad G^{(N+1)}_l :=\binom{N}{l}^{-1} \sum_{\substack{S \subseteq [N], |S| = l}} \sum_{Q \in \mathcal{P}_n^{\times}}Q \otimes Q^{\otimes S} \otimes \1^{\otimes \bar{S}},\label{eq:feasible_Gamma}
\\
    Z=\frac{\1^{\otimes N}}{d^{N}} + \sum_{\substack{m \leq N\\ m \text{~is even}}}^N z_m H_{m}^{(N)}, \quad H_{m}^{(N)} := \binom{N}{m}^{-1} \sum_{\substack{S \subseteq [N], |S| = m}} \sum_{Q \in \mathcal{P}_n^{\times}} Q^{\otimes S} \otimes \1^{\otimes \bar{S}},\label{eq:feasible_Z}
\end{align}
where $G^{(N+1)}_{l}$ for odd $l$ and $H_{m}^{(N)}$ for even $m$ are both $N+1$-fold and $N$-fold Clifford commutants since they are primitive Pauli monomials (see Definition 28 in \cite{bittel2025}) that are supported on subsets $S$ of $N+1$ and $N$ qubits, respectively.
Under this restriction, we note that
\begin{align}
    \tr(G^{(N+1)}_l F_{\mathcal{P}}) &= N^{-1}\binom{N}{l}^{-1} \sum_{i = 1}^N \sum_{\substack{S \subseteq [N], |S| = l}} \sum_{Q \in \mathcal{P}_n^{\times}} \tr\left[(Q \otimes Q^{\otimes S} \otimes \1^{\otimes \bar{S}}) \cdot S_{0,i}\right]\label{eq:C26}
\\
    &= N^{-1}\binom{N}{l}^{-1} \sum_{i = 1}^N \sum_{\substack{S \subseteq [N], |S| = l}} \sum_{Q \in \mathcal{P}_n^{\times}} \delta( i \in S) \delta(S/\{i\} = \varnothing)d \cdot d^{N-1}\label{eq:C27}
\\
    &= N^{-1} \binom{N}{l}^{-1} \sum_{i = 1}^N\sum_{\substack{S \subseteq [N], |S| = l}}
    (d^2 - 1) \delta_{S=\{i\}}d^N =\delta_{l,1} (d^2 - 1) \frac{d^N}{N},\label{eq:C28}
\end{align}
where in the second equality we use the property that $\tr Q = 0$ for any $Q \in \mathcal{P}_n^{\times}$ and $\tr(S_{i,j}\cdot(A_i\otimes B_j)) = \tr(A_i\otimes B_j)$. This means that our target function only depends on $a_1$ and ignores the other coefficients $a_l$ for $l \neq 1$.
For the restriction, we note that
\begin{align}
     \sum_{P \in \mathcal{P}_n^{\times}}c_PY_P &= \frac{1}{d}\sum_{P \in \mathcal{P}_n^{\times}}c_P\mathrm{tr}_0\left[ (P \otimes \1^{\otimes N}) \Gamma\right] =\sum_{l \leq N,l \text{~is odd} } a_lD_{l,c}^{(N)} ,\label{eq:C29}
\\
     D_{l,c}^{(N)} &:= \sum_{\substack{l\leq N\\ l \text{~is odd}}}^N \binom{N}{l}^{-1}\sum_{S \subseteq [N], |S| = l} \sum_{P \in \mathcal{P}_n^{\times}}c_P P^{\otimes S} \otimes \1^{\otimes \bar{S}} .\label{eq:C30}
\end{align}
This reduces the dual problem of Eq.\eqref{eq:n-replica-Pauli-dual} to
\begin{align}
     {
    \begin{aligned}
        d = \max_{\{a_l\},\{z_m\}}  & (d^2 - 1)\frac{d^N}{N} a_1 \\
    \mathrm{s.t.}\quad & \frac{\1^{\otimes N}}{d^{N}} + \sum_{\substack{m = 1\\ m \text{~is odd}}}^N z_m H_{m}^{(N)} -  \sum_{\substack{m = 2\\ m \text{~is even}}}^N a_m D_{m,c}^{(N)} \in \mathcal{K}_{{\mathrm{mem}}}^*,
    \end{aligned}
    }\label{eq:C31}
\end{align}
If we further take $a_1 >0, a_{l >1} = 0, z_m = 0$ and restrict ourselves to the memory-free strategy $\mathcal{C}_{\mathrm{mem}}:= \mathrm{conv}(\Pi_1 \otimes \cdots \otimes \Pi_{N}: \Pi_i \succeq 0)$. Based on Eq.~\eqref{eq:site-product-cone}, $W \in \mathcal{K}_{\mathcal{C}}^*$ is equivalent to $\langle \Phi | W | \Phi \rangle \geq 0$ for arbitrary tensor product state $|\Phi\rangle = |\psi_1\rangle \otimes \cdots \otimes |\psi_N\rangle$.
Then the problem is reduced to a formula similar to that for the $N = 1$ case:
\begin{align}
     {
    \begin{aligned}
        d = \max_{a_1} \quad & (d^2 - 1)\frac{d^N}{N} a_1 \\
    \mathrm{s.t.}\quad & \frac{1}{d^N} \geq a_1 \sup_{\psi_1,\cdots,\psi_N} \sum_{P \in \mathcal{P}_n^{\times}} \left| \frac{1}{N} \sum_{i =1}^N \langle \psi_i | P | \psi_i \rangle\right|.
    \end{aligned}
    }\label{eq:C32}
\end{align}
Using Lemma~\ref{lem:pauli-max-signal} and convexity, we conclude that
\begin{align}
    \sup_{\psi_1,\cdots,\psi_N} \sum_{P \in \mathcal{P}_n^{\times}} \left| \frac{1}{N} \sum_{i =1}^N \langle \psi_i | P | \psi_i \rangle\right| = \sup_{\psi} \sum_{P \in \mathcal{P}_n^{\times}} \left| \langle \psi | P | \psi \rangle\right| = \Sigma.\label{eq:C33}
\end{align}
Therefore, the optimal value satisfies:
\begin{align}
    \|\cF\|_{k,{\mathrm{mem}}} \geq \frac{d^2 - 1}{N \Sigma} \geq \frac{d^2 - 1}{N \sqrt{(d^2 - 1)(d - 1)}}=\frac{\sqrt{d + 1}}{N}.\label{eq:C34}
\end{align}
This means that any constant-envelope unbiased memory-free ($c = 1$) $N$-copy estimator must
satisfy $N=\Omega(\sqrt d)=\Omega(2^{n/2})$.
This reproduces an exponential obstruction for full Pauli shadow tomography,
although it does not match the tight $\Theta(d)=\Theta(2^n)$ memory-free
sample complexity~\cite{chen2022exponential,chen2024optimal}.
The loss arises from the restricted feasible-point ansatz in
Eq.~\eqref{eq:feasible_Gamma} and Eq.~\eqref{eq:feasible_Z} and optimizing the full
$N$-replica dual program in Eq.\eqref{eq:n-replica-Pauli-dual} may yield a stronger bound, which we leave for
future work.

\appsubsection{Examples for purity estimation}
For the purity estimation task $\cF := \{\tr(\rho^2)\}$, the 2-replica symmetric linearization is the permutation operator $F = W_{(12)}$ on $2$-replicas of $n$-qubit space $\cH^{\ox 2}$ with $d := \dim(\cH) = 2^n$.
We consider the POVM restriction $\mathcal{C}_{\mathrm{mem}}$, which is formally defined in Eq.~\eqref{eq:block-positive-cone-mem} and gives the dual cone constraints in Eq.~\eqref{eq:block-positive-cone}.
\appsubsubsection{The 2-replica memory restricted strategy}

For $\|\cF\|_{2,{\mathrm{mem}}}$,
 the POVM operators are restricted to the convex set $\mathcal{C} = \mathrm{conv}\{\Pi_j = \Pi_j^{(1)} \otimes \Pi_j^{(2)}\}$, where $\Pi_j^{(1)}, \Pi_j^{(2)} \in \mathrm{Herm}(\cH)$ are $n$-qubit POVM elements on the first and second replicas, respectively.
The dual cone of $\mathcal{C}$ is therefore
\begin{align}
    \mathcal{K}_{\mathrm{mem}}^{*} = \bigl\{O \in \mathrm{Herm}(\cH^{\ox 2}) \;\big|\; \tr(O\, A \ox B) \geq 0, \;\forall\, A, B \succeq 0\bigr\},\label{eq:C35}
\end{align}
which is the cone of \emph{block-positive} operators~\cite{skowronek2009cones}.
Equivalently, it is the cone of operators with non-negative expectation on all product states $\ket{\phi}\ket{\psi}$:
\begin{align}
    \bra{\phi}\bra{\psi}O\ket{\phi}\ket{\psi} \geq 0, \;\forall\, \ket{\phi},\ket{\psi} \in \cH.\label{eq:C36}
\end{align}
In quantum information theory, the duality between the cone of separable unnormalized states and the cone of block-positive operators forms the foundational mathematical structure for entanglement witnesses~\cite{horodecki2001separability, guhne2009entanglement}.
Since $|\cA| = 1$, the sign vector in Lemma~\ref{lem:dual-robustness} is $c \in \{-1,+1\}$.
The dual problem then reduces to
\begin{align}
     {
    \begin{aligned}
    d^* = \max_{Z, Y} \tr(Y W_{(12)}) \quad \text{s.t.~} Z \pm Y \in \mathcal{K}_{\mathcal{C}}^*,\quad  \tr(Z) = 1, \quad Y, Z \in \mathrm{Herm}(\cH^{\ox k})
    \end{aligned}
    }\label{eq:purity-dual-problem}
\end{align}
Then, using the symmetry of the target function $\tr(Y W_{(12)})$, it can be shown that without loss of generality, we can take $Y,Z$ as a linear combination of the identity and the SWAP operators.
\begin{lemma}[Optimal purity encoding under separable measurements]\label{lem:purity-optimal-encoding}
    For the purity estimation task $\cF = \{\tr(\rho^2)\}$ with separable 2-copy measurements, the dual variables $\{Z,Y\}$ of Eq.~\eqref{eq:purity-dual-problem} can be taken in the form $Z = \alpha\, \1_{2n} + \beta\, W_{(12)}, \, Y = a\, \1_{2n} + b\, W_{(12)}$ for some $\alpha, \beta, a, b \in \mathbb{R}$.
\end{lemma}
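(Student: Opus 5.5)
The plan is a twirling argument of the same kind as Lemma~\ref{lem:pauli-optimal-encoding} and Lemma~\ref{lem:clifford-reduction-full}, but over the group $\{U\ox U: U\in \mathrm{U}(d)\}$ combined with the replica swap $W_{(12)}$. First, take any feasible pair $(Z,Y)$ of Eq.~\eqref{eq:purity-dual-problem}. For a Haar-random unitary $U$, replace it by the conjugated pair $(U^{\ox 2} Z U^{\dagger\ox 2}, U^{\ox 2} Y U^{\dagger\ox 2})$, and also by the swapped pair $(W_{(12)} Z W_{(12)}, W_{(12)} Y W_{(12)})$. We must then check that feasibility and the objective are preserved under both maps.

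For the trace constraint this is immediate, since conjugation by a unitary preserves $\tr Z=1$. For the objective, $\tr(Y W_{(12)})$ is invariant because $W_{(12)}$ commutes with $U\ox U$ and with itself.

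For the cone constraint $Z\pm Y\in\mathcal K^*_{\mathrm{mem}}$, use the characterization in Eq.~\eqref{eq:C36}. Conjugating by $U\ox U$ sends product states $\ket{\phi}\ket{\psi}$ to product states $U\ket{\phi}U\ket{\psi}$. Conjugating by the swap sends them to $\ket{\psi}\ket{\phi}$. Hence the block-positive cone is invariant under both operations, and $Z\pm Y$ stays block-positive.

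Next, average over the group. The feasible set is convex, since the constraints are linear and the cone is convex, and the objective is linear. The averaged pair
\[
\bar Z=\int dU\, U^{\ox 2}ZU^{\dagger\ox 2},\qquad \bar Y=\int dU\, U^{\ox 2}YU^{\dagger\ox 2}
\]
is therefore feasible with the same objective value. The averaging can be made rigorous either by approximating the Haar integral with a unitary 2-design (finite sums) or by invoking closedness of the cone. Symmetrizing over the swap is in fact unnecessary, because the $U\ox U$ twirl already produces swap-invariant operators.

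By Schur--Weyl duality, the commutant of $\{U\ox U\}$ on $\cH^{\ox 2}$ is spanned by $\1_{2n}$ and $W_{(12)}$. This gives
\[
\bar Z=\alpha\1_{2n}+\beta W_{(12)},\qquad \bar Y=a\1_{2n}+bW_{(12)},
\]
with real coefficients, since $\bar Z$ and $\bar Y$ are Hermitian and the two basis operators are Hermitian. This is exactly the form claimed in Lemma~\ref{lem:purity-optimal-encoding}.

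The only delicate step is confirming that the twirl preserves the block-positive (non-PSD) cone. This holds because the symmetry group maps product states to product states, and this preservation is what distinguishes this proof from the PSD case. The remaining steps are routine Schur--Weyl bookkeeping.
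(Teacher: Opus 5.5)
Your proposal is correct and is essentially the paper's proof: a Haar twirl by $U^{\ox 2}$ that preserves the block-positive cone, the trace constraint, and the objective $\tr(YW_{(12)})$, followed by Schur--Weyl duality to place the averaged pair $(\bar Z,\bar Y)$ in $\mathrm{span}_{\mathbb{R}}\{\1_{2n},W_{(12)}\}$. Your extra remarks, that the swap symmetrization is redundant and that the Haar average can be justified via a 2-design or closedness of the cone, refine the argument without changing the route.
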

\begin{proof}
We apply a unitary twirling argument analogous to the Pauli twirling in Lemma~\ref{lem:pauli-optimal-encoding}, replacing the discrete Pauli group with the continuous unitary group.
For any unitary $U \in \mathrm{U}(d)$, the conjugation $A \mapsto (U \ox U) A (U^\dagger \ox U^\dagger)$ preserves:
(i) the block-positive cone $\mathcal{K}_\mathcal{C}^*$ (since $U^\dagger A U$ and $U^\dagger B U$ remain PSD for $A, B \succeq 0$),
(ii) the dual objective $\tr(Y \cdot W_{(12)})$ since the SWAP operator satisfies $[U^{\ox 2}, W_{(12)}] = 0$,
(iii) the trace restriction $\tr(Z) = 1$ since $\tr\left[(U^{\ox 2}) Z (U^{\dagger})^{\ox 2}\right] = \tr(Z)$.
Therefore, given any feasible $(Z, Y)$, the Haar-averaged solution
\begin{align}
    \bar{Z} = \int_{\mathrm{Haar}} d U  (U^{\ox 2}) Z (U^{\dagger})^{\ox 2}, \quad
     \bar{Y} =  \int_{\mathrm{Haar}} d U  (U^{\ox 2}) Y (U^{\dagger})^{\ox 2}.\label{eq:C38}
\end{align}
is equally feasible with the same objective.
By Schur--Weyl duality for two copies, the commutant of $\{U^{\ox 2} : U \in \mathrm{U}(d)\}$ is $\mathrm{span}\{\1_{2n}, W_2\}$.
Hence, without loss of generality, $Z = \alpha \1_{2n} + \beta\, W_{(12)}, Y = a \1_{2n} + b\, W_{(12)}$ can be chosen.
\end{proof}
By Lemma~\ref{lem:purity-optimal-encoding}, the dual conic convex program~\eqref{eq:purity-dual-problem} reduces to a scalar optimization over four real parameters $(\alpha, \beta, a, b)$. We now evaluate each constraint explicitly.
The dual variable must satisfy $Z \pm Y = (\alpha \pm a)\,\1_{2n} + (\beta \pm b)\,W_2 \in \mathcal{K}_{\mathcal{C}}^*$. This is equivalent to $\bra{\phi,\psi} (Z \pm Y) \ket{\phi,\psi} \geq 0$ for all product states $\ket{\phi}\ket{\psi}$. Using $\bra{\phi,\psi}W_2\ket{\phi,\psi} = |\braket{\phi}{\psi}|^2$, we define $t := |\braket{\phi}{\psi}| \in [0,1]$ and obtain $(\alpha \pm a) + (\beta \pm b)\,t^2 \geq 0, \forall\, t \in [0,1]$. Since each expression is affine in $t^2 \in [0,1]$, it suffices to enforce non-negativity at the endpoints $t = 0$ and $t = 1$, yielding four scalar constraints:
\begin{align*}
    \text{(C1):~} \alpha + a \geq 0, \quad \text{(C2):~} \alpha + a + \beta + b \geq 0,\\
    \text{(C3):~} \alpha - a \geq 0, \quad \text{(C4):~} \alpha - a + \beta - b \geq 0.
\end{align*}
Using $\tr(\1_{2n}) = d^2$ and $\tr(W_2) = d$, the trace constraint is $\text{(C5):~}\tr(Z) = \alpha\,d^2 + \beta\,d = 1$ and the objective is given by
\begin{align}
    \tr(Y W_2) = a\,\tr(W_2) + b\,\tr(W_2^2) = a\,d + b\,d^2.\label{eq:C39}
\end{align}
Combining all ingredients, the dual conic convex program reduces to:
\begin{equation}
     {
    \begin{aligned}
        d^*&  = \max_{\alpha,\beta,a,b} \quad a\,d + b\,d^2,\\
        \text{s.t.} \quad & \alpha + a \geq 0 \quad \text{(C1)}, \quad \alpha + a + \beta + b \geq 0 \quad \text{(C2)}, \\
        & \alpha - a \geq 0 \quad \text{(C3)}, \quad \alpha - a + \beta - b \geq 0 \quad \text{(C4)}, \\
        & \alpha\,d^2 + \beta\,d = 1 \quad \text{(C5)}.
    \end{aligned}
    }\label{eq:scalar_dual_SDP_purity}
\end{equation}
For fixed $\alpha,\beta, a$, the objective is increasing in $b$.
Thus, $b$ should saturate its upper bound from (C4), i.e., $b=\alpha-a+\beta$.
Substituting this into the objective gives
\begin{align}
ad+bd^2=ad+d^2(\alpha-a+\beta)=d^2(\alpha+\beta)-a(d^2-d).\label{eq:C41}
\end{align}
Since $d \ge 2$, the coefficient of $a$ is negative. Hence $a$ should be minimized subject to (C1), namely $a=-\alpha$.
This completely fixes $a,b$ as functions of $\alpha, \beta$ as $a = -\alpha, b = 2\alpha + \beta$.
Using the trace constraint (C5), we get $\beta=d^{-1}{(1-\alpha d^2)}$.
Therefore the target function is given by a function of $\alpha$ as
\begin{align}
ad+bd^2 = -\alpha d+(2\alpha+\beta)d^2 =d-\alpha d(d-1)^2,\label{eq:C42}
\end{align}
which is decreasing in $\alpha$.
The remaining constraints (C2) and (C3) imply
$\alpha\ge 0, \alpha+\beta\ge 0$,
so the optimum is attained at $\alpha=0$.
This gives $\beta = 1/d$, $a = 0$, $b = 1/d$ and $d^* = d$.
The optimal dual variables are $Z^* = Y^* =d^{-1}\,W_2$.
It can be verified that the constraints $Z + Y = (2d^{-1})\,W_2$ and $Z - Y = 0$ are block-positive since $\bra{\phi}\bra{\psi}W_2\ket{\phi}\ket{\psi} = |\braket{\phi}{\psi}|^2 \geq 0$.
Since the trivial POVM $\{\1^{\ox 2}\}$ lies in the interior of the separable cone $\mathcal{C}$,
following from Lemma~\ref{lem:dual-robustness}, strong duality and the equality $\|\cF\|_{2,{\mathrm{mem}}} = d^* = 2^n$ are satisfied.

Using Hoeffding's inequality, this $\mathrm{QB}_2$-norm gives a sample complexity scaling $O(d^2/\varepsilon^2)$,
which is higher than the real lower bound $2^{n/2}$~\cite{chen2024optimal,gong2024sample,chen2022exponential}.
This is expected since $\mathrm{QB}_2$-norm considers only independent two-copy separable block estimators.
To show the real lower bound, we need to consider the memory-free $N$-replica strategy setting, i.e., $\|\cF\|_{N,{\mathrm{mem}}}$.

\appsubsubsection{The $N$-replica memory restricted strategy}

We now consider the calculation of $\|\cF\|_{N,{\mathrm{mem}}}$.
The unique symmetric linearization of $\tr(\rho^2)$ on $\cH^{\ox N}$ is the symmetrized pairwise SWAP:
\begin{align}
    F^{(N)} := \binom{N}{2}^{-1} \sum_{1 \leq i < j \leq N} W_{i,j} = K^{-1} C_\tau, \text{~where~}  K = \binom{N}{2}, \tau := (2, 1^{N-2}).\label{eq:purity-N-target}
\end{align}
Here $C_\tau := \sum_{\pi \in \mathcal{C}_\tau} W_\pi$ sums over the permutation group $S_N$'s conjugacy class $\mathcal{C}_\tau$, where $\tau \vdash N$ is a partition of $N$ and $W_{\pi} \ket{\psi_1} \otimes \cdots \otimes \ket{\psi_N} = \ket{\psi_{\pi(1)}} \otimes \cdots \otimes \ket{\psi_{\pi(N)}}$ defines the representation of $S_N$ on $N$ replicas of the state.
The POVM constraint is the single-copy tensor product cone $\mathcal{C}^{(N)}_{\mathrm{loc}} = \mathrm{conv}\{\bigotimes_{i=1}^N \Pi_{x_i}: \Pi_{x_i} \succeq 0\}$, with dual cone $\mathcal{K}^{*,(N)}_{\mathrm{loc}}$ consisting of operators block-positive on all $N$-fold product states.
The $N$-replica dual conic convex program reads:
\begin{align}
     {
    d_N^* = \max_{Z,Y} \; \tr(Y\, F^{(N)}) \quad \text{s.t.} \quad Z \pm Y \in \mathcal{K}^{*,(N)}_{\mathrm{loc}}, \quad \tr(Z) \leq 1.
    }\label{eq:purity-N-dual}
\end{align}
We reduce the dual variables to the center of the symmetric group algebra $\mathbb{C}[S_N]$ using two successive symmetrization arguments.
\begin{lemma}[Reduction to Weingarten algebra]\label{lem:purity-N-weingarten}
For $d \geq N$, the optimal dual variables of~\eqref{eq:purity-N-dual} can without loss of generality be taken in the center of $\mathbb{C}[S_N]$:
\begin{align}
    Z = \sum_{\lambda \vdash N} z_\lambda\, C_\lambda, \qquad Y = \sum_{\lambda \vdash N} y_\lambda\, C_\lambda, \quad C_\lambda = \sum_{\pi:\, \mathrm{type}(\pi) = \lambda} W_\pi.\label{eq:N-center-expansion}
\end{align}
where the sum runs over the integer partitions $\lambda$ of $N$ that label the conjugacy classes of the permutation group $S_N$.
\end{lemma}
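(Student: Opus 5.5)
The plan uses two successive twirling arguments, in the same spirit as Lemma~\ref{lem:purity-optimal-encoding}. Each twirl preserves feasibility of \eqref{eq:purity-N-dual} and the value of the objective $\tr(YF^{(N)})$. The first twirl is the collective unitary twirl $A\mapsto\int dU\,U^{\ox N}AU^{\dagger\ox N}$. It preserves the product-state block-positive cone $\mathcal{K}^{*,(N)}_{\mathrm{loc}}$, because $\bra{\psi_1\cdots\psi_N}U^{\ox N}AU^{\dagger\ox N}\ket{\psi_1\cdots\psi_N}$ is again an expectation of $A$ in a product state. It also preserves the trace, and it fixes the objective since $[U^{\ox N},F^{(N)}]=0$. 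Because the constraint $Z\pm Y\in\mathcal{K}^{*,(N)}_{\mathrm{loc}}$ is a convex cone condition, the twirled pair $(\bar Z,\bar Y)$ is still feasible. By Schur--Weyl duality, $\bar Z,\bar Y$ lie in $\mathrm{span}\{W_\pi:\pi\in S_N\}$. For $d\ge N$ the operators $W_\pi$ are linearly independent, so this span is a faithful copy of $\mathbb{C}[S_N]$; this is where the hypothesis $d\geq N$ enters.

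The second twirl is conjugation by the replica permutations, $A\mapsto \frac{1}{N!}\sum_{\sigma\in S_N}W_\sigma A W_\sigma^\dagger$. Conjugating by $W_\sigma$ maps a product state to the permuted product state, so the cone $\mathcal{K}^{*,(N)}_{\mathrm{loc}}$ is invariant and the trace is unchanged. The objective is fixed because $F^{(N)}=K^{-1}C_\tau$ is a class sum and hence commutes with every $W_\sigma$. Averaging once more and using convexity gives a feasible pair with the same objective. On $\mathrm{span}\{W_\pi\}$ this averaging sends $W_\pi$ to $\frac{1}{|\mathcal{C}_{\mathrm{type}(\pi)}|}C_{\mathrm{type}(\pi)}$. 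Therefore the resulting $Z,Y$ are linear combinations of the class sums $C_\lambda$, i.e.\ they lie in the center of $\mathbb{C}[S_N]$, which is exactly \eqref{eq:N-center-expansion}.

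Finally I must check that the coefficients are real, since $Z$ and $Y$ are Hermitian. We have $W_\pi^\dagger=W_{\pi^{-1}}$, and $\pi^{-1}$ has the same cycle type as $\pi$, so each $C_\lambda$ is Hermitian. Taking Hermitian parts of the dual variables preserves feasibility, which forces $z_\lambda,y_\lambda\in\mathbb{R}$.

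The main obstacle I expect is rigor in the first step. I need to verify carefully that the Haar average of a cone-feasible family stays in the closed convex cone; this follows from closure of the cone under limits and integrals of invariant conjugates. I also need the linear independence of the $W_\pi$, so that the coefficients $z_\lambda,y_\lambda$ are well defined. If $d<N$ the statement would still hold as a spanning statement, but the coefficients would not be unique, so I will invoke $d\geq N$ only for uniqueness.
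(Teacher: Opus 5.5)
Your proposal is correct and follows the paper's proof: a collective Haar twirl $A\mapsto\int dU\,U^{\ox N}AU^{\dagger\ox N}$, which preserves the product-state block-positive cone, the trace and the objective, followed by an average over replica-permutation conjugations, which projects onto the class sums $C_\lambda$ because $F^{(N)}=K^{-1}C_\tau$ is central. The differences are minor: you invoke Schur--Weyl duality where the paper writes the explicit Weingarten formula (both give the same projection onto $\mathrm{span}\{W_\pi\}$, with $d\geq N$ entering through linear independence of the $W_\pi$), and you additionally check that the coefficients $z_\lambda,y_\lambda$ are real.
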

\begin{proof}
The argument of Lemma~\ref{lem:purity-optimal-encoding} extends to $N$ copies.
For $U \in \mathrm{U}(d)$, the conjugation $A \mapsto U^{\ox N} A\, (U^\dagger)^{\ox N}$ preserves (i) the tensor-product block-positive cone $\mathcal{K}^{*,(N)}_{\mathrm{loc}}$ since $U^\dagger \Pi_{x_i} U \succeq 0$ for each factor, (ii) the target $F^{(N)}$ since $[W_{i,j}, U^{\ox N}] = 0$, and (iii) the trace since $\tr\left[U^{\ox N} Z U^{\ox N}\right] = \tr(Z)$.
The Haar-averaged solution is therefore equally feasible with identical objective value.
By the Weingarten integration formula~\cite{collins2006integration}, for any operator $A$ on $\cH^{\ox N}$:
\begin{align}
    \bar{A} := \int_{\mathrm{Haar}} dU \; U^{\ox N} A\, (U^\dagger)^{\ox N} = \sum_{\sigma, \mu \in S_N} \mathrm{Wg}(\sigma^{-1}\mu, d)\, \tr(W^\dagger_\mu\, A)\, W_\sigma,\label{eq:weingarten-formula}
\end{align}
where $\mathrm{Wg}(\pi, d)$ is the Weingarten function, defined as the inverse of the Gram matrix $G_{\sigma\mu} = d^{c(\sigma^{-1}\mu)}$ with $c(\pi)$ being the number of cycles in the permutation $\pi$ on $\mathbb{C}[S_N]$ such that $\sum_{\mu \in S_N} \mathrm{Wg}(\sigma\mu^{-1}, d) \cdot d^{c(\mu\nu^{-1})} = \delta_{\sigma,\nu}$.
For any $\sigma \in S_N$, the replica permutation $A \mapsto W_\sigma A W_\sigma^{-1}$ also preserves:
(i) the block-positive cone, since permuting tensor factors maps product states to product states,
(ii) the target $\tr(Y F^{(N)})$ since $C_\tau$ is central in $\mathbb{C}[S_N]$, thus $[F^{(N)}, W_\sigma] = 0$ for all $\sigma \in S_N$.
(iii) the trace, since $\tr\left[W_\sigma Z W_\sigma^{-1}\right] = \tr Z$.
Again we average over $S_N$ and get
\begin{align}
    \tilde{A}&:= \frac{1}{N!}\sum_{\pi \in S_N} W_\pi \bar{A} W_\pi^{-1} =
     \sum_{\sigma, \mu \in S_N} \mathrm{Wg}(\sigma^{-1}\mu, d)\, \tr(W^\dagger_\mu\, A)\,\frac{1}{N!}\sum_{\pi \in S_N} W_\pi W_\sigma W_\pi^{-1}\label{eq:center-projection}
\\
     &=\sum_{\sigma \in S_N} \left[\sum_{\mu \in S_N} \mathrm{Wg}(\sigma^{-1}\mu, d)\, \tr(W^\dagger_\mu\, A)\,\frac{1}{N!}\right]\sum_{\pi \in S_N}  W_{\pi\sigma\pi^{-1}} =\sum_{\lambda \vdash N} c_{\lambda} C_\lambda.\label{eq:C48}
\end{align}
Here in the last line we use the fact that $\sum_{\pi \in S_N} W_{\pi\sigma\pi^{-1}} \propto C_{\lambda}$ for $\lambda = \mathrm{type}(\sigma)$.
Since both averaging steps preserve feasibility and the objective, applying the averaging step to $(Z, Y)$ gives a no-worse solution $(Z, Y)$ in the form~\eqref{eq:N-center-expansion}.
\end{proof}

By Lemma~\ref{lem:purity-N-weingarten}, the dual conic convex program reduces to a scalar optimization over $2\,p(N)$ real parameters $\{z_\lambda, y_\lambda\}_{\lambda \vdash N}$, where $p(N)$ is the number of integer partitions of $N$. We now evaluate each ingredient.
Since $\tr(W_\pi) = d^{c(\pi)}$ where $c(\pi)$ is the number of cycles in the permutation $\pi$, and all $\pi \in \mathcal{C}_\lambda$ share the cycle count $\ell(\lambda) := \sum_k m_k$ for $\lambda = (1^{m_1} 2^{m_2} \cdots)$, we conclude that for the conjugacy class $\mathcal{C}_\lambda$,
\begin{align}
    \tr(C_\lambda) = |\mathcal{C}_\lambda|\, d^{\ell(\lambda)}, \qquad |\mathcal{C}_\lambda| = \frac{N!}{\prod_k k^{m_k}\, m_k!}.\label{eq:class-sum-trace}
\end{align}
For example, $|\mathcal{C}_{(1^N)}| = 1$ and $|\mathcal{C}_\tau| = {N!}/({2^{1} (N-2)!}) = {N(N-1)}/{2} = K$, where $\tau,K$ are defined in Eq.~\eqref{eq:purity-N-target}.
The trace constraint becomes:
\begin{align}
    \tr Z = \sum_{\lambda \vdash N} z_\lambda\, \tr \left[C_\lambda\right] = \sum_{\lambda \vdash N} z_\lambda\, |\mathcal{C}_\lambda|\, d^{\ell(\lambda)} = 1.\label{eq:N-trace-constraint}
\end{align}
For the objective $\tr(Y \cdot F^{(N)}) = K^{-1}\tr(Y \cdot C_\tau)$, we need the product structure in the center. Specifically, we calculate the value for $C_{1^N} = \1_{Nn}$ and $C_\tau$ defined in Eq.~\eqref{eq:purity-N-target} in the following lemma:
\begin{lemma}
    \label{lem:trace-formulas}
    Let $C_{(1^N)} = \1^{\otimes N}$, $K = \binom{N}{2}$, and $C_\tau = \sum_{1 \leq i < j \leq N} W_{i,j}$. Then $\tr(C_{(1^N)}) = d^N, \quad \tr(C_\tau) = K\,d^{N-1}, \quad \tr(C_\tau^2) = K\,Q_N\,d^{N-2}$ for $Q_N = \bigl(d^2 + K - 1\bigr) $.
\end{lemma}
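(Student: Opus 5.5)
Each of the three identities follows from the single rule $\tr(W_\pi)=d^{c(\pi)}$, where $c(\pi)$ is the number of cycles of $\pi\in S_N$ acting on $N$ replicas of $\cH$ with $d=\dim\cH$. The plan is to write $C_{(1^N)}$, $C_\tau$ and $C_\tau^2$ as explicit sums of permutation operators and then count cycles term by term.

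The first two identities are immediate. $C_{(1^N)}=\1^{\otimes N}$ is the identity permutation, which has $N$ cycles, so its trace is $d^N$. $C_\tau$ is a sum of $K=\binom N2$ transpositions $W_{i,j}$, and each transposition has one $2$-cycle and $N-2$ fixed points, i.e. $N-1$ cycles. This gives $\tr(C_\tau)=K\,d^{N-1}$, in agreement with Eq.~\eqref{eq:class-sum-trace}.

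For the square, expand
\begin{align}
C_\tau^2=\sum_{\{i,j\}}\sum_{\{k,l\}} W_{i,j}W_{k,l}
\end{align}
and sort the $K^2$ ordered pairs of transpositions by how their supports overlap.
\begin{itemize}
\item \emph{Equal transpositions.} There are $K$ such pairs. The product is the identity, with trace $d^N$.
\item \emph{One shared index.} For a fixed first transposition there are $2(N-2)$ choices of the second, so $2K(N-2)$ pairs in total. The product is a $3$-cycle with $N-3$ cycles overall, with trace $d^{N-2}$.
\item \emph{Disjoint supports.} For a fixed first transposition there are $\binom{N-2}{2}$ choices, so $K\binom{N-2}{2}$ pairs. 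The product has cycle type $(2,2,1^{N-4})$, which has $N-2$ cycles, with trace $d^{N-2}$.
\end{itemize}
Summing the three cases,
\begin{align}
\tr(C_\tau^2)=K\,d^{N-2}\Bigl[d^2+2(N-2)+\tbinom{N-2}{2}\Bigr].
\end{align}

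It remains to simplify the bracket using $K-1=\tfrac{N(N-1)}2-1$. One has $2(N-2)+\binom{N-2}{2}=\tfrac{(N-2)(N+1)}2=\tfrac{N^2-N-2}2=K-1$, so the bracket equals $d^2+K-1=Q_N$. This proves $\tr(C_\tau^2)=K\,Q_N\,d^{N-2}$.

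The only place requiring care is the counting of the overlap cases, together with the consistency check $K+2K(N-2)+K\binom{N-2}{2}=K^2$, which confirms that the three cases exhaust all ordered pairs. The cycle counts themselves are routine.
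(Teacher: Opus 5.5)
Your proof is correct and follows the paper's argument essentially verbatim: you use the rule $\tr(W_\pi)=d^{c(\pi)}$ and classify the $K^2$ ordered pairs of transpositions into equal, one-shared-index and disjoint cases, which gives the same bracket $d^2+\tfrac{(N-2)(N+1)}{2}=d^2+K-1$. There is one small slip in the shared-index case: the product has $N-2$ cycles (one $3$-cycle plus $N-3$ fixed points), not ``$N-3$ cycles overall''. The trace $d^{N-2}$ you state there is correct.
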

\begin{proof}
    The first two identities follow directly from the identity $\tr(P_\pi) = d^{c(\pi)}$, given that the identity has $N$ fixed points and $\tau = (2, 1^{N-2})$ has a single transposition $(ij)$ cycle and $N-2$ fixed points. For $\tr(C_\tau^2) = \sum_{\pi, \sigma \in \mathcal{C}_\tau}
    d^{c(\pi\sigma)}$, we classify the $K^2$ ordered pairs $(\pi, \sigma)$ by the overlap of their index sets:
        (i). $\pi = \sigma$, which gives $K$ pairs and $W_{i,j}^2 = \1_n^{\otimes N}$, giving $c(\pi\sigma) = N$, contributing $K\,d^N$ in total.
        (ii). $|\mathrm{supp}(\pi) \cap \mathrm{supp}(\sigma)| = 1$, which gives $2K(N-2)$ pairs and $W_{i,j}W_{j,l} = W_{(i\,l\,j)}$ is a 3-cycle with $c = (N-3) + 1 = N - 2$, contributing $2K(N-2)\,d^{N-2}$ in total.
        (iii). $\mathrm{supp}(\pi) \cap \mathrm{supp}(\sigma) = \varnothing$, which gives $K {(N-2)(N-3)}/{2}$ pairs and $c = (N-4) + 2 = N-2$, contributing $d^{N-2}K(N-2)(N-3)/2$ in total.
    Summing and factoring out $K\,d^{N-2}$ gives
    \begin{align}
    \tr(C_\tau^2) = K\,d^N + \left[2K(N{-}2) +
           \tfrac{K(N{-}2)(N{-}3)}{2}\right] d^{N-2} = K\,d^{N-2}\!\left[d^2 +
           \tfrac{(N{-}2)(N{+}1)}{2}\right].\label{eq:C51}
\end{align}
    The combinatorial factor simplifies via
    ${(N-2)(N+1)}/{2} = ({N^2 - N - 2})/{2} = K - 1$,
    yielding the stated formula
    $\tr(C_\tau^2) = K\,d^{N-2}(d^2 + K - 1)$.
\end{proof}
In general, the class sums $C_\lambda$ satisfy the algebra $C_\lambda C_\mu = \sum_{\nu \vdash N} c_{\lambda\mu}^\nu\, C_\nu$ with structure constants computable from the irreducible characters $\chi^\rho$ of $S_N$~\cite{collins2006integration}:
\begin{align}
    c_{\lambda\mu}^\nu = \frac{|\mathcal{C}_\lambda|\,|\mathcal{C}_\mu|}{N!} \sum_{\rho \vdash N} \frac{\chi^\rho_\lambda\, \chi^\rho_\mu\, \chi^\rho_\nu}{\chi^\rho_{(1^N)}},\label{eq:structure-constants}
\end{align}
where each $\rho \vdash N $ labels an irreducible representation of $S_N$, and $\chi_{\lambda}^{\rho}$ indicates the character $\tr\left(\rho(\lambda)\right)$ for $\rho \in \hat{S}_N, \lambda \in S_N$.
Defining the \emph{Weingarten trace matrix}
\begin{align}
    T_{\lambda,\mu}(d) := \tr(C_\lambda\, C_\mu) = \sum_{\nu \vdash N} c_{\lambda\mu}^\nu\, |\mathcal{C}_\nu|\, d^{\ell(\nu)},\label{eq:weingarten-trace-matrix}
\end{align}
the objective reads as
\begin{align}
    \tr(Y \cdot F^{(N)}) =\tr\left[\sum_{\lambda \vdash N} y_\lambda\, C_\lambda \cdot (K^{-1} C_\tau)\right] =  K^{-1} \sum_{\lambda \vdash N} y_\lambda\, T_{\lambda,\tau}(d).\label{eq:N-objective}
\end{align}
For a product state
$\ket{\Psi} = \bigotimes_{i=1}^N \ket{\psi_i}$,
the expectation of a permutation operator factorizes over the disjoint cycles of $\pi$. Writing each cycle as $(i_1\, i_2\, \cdots\, i_k)$ with $\pi(i_1) = i_2,\, \pi(i_2) = i_3,\, \ldots,\, \pi(i_k) = i_1$ and $i_1, \ldots, i_k \in \{1,\ldots,N\}$,
we obtain the product form:
\begin{align}
    \bra{\Psi} P_\pi \ket{\Psi}
    = \prod_{j=1}^N s_{j,\,\pi(j)}
    = \prod_{\substack{\text{cycles~}
      (i_1\, i_2\, \cdots\, i_k)  \text{~of } \pi}}
    G_{i_1 i_2}\, G_{i_2 i_3} \cdots G_{i_{k-1} i_k}\, G_{i_k i_1},\label{eq:perm-product-state}
\end{align}
where $G_{jk} := \braket{\psi_j}{\psi_k}$ defines a matrix that is automatically PSD since $\vec{x}^\dagger G\, \vec{x} = \|\sum_j x_j\ket{\psi_j}\|^2 \geq 0$ for all $\vec{x} \in \mathbb{C}^N$. All possible choices of normalized product states $\ket{\Psi}$ correspond to the set of $N \times N$ Gram matrices that satisfy:
\begin{align}
    \mathcal{G}_{N,d} := \{G \in \mathbb{C}^{N \times N} \mid G \succeq 0,\; G_{ii} = 1,\; \mathrm{rank}(G) \leq d\}.\label{eq:C56}
\end{align}
Since $Z \pm Y$ is related to the center of $\mathbb{C}[S_N]$,
for each conjugacy class $\lambda$, define the \emph{class moment} of the product state:
\begin{align}
    \Phi_\lambda(G) := \bra{\Psi} C_\lambda \ket{\Psi} = \sum_{\pi \in \mathcal{C}_\lambda} \prod_{\substack{\text{cycles~}
    (i_1\, i_2\, \cdots\, i_k) \text{~of } \pi}} G_{i_1 i_2}\, G_{i_2 i_3} \cdots G_{i_{k-1} i_k}\, G_{i_k i_1}.\label{eq:class-moment}
\end{align}
The block-positivity constraints then read:
\begin{align}
    \sum_{\lambda \vdash N} (z_\lambda \pm y_\lambda)\, \Phi_\lambda(G) \geq 0, \qquad \forall\, G \in \mathcal{G}_{N,d},\label{eq:N-block-pos}
\end{align}
Eq.~\eqref{eq:N-block-pos} is a system of polynomial inequalities with respect to every valid $N \times N$ Gram matrix $G$.
Some special cases of $G$ are: (i). $ \ket{\Psi} = \ket{\psi}^{\ox N}$ with $G = \mathbf{1}\mathbf{1}^\top$ with all entries $1$, giving inequality $\Phi_\lambda = |\mathcal{C}_\lambda|$ for all $\lambda$, (ii). $\ket{\Psi} = \bigotimes_{i = 1}^{N}\ket{\psi_i}$ with $\langle\psi_i|\psi_j\rangle = \delta_{ij}$ and $G = I_N$, giving $\Phi_{(1^N)} = 1$ and $\Phi_\lambda = 0$ for all $\lambda \neq (1^N)$.
These two extremes generalize the $t = 1$ and $t = 0$ endpoints of the 2-replica analysis.
Combining the objective~\eqref{eq:N-objective}, the trace constraint~\eqref{eq:N-trace-constraint}, and the block-positivity constraints~\eqref{eq:N-block-pos}, the $N$-replica dual conic convex program in Eq.~\eqref{eq:purity-N-dual} reduces to:
\begin{equation}
     {
    \begin{aligned}
        d^*_N = \max_{\{z_\lambda, y_\lambda\}} \quad & K^{-1} \sum_{\lambda \vdash N} y_\lambda\, T_{\lambda,\tau}(d) \\
        \text{s.t.} \quad & \sum_{\lambda \vdash N} (z_\lambda \pm y_\lambda)\, \Phi_\lambda(G) \geq 0, \quad \forall\, G \in \mathcal{G}_{N,d} \\
        & \sum_{\lambda \vdash N} z_\lambda\, |\mathcal{C}_\lambda|\, d^{\ell(\lambda)} = 1
    \end{aligned}
    }\label{eq:N-scalar-program}
\end{equation}
The program~\eqref{eq:N-scalar-program} has finitely many variables ($2\,p(N)$ real variables, where $p(N)$ is the number of integer partitions) but infinitely many Gram-matrix constraints.
The Weingarten trace matrix $T_{\lambda,\tau}(d)$ is computed via Eq.~\eqref{eq:weingarten-trace-matrix}, and the class moments $\Phi_\lambda(G)$ are polynomials in the overlaps determined by Eq.~\eqref{eq:class-moment}.
For general $N$, the exact program~\eqref{eq:N-scalar-program} involves the full $p(N)$-dimensional center of $\mathbb{C}[S_N]$, including contributions from 3-cycles, 4-cycles, and higher permutation structures. Thus, instead of evaluating the full program~\eqref{eq:N-scalar-program}, we prove a lower bound on the $\mathrm{QB}_N$-norm by considering the transposition sector only.
\begin{lemma}[Lower bound on $\mathrm{QB}_N$-norm]\label{lem:purity-N-unified}
    The optimal value $d_N^*$ of the program~\eqref{eq:N-scalar-program} satisfies:
    \begin{align}
        d_N^* \geq \frac{1}{d}\!\left(\frac{d^2 - 1}{\binom{N}{2}} + 1\right).\label{eq:purity-N-sandwich}
\end{align}
\end{lemma}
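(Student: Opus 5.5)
The plan is to exhibit an explicit feasible point of the reduced program~\eqref{eq:N-scalar-program}. Since the program is a maximization, any feasible point lower-bounds $d_N^*$. Following the remark before the lemma, I restrict both dual variables to the transposition sector. All coefficients except $z_{(1^N)}, y_{(1^N)}, z_\tau, y_\tau$ are set to zero, so that
\[
Z=\alpha\,C_{(1^N)}+\beta\,C_\tau,\qquad Y=a\,C_{(1^N)}+b\,C_\tau .
\]
This is the direct $N$-copy analogue of the ansatz in Lemma~\ref{lem:purity-optimal-encoding}, and I expect the $2$-replica optimum $Z^*=Y^*\propto W_{(12)}$ to lift to $Z=Y\propto C_\tau$.

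First I translate each ingredient using Lemma~\ref{lem:trace-formulas}. The trace constraint becomes $\alpha d^N+\beta K d^{N-1}=1$. The objective is $K^{-1}\tr(Y C_\tau)=K^{-1}\bigl(a K d^{N-1}+b K Q_N d^{N-2}\bigr)$.

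For the block-positivity constraints~\eqref{eq:N-block-pos}, only the class moments $\Phi_{(1^N)}(G)=1$ and $\Phi_\tau(G)=\sum_{i<j}|G_{ij}|^2$ enter. For any Gram matrix in $\mathcal G_{N,d}$ we have $|G_{ij}|\le 1$, so $\Phi_\tau(G)\in[0,K]$. Each constraint $(\alpha\pm a)+(\beta\pm b)\Phi_\tau(G)\ge0$ is affine in $\Phi_\tau$, so it suffices to check the endpoints. These are attained by $G=I_N$ (orthogonal states, available since $d\ge N$) and $G=\mathbf 1\mathbf 1^\top$. This reproduces the four scalar constraints (C1)--(C4) of the $2$-replica case, with $\beta\pm b$ multiplied by $K$.

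Next I choose the point mirroring the $2$-replica solution: $\alpha=a=0$ and $\beta=b=1/(Kd^{N-1})$. The trace constraint is then satisfied exactly. The two operators to check are $Z-Y=0$, which is trivially block-positive, and $Z+Y=2C_\tau/(Kd^{N-1})$. The latter is block-positive because its product-state expectation is $2\Phi_\tau(G)/(Kd^{N-1})\ge0$. The objective evaluates to
\[
\frac{Q_N d^{N-2}}{K d^{N-1}}=\frac{d^2+K-1}{Kd}=\frac1d\Bigl(\frac{d^2-1}{K}+1\Bigr),
\]
which is exactly~\eqref{eq:purity-N-sandwich}.

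The only genuinely nontrivial ingredient is $\tr(C_\tau^2)=K Q_N d^{N-2}$, which is already supplied by Lemma~\ref{lem:trace-formulas}. The second point needing care is confirming that the finitely many endpoint checks cover all Gram matrices, which is immediate from the affinity in $\Phi_\tau$. Optionally, I would also optimize over $(\alpha,\beta,a,b)$ as in the $2$-replica reduction to show that this point is optimal within the sector. That is not needed for the stated lower bound.
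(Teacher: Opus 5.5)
Your proposal is correct and matches the paper's proof essentially step for step. Like the paper, you restrict to $\mathrm{span}\{C_{(1^N)}, C_\tau\}$, reduce block-positivity to endpoint checks using $\Phi_\tau(G)\in[0,K]$, and plug in the same feasible point $z_1=y_1=0$, $z_\tau=y_\tau=1/(Kd^{N-1})$, with the objective evaluated via $\tr(C_\tau^2)=KQ_Nd^{N-2}$ from Lemma~\ref{lem:trace-formulas}. One minor remark: for a lower bound you only need the affine constraints to hold on all of $[0,K]$, so you need not show that the endpoints are attained.
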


\begin{proof}

    We restrict the exact scalar program~\eqref{eq:N-scalar-program} in $2\,p(N)$ real
    variables $\{z_\lambda, y_\lambda\}_{\lambda \vdash N}$ to $\mathrm{span}\{C_{(1^N)}, C_\tau\}$,\ by setting $z_\lambda = y_\lambda = 0$ for $\lambda \notin \{(1^N), \tau\}$, where $\tau = (2, 1^{N-2})$. Using the fact that $|G_{ij}|^2 \leq 1$ for all $G \in \mathcal{G}_{N,d}$, we conclude that
    \begin{align}
        \Phi_{(1^N)}(G) = 1, \quad
        \Phi_\tau(G) = \sum_{i < j} |G_{ij}|^2 \in [0,K], \, \forall G \in \mathcal{G}_{N,d}.\label{eq:C61}
\end{align}
    This reduces the first restriction in Eq.~\eqref{eq:N-scalar-program}.
    Then we use the fact that
    $|\mathcal{C}_{(1^N)}| = 1$ and $|\mathcal{C}_\tau| = K$ to simplify the second restriction in Eq.~\eqref{eq:N-scalar-program}.
    Finally,
    we note that the target function is given by the nonzero coefficients:
    \begin{align}
       K^{-1}[y_1\,T_{(1^N),\tau}(d) + y_\tau\,T_{\tau,\tau}(d)] = K^{-1}[y_1\tr(C_{(1^N)} \cdot C_{\tau}) + y_\tau\tr(C_\tau \cdot C_\tau)],\label{eq:C62}
\end{align}
    We refer to the trace formulas in Lemma~\ref{lem:trace-formulas} as follows to simplify the target function:
    \begin{align}
        \tr(C_{(1^N)}) = d^N, \qquad \tr(C_\tau) = K\,d^{N-1}, \qquad
        \tr(C_\tau^2) = K\,d^{N-2}\!\left(d^2 +
        \tfrac{(N-2)(N+1)}{2}\right) = K\,Q_N\,d^{N-2}.\label{eq:C63}
\end{align}
    So the objective evaluates to
    \begin{align}
        K^{-1}\bigl[y_1\,K\,d^{N-1} + y_\tau\,K\,Q_N\,d^{N-2}\bigr] = y_1\,d^{N-1} + y_\tau\,Q_N\,d^{N-2}.\label{eq:C64}
\end{align}
    This reduces to the following scalar program:
    \begin{equation}
         {
        \begin{aligned}
            \tilde{d}^*_N &= \max_{z_1,\,z_\tau,\,y_1,\,y_\tau}
            \quad y_1\,d^{N-1} + y_\tau\,Q_N\,d^{N-2},\\
            \text{s.t.} \quad
            & z_1 + y_1 \geq 0 \;\;\text{(C1)}, \quad
              z_1 + y_1 + K(z_\tau + y_\tau) \geq 0 \;\;\text{(C2)}, \\
            & z_1 - y_1 \geq 0 \;\;\text{(C3)}, \quad
              z_1 - y_1 + K(z_\tau - y_\tau) \geq 0 \;\;\text{(C4)}, \\
            & z_1\,d^N + z_\tau\,K\,d^{N-1} = 1
              \;\;\text{(C5)}.
        \end{aligned}
        }\label{eq:N-scalar-program-restricted}
\end{equation}
    Since the dual conic convex program is a maximization, any feasible solution of the reduced program~\eqref{eq:N-scalar-program-restricted} yields a rigorous lower bound of $\tilde{d}^*_N \leq d^*_N$. In analogy with the optimization problem~\eqref{eq:scalar_dual_SDP_purity}, we exhibit the following explicit feasible solution:
    \begin{align}
        z_1 = 0, \quad
        z_\tau = \frac{1}{K\,d^{N-1}}, \quad
        y_1 = 0, \quad
        y_\tau = \frac{1}{K\,d^{N-1}}.\label{eq:N-feasible-point}
\end{align}
    It can be verified that all constraints are satisfied.
    The objective at this feasible point evaluates to:
    \begin{align}
        \tilde{d}_N^*
        \geq y_1\,d^{N-1} + y_\tau\,Q_N\,d^{N-2}
        = \frac{Q_N\,d^{N-2}}{K\,d^{N-1}}
        = \frac{Q_N}{Kd}
        = \frac{1}{d}\!\left(
        \frac{d^2 - 1}{\binom{N}{2}} + 1\right).\label{eq:C67}
\end{align}
    Since the dual conic convex program is a maximization and this point
    is feasible, we conclude:
    \begin{align}
        d_N^* \geq \tilde{d}_N^* \geq \frac{Q_N}{Kd}
        = \frac{1}{d}\!\left(
        \frac{d^2 - 1}{\binom{N}{2}} + 1\right) = \frac{2(d-1)(d+1)}{d N (N-1)} + \frac{1}{d} .\label{eq:purity-N-lower-final}
\end{align}

\end{proof}

The strong duality of the dual conic convex program then gives $\|\{\tr(\rho^2)\}\|_{N,{\mathrm{mem}}} = \Omega({d}/{N^2})$, which in the main-text notation is the resource-restricted norm $\|\cF_{\mathrm{pur}}\|_{k,\mathrm{mem}}$ with replica number $k=N$.
This should be interpreted as a bounded-unbiased-estimator obstruction
for an $N$-sample U-statistic estimator, where the $N$ single-copy
outcomes jointly generate $\binom{N}{2}$ virtual pairs. Requiring a
constant-bounded estimator, $s_N^*=O(1)$, gives
\begin{align}
    N=\Omega(\sqrt d)=\Omega(2^{n/2}).\label{eq:C69}
\end{align}
Thus, the $\mathrm{QB}_N$-norm lower bound recovers the memory-free purity-testing
scaling $\Omega(2^{n/2})$~\cite{chen2024optimal,gong2024sample}.

\appsubsection{Problems with tensor product structure}

Direct solution of the $k$-replica $\mathrm{QB}_k$-norm in Eq.~\eqref{eq:dual-robustness} (in the case $b = 0$) involves exponentially many constraints for each sign vector $c \in \{-1,1\}^{|\cA|}$ and matrix variables with $2^n$ dimensions, rendering it intractable for both large $|\cA|$ and large qubit number $n$.
A natural strategy is to decompose a large $\mathrm{QB}_k$-norm problem into smaller ones via a tensor product extension.
Concretely, consider two function families $\cF,\cG$ with $k$-replica linearizations $\mathbf F:=\{F_\alpha\}_{\alpha\in\cA}\subset \mathrm{Herm}(\cH_1^{\ox k})$
and
$\mathbf G:=\{G_\beta\}_{\beta\in\cB}\subset \mathrm{Herm}(\cH_2^{\ox k})$, respectively.
We define their tensor product extension as
\begin{align}
    \mathbf F\boxtimes \mathbf G
    :=
    \{F_\alpha\ox G_\beta\}_{(\alpha,\beta)\in\cA\times\cB}
    \subset \mathrm{Herm}((\cH_1\ox\cH_2)^{\ox k}).\label{eq:E21}
\end{align}
These operators define a new function family $\mathcal{T} := \{t_{\alpha,\beta}(\sigma) = \tr\left((F_\alpha\ox G_\beta) \sigma^{\ox k}\right)\}_{(\alpha,\beta) \in \cA \times \cB}$ for states $\sigma$ on $\cH_1 \otimes \cH_2$, which takes $\{T_{\alpha,\beta}:=F_\alpha\otimes G_\beta\}$ as one of its $k$-replica linearizations.
Conversely, one can decompose the $\mathrm{QB}_k$-norm problem for $\mathcal{T}$ into smaller problems for $\cF$ and $\cG$, each involving fewer qubits and fewer functions.
We denote by $s_1^*(\mathbf F)$ and $s_1^*(\mathbf G)$ the optimal scaling factors for the  problem in Lemma~\ref{lem:dual-robustness} (in the case $b = 0$), and by
$s_{1,\mathrm{loc}}^*(\mathbf F\boxtimes\mathbf G)$ the solution of Lemma~\ref{lem:dual-robustness} restricted to the separable POVM on $\cH_1^{\ox k}$ and $\cH_2^{\ox k}$ (a special case of Eq.\eqref{eq:site-product-cone-loc}), which gives the dual cone constraints in Eq.~\eqref{eq:site-product-cone}.
It is therefore valuable to establish the relation between $s_1^*(\mathbf F\boxtimes \mathbf G)$ and the individual factors $s_{1,\mathrm{loc}}^*(\mathbf F)$, $s_1^*(\mathbf G)$, which in turn yields information about $\|\mathcal{T}\|_{k,\mathrm{loc}}$ from the smaller problems $\|\cF\|_k$ and $\|\cG\|_k$.

\begin{lemma}[Binary tensor product factorization]\label{lem:binary-tensor-factor}
For any two operator families
$\mathbf F=\{F_\alpha\}_{\alpha\in\cA}$ and
$\mathbf G=\{G_\beta\}_{\beta\in\cB}$,
one has $s_{1,\mathrm{loc}}^*(\mathbf F\boxtimes\mathbf G)=s_1^*(\mathbf F)\,s_1^*(\mathbf G)$.
\end{lemma}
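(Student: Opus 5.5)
The plan is to prove the two inequalities separately: an explicit product strategy for ``$\le$'', and a tensor product of dual certificates in the dual program of Lemma~\ref{lem:dual-robustness} (with $b=0$) for ``$\ge$''. Throughout, $\mathbf F$ and $\mathbf G$ are treated as linear families on the enlarged spaces $\cH_1^{\ox k}$ and $\cH_2^{\ox k}$, so only single-copy ($s_1^*$) programs appear. The restricted cone $\mathcal C_{\mathrm{loc}}$ consists of POVM elements separable across the cut $\cH_1^{\ox k}\,|\,\cH_2^{\ox k}$. Its dual cone is the set of block-positive operators, i.e. those with $\bra{\phi}\bra{\psi}W\ket{\phi}\ket{\psi}\ge0$ for all product vectors (cf. Eq.~\eqref{eq:site-product-cone}). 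The trivial POVM lies in the interior of $\mathcal C_{\mathrm{loc}}$, so strong duality from Lemma~\ref{lem:dual-robustness} applies to all three programs involved.

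\emph{Upper bound.} Let $\{\Pi_j,h_\alpha(j)\}$ and $\{\Gamma_l,g_\beta(l)\}$ be optimal unrestricted strategies for $\mathbf F$ and $\mathbf G$, with envelopes $s_1^*(\mathbf F)$ and $s_1^*(\mathbf G)$. Consider the product POVM $\{\Pi_j\ox\Gamma_l\}_{j,l}$ with estimators $t_{\alpha\beta}(j,l):=h_\alpha(j)g_\beta(l)$.
\begin{itemize}
\item Each element $\Pi_j\ox\Gamma_l$ is a product of positive operators, hence lies in $\mathcal C_{\mathrm{loc}}$.
\item The elements sum to $\1$.
\item Unbiasedness holds: $\sum_{j,l}t_{\alpha\beta}(j,l)\,\Pi_j\ox\Gamma_l=F_\alpha\ox G_\beta$.
\item The envelope satisfies $|t_{\alpha\beta}(j,l)|\le s_1^*(\mathbf F)s_1^*(\mathbf G)$.
\end{itemize}
Together these give $s_{1,\mathrm{loc}}^*(\mathbf F\boxtimes\mathbf G)\le s_1^*(\mathbf F)s_1^*(\mathbf G)$.

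\emph{Lower bound.} Take optimal dual points $(Z_1,\{Y_\alpha\})$ for $\mathbf F$ and $(Z_2,\{Y'_\beta\})$ for $\mathbf G$ in the unrestricted program~\eqref{eq:dual_unbiased}. They satisfy $Z_1-\sum_\alpha c_\alpha Y_\alpha\succeq0$ for all $c\in\{\pm1\}^{|\cA|}$ and the analogous condition for $Z_2$. Their objective values are $\sum_\alpha\tr(Y_\alpha F_\alpha)=s_1^*(\mathbf F)$ and $\sum_\beta\tr(Y'_\beta G_\beta)=s_1^*(\mathbf G)$.

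I propose the candidate $Z:=Z_1\ox Z_2$ and $Y_{\alpha\beta}:=Y_\alpha\ox Y'_\beta$. The trace condition holds, since $\tr Z=1$. The objective factorizes:
\begin{align}
\sum_{\alpha,\beta}\tr\bigl[(Y_\alpha\ox Y'_\beta)(F_\alpha\ox G_\beta)\bigr]=s_1^*(\mathbf F)\,s_1^*(\mathbf G).
\end{align}

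The remaining point, and the main obstacle, is feasibility. The new sign vectors $c\in\{\pm1\}^{|\cA|\times|\cB|}$ are arbitrary sign matrices, not products $c_\alpha c'_\beta$, so positivity cannot be obtained by simply tensoring the two PSD conditions. This is exactly why only the block-positive (local) dual cone is targeted. Fix product vectors $\ket{\phi}\in\cH_1^{\ox k}$ and $\ket{\psi}\in\cH_2^{\ox k}$, and write $a=\bra{\phi}Z_1\ket{\phi}$, $u_\alpha=\bra{\phi}Y_\alpha\ket{\phi}$, $b=\bra{\psi}Z_2\ket{\psi}$, $v_\beta=\bra{\psi}Y'_\beta\ket{\psi}$. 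All of these are real, since the operators are Hermitian.

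Choosing $c_\alpha=\mathrm{sign}(u_\alpha)$ in the feasibility of the first dual point gives $a\ge\sum_\alpha|u_\alpha|\ge0$; likewise $b\ge\sum_\beta|v_\beta|\ge0$. Hence for every sign matrix $c$,
\begin{align}
\bra{\phi}\bra{\psi}\Bigl(Z-\sum_{\alpha,\beta}c_{\alpha\beta}Y_{\alpha\beta}\Bigr)\ket{\phi}\ket{\psi}
=ab-\sum_{\alpha,\beta}c_{\alpha\beta}u_\alpha v_\beta
\ge\Bigl(\sum_\alpha|u_\alpha|\Bigr)\Bigl(\sum_\beta|v_\beta|\Bigr)-\sum_{\alpha,\beta}|u_\alpha||v_\beta|=0 .
\end{align}
So the candidate lies in the block-positive dual cone for every $c$, and is a feasible point of the local dual program. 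Strong duality then yields $s_{1,\mathrm{loc}}^*(\mathbf F\boxtimes\mathbf G)\ge s_1^*(\mathbf F)s_1^*(\mathbf G)$, which together with the upper bound closes the proof.
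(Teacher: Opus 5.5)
Your proposal is correct and matches the paper's proof step for step. The upper bound uses the product POVM $\{\Pi_j\otimes\Gamma_l\}$ with product estimators. The lower bound uses the tensor-product dual certificate $Z_1\otimes Z_2$, $Y_\alpha\otimes Y'_\beta$, checked against the block-positive cone via $a\ge\sum_\alpha|u_\alpha|$ and $b\ge\sum_\beta|v_\beta|$.

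Only weak duality is needed in the final step. Strong duality, with attainment of the two unrestricted programs, is what identifies the factor objectives with $s_1^*(\mathbf F)$ and $s_1^*(\mathbf G)$.
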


\begin{proof}
To prove $s_{1,\mathrm{loc}}^*(\mathbf F\boxtimes\mathbf G)=s_1^*(\mathbf F)\,s_1^*(\mathbf G)$, we need to show the upper bound $s_{1,\mathrm{loc}}^*(\mathbf F\boxtimes\mathbf G) \leq s_1^*(\mathbf F)\,s_1^*(\mathbf G)$ and the lower bound $s_{1,\mathrm{loc}}^*(\mathbf F\boxtimes\mathbf G) \geq s_1^*(\mathbf F)\,s_1^*(\mathbf G)$. For the first inequality, we let
    $\{\Pi_i,h_\alpha(i)\}$ and $\{\Gamma_j,g_\beta(j)\}$
    be optimal strategies for $\mathbf F$ and $\mathbf G$, respectively. Thus
    \begin{align}
        \sum_i h_\alpha(i)\Pi_i &= F_\alpha,
        \qquad
        \sum_i \Pi_i=\1_{\cH_1^{\ox k}},
        \qquad
        |h_\alpha(i)|\le s_1^*(\mathbf F),\label{eq:E22}
\\
        \sum_j g_\beta(j)\Gamma_j &= G_\beta,
        \qquad
        \sum_j \Gamma_j=\1_{\cH_2^{\ox k}},
        \qquad
        |g_\beta(j)|\le s_1^*(\mathbf G).\label{eq:E23}
\end{align}
    Consider the product POVM $M_{i,j}:=\Pi_i\ox\Gamma_j$
    and the product estimator $q_{\alpha,\beta}(i,j):=h_\alpha(i)g_\beta(j)$ such that $|q_{\alpha,\beta}(i,j)|\le s_1^*(\mathbf F)s_1^*(\mathbf G)$.
    Then
    \begin{align}
        \sum_{i,j}q_{\alpha,\beta}(i,j)M_{i,j}
        =
        \left(\sum_i h_\alpha(i)\Pi_i\right)
        \ox
        \left(\sum_j g_\beta(j)\Gamma_j\right)
        =
        F_\alpha\ox G_\beta,\label{eq:E24}
\end{align}
    and therefore $s_{1,\mathrm{loc}}^*(\mathbf F\boxtimes\mathbf G)\le s_1^*(\mathbf F)s_1^*(\mathbf G)$.
    For the lower bound, we use the dual SDP. For the two unrestricted local problems, the dual forms are
    \begin{align}
        s_1^*(\mathbf F)= \max_{Z_F,\{Y_\alpha\}} \sum_{\alpha\in\cA}\tr(Y_\alpha F_\alpha) \quad \text{subject to} \quad Z_F-\sum_{\alpha\in\cA}a_\alpha Y_\alpha\succeq 0, \quad \forall a\in\{-1,1\}^{|\cA|}, \quad \tr Z_F = 1,\label{eq:dual-sdp-F}
\\
        s_1^*(\mathbf G)= \max_{Z_G,\{W_\beta\}} \sum_{\beta\in\cB}\tr(W_\beta G_\beta) \quad \text{subject to} \quad Z_G-\sum_{\beta\in\cB}b_\beta W_\beta\succeq 0, \quad \forall b\in\{-1,1\}^{|\cB|}, \quad \tr Z_G = 1.\label{eq:dual-sdp-G}
\end{align}
    Take optimal dual solutions
    $(Z_F,\{Y_\alpha\})$ and $(Z_G,\{W_\beta\})$.
    Define $Z:=Z_F\ox Z_G,\,Y_{\alpha,\beta}:=Y_\alpha\ox W_\beta$.
    We show that this is feasible for the local dual problem of
    $\mathbf F\boxtimes\mathbf G$.
    First, the trace constraint is satisfied: $\tr Z=\tr Z_F\,\tr Z_G = 1$.
    It remains to verify the local dual cone constraint. Since the local measurement cone is generated by product PSD effects, its dual cone is the block-positive cone, as described in Eq.\eqref{eq:site-product-cone}:
    \begin{align}
        \mathcal K_{\mathrm{loc}}^*
        =
        \left\{
        X:
        \bra{u}\bra{v}X\ket{u}\ket{v}\ge 0,
        \ \forall\,\ket{u}\in\cH_1^{\ox k},
        \ \forall\,\ket{v}\in\cH_2^{\ox k}
        \right\}.\label{eq:E27}
\end{align}
    Fix arbitrary $\ket{u}, \ket{v}$ and define
    $z_F:=\bra{u}Z_F\ket{u}, \, y_\alpha:=\bra{u}Y_\alpha\ket{u},$
    and $z_G:=\bra{v}Z_G\ket{v},\, w_\beta:=\bra{v}W_\beta\ket{v}.$
    Plugging in Eq.~\eqref{eq:dual-sdp-F} and Eq.~\eqref{eq:dual-sdp-G} gives:
    \begin{align}
        z_F\ge \sum_{\alpha\in\cA}|y_\alpha|,\quad  z_G\ge \sum_{\beta\in\cB}|w_\beta|.\label{eq:E28}
\end{align}
    Hence for any global sign matrix
    $c=\{c_{\alpha,\beta}\}_{(\alpha,\beta)\in\cA\times\cB}$, we conclude:
    \begin{align}
    \bra{u}\bra{v} \left(Z-\sum_{\alpha,\beta}c_{\alpha,\beta}Y_{\alpha,\beta}\right)\ket{u}\ket{v}
    &= z_Fz_G-\sum_{\alpha,\beta}c_{\alpha,\beta}y_\alpha w_\beta
    \ge z_Fz_G-\sum_{\alpha,\beta}|y_\alpha||w_\beta| \notag
\\
    &=z_Fz_G-
    \left(\sum_{\alpha}|y_\alpha|\right)
    \left(\sum_{\beta}|w_\beta|\right)\ge 0.\label{eq:E29}
\end{align}
    Therefore
    $Z-\sum_{\alpha,\beta}c_{\alpha,\beta}Y_{\alpha,\beta}\in \mathcal K_{\mathrm{loc}}^*, \,\forall c\in\{-1,1\}^{|\cA||\cB|}$.
    Thus the constructed dual variables are feasible.
    Finally, their objective value is
    \begin{align}
        \sum_{\alpha,\beta}
        \tr\!\left[
        (Y_\alpha\ox W_\beta)(F_\alpha\ox G_\beta)
        \right]=
        \sum_{\alpha,\beta}
        \tr(Y_\alpha F_\alpha)\tr(W_\beta G_\beta)=
        \left(
        \sum_{\alpha}\tr(Y_\alpha F_\alpha)
        \right)
        \left(
        \sum_{\beta}\tr(W_\beta G_\beta)
        \right)=
        s_1^*(\mathbf F)s_1^*(\mathbf G).\label{eq:E30}
\end{align}
    The dual is a maximization problem, so the inequality gives
    $s_{1,\mathrm{loc}}^*(\mathbf F\boxtimes\mathbf G)\ge s_1^*(\mathbf F)s_1^*(\mathbf G)$.
    Combining the two inequalities proves the claim.
\end{proof}

Note that the linearizations in Lemma~\ref{lem:binary-tensor-factor} need not be the optimal symmetric ones, so by Theorem~\ref{thm:structure_fcomp} we only have the upper bounds
\begin{align}
    \|\cF\|_{k}:= s_k^*(\cF) \leq s_1^*(\mathbf{F}),\quad \|\cG\|_{k} :=s_k^*(\cG) \leq s_1^*(\mathbf G), \quad \|\mathcal{T}\|_{k,\mathrm{loc}}:= s_{k,\mathrm{loc}}^*(\mathcal{T}) \leq s_{1,\mathrm{loc}}^*(\mathbf F\boxtimes\mathbf G).\label{eq:E31}
\end{align}
If both $\mathbf F$ and $\mathbf G$ are not the canonical symmetric linearizations, Lemma~\ref{lem:binary-tensor-factor} remains valid but gives a loose bound between $\|\cF\|_{k},\|\cG\|_{k}$ and $\|\mathcal{T}\|_{k,\mathrm{loc}}$.
In the case where both $\mathbf F$ and $\mathbf G$ are symmetric, an upper bound $\|\mathcal{T}\|_{k,\mathrm{loc}}\leq \|\cF\|_{k} \cdot \|\cG\|_{k}$ is satisfied.

\paragraph{Example.}
As an illustration of the tensor product extension, consider the $n$-qubit Pauli operators $\mathcal{P}_n = \{\1,X,Y,Z\}^{\ox n}$.
So by Lemma~\ref{lem:binary-tensor-factor} we have:
\begin{align}
    \|\mathcal{P}_n\|_{1,\mathrm{loc}} = \|\{X,Y,Z,\1\}\|_{1}^n = (\sqrt{3})^n,\label{eq:E32}
\end{align}
whose variance scale $\|\mathcal{P}_n\|_{1,\mathrm{loc}}^2=3^n$
matches the variance lower bound and local classical-shadow scaling $3^{n}$ for full-weight Pauli observables for the full Pauli set~\cite{huang2020predicting}.
In the main-text notation, this proves the resource-restricted norm $\|\cF_{\mathrm P}\|_{1,\mathrm{loc}}=3^{n/2}$ of the Pauli family $\cF_{\mathrm P}=\{\tr(P\rho)\}_{P\in\cP_n}$.

\end{document}